\newcommand{\ifabs}[2]{#2}
\newcolumntype{L}[1]{>{\raggedright\arraybackslash}p{#1}}
\newcolumntype{C}[1]{>{\centering\arraybackslash}m{#1}}
\newcolumntype{R}[1]{>{\raggedleft\arraybackslash}p{#1}}
\newtheorem{theorem}{Theorem}[section]
\newtheorem{observation}[theorem]{Observation}
\newtheorem{claim}[theorem]{Claim}
\newtheorem{condition}[theorem]{Condition}
\newtheorem{example}[theorem]{Example}
\newtheorem{lemma}[theorem]{Lemma}
\newtheorem{proposition}[theorem]{Proposition}
\newtheorem{corollary}[theorem]{Corollary}
\theoremstyle{definition}
\newtheorem{definition}{Definition}[section]
\newtheorem{remark}{Remark}[section]
\newtheorem*{remark*}{Remark}
\newtheorem{assumption}[theorem]{Assumption}
\newcommand{\concept}[1]{\emph{#1}}
\newcommand{\todo}[1]{\typeout{TODO: \the\inputlineno: #1}\textbf{{\color{red}[[[ #1 ]]]}}}
\newcommand{\tO}{\widetilde{O}}
\newcommand{\gr}{\emph{good} }
\newcommand{\co}[1]{\emph{#1-correlated} }
\newcommand{\LOCAL}{$\mathsf{LOCAL}$}
\newcommand{\SLOCAL}{$\mathsf{SLOCAL}$}
\newcommand{\SLV}{$\mathsf{SLOCAL}\text{-}\mathsf{LV}$}
\newcommand{\dist}{\mathrm{dist}}
\newcommand{\id}{\mathrm{id}}
\newcommand{\vbl}{\mathsf{vbl}}
\newcommand{\Alg}[1]{\mathcal{{#1}}}
\renewcommand{\vec}[1]{\boldsymbol{{#1}}}
\newcommand{\RV}[1]{\boldsymbol{\mathcal{{#1}}}}
\newcommand{\RecursiveSampling}{\emph{RecursiveSampling}}
\newcommand{\Class}[1]{\mathfrak{{#1}}}
\newcommand{\Inst}[1]{\mathcal{{#1}}}
\newcommand{\TODO}[1]{\typeout{TODO: \the\inputlineno: #1}\textbf{{\color{red}[[[ #1 ]]]}}}
\newcommand{\aug}[3]{A^{{#1}}_{\boldsymbol{\lambda}({#3},{#2})}}
\newcommand{\ring}[3]{R^{{#1}}_{{#2}}({#3})}
\newcommand{\ri}[1]{R_{{#1}}}
\newcommand{\partialsat}[6]{
    P_{{#2},{#4}}^{#6}\left({#3}, {#5}\right)
}
\newcommand{\psat}[4]{
    P\left({#2}, {#4}\right)
}
\newcommand{\eventaround}[4]{
    V^{{#1}}_{[{#2},{#3}]}({#4})
}
\newcommand{\earound}[2]{
    V_{[{#1},{#2}]}
}
\newcommand{\eventin}[4]{
    V^{{#1}}_{({#2},{#3})}({#4})
}
\newcommand{\ein}[2]{
    V_{({#1},{#2})}
}
\newcommand{\ellf}[0]{
    g
}
\title{Perfect Simulation of Las Vegas Algorithms via Local Computation}
 \author{
 Xinyu Fu\thanks{State Key Laboratory for Novel Software Technology, New Cornerstone Science Laboratory, Nanjing University. Emails: {xyfu@smail.nju.edu.cn}, {yinyt@nju.edu.cn}.} 
 \and Yonggang Jiang\thanks{MPI-INF and Saarland University, Germany, {yjiang@mpi-inf.mpg.de}}
 \and Yitong Yin\footnotemark[1]}
\date{}
\begin{document}

\maketitle

\begin{abstract}
The notion of Las Vegas algorithms was introduced by Babai (1979) and can be defined in two  ways:
\begin{itemize}
\item
In Babai's original definition, a randomized algorithm is called Las Vegas if it has a finitely bounded running time and certifiable random failure.
\item
Another definition widely accepted today is that Las Vegas algorithms refer to zero-error randomized algorithms with random running times.
\end{itemize}
The equivalence between the two definitions is straightforward. 
Specifically, for randomized algorithms with certifiable failures, repeatedly running the algorithm until no failure is encountered allows for faithful simulation of the correct output when it executes successfully.

We show that a similar perfect simulation can also be achieved in distributed local computation.
Specifically, in the \LOCAL{} model, with polylogarithmic overhead in time complexity, any Las Vegas algorithm with finitely bounded running time and locally certifiable failures can be converted to a zero-error Las Vegas algorithm. 
This transformed algorithm faithfully reproduces the correct output of the original algorithm in successful executions.
\end{abstract}


\pagenumbering{roman}
\newpage

\tableofcontents
\pagebreak

\pagenumbering{arabic}

\section{Introduction}
The Las Vegas algorithm, introduced by Babai~\cite{babai1979monte}, stands as a cornerstone in the theory of computing,
defining the \textbf{ZPP} class for decision problems efficiently solvable by Las Vegas algorithms. 
Beyond decision problems, Las Vegas algorithms are pivotal in tackling optimization~\cite{kalai1992subexponential,clarkson1995vegas}, searching~\cite{mitchell1992new,moser2010constructive}, or sampling~\cite{PW96,GJL19} problems. 


Las Vegas algorithms can indeed be defined in two distinct yet related ways.
In Babai's original work \cite{babai1979monte}, Las Vegas algorithms are defined as randomized algorithms whose failures are \emph{certifiable}:
\begin{itemize}
    \item A Las Vegas algorithm produces the correct output or reports failure within a finite bounded time.
\end{itemize}
An alternative definition of the Las Vegas algorithm, widely accepted today and utilized in various contexts, e.g.~in \cite{luby1993optimal, motwani1995randomized, mitzenmacher2005probability}, 
defines Las Vegas algorithms as \emph{zero-error} randomized algorithms:
\begin{itemize}
    \item A Las Vegas algorithm may exhibit random running time but always produces the correct output.
\end{itemize}

The equivalence between these two definitions is evident. 
Through truncation, we can transform a zero-error Las Vegas algorithm with random running time into a Las Vegas algorithm with bounded running time and certifiable failure. 
Conversely, restarting the algorithm upon failure allows us to convert a Las Vegas algorithm with certifiable failure into a zero-error Las Vegas algorithm. 
This strategy, retrying with independent random choice until success, defines a \emph{rejection sampling} procedure that faithfully simulates the random output of the Las Vegas algorithms, provided they return successfully without failure.

This strategy for perfectly simulating the Las Vegas algorithm relies on a global coordination machinery:
every part of the algorithm must be notified if a failure occurs anywhere. 
However, our focus is on exploring how this could be accomplished through local computations, without the need for global coordination.

\paragraph{The \LOCAL{} model.}
Local computations are formally characterized by the \LOCAL{} model~\cite{linial1992locality, peleg2000distributed}.
An {instance} consists of a network $G=(V,E)$, which is an undirected graph, 
and a vector $\vec{x}=(x_v)_{v\in V}$  of local inputs. 
Each node $v\in V$ receives $x_v$ and $n=|V|$ as its input and can access to private random bits. 
Communications are synchronized and proceed in rounds. 
In each round, each node may perform arbitrary local computation based on all information collected so far
and send messages of arbitrary sizes to all its neighbors. 
%
This gives a \LOCAL{} algorithm. The time complexity is measured by the number of rounds spent by the algorithm until all nodes terminate.
A \LOCAL{} algorithm is said to be a $t(n)$-round \LOCAL{} algorithm on a class of instances,
if it always terminates within $t(n)$ rounds on every instance from that class, where $n$ represents the number of nodes of the instance.

The (Babai's) Las Vegas algorithm can be defined in the \LOCAL{} mode with \emph{locally certifiable failures}.
A $t(n)$-round \LOCAL{} algorithm is called \emph{Las Vegas} 
if each node $v\in V$ returns a pair $(Y_v,F_v)$, where $Y_v$ stands for the local output at $v$, 
and $F_v\in\{0,1\}$ indicates whether the algorithm failed locally at $v$.
The algorithm successfully returns if none of the nodes fails.
Furthermore, it is guaranteed that $\sum_{v\in V}\mathbb{E}[F_v]<1$.
This notion of Las Vegas \LOCAL{} algorithm was formulated in~\cite{ghaffari2018derandomizing}.

In this paper, we try to answer the following fundamental question:
\begin{quote}
    \emph{Can we faithfully simulate the correct output avoiding all local failures via local computation?}
\end{quote}
Specifically, 
we wonder whether a fixed-round Las Vegas \LOCAL{} algorithm with locally certifiable failures, 
can be converted into a zero-error Las Vegas \LOCAL{} algorithm that produces the correct output $(Y_v)_{v\in V}$ conditioned on $\sum_{v\in V}F_v=0$,
where the distribution of the correct output is faithfully preserved. 
%
This asks for what used to be achieved by the rejection sampling in the absence of global coordination.

In this paper, for the first time, we answer this question affirmatively.
We prove the following result for the perfect simulation of Las Vegas algorithms via local computation.

\begin{theorem}[main theorem, informal]
\label{thm:main-informal}
Any $t(n)$-round Las Vegas \LOCAL{} algorithm can be converted to a zero-error Las Vegas \LOCAL{} algorithm,
which terminates within $t(n)\cdot\mathrm{polylog}(n)$ rounds with probability $1-n^{-O(1)}$,
and returns the output of the $t(n)$-round Las Vegas \LOCAL{} algorithm conditioned on no failure. 
\end{theorem}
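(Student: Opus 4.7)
The plan is to \emph{localize} the global rejection-sampling procedure. In the \LOCAL{} model, each node's output and failure bit depend only on the randomness in its $t(n)$-ball, so every failure is detectable locally after $t(n)$ rounds. Globally restarting all randomness upon any failure would faithfully simulate the conditional distribution, but requires global coordination. I will instead restart randomness only within carefully chosen \emph{resampling regions} around the failed nodes, and show that this can be done (i) in $t(n)\cdot\mathrm{polylog}(n)$ rounds with high probability, and (ii) in a way that reproduces the joint conditional distribution \emph{exactly}, not merely a failure-avoiding distribution.

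Concretely, the procedure runs in iterative phases. In the initial phase, every node samples fresh local randomness $R_v$, learns its $t(n)$-ball, and computes $(Y_v,F_v)$. In each subsequent phase, the set of failed nodes together with their $t(n)$-neighbourhoods (the variables that caused the failures) is collected; this region is then expanded iteratively to include any new failure that could appear on its boundary once the interior is resampled, until a stable boundary of certified non-failing nodes is reached. The resulting region $\Lambda$ is \emph{self-contained}: the randomness outside $\Lambda$ is left untouched, and conditional on the outside, the inside is resampled from the product measure restricted to configurations that produce no internal failure. This restriction is implemented by local rejection within $\Lambda$ (jointly resample the interior and retry if an internal failure persists). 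A partial-rejection-sampling-style coupling argument then shows that the overall output has exactly the desired conditional law. For the running time, the first-moment bound $\sum_v\mathbb{E}[F_v]<1$ forces failures to be sparse, and a stochastic-domination argument by a sub-critical branching-type process bounds the diameter of each region by $\mathrm{polylog}(n)$ whp; each local rejection trial succeeds with constant probability by a restricted first-moment bound, so $O(\log n)$ trials per region suffice whp, each costing $t(n)\cdot\mathrm{polylog}(n)$ rounds.

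The main obstacle, I expect, is ensuring that the local procedure reproduces the conditional distribution \emph{exactly} rather than merely avoiding failures. Standard Moser--Tardos-style resampling only achieves the latter. The precise definition of $\Lambda$---which nodes get resampled, which get frozen, and when the region is declared stable---must be crafted so that, jointly across independent regions, the output coincides in law with that of global rejection sampling; in particular the stabilization rule must not condition on information that would bias the distribution of variables remaining outside $\Lambda$. A secondary difficulty is taming the cascade of region growth: under only a first-moment bound, a naive expansion could in principle run away, so a carefully tuned shattering-style concentration argument will be needed to certify $\mathrm{polylog}(n)$ diameter with high probability.
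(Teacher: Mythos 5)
Your high-level framing is exactly right and matches the paper: localize rejection sampling while preserving the \emph{exact} conditional law, and you correctly flag that this exactness is the central obstacle. However, the mechanism you propose does not overcome it. Your plan is to freeze the exterior $Y_T$ and replace the interior by block rejection from $\nu(\cdot\mid Y_T,\text{no interior failure})$, after growing the region until it is ``self-contained.'' After this step the interior is indeed distributed as $\mu^{Y_T}_{U\setminus T}$, but the law of the pair $(Y_T,Y_{U\setminus T})$ is then $\Pr[Y_T=\tau]\cdot\mu^{\tau}_{U\setminus T}(\cdot)$, and $\Pr[Y_T=\tau]$ is \emph{not} the target marginal $\mu_T(\tau)$ --- it is (roughly) $\mu_T^{Y_S}(\tau)$ for the \emph{old} interior configuration $Y_S$ that caused the failures, because the exterior was only ever conditioned on the other bad events being avoided given $Y_S$. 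The resulting bias is exactly the correlation between $S$ and $T$ under $\mu_I$, and your appeal to a ``partial-rejection-sampling-style coupling'' does not close it: GJL partial rejection sampling resamples a carefully chosen set once and recomputes it, with a witness-tree correctness argument, not block rejection inside a fixed region; your procedure is a different algorithm and needs its own correctness proof, which would have to confront this marginal mismatch. The paper removes the bias with a Bayes-filter step that reweights according to $\mu_T(\sigma_T)/\mu_T^{Y_S}(\sigma_T)$, an ingredient your proposal lacks entirely.

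A deeper gap is that any Bayes-filter or region-growth argument needs quantitative correlation decay between $S$ and $T$, and under $\gamma$-satisfiability alone that decay simply need not exist: the paper exhibits a $\gamma$-satisfiable instance in which distant rings are maximally correlated through a chain of rare constraints. This is why the paper's crux is the \emph{LLL augmentation lemma}: one locally introduces a rare new bad event on the annulus between $S$ and $T$ that forbids the low-probability configurations responsible for long-range correlation, thereby manufacturing the needed $\epsilon$-correlation while perturbing the target distribution only by a controllable, locally-recoverable amount. Nothing in your proposal produces or substitutes for this. Consequently your complexity argument does not go through either: under only a first-moment bound $\sum_v\mathbb{E}[F_v]<1$ there is no local sparsity condition that would make your region-growth process sub-critical (the failure events have no per-event probability bound, and the region can cascade unboundedly), so the ``branching-type'' domination is unfounded. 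The paper instead (i) uses network decomposition to cluster failures into $O(\mathrm{polylog})$ far-apart balls with a Chernoff bound over color classes, and (ii) bounds the recursion via a potential-function argument with a ``zone of indecision'' bootstrapping of the unknown acceptance threshold --- a materially different structure than what you sketch.
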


In the above theorem, the output of the zero-error Las Vegas \LOCAL{} algorithm is identically distributed as the output of the $t(n)$-round Las Vegas \LOCAL{} algorithm conditioned on that none of the nodes fails, 
i.e.~the zero-error algorithm perfectly simulates a successful running of the algorithm that may locally fail.

%
%

To see how nontrivial this is, 
consider a weakened task:
to generate an assignment of random bits so that under this random choice the algorithm terminates without failure.
One may think of this as ``solving'' the random bits under which the algorithm successfully returns, 
which is weaker than our goal, 
where the generated random bits are further required to follow the correct distribution.
However, for local computation, just solving the feasible random bits without bothering their distribution is already highly nontrivial.

In a seminal work~\cite{ghaffari2018derandomizing}, Ghaffari, Harris and Kuhn 
gave a systematic approach for solving the good random bits under which a Las Vegas \LOCAL{} algorithm successfully returns.
Their derandomization based approach preserves the \emph{support} of distribution, hence was more suitable for the \emph{distributed graph problems} for constructing feasible solutions on graphs.
Specifically, $\mathrm{polylog}(n)$-round reductions were established between the two types of the Las Vegas \LOCAL{} algorithms for such problems.
%
%
%
%

In contrast,
the perfect simulation is guaranteed in
\Cref{thm:main-informal} preserves the distribution,  
therefore, the result can apply to the problems beyond constructing feasible solutions, for example, the sampling problems.

Consider the \emph{Gibbs distributions} defined on the network $G=(V, E)$. 
Each node $v$ corresponds to a variable with finite domain $\Sigma$.
Let $\mathcal{F}$ be a class of \emph{constraints}, where each $f\in\mathcal{F}$ is a nonnegative-valued function $f:\Sigma^{\vbl(f)}\to\mathbb{R}_{\ge 0}$ defined on the variables in $\vbl(f)\subseteq V$.
This defines a Gibbs distribution $\mu$ over all assignments $\sigma\in\Sigma^V$ by:
\[
\mu(\sigma) \propto \prod_{f\in\mathcal{F}}f\left(\sigma_{\vbl(f)}\right).
\]
Such Gibbs distribution $\mu$ is said to be \emph{local}, if: (1) for any $f\in\mathcal{F}$, the diameter of $\vbl(f)$ in graph $G$ is bounded by a constant; 
and (2) for any partial assignment $\sigma\in\Sigma^\Lambda$ specified on $\Lambda\subseteq V$, if $\sigma$ is locally feasible, 
i.e.~if $f(\sigma_{\vbl(f)})>0$ for all $f\in\mathcal{F}$ with $\vbl(f)\subseteq \Lambda$, 
then $\sigma$ is (globally) feasible, which means that $\sigma$ can be extended to a feasible full assignment $\tau\in\Sigma^V$ such that $\tau_\Lambda=\sigma$ and $\mu(\tau)>0$.\footnote{In~\cite{feng2018local}, this property of local feasibility implying global feasibility in Gibbs distribution was called ``locally admissible''.}

A Gibbs distribution $\mu$ is said to have \emph{strong spatial mixing with exponential decay}
if 
the discrepancy (measured in total variation distance) 
between the marginal distributions $\mu_v^\sigma,\mu_v^\tau$ at any $v\in V$ given the respective feasible boundary conditions $\sigma,\tau\in\Sigma^{\Lambda}$ on $\Lambda\subseteq V$ that differ over an arbitrary $\Delta\subseteq\Lambda$,
 satisfies: 
\[
d_{\mathrm{TV}}\left({\mu_v^{\sigma}},{\mu_v^{\tau}}\right)\le |V|^{O(1)}\cdot \exp(\Omega(\dist_G(v,\Delta))).
\]
%
The strong spatial mixing is a key property for sampling algorithms.
Its implication for efficient sampling from general Gibbs distributions is a major open problem.
In~\cite[Corollary 5.3]{feng2018local}, 
a $\mathrm{polylog}(n)$-round Las Vegas \LOCAL{} algorithm with bounded local failures was given, for perfect sampling from local Gibbs distributions that have strong spatial mixing with exponential decay.
By \Cref{thm:main-informal}, this immediately implies the following result for perfect simulation of Gibbs distributions via local computation.

\begin{corollary}
\label{cor:sampling-informal}
For any class of local Gibbs distributions that have strong spatial mixing with exponential decay,
there is a \LOCAL{} algorithm for perfect sampling from the Gibbs distribution, which terminates within $\mathrm{polylog}(n)$ rounds with probability $1-n^{-O(1)}$.
\end{corollary}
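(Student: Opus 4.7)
The plan is to obtain the corollary as a one-line composition of two off-the-shelf results: the perfect-sampling Las Vegas \LOCAL{} algorithm of \cite[Corollary 5.3]{feng2018local} and the main theorem (\Cref{thm:main-informal}) of this paper. Let $\mu$ be a local Gibbs distribution with strong spatial mixing with exponential decay. Let $\Alg{A}$ be the $t(n)$-round Las Vegas \LOCAL{} algorithm of \cite{feng2018local} with $t(n)=\mathrm{polylog}(n)$, whose per-node output $(Y_v,F_v)_{v\in V}$ satisfies $\sum_{v\in V}\mathbb{E}[F_v]<1$ and whose joint output $(Y_v)_{v\in V}$ conditioned on $\sum_{v\in V}F_v=0$ is distributed exactly as $\mu$.

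Next, I would feed $\Alg{A}$ into \Cref{thm:main-informal} to obtain a zero-error Las Vegas \LOCAL{} algorithm $\Alg{B}$. The theorem guarantees two things. First, $\Alg{B}$ terminates within $t(n)\cdot\mathrm{polylog}(n)=\mathrm{polylog}(n)$ rounds with probability $1-n^{-O(1)}$, matching the round complexity required by the corollary. Second, and more importantly, the joint output of $\Alg{B}$ is identically distributed to the output of $\Alg{A}$ conditioned on the event of no local failure. Since that conditional distribution is exactly $\mu$, the algorithm $\Alg{B}$ is a perfect sampler for $\mu$, proving the corollary.

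The only thing one needs to verify by hand is that the sampler of \cite{feng2018local} literally meets the hypotheses of \Cref{thm:main-informal}, namely that its round complexity is bounded deterministically by a fixed $t(n)=\mathrm{polylog}(n)$ and that the expected number of local failures is less than one. Both are stated explicitly in \cite[Corollary 5.3]{feng2018local}; if the failure bound is only a constant rather than strictly below one, a constant round of trivial amplification (running the algorithm on a polylog-size blow-up of the round budget or splitting the randomness) reduces $\sum_v\mathbb{E}[F_v]$ below $1$ without affecting the conditional distribution. Consequently there is no real obstacle: the entire work has been pushed into the main theorem, and the distribution-preserving nature of that theorem is exactly what distinguishes this corollary from what earlier derandomization-based reductions (which preserve only the support) could have produced.
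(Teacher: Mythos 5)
Your proof is correct and matches the paper's argument exactly: the paper also obtains this corollary by directly composing the Las Vegas \LOCAL{} sampler of~\cite[Corollary 5.3]{feng2018local} with \Cref{thm:main-informal}. The extra amplification caveat you add is harmless but unnecessary, since the formal statement (\Cref{thm:main-formal}) only needs the success probability $\gamma(n)$ to satisfy $\log\frac{1}{\gamma(n)}=\polylog(n)$ for a $\polylog(n)$-round bound, which the cited sampler already supplies.
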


\begin{remark}
The computational tractability of sampling from general Gibbs distributions with strong spatial mixing remains unresolved, even in the sequential and centralized setting. 
\Cref{cor:sampling-informal} certainly does not resolve the computational complexity of sampling.
This is because the \LOCAL{} model does not impose any bounds on the computational costs, 
and generic reductions such as the one stated in \Cref{thm:main-informal} must apply to all \LOCAL{} algorithms, 
including those that run in super-polynomial time.
Such lack of limitation on computational power was indeed common in general (black-box) reductions in the \LOCAL{} model and may be necessary.
For instance, the derandomization of general \LOCAL{} algorithms in~\cite{ghaffari2018derandomizing} also relied on local exhaustive search of random bits. 
Here, our objective is even more ambitious: to preserve the distribution of the random bits of a sucessive running.

On a positive note, if it is known that the correct output has strong spatial mixing, as in the case of \Cref{cor:sampling-informal}, then the perfect simulation in \Cref{thm:main-informal} can be simplified (using \Cref{alg:recursive-sampling-with-cd} instead of \Cref{alg:recursive-sampling-without-cd}), avoiding exhaustive enumerations.  
Consequently, the computational cost of the algorithm in \Cref{cor:sampling-informal} is dominated by approximately sampling from the marginal distributions $\mu_v^\sigma$ and estimating the marginal probabilities within an inverse-polynomial accuracy.
Indeed, the non-trivial computation costs in both the sampling algorithm in~\cite{feng2018local} and our perfect simulation algorithm  when there is a strong spatial mixing, are dominated by these tasks which are computationally equivalent to sampling. 

In summary,  the focus of \Cref{cor:sampling-informal} is the locality of Gibbs sampling.
Our proof of it may also imply:
locality of computation imposes no additional significant barrier to the task of Gibbs sampling with strong spatial mixing, provided the sampling problem is computationally tractable.
\end{remark}


%

\subsection{A sampling Lov\'{a}sz local lemma in the \LOCAL{} model}\label{sec:samplingsatisfyingsolution}

The perfect simulation of Las Vegas \LOCAL{} algorithms stated in \Cref{thm:main-informal} is achieved by resolving a more general problem,
namely, generating a random sample avoiding all bad events.
This problem is formulated as a natural sampling problem in the variable-framework of the Lov\'{a}sz local lemma.


An instance for the variable-framework Lov\'{a}sz local lemma (LLL) is given by $I=\left(\{X_i\}_{i\in U},\{A_v\}_{v\in V}\right)$,
where $\{X_i\}_{i\in U}$ is a set of mutually independent random variables, 
such that each $X_i$ follows a distribution $\nu_i$ over a finite domain $\Sigma_i$;
and $\{A_v\}_{v\in V}$ is a set of \emph{bad events}, such that for each $v\in V$, the occurrence of $A_v$ is determined by the evaluation of $X_{\vbl(v)}=(X_i)_{i\in \vbl(v)}$, where $\vbl(v)\subseteq U$ denotes the subset of variables on which $A_v$ is defined. 
%
%
The LLL instance $I$ defines a \emph{dependency graph} $D=D_I=(V,E)$, such that each vertex $v\in V$ represents a bad event $A_v$ and each $\{u,v\}\in E$ iff $\vbl(v)\cap \vbl(u)\neq\emptyset$.

An LLL instance $I$ is said to be \concept{$\gamma$-satisfiable}, if the probability avoiding all bad events is bounded as:
\begin{align}
\Pr
\left(\,\bigcap_{v\in V}\overline{A_v}\,\right)\ge \gamma.\label{eq:admissibility}
\end{align}
The Lov\'{a}sz local lemma~\cite{erdos1975problems} states a sufficient condition on the dependency graph for $\gamma$ to be positive.

A satisfiable LLL instance $I$ gives rise to a natural probability distribution over satisfying assignments.
Let $\mu=\mu_I$ denote the distribution of the random vector $\boldsymbol{X}=(X_i)_{i\in U}$ conditioning on that none of the bad events $\{A_v\}_{v\in V}$ occur.
Formally, denote by $\Omega=\Omega_I\triangleq\{\sigma\in\bigotimes_{i\in U}\Sigma_i\mid \sigma\text{ avoids }A_v\text{ for all }v\in V\}$   the space of all satisfying assignments, 
and $\nu=\nu_I\triangleq\prod_{i\in U}\nu_i$ the product measure. Then
\begin{align}\label{eq:def-partition-function}
\forall \sigma\in\Omega,\quad
\mu({\sigma})\triangleq \Pr(\boldsymbol{X}=\sigma\mid \boldsymbol{X}\in\Omega)=\frac{\nu({\sigma})}{\nu(\Omega)}.
\end{align}

This distribution $\mu=\mu_I$ of random satisfying assignment was referred to as the \emph{LLL distribution} in~\cite{haeupler2011new,harris2020new}. It is a {Gibbs distribution} defined by hard constraints.

The following defines a computational problem for generating a sample according to the distribution $\mu$.

  \par\addvspace{.5\baselineskip}
\framebox{
  \noindent
  \begin{tabularx}{14cm}{@{\hspace{\parindent}} l X c}
    \multicolumn{2}{@{\hspace{\parindent}}l}{\underline{Sampling Satisfying Solution of Lov\'{a}sz Local Lemma}} \\
    \textbf{Input :} & a $\gamma$-satisfiable LLL instance $I$ with dependency graph $D_I=(V,E)$;\\ 
    \textbf{Output:} & a random satisfying assignment $\vec{X}^*=(X^*_i)_{i\in U}$ distributed as $\mu_I$.
  \end{tabularx}
 }
\par\addvspace{.5\baselineskip}

When the problem is solved in the \LOCAL{} model, the input is presented to the algorithm as follows: 
\begin{enumerate}
    \item The network $G$ of the \LOCAL{} model is just the dependency graph $D_I$.
    \item Each node $v\in V$ receives as input the values of $n=|V|$ and $\gamma$, along with the definition of the bad event $A_v$, and the distributions $\nu_i$ of the random variables $\{X_i\mid i\in\vbl(v)\}$ on which $A_v$ is defined
    so that it can locally draw independent evaluations of the random variables $\{X_i\mid i\in\vbl(v)\}$ or check the occurrence of $A_v$ on such evaluation.
\end{enumerate}

Our main technical result is an efficient \LOCAL{} algorithm for sampling satisfying solution according to the LLL distribution, for any LLL instance that is not prohibitively scarce to satisfy. 
We call this result a ``\LOCAL{} sampling lemma'' since it uses the local lemma framework to give a \LOCAL{} sampling algorithm.

\begin{theorem}[\LOCAL{} sampling lemma]\label{thm: sample-gibbs}
There is a randomized \LOCAL{} algorithm such that for any LLL instance $I$ with $n$ bad events, 
if $I$ is $\gamma$-satisfiable, then the algorithm returns a random satisfying assignment drawn from $\mu_I$, 
within $\tO\left(\log^6 n\cdot \log^4\frac{1}{\gamma}\right)$ rounds in expectation,
and within $\tO\left(\log^6 n\cdot \log^4\frac{1}{\gamma}\cdot\log^6\frac{1}{\epsilon}\right)$ rounds with probability at least $1-\epsilon$ for any $0<\epsilon<1$.
%
\end{theorem}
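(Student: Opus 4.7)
The plan is to build the algorithm around a recursive local sampling primitive that produces, at each node, an exact sample from the marginal of the LLL distribution $\mu_I$ while only touching variables inside a bounded-radius neighbourhood. The conceptual starting point is sequential rejection sampling: draw $\boldsymbol{X}\sim\nu$ and condition on $\boldsymbol{X}\in\Omega$. To localise this, I would maintain a dynamic \emph{resampling frontier} of events whose variables are as-yet undetermined, with the invariant that, conditioned on all variables outside the frontier, the joint law on the frontier equals the corresponding restriction of $\mu_I$; a variable that leaves the frontier may then be committed immediately. Concretely, the primitive \textsf{RecursiveSampling}$(v)$ would first draw tentative values for the variables in $\vbl(v)$ from $\nu$; if $A_v$ is not violated by these values together with already-committed neighbour values, $v$ commits, otherwise $v$ recursively invokes the primitive on every dependency-graph neighbour whose scope overlaps $\vbl(v)$ and then resamples consistently. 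I expect this framework to instantiate into the two algorithms hinted at in the excerpt: when strong spatial mixing of the output is available, recursion truncates once influences fall below $n^{-\omega(1)}$ (corresponding to \Cref{alg:recursive-sampling-with-cd}), otherwise the base case computes exact conditional marginals by exhaustive enumeration over a polylog-radius ball (\Cref{alg:recursive-sampling-without-cd}).

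Correctness will be established by induction on the recursion tree via the principle of deferred decisions: at each step only variables outside the current frontier are revealed, so the conditional distribution on the frontier stays equal to $\mu_I$ restricted to those variables. Some care is needed to ensure that the order of recursive calls does not bias the output; I would formalise this by fixing a sequential processing order consistent with the parallel execution and showing that the parallel schedule implements the same filtration as the sequential one, so that the composition of the per-step correct conditional samplings yields a bona fide sample of $\mu_I$.

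The main obstacle is bounding both the depth and the total size of the recursion tree so as to recover $\tO(\log^6 n \cdot \log^4(1/\gamma))$. For depth, I would attach a Moser--Tardos-style witness tree to each execution and use $\gamma$-satisfiability to show that the probability of a witness tree of depth $k$ decays geometrically, yielding an $O(\log(n/\gamma))$ depth bound with probability $1-n^{-O(1)}$; this accounts for the $\log^4(1/\gamma)$ factor once combined with the overhead of recursive marginal computation along each branch. For width, the \LOCAL{} scheduling of concurrent recursive calls must avoid conflicts on shared variables, which I would handle via a network-decomposition scheduler applied level by level, paying $\mathrm{polylog}(n)$ per level and contributing the $\log^6 n$ term. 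The $\log^6(1/\epsilon)$ high-probability bound should then follow from Chernoff-type concentration on the total number of resampling steps. The truly delicate point, and what makes this a ``sampling local lemma'' rather than a routine localisation of rejection sampling, is verifying that the witness-tree distribution has the required exponential tail using only $\gamma$-satisfiability, without assuming any structural LLL condition on the dependency graph.
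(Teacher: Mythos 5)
Your plan is built around a Moser--Tardos-style recursion: draw a tentative assignment, resample at violated events, recurse on neighbours, and then argue via witness trees. The central gap is that this does \emph{not} produce a sample from $\mu_I$. It is a classical fact that the Moser--Tardos output distribution is biased relative to the LLL distribution except for ``extremal'' instances, and partial rejection sampling (Guo--Jerrum--Liu) only repairs this in the extremal case. Your proposed invariant --- ``conditioned on all variables outside the frontier, the joint law on the frontier equals the corresponding restriction of $\mu_I$'' --- is exactly the conditional-Gibbs invariant the paper maintains (\Cref{def:conditional-gibbs}), but it is \emph{not} preserved by naively resampling from $\nu$ and propagating. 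Preserving it requires the Bayes-filter correction of \eqref{eq:warm-up-bayes-filter}--\eqref{eq:f-succeed-correct-distribution}: you must accept the resampled block only with a carefully chosen data-dependent probability, and otherwise enlarge the region. Your proposal has no such correction step, so the composition of per-step resamplings will not reconstitute $\mu_I$, and the deferred-decisions argument breaks at the induction step.

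A second, related gap: your base case proposes computing ``exact conditional marginals by exhaustive enumeration over a polylog-radius ball.'' But the true conditional marginal of $\mu_I$ at a node depends on constraints arbitrarily far away; a polylog ball only gives an approximation, and a generic $\gamma$-satisfiable instance need not exhibit any spatial mixing at all (the paper gives an explicit counterexample of rare chained assignments forcing long-range correlation). The paper's way out of this is precisely the LLL augmentation lemma (\Cref{lem:augmenting-informal}/\Cref{lem:augmenting}): it \emph{creates} the needed $\epsilon$-correlation by inserting a locally constructed rare bad event $A_\lambda$ between the two regions, and then compensates for the change of target measure by bootstrapping the marginal probability $P=\Pr_{\mu_I}[\vec X\text{ avoids }A_\lambda]$ through the ``zone of indecision'' recursion (\Cref{alg:recursive-sampling-without-cd}), in the spirit of Anand--Jerrum. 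Neither the augmentation idea nor the bootstrapping of $P$ appears in your sketch, yet both are indispensable to go beyond the strong-spatial-mixing special case that you correctly identify as easy. Your witness-tree tail analysis, even if it could be carried out, would at best bound the \emph{running time} of a Moser--Tardos-like search, not certify correctness of the output \emph{distribution}, which is the actual content of the theorem.
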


While the classical algorithmic Lov\'{a}sz Local Lemma (LLL) aims to construct a satisfying assignment, 
our objective is more ambitious: 
to generate a random satisfying assignment according to the LLL distribution $\mu$. 
This sampling problem has been shown to require polynomially stronger conditions than the conventional algorithmic LLL~\cite{bezakova2019approximation, GJL19, harris2020new}. 
Recently, polynomial-time centralized algorithms have been discovered for sampling uniform satisfying assignments for general LLL instances~\cite{he2022sampling,jain2022towards,feng2021fast,moitra2019approximate}. 
However, in our context of perfectly simulating Las Vegas \LOCAL{} algorithms, 
we do not impose any local lemma type of condition on the dependency graph. 
Instead, we show that the sampling problem is tractable via local computation, 
as long as the chance for a random assignment to be satisfying is non-negligible. 
To the best of our knowledge, this marks the first result of its kind.

As a byproduct, the algorithm in \Cref{thm: sample-gibbs} gives a uniform sampler for LCLs in the \LOCAL{} model.
The \concept{locally checkable labelings} (\concept{LCLs}) (see~\cite{naor1995can} for formal definitions) are vertex labelings satisfying local constraints in the network.
Suppose that each label is drawn uniformly and independently at random, 
and violating the locally checkable constraint at each node $v\in V$ defines a bad event $A_v$.
Then the LLL distribution corresponds to the uniform distribution over LCLs, and \Cref{thm: sample-gibbs} has the following corollary.

%

\begin{corollary}
For any locally checkable labeling (LCL) problem, if uniform random labeling can always generate a correct LCL with probability at least $2^{-\mathrm{polylog(n)}}$, then there is a randomized \LOCAL{} algorithm that outputs a uniform random LCL within $\mathrm{polylog}(n)$ rounds with high probability.
\end{corollary}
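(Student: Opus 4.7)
The plan is to reduce uniform sampling of LCLs to sampling from the LLL distribution, and then invoke Theorem~\ref{thm: sample-gibbs}. First I would set up the LLL instance $I$ as follows: the independent variables $\{X_v\}_{v\in V}$ are the labels, one per vertex, each drawn uniformly from the label alphabet $\Sigma$ (so $\nu_v$ is the uniform distribution on $\Sigma$); the bad events $\{A_v\}_{v\in V}$ are the violations of the locally checkable constraints, one per vertex (or per constraint location). Since the LCL is locally checkable, each event $A_v$ depends only on labels within constant radius around $v$ in the underlying graph, so $|\vbl(v)|=O(1)$ and the dependency graph $D_I=(V,E)$ has an edge $\{u,v\}$ iff $u$ and $v$ are within a constant radius in the original LCL network. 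In particular, $D_I$ is a constant-power of the LCL network, so one round of communication on $D_I$ can be simulated by $O(1)$ rounds on the original network.

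Next I would verify that the LLL distribution $\mu_I$ coincides with the uniform distribution over valid LCLs. This is immediate from the definitions: the product measure $\nu=\prod_v\nu_v$ is uniform over all labelings, and $\mu_I$ is $\nu$ conditioned on avoiding every $A_v$, i.e.\ conditioned on satisfying every local constraint. Hence $\mu_I$ is uniform over the set of valid LCLs.

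I would then quantify the satisfiability parameter. By hypothesis, a uniformly random labeling yields a valid LCL with probability at least $\gamma\ge 2^{-\polylog(n)}$, so $I$ is $\gamma$-satisfiable with $\log(1/\gamma)=\polylog(n)$. Applying Theorem~\ref{thm: sample-gibbs} directly yields an expected round complexity on $D_I$ of
\[
\tO\!\left(\log^6 n\cdot\log^4(1/\gamma)\right)=\tO(\log^6 n)\cdot\polylog(n)=\polylog(n),
\]
and, for any constant $c>0$, setting $\epsilon=n^{-c}$ gives a high-probability round bound of
\[
\tO\!\left(\log^6 n\cdot\log^4(1/\gamma)\cdot\log^6(1/\epsilon)\right)=\polylog(n).
\]
Since communication on $D_I$ costs only $O(1)$ rounds on the original LCL network, these bounds transfer back with only a constant-factor blow-up, yielding a $\polylog(n)$-round algorithm with high probability.

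The only mildly subtle step is the simulation of $D_I$ on the original LCL network: each node must be able to discover and communicate with all vertices within constant distance, which is a standard $O(1)$-round preprocessing. There is no real obstacle, since locality of the LCL constraints guarantees bounded-radius dependencies, and therefore Theorem~\ref{thm: sample-gibbs} applies verbatim with an $O(1)$-factor overhead absorbed into $\polylog(n)$.
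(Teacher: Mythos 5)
Your proposal matches the paper's own derivation: treat each vertex's label as an independent uniform variable, treat violation of the local constraint at each vertex as a bad event, observe the LLL distribution is uniform over valid LCLs, and invoke Theorem~\ref{thm: sample-gibbs} with $\log(1/\gamma)=\polylog(n)$. The only addition you make — the $O(1)$-round simulation of the dependency graph on the LCL network — is correct and a harmless elaboration of what the paper leaves implicit.
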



\subsection{Perfect simulation via \LOCAL{} sampling lemma}
Recall that for \LOCAL{} algorithms, an instance $\Inst{I}=(G,\vec{x})$ consists of a network $G=(V,E)$ and a vector $\vec{x}=(x_v)_{v\in V}$  specifying the local input $x_v$ to each node $v\in V$.
Let $\Class{C}$ be a class of instances.
A \emph{$t(n)$-round Las Vegas \LOCAL{} algorithm with success probability $\gamma(n)$} on instance class $\Class{C}$, 
is a randomized \LOCAL{} algorithm such that on every instance $\Inst{I}=(G,\vec{x})\in\Class{C}$, where $G$ is a network with $n=|V|$ nodes, 
at every node $v\in V$ the algorithm terminates within $t(n)$ rounds and outputs a pair $(Y_v,F_v)$ where $F_v\in\{0,1\}$ indicates whether the algorithm failed locally at $v$, and the probability that the algorithm succeeds is 
\[
\Pr\left(\forall v\in V: \vec{F}_{v}={0}\right)\ge\gamma(n).
\]
Denote by $(\vec{Y}_\Inst{I}, \vec{F}_\Inst{I})=\left\langle(Y_v)_{v\in V},(F_v)_{v\in V}\right\rangle$ the output of the Las Vegas \LOCAL{} algorithm on instance $\Inst{I}=(G,\vec{x})$ with network $G=(V,E)$.
The following theorem is a formal restatement of \Cref{thm:main-informal}, which gives a zero-error Las Vegas \LOCAL{} algorithm that perfectly simulates the good output $(\vec{Y}_\Inst{I}\mid \vec{F}_\Inst{I}=\vec{0})$ when there is no failure everywhere in the network.

\begin{theorem}[formal restatement of \Cref{thm:main-informal}]
\label{thm:main-formal}
Let $t:\mathbb{N}\to\mathbb{N}$ and $\gamma:\mathbb{N}\to[0,1]$ be two functions.
Let $\Alg{A}$ be a $t(n)$-round Las Vegas \LOCAL{} algorithm with success probability $\gamma(n)$ on instance class $\Class{C}$.
%
There is a \LOCAL{} algorithm $\Alg{B}$ such that on every instance $\Inst{I}\in\Class{C}$ of $n$ nodes, for any $0<\epsilon<1$,
\begin{itemize}
\item  
$\Alg{B}$ terminates within $t(n)\cdot\tO\left(\log^6 n \cdot \log^4 \frac{1}{\gamma(n)}\cdot \log^6 \frac{1}{\epsilon}\right)$ rounds with probability at least $1-\epsilon$;
\item 
upon termination, $\Alg{B}$ returns an output $\vec{Y}_{\Inst{I}}^{\Alg{B}}$ that is identically distributed as $(\vec{Y}_\Inst{I}^{\Alg{A}}\mid \vec{F}_\Inst{I}^{\Alg{A}}=\vec{0})$,
which stands for the output of $\Alg{A}$ on the same instance $\Inst{I}$ conditioned on that none of the nodes fails.
\end{itemize}
%
%
%
\end{theorem}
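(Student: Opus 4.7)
The plan is to reduce the perfect-simulation task to the \LOCAL{} sampling lemma established in \Cref{thm: sample-gibbs}. Fix an instance $\Inst{I}=(G,\vec{x})\in\Class{C}$ with $G=(V,E)$ and $n=|V|$. For each node $v\in V$ let $X_v$ denote the private randomness used by $v$ during an execution of $\Alg{A}$; these random variables are mutually independent across $v$. Since $\Alg{A}$ terminates within $t(n)$ rounds, the local output pair $(Y_v,F_v)$ at $v$ is a deterministic function of $G$, $\vec{x}$, and $X_{N_v}$, where $N_v=\{u\in V : \dist_G(u,v)\le t(n)\}$ is the $t(n)$-neighborhood of $v$. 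Define the bad event $A_v=\{F_v=1\}$ with $\vbl(v):=N_v$; this gives an LLL instance $I$ on the variables $\{X_v\}_{v\in V}$ whose dependency graph $D_I$ contains the edge $\{u,v\}$ precisely when $N_u\cap N_v\neq\emptyset$, equivalently when $\dist_G(u,v)\le 2t(n)$. The hypothesis $\Pr(\forall v\in V:F_v=0)\ge \gamma(n)$ says exactly that $I$ is $\gamma(n)$-satisfiable, and by construction the LLL distribution $\mu_I$ coincides with the conditional distribution of $\vec{X}=(X_v)_{v\in V}$ given $\vec{F}_\Inst{I}^{\Alg{A}}=\vec{0}$.

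Next I would invoke \Cref{thm: sample-gibbs} on $I$ to draw a perfect sample $\vec{X}^{*}\sim\mu_I$, after which each node $v$ locally simulates $\Alg{A}$ for $t(n)$ more rounds using $X_v^{*}$ as its private randomness and outputs the resulting $Y_v^{\Alg{B}}$. Because $\Alg{A}$'s local output is a deterministic function of the joint randomness conditioned on the graph and inputs, the random vector $\vec{Y}^{\Alg{B}}$ is identically distributed as $(\vec{Y}_\Inst{I}^{\Alg{A}}\mid \vec{F}_\Inst{I}^{\Alg{A}}=\vec{0})$, giving the claimed distributional correctness.

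The round complexity is handled by simulating communication on $D_I$ using $G$. Every edge of $D_I$ corresponds to a path of length at most $2t(n)$ in $G$, so after a one-time gathering of $2t(n)$-neighborhoods (which costs $O(t(n))$ rounds on $G$), each round of the sampler on $D_I$ can be faithfully emulated in $O(t(n))$ rounds on $G$. By \Cref{thm: sample-gibbs}, the sampler uses $\tO(\log^6 n\cdot \log^4(1/\gamma(n))\cdot \log^6(1/\epsilon))$ rounds on $D_I$ with probability at least $1-\epsilon$, so the total cost on $G$ is $t(n)\cdot \tO(\log^6 n\cdot \log^4(1/\gamma(n))\cdot \log^6(1/\epsilon))$, matching the statement; the final $t(n)$-round replay of $\Alg{A}$ is absorbed in this bound.

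The main obstacle is really \Cref{thm: sample-gibbs} itself: once a polylogarithmic-round perfect sampler from arbitrary satisfiable LLL distributions is available, the reduction above is essentially bookkeeping. A minor technical point to handle carefully is that the paper's LLL framework is phrased with finite variable domains, whereas a node's random bits are a priori unbounded; this is resolved by noting that in $t(n)$ rounds each node consumes only finitely many bits of randomness that can affect its output (after standard discretization), so one may take each $\Sigma_v$ to be a finite set of binary strings without altering the joint distribution of $(\vec{Y},\vec{F})$, allowing \Cref{thm: sample-gibbs} to be applied directly.
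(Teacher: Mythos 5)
Your proposal is correct and follows essentially the same reduction as the paper: define an LLL instance whose variables are each node's private randomness, whose bad event $A_v$ is the local failure $F_v=1$ with $\vbl(v)=B_{t(n)}(v)$, observe that $\gamma(n)$-satisfiability and the identity $\mu_I = \mathcal{L}(\vec{X}\mid \vec{F}=\vec{0})$ follow immediately, invoke \Cref{thm: sample-gibbs}, simulate the sampler on $G$ with $O(t(n))$ overhead per $D_I$-round, and replay $\Alg{A}$ on the sampled bits. One small omission relative to the paper: each node must locally compute $\gamma(n)$ to run the sampler (the paper gets this by enumerating all instances in $\Class{C}$ with $n$ nodes), and the paper also notes that since the sampler has random termination each node keeps re-evaluating its output $Y_v^*$ from its current $t(n)$-neighborhood so the final output stabilizes within $t(n)$ additional rounds; these are bookkeeping points and do not affect the structure of the argument.
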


\begin{proof}
%
%
Let  $\Inst{I}=(G,\vec{x})\in\Class{C}$ be the instance of the \LOCAL{} algorithm, where $G=(V,E)$ is a network with $n=|V|$ nodes and the vector $\vec{x}=(x_v)_{v\in V}$ specifies the local inputs.
For each $v\in V$, let $X_v$ denote the local random bits at node $v$ used by algorithm $\Alg{A}$.

%
%

%
Since $\Alg{A}$ is a $t(n)$-round Las Vegas \LOCAL{} algorithm, 
at any $v\in V$, the algorithm $\Alg{A}$ deterministically maps the inputs $\vec{x}_{B}=(x_v)_{v\in B}$ and the random bits $\vec{X}_{B}=(X_v)_{v\in B}$ within the $t(n)$-ball $B=B_{t(n)}(v)$, to the local output $\left(Y_v^{\Alg{A}},F_v^{\Alg{A}}\right)$, where $F_v^{\Alg{A}}\in\{0,1\}$ indicates the failure at $v$. 
%
This defines a bad event $A_v$ for every $v\in V$, on the random variables $X_u$ for $u\in  B_{t(n)}(v)$, i.e.~$\vbl(v)= B_{t(n)}(v)$, by
\[
A_v: F_v^{\Alg{A}}(\vec{X}_{\vbl(v)})=1.
\]
Together, this defines an LLL instance $I= (\{X_v\}_{v\in V},\{A_v\}_{v\in V})$, which is $\gamma(n)$-satisfiable because the probability that $\Alg{A}$ has no failure everywhere  is at least $\gamma(n)$. 

%

We simulate the sampling algorithm in Theorem~\ref{thm: sample-gibbs} (which we call the \emph{LLL sampler}) on this LLL instance $I$.
Rather than executing it on the dependency graph $D_I$ as in Theorem~\ref{thm: sample-gibbs}, here we simulate the LLL sampler on the network $G=(V, E)$, where each node $v\in V$ holds an independent random variable $X_v$ and a bad event $A_v$. 
%
%
%
Note that any 1-round communication in the dependency graph $D_I$
can be simulated by $O(t(n))$-round communications in this network $G=(V,E)$.
Also note that at each $v\in V$, the values of $t(n)$ and $\gamma(n)$ can be computed locally by knowing $n=|V|$ and enumerating all network instances $\Inst{I}\in\Class{C}$ with $n$ nodes.
The LLL sampler can thus be simulated with $O(t(n))$-multiplicative overhead. 
In the end it outputs an $\vec{X}^*=(X_v^*)_{v\in V}\sim\mu_I$, which is identically distributed as $\left(\vec{X}\mid \vec{F}^{\Alg{A}}_{\Inst{I}}=\vec{0}\right)$, i.e.~the random bits used in the algorithm $\Alg{A}$ conditioned on no failure.
%
%
The final output $\vec{Y}^*=(Y_v^*)_{v\in V}$ is computed by simulating $\Alg{A}$ within $t(n)$ locality deterministically using $\vec{X}^*=(X_v^*)_{v\in V}$ as random bits.

The LLL sampler in Theorem~\ref{thm: sample-gibbs} is a Las Vegas algorithm with random terminations. 
Each node $v\in V$ can continue updating $Y_v^*$ using the current random bits $\vec{X}^*_B$ it has collected within its $t(n)$-local neighborhood $B=B_{t(n)}(v)$.   
Once the LLL sampler for generating $\vec{X}^*$ terminates at all nodes, 
the updating of $\vec{Y}^*$ will stabilize within additional $t(n)$ rounds.  
%
And this final $\vec{Y}^*$ is identically distributed as $\left(\vec{Y}_{\Inst{I}}^{\Alg{A}}\mid \vec{F}^{\Alg{A}}_{\Inst{I}}=\vec{0}\right)$.
This gives us the zero-error Las Vegas \LOCAL{} algorithm $\Alg{B}$ as claimed in Theorem~\ref{thm:main-formal}. 
\end{proof}

\section{Preliminary}\label{sec:preliminary}

\subsection{Graph and LLL notations}\label{sec:graph-notations}
Let $G = (V, E)$ be an undirected graph. The following notations are used throughout.
\begin{itemize}
    \item \emph{Neighborhoods}:
    $N_G(v)\triangleq\{u\in V\mid \{u,v\}\in E\}$ and inclusive neighborhood $N^+_G(v)\triangleq N(v)\cup\{v\}$.
    \item \emph{Distances}:
    $\dist_G(u, v)$ represents the shortest path distance between $u$ and $v$ in $G$, and 
    $\dist_G(S, T)\triangleq \min_{u\in S, v\in T} \dist_G(u,v)$.
    The diameter of $\Lambda\subseteq V$ in $G$ is given by $\mathrm{diam}_G(\Lambda)\triangleq \dist_G(\Lambda, \Lambda)$.
    \item \emph{Balls}:
    $B^G_{r}(v)\triangleq\{u\in V\mid \dist_G(u,v)\le r \}$ 
    and $B^G_{r}(\Lambda)\triangleq\{u\in V\mid \dist_G(u,\Lambda)\le r\}$ for $\Lambda\subseteq V$. 
    \item \emph{Shells/Spheres}:
    $S^G_{[\ell,r]}(v)\triangleq B^G_{r}(v) \setminus B^G_{\ell-1}(v)$ 
    and $S^G_{[\ell,r]}(\Lambda)\triangleq B^G_{r}(\Lambda) \setminus B^G_{\ell-1}(\Lambda)$ for $\Lambda\subseteq V$.
\end{itemize}
%
%
Let  $I=\left(\{X_i\}_{i\in U},\{A_v\}_{v\in V}\right)$ be a LLL instance with dependency graph $D_I$.
%
%
The following defines a notion of rings of variables that enclose a nonempty subset $\Lambda\subseteq V$ of bad events.
\begin{itemize}
    \item \emph{Rings}:
    $\ring{I}{r}{\Lambda}\triangleq\vbl\left(B_{r}^{D_I}(\Lambda)\right)\setminus \vbl\left(B_{r-1}^{D_I}(\Lambda)\right)$
    and in particular, $\ring{I}{0}{\Lambda}\triangleq\vbl(\Lambda)$. 
    Furthermore,
    \begin{align}
    \ring{I}{[i,j]}{\Lambda}\triangleq \bigcup_{i\leq r \leq  j} \ring{I}{r}{\Lambda}.\label{eq:def-variable-ring}
    \end{align}
\end{itemize}
%
We also define the rings of bad events {intersected} or {contained} by a ring of variables:
\begin{align}
\eventaround{I}{i}{j}{\Lambda}
\triangleq\left\{v\in V\mid \vbl(v)\cap \ring{I}{[i,j]}{\Lambda} \neq \emptyset \right\}
\quad\text{ and }\quad
\eventin{I}{i}{j}{\Lambda}
\triangleq\left\{v\in V\mid \vbl(v)\subseteq \ring{I}{[i,j]}{\Lambda} \right\}.\label{eq:def-event-ring}
\end{align}
In all the above notations, we omit the graph $G$ or the LLL instance $I$ if they are clear in the context.

Inspired by the induced subgraph, we define the sub-instance of $I=\left(\{X_i\}_{i\in U},\{A_v\}_{v\in V}\right)$ induced by a subset $\Lambda\subseteq V$ of bad events:
\begin{align*}
    \vbl(\Lambda)
    \triangleq\bigcup_{v\in \Lambda}\vbl(v)
    \quad\text{ and }\quad
    I(\Lambda)
    \triangleq\left(\{X_i\}_{i\in \vbl(\Lambda)},\{A_v\}_{v\in \Lambda}\right),
\end{align*}
where $I(\Lambda)$ is the LLL sub-instance induced by the bad events $\{A_v\}_{v\in \Lambda}$.

\subsection{Marginal distribution} 

Let $I=\left(\{X_i\}_{i\in U},\{A_v\}_{v\in V}\right)$ be a LLL instance, where each random variable $X_i$ follows the distribution $\nu_i$ over domain $\Sigma_i$. 
%
For $\Lambda\subseteq U$, define $\Sigma_{\Lambda} \triangleq \bigotimes_{i\in \Lambda} \Sigma_i$ and $\nu_{\Lambda} \triangleq \prod_{i\in \Lambda} \nu_i$,
and write  $\nu= \nu_U$ and $\Sigma= \Sigma_U$.

For nonempty $\Lambda \subset U$ and $\tau  \in \Sigma_\Lambda$, define 
\begin{align}\label{eq:def-Omega-tau}
\Omega^{\tau} =\Omega_I^{\tau} 
\triangleq 
\left\{\sigma \in \Sigma 
\mid 
\sigma_\Lambda=\tau
\text{ and }\sigma
\text{ avoids bad events }A_v\text{ for all }v\in V\text{ s.t. }\vbl(v)\not\subseteq\Lambda 
\right\}.    
\end{align}

For disjoint $S,T \subseteq U$,  for $\sigma \in \Sigma_{S}$ and $\tau \in \Sigma_{T}$, 
denote by $\sigma \land \tau$ the direct concatenation of $\sigma$ and $\tau$, that is, 
$\sigma \land \tau \in  \Sigma_{S \cup T}$ satisfying $(\sigma \land \tau)_i=\sigma(i)$ for $i\in S$ and $(\sigma \land \tau)_i=\tau(i)$ for $i\in T$.

The following defines a notion of marginal distribution in the LLL instance. 

\begin{definition}[marginal distribution]
For $\Lambda\subseteq  U$, a $\tau\in \Sigma_\Lambda$  is said to be a \emph{feasible boundary condition}  if $\Omega^\tau\neq\emptyset$, where $\Omega^\tau$ is defined in~\eqref{eq:def-Omega-tau}.
Given $\Lambda\subset  U$ and feasible boundary condition $\tau\in \Sigma_\Lambda$, 
for any nonempty $S\subseteq U\setminus{\Lambda}$,
the \emph{marginal distribution} on $S$ induced by $\mu=\mu_I$  conditioned on $\tau$, denoted by $\mu_S^\tau=\mu_{I,S}^\tau$, is defined as:
\[
\forall \sigma \in \Sigma_S, \quad \mu^{\tau}_{S}(\sigma) \triangleq \Pr_{\vec{X}\sim\nu}(X_S =\sigma \mid \vec{X}\in \Omega^{\tau}).
\]   
\end{definition}

\subsection{Decay of correlation in LLL}
\label{subsec:correlation-decay-LLL}





We consider the following notion of correlation decay in the LLL instance.
\begin{definition}[$\epsilon$-correlated sets]\label{def:correlation}
%
A pair of disjoint $S, T\subset U$ with $S\cup T\neq U$,
is said to be \concept{$\epsilon$-correlated}, if
one of $S,T$ is empty, or
for any $\sigma_1,\sigma_2\in\Sigma_{S}$ and $\tau_1,\tau_2\in\Sigma_{T}$,
\begin{align*}
\nu\left(\Omega^{\sigma_1\land\tau_1}\right) \cdot \nu\left(\Omega^{\sigma_2\land\tau_2}\right) 
\leq (1+\epsilon) \cdot \nu\left(\Omega^{\sigma_1\land\tau_2}\right)\cdot \nu\left(\Omega^{\sigma_2\land\tau_1}\right).
\end{align*}
\end{definition}

To see that this indeed defines a decay of correlation, 
note that it is equivalent to the following property:
For $\vec{X}$ drawn according to the product distribution $\nu$ 
that avoids all bad events $A_v$ satisfying $\vbl(v)\not\subseteq S\cup T$,
\begin{align*}
    &\Pr\left(X_S=\sigma_1\land X_T=\tau_1\right)\cdot
    \Pr\left(X_S=\sigma_2\land X_T=\tau_2\right)\\
    \le 
    &(1+\epsilon)\cdot
    \Pr\left(X_S=\sigma_1\land X_T=\tau_2\right)\cdot
    \Pr\left(X_S=\sigma_2\land X_T=\tau_1\right).
\end{align*}
Recall that we want to bound the correlation between $X_S$ and $X_T$ in a random $\vec{X}$ distributed as $\mu=\mu_I$.
%
Here, \Cref{def:correlation} is slightly different by ignoring the bad events $A_v$ defined on the variables within $S\cup T$.

\subsection{Las Vegas  \SLOCAL{} algorithm}\label{subsec:SLOCAL-LV}

Our main sampling algorithm is described in a sequential local (\SLOCAL) paradigm. 
The \SLOCAL{} model introduced by Ghaffari, Kuhn, and Maus \cite{ghaffari2016complexity} captures the local computations where symmetry breaking is not concerned.
We extend this notion to the algorithms with random locality of computation.

\paragraph{\SLV{} algorithms}
An $N$-scan \SLV{} algorithm runs on a network $G=(V,E)$ with a subset $A\subseteq V$ of \emph{active} nodes.
Each node $v\in V$ maintains a local memory state $M_v$, 
initially storing $v$'s local input, random bits, its ID, and neighbors' IDs.
An arbitrary total order is assumed over~$V$,
such that the relative order between any $u,v\in V$ can be deduced from the contents of $M_u$ and $M_v$.
	%
	%

The algorithm operates in $N\ge 1$ scans.
Within each scan, the active nodes in $A$ are processed one after another in the ordering.
Upon each node $v\in A$ being processed, for $\ell=0,1,2,\ldots$, 
the node $v$ tries to grow an $\ell$-ball $B_{\ell}(v)$ and update the memory states $M_u$ for all $u\in B_{\ell}(v)$ based on the information observed so far by $v$, 
until some stopping condition has been met by the information within the current ball $B_{\ell}(v)$.
Finally, each $v\in V$ outputs a value based on its memory state $M_v$.

	
	%

Compared to the standard (Monte Carlo) \SLOCAL{} algorithm, whose locality is upper bounded by a fixed value, 
in \SLV{} algorithm, the local neighborhoods are randomly constructed in a sequential and local fashion.
The next theorem gives a simulation of \SLV{} algorithms in the \LOCAL{} model.
\begin{proposition}[simulation of \SLV{} in \LOCAL{}]
\label{thm: SLV-local to local}
Let $\Alg{A}$ be an $N$-scan \SLV{} algorithm that assumes an arbitrary ordering of nodes. 
Then there is a randomized \LOCAL{} algorithm $\Alg{B}$, 
such that starting from the same initial memory states $\boldsymbol{M}=(M_v)_{v\in V}$, 
the algorithm $\Alg{B}$ terminates and returns the same output as $\Alg{A}$ within $O\left(|A|\cdot \max_{v\in A,j\in[N]}\ell_{v,j} \right)$ rounds, where $A$ is the set of active nodes and $\ell_{v,j}$ is the radius of the ball accessed by the active node $v$ in the $j$th scan of algorithm $\Alg{A}$, both fully determined by $\boldsymbol{M}$.
\end{proposition}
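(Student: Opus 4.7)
The plan is to have $\Alg{B}$ simulate $\Alg{A}$ scan by scan, and within each scan process the active nodes one by one in the prescribed order, giving each (node, scan) pair its own dedicated block of consecutive rounds. Set $L \triangleq \max_{v \in A, j \in [N]} \ell_{v,j}$. Because the ordering over $V$ is deducible from $\boldsymbol{M}$ and the global round counter is common knowledge in the \LOCAL{} model, every node can locally compute which active node is currently being processed just by counting rounds, so no separate synchronization mechanism is required.

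Inside the block dedicated to active node $v$ in scan $j$, $v$ will grow its ball incrementally: to advance from $B_{k-1}(v)$ to $B_{k}(v)$, $v$ floods outward one more hop along the BFS tree rooted at itself, and the newly reached nodes echo back their current memory states, which costs $O(1)$ rounds per layer. After each extension $v$ applies the stopping predicate of $\Alg{A}$, which depends only on information inside the current ball; as soon as it fires, $v$ locally computes the new memory states for every node in $B_{\ell_{v,j}}(v)$ exactly as $\Alg{A}$ would, and ships those updates back along the BFS tree in another $\ell_{v,j}$ rounds. All other nodes stay passive during this block apart from relaying messages. Fixing a block length of $2L + O(1)$ rounds suffices for any single processing, regardless of its actual radius.

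Correctness follows by induction over the processing order: after the block corresponding to the $i$th active node in scan $j$, the memory states $(M_u)_{u \in V}$ held by $\Alg{B}$ coincide with those $\Alg{A}$ would have right after the corresponding step, because $v$ observed the same ball contents and applied the same update rule; since the initial memory (which contains the random bits) and the ordering are identical, the final outputs match. The total round count is $N \cdot |A| \cdot (2L + O(1)) = O(|A| \cdot L)$, absorbing the number of scans $N$ into the big-$O$ exactly as the proposition does. The only real difficulty I anticipate is bookkeeping: one must verify that the ``memory state'' abstraction really contains everything $v$ needs to mimic $\Alg{A}$ locally (local input, random bits, plus any state written by earlier processings), and that nodes can unambiguously parse the schedule from the shared ordering. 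Both concerns are resolved by the \LOCAL{} model's synchronous rounds and unbounded message sizes, since every $u \in B_{k}(v)$ can relay its entire $M_u$ in a single message, so growing shells really does take one round per layer.
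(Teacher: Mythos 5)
Your proposal has a genuine gap in the scheduling mechanism. You claim that "every node can locally compute which active node is currently being processed just by counting rounds," but this requires each node to know the full active set $A$, the relative order of all its members, and the quantity $L = \max_{v\in A,j\in[N]}\ell_{v,j}$. None of these are locally available: a node $v$ initially knows only its own $M_v$ (whether \emph{it} is active, its ID, its neighbors), not which far-away nodes are active, how many active nodes precede it in the ordering, or what the maximum radius across the whole execution will be. The fact that "the relative order between any $u,v$ can be deduced from the contents of $M_u$ and $M_v$" does not help, since $v$ has no access to $M_u$ for distant $u$ without first communicating over the very rounds you are trying to budget. Since the block boundaries cannot be computed locally, the sequential blocking scheme is not implementable in the \LOCAL{} model within the claimed round budget. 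The paper sidesteps this entirely: each active node $v$ independently explores its neighborhood at geometrically doubling radii, discovers the cluster $A_v\subseteq A$ of active nodes whose balls could interact with its own, and locally simulates the SLV algorithm on that entire cluster (a computation that costs zero extra rounds). Correctness is then recovered by showing the resulting clusters $\{A_v\}$ form a laminar family (nested or disjoint), and a cancellation rule keeps only the maximal ones, whose update sets $T_u$ are provably disjoint.

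A secondary issue: you write $N\cdot |A| \cdot (2L + O(1)) = O(|A| \cdot L)$, "absorbing the number of scans $N$ into the big-$O$ exactly as the proposition does." The proposition's bound has no $N$ factor, and $N$ is not a constant in general, so your bound is genuinely off by a factor of $N$. The paper avoids this factor because all $N$ scans over a cluster are simulated internally after a single information-gathering phase, so only the exploration radius (which is bounded by $O(|A|\cdot L)$ regardless of $N$) contributes to the round count.
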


Compared to the simulation of \SLOCAL{} Monte Carlo algorithm in the \LOCAL{} model proved in~\cite{ghaffari2016complexity}, which relies on the network decomposition to parallelize the \SLOCAL{} procedure,
\Cref{thm: SLV-local to local} provides a rather straightforward simulation that does not parallelize the local computations.
An advantage of such an easy simulation is that it does not require a worst-case complexity upper bound for all scan orders of nodes. 
%
%
%
This translation from \SLV{} to \LOCAL{} algorithm makes it more convenient to describe \LOCAL{} algorithms where there are multiple randomly growing local neighborhoods interfering with each other.

A formal proof of \Cref{thm: SLV-local to local} is given in \Cref{sec: details of SLV-local to local} for completeness. 


\section{Exposition of the Algorithm: Special Case}\label{sec:special-case}

In this section, we expose the key ideas of the main sampling algorithm stated in \Cref{thm: sample-gibbs} within a special case. 
The detailed and formal construction of the algorithm will be presented in the next section.

As stated in \Cref{thm: sample-gibbs}, the instance of the algorithm is a LLL instance $I=(\{X_i\}_{i\in U}, \{A_v\}_{v\in V})$ with $n=|V|$ bad events. We assume that $I$ is $\gamma$-satisfiable, and let $D_I$ be its dependency graph.

Our sampling algorithm follows a natural two-step framework:
\begin{itemize}
    \item Initially, the algorithm generates a random assignment $\vec{Y}$ according to the product distribution $\vec{\nu}$. This can be easily achieved by having each node independently generate its own variables.
    \item 
    If none of the bad events in $\{A_v\}_{v\in V}$ occurs, then $\vec{Y}$ follows the correct distribution $\mu_I$. Otherwise, if some bad events $A_v$ have occurred on $\vec{Y}$, the algorithm locally applies some corrections to $\vec{Y}$ to ensure that the corrected assignment $\vec{Y}'$ follows the correct distribution~$\mu_I$.
\end{itemize}


Obviously, the nontrivial part of the algorithm is the above second step, in which the algorithm locally modifies a random assignment $\vec{Y}\sim\vec{\nu}$ to a new assignment $\vec{Y}'\sim\mu_I$ when some bad events $A_v$ occur on~$\vec{Y}$.


To simplify our exposition, we start by making the following idealized assumption. 
\begin{assumption}\label{ass:oneAv}
    There is only one node $v\in V$ such that its corresponding bad event $A_v$ occurs on $\vec{Y}$. 
\end{assumption}

Next, we will expose the main idea of our sampling algorithm 
by explaining how to resolve the sampling problem under this idealized assumption. 
%
The rest of this section is organized as follows:
\begin{enumerate}
    \item 
    We begin by approaching the sampling problem with modest objectives. 
    Our first goal is to sample with a bounded \emph{expected} complexity under \Cref{ass:oneAv}, 
    while also assuming a \emph{correlation decay} property. 
    With these assumptions, we present a sampling algorithm. (\Cref{sec:warm-up-with-cd})

    \item 
    The correlation decay assumed above may not always hold. 
    To address this, we introduce an \emph{augmentation} of the LLL instance, inducing desirable correlation decay by properly augmenting the LLL instance. 
    This enables us to remove the correlation decay assumption from the earlier mentioned algorithm, ensuring correct sampling from $\mu_I$ despite augmenting the LLL instance $I$. (\Cref{sec:warm-up-without-cd})

    \item 
    Finally, we introduce a \emph{recursive} sampling framework,
    which upgrades the aforementioned sampling algorithms with bounded expected complexity to algorithms with \emph{exponentially convergent} running time.
    This proves \Cref{thm: sample-gibbs} under \Cref{ass:oneAv}.
    (\Cref{sec:warm-up-recursive})


\end{enumerate}
The reliance on the idealized assumption \Cref{ass:oneAv} will be eventually eliminated in the next section (\Cref{sec:algorithm}), where we will formally prove the general result stated in \Cref{thm: sample-gibbs}.

\subsection{Warm-up: Sampling with idealized correlation decay}\label{sec:warm-up-with-cd}
In this part, we assume the correlation decay property as formally defined in~\Cref{def:correlation}.
This idealized correlation decay property is assumed for exposition purpose, and reliance on it will be eliminated later.
\begin{assumption}\label{ass:decay}
Any pair of disjoint $S,T\subseteq U$ are $\frac{1}{2n^{3}}$-correlated.
\end{assumption}

Let $\vec{Y}$ be generated according to the product distribution $\vec{\nu}$.
Under \Cref{ass:oneAv}, there is only one node $v\in V$ having $A_v$ occur on $\vec{Y}$. 
Let $S\triangleq\vbl(v)$ and $T\triangleq U\setminus \vbl(B_1(v))$. 
Notice that we have
\begin{align}
Y_T\sim \mu^{Y_S}_{T},\label{eq:marginal-distribution-oneAv}
\end{align}
because  $\vec{Y}$ is distributed as $\vec{\nu}$ and all bad events except $A_v$ do not occur on $\vec{Y}$.

%
%

\paragraph{An attempt to correct resampling.}
Our goal is to locally fix the random assignment $\vec{Y}$ around $v$ to make it distributed as $\mu_I$. A natural attempt is to resample evaluation of $Y_{U \backslash T}$ according to the correct marginal distribution $\mu^{Y_T}_{U\backslash T}$. 
This would produce a new assignment $\vec{Y}'$ which is distributed as: 
\[\forall \sigma\in \Sigma, \quad \Pr[\vec{Y}'=\sigma]=\Pr[\vec{Y}'_T=\sigma_T]\cdot \Pr[\vec{Y}'_{U\backslash T}=\sigma_{U\backslash T}\mid\vec{Y}'_T=\sigma_T]=\mu^{Y_S}_T(\sigma_T)\cdot\mu^{\sigma_T}_{U\backslash T}(\sigma_{U\backslash T}); \]
whereas, our goal is that each $\sigma\in \Sigma$ is sampled with probability $\mu_I(\sigma)=\mu_T(\sigma_T)\cdot\mu^{\sigma_T}_{U\backslash T}(\sigma_{U\backslash T})$. 
%

To remedy this, we apply the \emph{Bayes filters} introduced in~\cite{feng2020perfect}. 
A Bayes filter $\mathcal{F}=\mathcal{F}(\vec{Y}, S, T)$ is a trial whose success or failure is determined by $\vec{Y}, S, T$. 
The probability that $\mathcal{F}$ succeeds satisfies 
\begin{align}\label{eq:warm-up-bayes-filter}
  \Pr[\,\mathcal{F} \text{ succeeds}\,\mid \vec{Y}=\sigma] \propto \frac{\mu_{T}(\sigma_T)} {\mu^{Y_S}_{T}(\sigma_T)}, 
\end{align}
where $\propto$ are taken over all possible $\sigma\in\Sigma$ with $\mu^{Y_S}_{T}(\sigma_T)>0$. 
%
%
%
%

Now given a Bayes filter $\mathcal{F}$ satisfying \eqref{eq:warm-up-bayes-filter}, we conduct an experiment of $\mathcal{F}$,
and resample  $Y_{U\setminus T}\sim\mu^{Y_T}_{U\setminus T}$ if $\mathcal{F}$ succeeds. 
This will produce a $\vec{Y}'\sim \mu_I$ due to the Bayes law derived as follows:
\begin{equation}\label{eq:f-succeed-correct-distribution}
\begin{aligned}
\Pr[\vec{Y}'=\sigma\mid \mathcal{F}\text{ succeeds}\,]&=\frac{\Pr[\vec{Y}'=\sigma]\cdot\Pr[\mathcal{F}\text{ succeeds}\, \mid \vec{Y}=\sigma]}{\Pr[\mathcal{F}\text{ succeeds}]}\\
&\propto \left(\mu^{T_S}_T(\sigma_T)\cdot\mu^{\sigma_T}_{U\backslash T}(\sigma_{U\backslash T})\right)\cdot\left(\frac{\mu_{T}(\sigma_T)} {\mu^{Y_S}_{T}(\sigma_T)}\right) \propto \mu_I(\sigma).  
\end{aligned}\end{equation}
Otherwise, if $\mathcal{F}$ fails, we trivially resample the entire $\vec{Y}'\sim \mu_I$, which still ensures the correctness of sampling but uses global information. 
Overall, the above procedure correctly produces a $\vec{Y}'\sim \mu_I$.

It remains to construct a good Bayes filter using local generation and succeeding with high probability.
\paragraph{Construction of the Bayes filter $\mathcal{F}$.} Condition~\eqref{eq:warm-up-bayes-filter} of the Bayes filter can be expressed as
\begin{align}
  \Pr[\,\mathcal{F} \text{ succeeds}\,\mid\vec{Y}=\sigma] 
  &\propto \frac{\mu_{T}(\sigma_T)} {\mu^{Y_S}_{T}(\sigma_T)} \notag\\
(\text{the Bayes law})\quad &= \frac{\nu\left(\Omega^{\sigma_T}\right)\cdot \nu\left(\Omega^{Y_S}\right)}{\nu\left(\Omega\right) \cdot \nu\left(\Omega^{Y_S\wedge \sigma_T}\right)}\notag\\
&\propto  \frac{\nu\left(\Omega^{\sigma_T}\right)}{\nu\left(\Omega^{Y_S\wedge \sigma_T}\right)}\notag\\
&\triangleq f(\sigma_T).   \label{eq:bayes-filter-f-definition}
\end{align}
Note that $f(\sigma_T)\triangleq \frac{\nu\left(\Omega^{\sigma_T}\right)}{\nu\left(\Omega^{Y_S\wedge \sigma_T}\right)}$ can be computed locally from $B_2(v)$.
It is thus natural to define $\mathcal{F}$ as 
\[
\Pr[\,\mathcal{F} \text{ succeeds}\,]=\frac{f(Y_T)}{\max f},    
\]
where $\max f$ denotes the maximum value of $f(\sigma_T)$ taken over all all possible $\sigma_T\in\Sigma_T$ with $\mu^{Y_S}_{T}(\sigma_T)>0$.

It is obvious to see that for the Bayes filter $\mathcal{F}$ constructed as above,
an experiment of $\mathcal{F}$  can be conducted and observed locally from $B_2(v)$.
Furthermore, due to the correlation decay assumed in \Cref{ass:decay}, 
 $\mathcal{F}$~succeeds with high probability.
Let $\Sigma_S'$ be the set of possible assignments on the set $S$ of variables:
\[\Sigma_S'\triangleq\{\rho\in \Sigma_S\mid \text{ $\rho$ avoids bad events $A_v$ for all $v\in V$ s.t. $\vbl(v)\subseteq S$}\}.\]
Let $\tau^*$ be the assignment on $T$ that achieve the maximum $f(\tau^*)$. It holds that 
\begin{align}
    \Pr[\mathcal{F}\text{ succeeds}\,\mid\vec{Y}=\sigma]
    &=\frac{f(\sigma_T)}{f(\tau^*)}=\frac{\nu(\Omega^{\sigma_T})\cdot\nu(\Omega^{Y_S\wedge\tau^*})}{\nu(\Omega^{\tau^*})\cdot \nu(\Omega^{Y_S\wedge \sigma_T})}\notag\\
    &=\frac{\sum_{\rho\in \Sigma_S'}\nu(\Omega^{\rho\wedge \sigma_T})\cdot\nu(\Omega^{Y_S\wedge\tau^*})}{\sum_{\rho\in \Sigma_S'}\nu(\Omega^{Y_S\wedge \sigma_T})\cdot \nu(\Omega^{\rho\wedge \tau^*})}
    \geq 1-\frac{1}{2n^3}, \label{eq:bayes-filter-whp}
\end{align}
where the last inequality is derived from that $S$ and $T$ are  $\frac{1}{2n^3}$-correlated, guaranteed by \Cref{ass:decay}.

%
%

This simple sampling algorithm under \Cref{ass:oneAv} and \Cref{ass:decay} is described in \Cref{alg:sampling-with-cd}. 
By \eqref{eq:bayes-filter-whp} and the locality of the Bayes filter, the algorithm terminates within $O(1)$ rounds in expectation. 

\begin{algorithm}
\caption{\emph{Sampling-with-decay}($\vec{Y}$; $I,v$)}
\label{alg:sampling-with-cd}
\SetKwInOut{Input}{Input}
\SetKwInOut{Data}{Data}

\Input{LLL instance $I=\left(\{X_i\}_{i\in U}, \{A_v\}_{v\in V}\right)$,  node $v\subseteq V$;}
\Data{assignment $\vec{Y}=(Y_i)_{i\in U}$ stored globally that can be updated by the algorithm;}

\SetKwIF{withprob}{}{}{with probability}{do}{}{}{}
    
        define $S\triangleq \vbl(v)$ and $T\triangleq U\setminus \vbl(B_1(v))$\;
        
        \withprob{$\frac{f\left(Y_{T}\right)}{\max f}$, where $f$ is defined as in \eqref{eq:bayes-filter-f-definition}\label{line:resample-ideal-1-bayes-filter}}{ 
            update $\vec{Y}$ by resampling $Y_{U\setminus T}\sim \mu_{U\setminus T}^{Y_{T}}$\; \label{line:resample-ideal-1-resample}
            
            \tcp{\small  $\frac{f\left(Y_{T}\right)}{\max f}$ and $\mu_{U\setminus T }^{Y_{T}}$ can be evaluated locally within $B_{2}(v)$.}
        } 
        \Else{
            resample $\vec{Y} \sim \mu_I$\; 
            \tcp{\small Evaluating $\mu_I$  requires global information.}
        }
        
        \Return{}\;
\end{algorithm}

\subsection{Sampling without correlation decay via LLL augmentation}\label{sec:warm-up-without-cd}
The correlation decay asked by \Cref{ass:decay} may not always hold for general $\gamma$-satisfiable LLL instances.
A key idea is then to properly ``augment'' the LLL instance by including new bad events to enforce desirable decay of correlation.
%
%
%
This is highlighted by the following LLL augmentation lemma.

Recall the notions of balls  $B_\cdot(\cdot)$, shells $S_{[\cdot,\cdot]}(\cdot)$, and rings $\ring{}{[\cdot,\cdot]}{\cdot}$ formally defined in \Cref{sec:graph-notations}.

\begin{lemma}[LLL augmentation] \label{lem:augmenting-informal}
Let $I=\left(\{X_i\}_{i\in U},\{A_v\}_{v\in V}\right)$ be a LLL instance.
Let $\epsilon, \gamma\in (0,1)$, $\delta\in (0,\frac{\gamma}{2})$ and $\ell\geq \ell_0(\epsilon,\gamma,\delta)$ where $\ell_0(\epsilon,\gamma,\delta) = \Tilde{O}(\log \frac{1}{\epsilon}\log \frac{1}{\gamma}\log \frac{1}{\delta})$.
For any nonempty $\Lambda\subseteq V$, 
if the sub-instance $I\left(S_{[1,\ell](\Lambda)}\right)$ is $\gamma$-satisfiable,
%
then a new bad event $A_{\lambda}$ with $\lambda\not\in V$ can be constructed such that:
\begin{enumerate}
    \item $A_{\lambda}$ is defined on $\vbl(\lambda)=\ring{}{[1,\ell]}{\Lambda}$, and is constructed using local information within $B_{\ell+1}(\Lambda)$; 
    
    \item $A_{\lambda}$ occurs with probability at most $\delta$ on independent random variables $\{X_i\}_{i\in U}$, i.e.~$\nu(A_{\lambda})\le \delta$;
    
    \item  the variable sets $S=\vbl(\Lambda)$ and $T=U \setminus \vbl(B_{\ell}(\Lambda))$ are $\epsilon$-correlated after including $A_{\lambda}$ into $I$. 
\end{enumerate}
\end{lemma}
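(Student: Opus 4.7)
The plan is to exploit the fact that $R \triangleq R_{[1,\ell]}(\Lambda)$ separates $S \triangleq \vbl(\Lambda)$ from $T \triangleq U \setminus \vbl(B_\ell(\Lambda))$ in the dependency graph $D_I$, rewrite $\nu(\Omega^{\sigma\wedge\tau})$ as a sum over $\rho \in \Sigma_R$ whose summand factorizes into an inner and an outer piece, and then design $A_\lambda$ as a ``ring is bad'' event whose complement forces those two pieces to decorrelate.

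First, I would establish the factorization. Any $v\in V$ with $\vbl(v) \cap S \ne \emptyset$ is either in $\Lambda$ or $D_I$-adjacent to $\Lambda$, hence $v \in B_1(\Lambda)$ and $\vbl(v) \subseteq \vbl(B_\ell(\Lambda)) = S \cup R$, which is disjoint from $T$; so no bad event touches both $S$ and $T$. Splitting the events with $\vbl(v)\cap R \ne \emptyset$ into \emph{inner} events ($\vbl(v) \subseteq S \cup R$) and \emph{outer} events ($\vbl(v) \subseteq R \cup T$), with ring-only events ($\vbl(v)\subseteq R$) attached to $p$ by convention, I can write
\[
\nu(\Omega^{\sigma\wedge\tau}) = \nu(\{\sigma\})\,\nu(\{\tau\})\,\sum_{\rho\in\Sigma_R}\nu(\{\rho\})\,p(\sigma,\rho)\,q(\rho,\tau),
\]
where $p(\sigma,\rho), q(\rho,\tau)\in\{0,1\}$ are the indicators of inner- and outer-event avoidance. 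After augmenting with a ring-only $A_\lambda$, the sum restricts to $\rho\notin A_\lambda$, and the $\epsilon$-correlation requirement from \Cref{def:correlation} collapses to
\[
M'_{11}M'_{22} \le (1+\epsilon)\,M'_{12}M'_{21},\qquad M'_{ij} \triangleq \mathbb{E}_{\rho\sim \nu_R}\!\left[\mathbf{1}[\rho\notin A_\lambda]\,p(\sigma_i,\rho)\,q(\rho,\tau_j)\right].
\]

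Second, I would construct $A_\lambda$ by layering. Slice $R$ into $k$ nested sub-shells $L_1,\dots,L_k$ of radial width $w$. For each $L_j$ define an indicator $C_j(\rho)$ which certifies that $\rho$ ``cleanly satisfies'' the shell bad events anchored in $L_j$, so that fixing $\rho|_{L_j}$ essentially isolates the inner factor from the outer one. The $\gamma$-satisfiability of $I(S_{[1,\ell]}(\Lambda))$ forces each $\Pr[C_j=1]$ to be bounded below in terms of $\gamma$, and the $C_j$'s can be arranged on nearly disjoint sets of ``new'' variables so that they are close to independent. Declaring
\[
A_\lambda \triangleq \{\rho\in\Sigma_R : C_j(\rho)=0\text{ for all }j=1,\dots,k\}
\]
with $k = \tO(\log(1/\delta)\log(1/\gamma))$ compounds failure bounds across layers to give $\nu(A_\lambda) \le \delta$. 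Since $A_\lambda$ is a function of $\rho\in\Sigma_R$ and each $C_j$ can be checked from information within $B_{\ell+1}(\Lambda)$, items (1) and (2) of the lemma both follow.

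Third, when $A_\lambda$ does not occur some $L_{j^*}$ witnesses $C_{j^*}(\rho)=1$, producing a buffer across which the two factors split: conditioning on $\rho|_{L_{j^*}}$ makes the inner $p(\sigma,\rho)$ and outer $q(\rho,\tau)$ essentially independent. A layer-by-layer coupling/telescoping argument, with each layer contributing a small correction to the correlation ratio, yields $M'_{11}M'_{22}/(M'_{12}M'_{21}) \le 1+\epsilon$ and gives item (3). The main obstacle is exactly this quantitative decoupling: a ``satisfied'' buffer resolves only bad events sitting entirely within the shell, while events straddling the buffer still carry residual correlation. Controlling this residue is what forces layer width $w = \tO(\log(1/\epsilon))$, and combining with $k = \tO(\log(1/\delta)\log(1/\gamma))$ layers produces the claimed depth $\ell_0 = \tO(\log(1/\epsilon)\log(1/\gamma)\log(1/\delta))$.
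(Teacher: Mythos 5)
Your factorization of $\nu(\Omega^{\sigma\wedge\tau})$ through the separating ring, and the high-level plan of defining $A_\lambda$ as a rare ring event whose avoidance decorrelates the inner and outer factors, match the spirit of the paper's argument and produce the right scaling $\ell_0 = \tO(\log\frac{1}{\epsilon}\log\frac{1}{\gamma}\log\frac{1}{\delta})$. However, the construction of $A_\lambda$ you propose is genuinely different from the paper's, and the decorrelation step has a real gap.

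The paper does not define $A_\lambda$ as ``all shell-cleanliness certificates $C_j$ fail.'' Instead (Algorithm~7), it iteratively marks, ring by ring, those \emph{specific assignments} $\sigma$ on a ring $R_i$ whose conditional partial-satisfaction probability out to a distant ring $R_j$ is below $\delta/(2\ell)$, and $A_\lambda$ is the union of those marked assignments intersected with satisfaction of the original events inside the ring. The rarity bound then follows by summing the marginal masses of the marked assignments, not by compounding failure rates across near-independent layers. Your version of item~(2) in particular leans on ``each $\Pr[C_j=1]$ is bounded below by $\gamma$,'' which is false: $\gamma$-satisfiability of $I(S_{[1,\ell]}(\Lambda))$ only forces the \emph{number} of layers with small satisfying probability to be $O(\log\frac{1}{\gamma})$ (via mutual independence of well-separated ring events), not a per-layer lower bound. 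This is exactly the ``good ring'' counting argument in the paper, and it does more work than your phrasing suggests.

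The larger gap is in item~(3). Your claim is that if some shell $L_{j^*}$ is ``clean'' then conditioning on $\rho|_{L_{j^*}}$ makes $p(\sigma,\rho)$ and $q(\rho,\tau)$ essentially independent, so the cross ratio $M'_{11}M'_{22}/(M'_{12}M'_{21})$ is close to $1$. Conditioning on $\rho|_{L_{j^*}}$ does give conditional independence of the two factors as functions of disjoint variable blocks, but that is not the quantity you need to control. The correlation you must bound is driven by how strongly the \emph{effective distribution} of $\rho|_{L_{j^*}}$ shifts when you change $\sigma_1$ to $\sigma_2$, and a shell merely avoiding its own local bad events gives no control over this; a clean layer can still be highly concentrated on a few ring configurations that are strongly selected by $\sigma$, which is precisely the pathological example discussed before the paper's construction. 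The paper closes this gap with the ``well-distributed'' machinery (Lemmas~6.7--6.8): it shows that the conditional distribution induced on a ring by either boundary condition is flat in a precise multiplicative sense, and then uses good rings plus the monotonicity and amplification lemmas (Lemmas~6.5--6.6 via the partial-$\epsilon$-correlated rings of Definition~6.2) to telescope the correlation ratio geometrically. Moreover, your buffer index $j^*$ depends on $\rho$, so you cannot cleanly condition on a single shell; the paper instead works with a uniformly fixed pair of rings and proves the cross-ratio inequality directly via the well-distributed bounds. In short: your architecture is plausible, but the quantitative decoupling step --- which you correctly flag as the main obstacle --- is left unproven, and the specific mechanism you suggest (a clean shell) would not suffice without the additional flatness property that the paper's iterative pruning of low-marginal ring assignments is explicitly designed to enforce.
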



\begin{definition}[the augmenting event]\label{def:augmenting-event}
Let $I=\left(\{X_i\}_{i\in U},\{A_v\}_{v\in V}\right)$ be a LLL instance.
Let $\epsilon,\gamma\in(0,1)$, $\delta\in\left(0,\frac{\gamma}{2}\right)$, and $\ell= \ell_0(\epsilon, \gamma, \delta)$. 
Fix any nonempty $\Lambda\subseteq V$ that $I(S_{[1,\ell](\Lambda)})$ is $\gamma$-satisfiable.
We use 
$\aug{I}{\epsilon, \gamma, \delta}{\Lambda}$
to denote the new bad event $A_\lambda$ constructed in \Cref{lem:augmenting-informal}.  
\end{definition}

\begin{remark}
\Cref{lem:augmenting-informal} constitutes a critical component of  our sampling approach.
Beyond the algorithmic implications, it provides new insights into  correlation between variables constrained by local constraints.

This lemma aims to establish a decay of correlation between two distant regions  $S=\vbl(\Lambda)=\ring{}{0}{\Lambda}$ and $T=U \setminus \vbl(B_{\ell}(\Lambda))=\ring{}{[\ell,\infty]}{\Lambda}$.
However, it only assumes that all local constraints sandwiched between $S$ and $T$ are collectively easy to satisfy. 
Merely knowing this fact does not prevent $S$ and $T$ from being strongly correlated.
Remarkably, \Cref{lem:augmenting-informal} shows that the only obstacle to achieving correlation decay between $S$ and $T$ in this case lies in rarely occurring `bad' assignments between them. 
Consequently, $S$ and $T$ can be effectively de-correlated by introducing a new locally defined bad event over the region between $S$ and $T$ to prohibit those bad assignments that cause significant correlation between them.

This argument will be formalized in \Cref{sec:analysis-correlation-decay}, 
where a formal restatement of \Cref{lem:augmenting-informal} (referred to as \Cref{lem:augmenting}) will be introduced, and the lemma will be rigorously proved.


\end{remark}





\paragraph{Sampling using augmented LLL.}
Now we show how to utilize this LLL augmentation stated in \Cref{lem:augmenting-informal} to get rid of \Cref{ass:decay} in \Cref{alg:sampling-with-cd}, while still assuming  \Cref{ass:oneAv}.
%
%

%
%
As before, let $\vec{Y}$ be generated according to the product distribution $\vec{\nu}$;
and let $S\triangleq\vbl(v)$ and $T\triangleq U\setminus \vbl(B_1(v))$,
where  $v\in V$ is the only node (\Cref{ass:oneAv}) with the bad event $A_v$ occurring on $\vec{Y}$. 

Our goal is to utilize  \Cref{lem:augmenting-informal} to induce the necessary decay of correlation required for sampling.
However, augmenting LLL would inevitably alter the distribution  $\mu_I$. 
The saving grace is the following observation, suggesting that sampling may not be affected by certain local changes to the distribution.

\begin{observation}\label{observation:correct-efficient-sampling}
The sampling in \Cref{alg:sampling-with-cd} is correct and efficient as long as:
    \begin{itemize}
    \item (correctness) $Y_T$ follows the marginal distribution  $\mu^{Y_S}_{I,T}$;
    \item (efficiency) $S$ and $T$ are  $\frac{1}{2n^3}$-correlated.
\end{itemize}
\end{observation}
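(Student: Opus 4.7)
The plan is to revisit the two derivations in \Cref{sec:warm-up-with-cd}---the Bayes-law calculation in~\eqref{eq:f-succeed-correct-distribution} and the acceptance-probability bound in~\eqref{eq:bayes-filter-whp}---and to observe that \Cref{ass:oneAv} entered the former only through~\eqref{eq:marginal-distribution-oneAv} (i.e.~to assert $Y_T\sim\mu^{Y_S}_T$), while \Cref{ass:decay} entered the latter only through the $\frac{1}{2n^3}$-correlation of the pair $(S,T)$. Replacing these two uses by the correctness and efficiency hypotheses of the observation, respectively, immediately yields the claim.

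For correctness, I would condition on the outcome of the Bayes filter $\mathcal{F}$. The ``fail'' branch already outputs $\mu_I$ by construction, so it suffices to analyze the ``succeed'' branch. Since $\mathcal{F}$ depends only on $(Y_S,Y_T)$ and the subsequent resample draws $Y_{U\setminus T}\sim\mu_{U\setminus T}^{Y_T}$, I would write
\[
\Pr[\vec{Y}'=\sigma,\mathcal{F}\text{ succeeds}]=\sum_{y_S}\Pr[\vec{Y}_S=y_S]\cdot\mu^{y_S}_T(\sigma_T)\cdot\frac{f(\sigma_T;y_S)}{\max_{\tau}f(\tau;y_S)}\cdot\mu_{U\setminus T}^{\sigma_T}(\sigma_{U\setminus T}),
\]
where the factor $\mu^{y_S}_T(\sigma_T)$ is supplied \emph{exactly} by the correctness hypothesis (this was the only role played by \Cref{ass:oneAv} in the original argument). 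Substituting $f(\sigma_T;y_S)=\nu(\Omega^{\sigma_T})/\nu(\Omega^{y_S\wedge\sigma_T})$ from~\eqref{eq:bayes-filter-f-definition}, the $\nu(\Omega^{y_S\wedge\sigma_T})$ factors cancel and the $\sigma_T$-dependence collapses to $\nu(\Omega^{\sigma_T})\propto\mu_T(\sigma_T)$. Pulling the $y_S$-sum out as a $\sigma$-independent normalizing constant, I read off $\Pr[\vec{Y}'=\sigma\mid\mathcal{F}\text{ succeeds}]\propto\mu_T(\sigma_T)\cdot\mu_{U\setminus T}^{\sigma_T}(\sigma_{U\setminus T})=\mu_I(\sigma)$. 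Both branches therefore output $\mu_I$, and so does their mixture.

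For efficiency, I would simply reuse the chain of inequalities in~\eqref{eq:bayes-filter-whp}: the only quantitative input is in the last step, which cites \Cref{def:correlation} with $\epsilon=\frac{1}{2n^3}$ for the pair $(S,T)$---exactly the efficiency hypothesis. The resulting bound $\Pr[\mathcal{F}\text{ succeeds}\mid\vec{Y}=\sigma]\geq 1-\frac{1}{2n^3}$ holds pointwise, so the expensive global branch is entered only with probability $O(n^{-3})$ and the expected locality of the local branch is $O(1)$. There is no genuine obstacle here---the content of the observation is precisely that the proof of correctness and efficiency for \Cref{alg:sampling-with-cd} factors through these two hypotheses. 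The one point I would double-check while writing the formal proof is that $f$, $\max f$, and $\mu_{U\setminus T}^{Y_T}$ remain well-defined under the correctness hypothesis alone; this follows because $Y_T\sim\mu_T^{Y_S}$ places positive mass only on feasible boundary conditions $\sigma_T$, guaranteeing that the relevant denominators $\nu(\Omega^{Y_S\wedge\sigma_T})$ are nonzero on the support we actually encounter.
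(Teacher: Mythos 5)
Your proposal is correct and takes essentially the same approach the paper intends: the observation is stated without proof and is meant to be read off from Section~\ref{sec:warm-up-with-cd} exactly as you do, by noting that \Cref{ass:oneAv} enters the Bayes-law calculation~\eqref{eq:f-succeed-correct-distribution} only via~\eqref{eq:marginal-distribution-oneAv} and \Cref{ass:decay} enters~\eqref{eq:bayes-filter-whp} only via the $\frac{1}{2n^3}$-correlation of $(S,T)$. Your version, summing over the pre-resample $Y_S$ explicitly, is actually tighter than the paper's informal presentation; one small refinement to your final double-check is that the correctness hypothesis alone only guarantees the \emph{denominators} $\nu\bigl(\Omega^{Y_S\wedge\sigma_T}\bigr)$ are positive on the support (a zero \emph{numerator} $\nu\bigl(\Omega^{\sigma_T}\bigr)$ merely forces the filter to fail, after which the global branch still outputs $\mu_I$), whereas it is the efficiency hypothesis — via \Cref{def:correlation} applied with any $\rho\in\Sigma_S'$, $\tau$ such that $\nu\bigl(\Omega^{\rho\wedge\tau}\bigr)>0$, guaranteed by $\gamma$-satisfiability — that forces the numerators to be positive as well, which is what makes the pointwise $1-\frac{1}{2n^3}$ acceptance bound hold.
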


Using this observation, 
we can apply the LLL augmentation described in \Cref{lem:augmenting-informal} to ensure correlation decay between $S$ and $T$, 
while preserving the marginal distribution $\mu^{Y_S}_{T}$. 
Consequently, the resulting sampling process is both correct and efficient without relying on \Cref{ass:decay}.


Recall the new bad event $A_{\lambda}=\aug{I}{\epsilon, \gamma, \delta}{\Lambda}$ in \Cref{def:augmenting-event},
and define its complement $A_{\bar{\lambda}}$:
\begin{align}\label{eq:warm-up-def-event-A-lambda}
    A_{\lambda}\triangleq\aug{I}{\epsilon, \gamma, \delta}{\Lambda}, \text{ where }\lambda\not\in V,
    \quad\text{ and }\quad 
    A_{\bar{\lambda}}\triangleq\overline{A_{\lambda}}, \text{ where }\bar{\lambda}\not\in V.
\end{align}
Correspondingly, define the following two augmented LLL instances:
\begin{align}\label{eq:warm-up-def-hat-I}
\hat{I} = \left(\{X_i\}_{i\in U}, \{A_v\}_{v\in V\cup\{\lambda\}}\right)
\quad\text{ and }\quad 
\hat{I}'=\left(\{X_i\}_{i\in U}, \{A_v\}_{v\in V\cup \left\{\bar{\lambda}\right\}}\right).
\end{align}
Let $0<\zeta_0<1$ be a sufficiently small constant. 
We define the choice of parameter
\begin{align} \label{eq:def-parameter-without-cd-expected-complexity}(\epsilon_0,\gamma_0,\delta_0)\triangleq \left(\frac{1}{2n^3}, \frac{\gamma}{8}, \frac{\zeta_0\gamma}{24n^3}\right).
\end{align}
In the following,  let $A_{\lambda},  A_{\bar{\lambda}}, \hat{I}, \hat{I}'$ be defined as above on $\Lambda=\{ v\}$ with parameter $(\epsilon_0,\gamma_0,\delta_0)$.
Let $S=\vbl(\Lambda)$ and $ T=U\backslash\vbl(B_{\ell}(\Lambda))$, where $\ell=\ell_0(\epsilon_0,\gamma_0,\delta_0)=\Tilde{O}(\log^2 n\log^2\frac{1}{\gamma})$. 

By the same argument for~\eqref{eq:marginal-distribution-oneAv}, we still have $Y_T\sim \mu^{Y_S}_{I,T}$. 
Notice that $\hat{I}$ is identical $I$ except for the new bad event $A_{\lambda}$. 
Then, conditioned on $A_{\bar{\lambda}}$, we have $Y_T\sim\mu^{Y_S}_{\hat{I}, T}$ in the augmented instance $\hat{I}$.
By \Cref{lem:augmenting-informal} on our choice of parameter, $S$ and $T$ are $\frac{1}{2n^3}$-correlated in $\hat{I}$ . Thus, a calling to \emph{Sampling-with-decay}($\vec{Y};\hat{I},v$) (\Cref{alg:sampling-with-cd}) will produce a random assignment $\vec{Y}\sim \mu_{\hat{I}}$ within $O(1)$ rounds in expectation.

\paragraph{An idealized sampling algorithm.}
Then an idealized sampling procedure may proceed as follows. Our goal is to modify the input $\vec{Y}$ to a new $\vec{Y}\sim\mu_I$. This is achieved differently depending on whether the new bad event $A_{\lambda}$ occurs on $\vec{Y}$. If $A_{\lambda}$ occurs on $\vec{Y}$ (which happens with a small probability), we generate an entire new $\vec{Y}\sim\mu_I$ using global information. Otherwise, if $A_{\lambda}$ does not occur on $\vec{Y}$, we can use the following cleverer approach to produce a $\vec{Y}\sim\mu_I$:
\begin{itemize}
    \item With probability $P$, generate a $\vec{Y}\sim \mu_{\hat{I}}$, where $P$ is defined as
\begin{align}
P\triangleq \Pr_{\vec{X}\sim \mu_I}[\vec{X} \text{ avoids }A_{\lambda}].\label{eq:probability-P}    
\end{align}
Since $\vec{Y}$ avoids $A_{\lambda}$, this can be achieved by calling \emph{Sampling-with-decay}($\vec{Y};\hat{I},v$) to produce $\vec{Y}\sim \mu_{\hat{I}}$. 
\item Otherwise, generate entire $\vec{Y}\sim\mu_{\hat{I}'}$ using global information.
\end{itemize}
Overall, this correctly generates a $\vec{Y}\sim \mu_I=P\cdot\mu_{\hat{I}}+(1-P)\cdot \mu_{\hat{I}'}$
This procedure for sampling is idealized because it uses a value $P$ as defined in \eqref{eq:probability-P}  that is not easy to compute locally.

\paragraph{Bootstrpping the unknown threshold $P$.}
The probability $P$ in \eqref{eq:probability-P}  can be lower bounded. 
By \Cref{lem:augmenting-informal}, $A_\lambda$ occurs with probability at most  $\delta_0$. With the parameter specified in \eqref{eq:def-parameter-without-cd-expected-complexity}, we have 
\[P=\frac{\nu(\Omega_{\hat{I}})}{\nu(\Omega_{I})} = \frac{\nu(\Omega_{I})-\nu(\Omega_{\hat{I}'})}{\nu(\Omega_{I})}\geq 1 - \frac{\nu(A_{\lambda})}{\nu(\Omega_I) }\geq 1-\frac{\delta_0}{\gamma_0} \geq 1-\frac{1}{2n^3}.\] 
Therefore, when $A_{\lambda}$ does not occur on $\vec{Y}$, 
the above idealized sampling procedure can be realized as: 
first call the subroutine \emph{Sampling-with-decay}($\vec{Y};\hat{I},v$) defined in \Cref{alg:sampling-with-cd} to produce $\vec{Y}\sim \mu_{\hat{I}}$, 
and then with probability $\frac{1}{2n^3}$, compute the value of $P$ (which uses global information) and generate the entire $\vec{Y}\sim \mu_{\hat{I}'}$ with probability $2(1-P)\cdot n^3$.
The resulting algorithm is described in \Cref{alg:sampling-without-cd}.

\begin{algorithm}
\caption{\emph{Sampling-without-decay}($\vec{Y}$; $I,v$)}
\label{alg:sampling-without-cd}
\SetKwInOut{Input}{Input}
\SetKwInOut{Data}{Data}

\Input{LLL instance $I=\left(\{X_i\}_{i\in U}, \{A_v\}_{v\in V}\right)$,  $v\subseteq V$;}
\Data{assignment $\vec{Y}=(Y_i)_{i\in U}$ stored globally that can be updated by the algorithm;}

\tcp{\small Throughout the algorithm, $(\epsilon,\gamma,\delta)$ are defined as in \eqref{eq:def-parameter-without-cd-expected-complexity}. }
\tcp{\small $A_{\lambda},  A_{\bar{\lambda}}, \hat{I}, \hat{I}'$ are defined as in \eqref{eq:warm-up-def-event-A-lambda} and~\eqref{eq:warm-up-def-hat-I} with $\Lambda=\{ v\}$ and $(\epsilon_0, \gamma_0,\delta_0)$.}

\SetKwIF{withprob}{}{}{with probability}{do}{}{}{}

        define $\ell\triangleq \ell_0(\epsilon,\gamma,\delta)$, $S\triangleq \vbl(\Lambda)$ and $T\triangleq U\setminus \vbl(B_{\ell}(\Lambda))$\;

        \If{$\vec{Y}$ avoids $A_\lambda$}{
            call \emph{Sampling-with-decay}($\vec{Y};\hat{I}, v$), which is defined in \Cref{alg:sampling-with-cd}\; \label{line:without-decay-call-decay}

            

            \tcp{\small Now we have $\vec{Y}\sim\mu_{\hat{I}}$. }

            \withprob{$\frac{1}{2n^3}$}{
                evaluate $P\triangleq \Pr_{\vec{X}\sim \mu_I}[\vec{X} \text{ avoids }A_{\lambda}]$ and resample $\vec{Y}\sim \mu_{\hat{I}'} $ with probability $2(1-P)\cdot n^3$ \; \label{line:without-decay-globale-1}
            }
            \tcp{\small Now we have $\vec{Y}\sim \mu_I$. }
        } \Else{
            resample $\vec{Y} \sim \mu_{I}$\;\label{line:without-decay-globale-2}
            \tcp{\small Evaluating $P$, $\mu_I$, $\mu_{\hat{I}'}$ requires global information.}
        }
        
        \Return{}\;
\end{algorithm}

\Cref{alg:sampling-without-cd} may use global information, 
particularly in \Cref{line:without-decay-globale-1} with probability $\frac{1}{2n^3}$, and  in \Cref{line:without-decay-globale-2} when $A_{\lambda}$ occurs on $\vec{Y}$, which happens with probability at most $\delta_0\le \frac{1}{n^3}$ due to \Cref{lem:augmenting-informal}.\footnote{Technically, \Cref{ass:oneAv} may  bias the probability of $A_{\lambda}$ on $\vec{Y}$. 
However, \Cref{alg:sampling-without-cd} is solely used for exposition purposes,
and such bias caused by \Cref{ass:oneAv} will no longer be an issue beyond in \Cref{alg:sampling-without-cd}.}
%
As we discussed, the call to \emph{Sampling-with-decay}($\vec{Y};\hat{I},v$) at \Cref{line:without-decay-call-decay} returns within $O(1)$ rounds in expectation. 
At last, the construction of the augmented instance $\hat{I}$ takes $\ell=\Tilde{O}(\log^2 n\log^2\frac{1}{\gamma})$ rounds.
Overall, \Cref{alg:sampling-without-cd} returns a $\vec{Y}\sim\mu_{I}$ within $\Tilde{O}(\log^2 n\log^2\frac{1}{\gamma})$ rounds in expectation without relying on \Cref{ass:decay}.


\subsection{Recursive sampling with exponential convergence}
\label{sec:warm-up-recursive}
Both \Cref{alg:sampling-with-cd} and \Cref{alg:sampling-without-cd} occasionally rely on global information with a probability of $O(1/n^3)$. 
While the expected time complexity remains well bounded, achieving exponential convergence, as outlined in \Cref{thm: sample-gibbs}, would be more desirable in randomized running time.

We introduce a recursive sampling framework to achieve such exponential convergence in the time complexity.
And more importantly, this framework is also crucial for finally getting rid of \Cref{ass:oneAv}.


First, we adapt \Cref{alg:sampling-with-cd} to the recursive sampling framework. This adaptation is direct and simple. Then, we achieve the recursive sampling without \Cref{ass:decay}. Although part of this has already been embodied in \Cref{alg:sampling-without-cd}, there are still some substantial difficulties that need to be overcome.

\paragraph{Recursive sampling assuming correlation decay.} 
At first, we still assume \Cref{ass:decay}. 
The recursive sampling algorithm in this scenario is described in \Cref{alg:recursive-sampling-with-cd}.
There are two differences between this algorithm and \Cref{alg:sampling-with-cd}:
%
First, instead of taking just a node $v\in V$ as input in \Cref{alg:sampling-with-cd}, now \Cref{alg:recursive-sampling-with-cd} takes a subset $\Lambda\subseteq V$ as input. 
Second and more importantly, the original step of sampling using global information in \Cref{alg:sampling-with-cd},  is now replaced by a recursive call in \Cref{alg:recursive-sampling-with-cd}. 

\begin{algorithm}
\caption{\emph{RecursiveSampling-with-decay}($\vec{Y}$; $I,\Lambda$)}
\label{alg:recursive-sampling-with-cd}
\SetKwInOut{Input}{Input}
\SetKwInOut{Data}{Data}

\Input{LLL instance $I=\left(\{X_i\}_{i\in U}, \{A_v\}_{v\in V}\right)$,  subset $\Lambda\subseteq V$;}
\Data{assignment $\vec{Y}=(Y_i)_{i\in U}$ stored globally that can be updated by the algorithm;}

\SetKwIF{withprob}{}{}{with probability}{do}{}{}{}
    
        define $S\triangleq \vbl(\Lambda)$ and $T\triangleq U\setminus \vbl(B_1(\Lambda))$\;
        
        \withprob{$\frac{f\left(Y_{T}\right)}{\max f}$, where $f$ is defined as in \eqref{eq:bayes-filter-f-definition} \label{line:resample-ideal-3-bayes-filter}}{ 
            update $\vec{Y}$ by resampling $Y_{U\setminus T}\sim \mu_{U\setminus T}^{Y_{T}}$\; \label{line:resample-ideal-3-resample}
            
            \tcp{\small  $\frac{f\left(Y_{T}\right)}{\max f}$ and $\mu_{U\setminus T }^{Y_{T}}$ can be evaluated locally within $B_{2}(\Lambda)$.}
        } 
        \Else{
            \emph{RecursiveSampling-with-decay}($\vec{Y}$; $I,B_2(\Lambda)$)\;
            
            \tcp{\small Recursively sample $\vec{Y} \sim \mu_I$.}
        }
        
        \Return{}\;
\end{algorithm}

The correctness of recursive sampling relies on the following notion of \emph{conditional Gibbs} property, which is adapted from the same property introduced in \cite{feng2021dynamic,feng2020perfect} for Gibbs distributions.

\begin{definition}[conditional Gibbs]
\label{def:conditional-gibbs}
    Let $I=(\{X_i\}_{i\in U}, \{A_v\}_{v\in V})$ be a LLL instance. The random pair $(\vec{Y}, \RV{R})$,  where $\vec{Y}\in\Sigma$ is an assignment and $\RV{R}\subseteq V$ is a subset of events, is said to satisfy \emph{conditional Gibbs property} on instance $I$, if for any $R\subseteq V$ and $\sigma\in \Sigma_{\vbl(\Lambda)}$ with $\Pr[\RV{R}=R \wedge Y_{\vbl(R)}=\sigma]>0$, conditioned on $\RV{R}=R\wedge Y_{\vbl(R)}=\sigma$, it holds that $Y_{U\setminus \vbl(R)}\sim \mu^{\sigma}_{I, U\setminus \vbl(R)}$. 
\end{definition}

Intuitively, the random pair $(\vec{Y}, \RV{R})$ represents a ``partially correct'' sample $\vec{Y}$ with its problematic part coverred by $\RV{R}$.
The conditional Gibbs property guarantees that except for this problematic part, the sample is always distributed correctly.
This gives a key invariant condition for the correctness of recursive sampling.
%

The following is easy to verify for the assignment $\vec{Y}$ satisfying \Cref{ass:oneAv} with $\Lambda=\{v\}$. 

\begin{condition}\label{cond:warm-up-recursive-with-cd}
    $(\vec{Y}, \Lambda)$ satisfies conditional Gibbs property on instance $I$.
\end{condition}

The following  can be verified routinely  by a structural induction: 
As long as \Cref{cond:warm-up-recursive-with-cd} is satisfied by the input,
\Cref{alg:recursive-sampling-with-cd} terminates with probability 1 and returns a $\vec{Y}$ that is  distributed as $\mu_I$.  
This guarantees the correctness of \Cref{alg:recursive-sampling-with-cd} as a sampling algorithm.


To further bound the complexity of  \Cref{alg:recursive-sampling-with-cd}, we need to assume \Cref{ass:decay}.
With such assumption, by the same analysis as in \eqref{eq:bayes-filter-whp}, the Bayes filter in \Cref{line:resample-ideal-3-bayes-filter} succeeds with probability at least $1-\frac{1}{2n^3}$. Hence, for any $0<\epsilon<1$,  \Cref{alg:recursive-sampling-with-cd} terminates in $O(\log \frac{1}{\epsilon} / \log n)$ rounds with probability $1-\epsilon$. 


\paragraph{Recursive sampling with LLL augmentation.} 
Our goal here is to adopt \Cref{alg:sampling-without-cd} for sampling without the additional assumption about correlation decay,
into the recursive sampling framework.
The resulting algorithm will solve the problem in \Cref{thm: sample-gibbs} under \Cref{ass:oneAv}.


As in \Cref{alg:sampling-without-cd}, we need LLL augmentation.
%
The following notion of \emph{augmented conditional Gibbs property} is a variant of the conditional Gibbs property in \Cref{def:conditional-gibbs}, tailored with LLL augmentation.

\begin{definition}[augmented conditional Gibbs]
\label{def:aug-conditional-gibbs}
    Let  $\epsilon,\gamma\in(0,1)$, $\delta\in\left(0,\frac{\gamma}{2}\right)$, and $\ell=\ell_0(\epsilon, \gamma, \delta)$, where $\ell_0(\epsilon, \gamma, \delta)=\Tilde{O}(\log \frac{1}{\epsilon}\log \frac{1}{\gamma}\log \frac{1}{\delta})$ is defined as in \Cref{lem:augmenting-informal}.
    Let $I=(\{X_i\}_{i\in U}, \{A_v\}_{v\in V})$ be a LLL instance. The random pair $(\vec{Y}, \RV{R})$,  where $\vec{Y}\in\Sigma$ and $\RV{R}\subseteq V$,
    is said to satisfy \emph{augmented conditional Gibbs property} on instance $I$ with parameter $(\epsilon, \gamma, \delta)$, if for any $R\subseteq V$ with $\Pr[\RV{R}=R]>0$:
    \begin{enumerate}
        \item the sub-instance $I\left(S_{[1,\ell]}(R)\right)$ is $\gamma$-satisfiable;
        \item  for $S\triangleq \vbl(R)$, $T\triangleq U\setminus \vbl(B_{\ell}(R))$, for any $\sigma\in \Sigma_{\vbl(\Lambda)}$ with $\Pr[\RV{R}=R \wedge Y_{\vbl(R)}=\sigma]>0$, conditioned on that $\RV{R}=R\wedge Y_{\vbl(R)}=\sigma$, the assignment $Y_T$ follows the distribution $\mu^{\sigma}_{\hat{I}, T}$, i.e., \[
        \forall \tau\in\Sigma_T,\quad \Pr\left(Y_{T}=\tau\mid \RV{R}=\mathcal{R}\wedge Y_{S}=\sigma\right)=\mu^{\sigma}_{\widehat{I},T}(\tau),
        \]
        where $\hat{I}$ stands for the augmented LLL instance  $\hat{I}\triangleq\left(\{X_i\}_{i\in U}, \{A_v\}_{v\in V}\cup \{\aug{I}{R}{\epsilon,\gamma,\delta}\}\right)$ with $\aug{I}{R}{\epsilon,\gamma,\delta}$ as in \Cref{def:augmenting-event}. 
    \end{enumerate}
\end{definition}

Recall the choice of parameter $(\epsilon_0,\gamma_0,\delta_0)$ in \eqref{eq:def-parameter-without-cd-expected-complexity} %
and let $\ell\triangleq \ell_0(\epsilon_0,\gamma_0,\delta_0)$. 
Let $r\in \mathbb{N}$ be the minimal integer that $\vec{Y}$ avoids the bad event $\aug{I}{B_{r\cdot \ell}(v)}{\epsilon_0,\gamma_0,\delta_0}$. 
The following can be routinely verified on the instance $I$, the random assignment $\vec{Y}$, $\Lambda= B_{r\cdot \ell}(v)$, along with the parameter $(\epsilon,\gamma,\delta,\alpha)=(\epsilon_0,\gamma_0,\delta_0, \gamma_0)$.

\begin{condition}\label{cond:warm-up-recursive-without-cd}
The following hold:
    \begin{itemize}
        \item $0<\epsilon\leq \frac{1}{2}$, $0<\alpha \leq \gamma<1$ and $0<\delta<\zeta_0 \cdot \alpha$; 
        
        \item the LLL instance $I$ is $\alpha$-satisfiable and the sub-instance $I(V\setminus \Lambda)$ is $\gamma$-satisfiable;
        
        \item $(\vec{Y}, \Lambda)$ satisfies the augmented conditional Gibbs property on instance $I$ with parameter $(\epsilon, \gamma, \delta)$.
    \end{itemize}
\end{condition}

Our goal is then to modify the random assignment $\vec{Y}$ around the region $\Lambda$ to make it follow the correct distribution $\mu_I$, 
as long as \Cref{cond:warm-up-recursive-without-cd} is satisfied.
Adopting the definitions of $A_{\lambda},  A_{\bar{\lambda}}, \hat{I}, \hat{I}'$ and $P$ as in \cref{eq:warm-up-def-event-A-lambda,eq:warm-up-def-hat-I,eq:probability-P} with $\Lambda=B_{r\cdot \ell}(v)$ and $(\epsilon,\gamma,\delta)=(\epsilon_0,\gamma_0,\delta_0)$.
By our choice of $r$, the bad event $A_{\lambda}$ may never occur on $\vec{Y}$.
Then, \Cref{alg:sampling-without-cd} can be simiplied to the case where $\vec{Y}$ always aoids $A_{\lambda}$.

\paragraph{Recursive bootstrapping of marginal probabilities.}
A challenge, as we mentioned in \Cref{sec:warm-up-without-cd}, is that the true value of the marginal probability $P$ as defined in \eqref{eq:probability-P}  is hard to compute locally.
This is now resolved by a more sophisticated bootstrapping of the threshold $P$, 
by recursively estimating it within a progressively more accurate interval $[L, R]$.
The estimation is formally stated by the following lemma, which is proved in \Cref{sec:proof-estimate-augmenting}. 
It shows that the threshold $P$ can be estimated  exponentially more accurately as the radius of local information grows. 
%
%
\begin{lemma}\label{lem:estimate-augmenting}
    Let $I=\left(\{X_i\}_{i\in U}, \{A_v\}_{v\in V}\right)$ be a LLL instance. 
    Let $0<\epsilon<\frac{1}{2}$,
    $k\in\mathbb{N}^+$, 
    $0<\alpha_1\leq \alpha_2<1$
    and
    $\ell=\ell_0(\epsilon^k, \alpha_2, \alpha_1\cdot \epsilon^k)$, where the function $\ell_0$ is as defined in \Cref{lem:augmenting-informal}. 
    For any nonempty $\Lambda\subseteq V$ and an arbitrary event $A_{\lambda}$ defined on the variables in $\vbl(\Lambda)$, 
    assuming that $I$ is $\alpha_1$-satisfiable and  $I(V\setminus \Lambda )$ is $\alpha_2$-satisfiable, 
    there is a $\hat{P}\in(0,1)$ determined only by  $\Lambda$, $A_{\lambda}$, and $I(B_{\ell+1}(\Lambda))$, 
    such that
    $$
        P\triangleq \Pr_{\vec{X}\sim \mu_I}[\,\vec{X}\text{ avoids }A_{\lambda}\,] \in\left[\hat{P}-2\epsilon^k, \hat{P}+2\epsilon^k\right].
    $$
\end{lemma}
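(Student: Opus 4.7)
The plan is to invoke the LLL augmentation lemma exactly once with parameters tuned to the target accuracy $\epsilon^k$, define $\hat P$ as a ratio of two local partition sums, and bound the two approximation errors (``augmented versus original instance'' and ``local versus global within the augmented instance'') separately. First I would apply \Cref{lem:augmenting-informal} to $\Lambda$ with parameter triple $(\epsilon',\gamma',\delta')=(\epsilon^k,\alpha_2,\alpha_1\epsilon^k)$. The two preconditions are straightforward: $I(S_{[1,\ell]}(\Lambda))$ is $\alpha_2$-satisfiable because $S_{[1,\ell]}(\Lambda)\subseteq V\setminus\Lambda$ and dropping bad events only increases the satisfying probability, while $\delta'<\gamma'/2$ reduces to $\alpha_1\epsilon^k<\alpha_2/2$, which holds since $\alpha_1\le\alpha_2$, $\epsilon<1/2$, and $k\ge 1$. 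The lemma then produces an event $A^{\ast}:=\aug{I}{\epsilon^k,\alpha_2,\alpha_1\epsilon^k}{\Lambda}$ on the ring $\ring{}{[1,\ell]}{\Lambda}$, constructible from $I(B_{\ell+1}(\Lambda))$ alone, satisfying $\nu(A^{\ast})\le\alpha_1\epsilon^k$, such that in the augmented instance $\hat I:=(\{X_i\}_{i\in U},\{A_v\}_{v\in V}\cup\{A^{\ast}\})$ the variable sets $S:=\vbl(\Lambda)$ and $T:=U\setminus\vbl(B_\ell(\Lambda))$ are $\epsilon^k$-correlated.

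Next I would define the local surrogate. Write $M:=\vbl(B_\ell(\Lambda))\setminus S$ for the middle region and, for each $\sigma\in\Sigma_S$, set
\[\hat L(\sigma):=\sum_{\rho\in\Sigma_M}\nu_M(\rho)\cdot\mathbb{1}\!\left[\sigma\land\rho\text{ avoids }A^{\ast}\text{ and every }A_v\text{ with }\vbl(v)\subseteq S\cup M\right].\]
Then put
\[\hat P:=\frac{\sum_{\sigma\in\Sigma_S:\,\sigma\text{ avoids }A_\lambda}\nu_S(\sigma)\hat L(\sigma)}{\sum_{\sigma\in\Sigma_S}\nu_S(\sigma)\hat L(\sigma)},\]
clipped to $(0,1)$ in degenerate cases; since every indicator depends only on variables inside $\vbl(B_\ell(\Lambda))$, the quantity $\hat P$ is determined by $\Lambda$, $A_\lambda$, and $I(B_{\ell+1}(\Lambda))$ as required. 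To control the error, set $P_{\hat I}:=\Pr_{\mu_{\hat I}}[\bar A_\lambda]$ and write $Z(\sigma,\tau):=\nu(\Omega_{\hat I}^{\sigma\land\tau})$ for $\sigma\in\Sigma_S$, $\tau\in\Sigma_T$. The $\epsilon^k$-correlation of \Cref{def:correlation} is precisely the product inequality $Z(\sigma_1,\tau_1)Z(\sigma_2,\tau_2)\le(1+\epsilon^k)Z(\sigma_1,\tau_2)Z(\sigma_2,\tau_1)$, and averaging over $\tau_1$ (as in the derivation of the Bayes filter bound in~\eqref{eq:bayes-filter-whp}) shows that $\mu_{\hat I,S}(\sigma)/\mu_{\hat I,S}(\sigma')$ lies in the multiplicative interval $[(1+\epsilon^k)^{-1},1+\epsilon^k]$ times $\nu_S(\sigma)\hat L(\sigma)/(\nu_S(\sigma')\hat L(\sigma'))$; summing numerators over $\{\sigma\text{ avoiding }A_\lambda\}$ and denominators over $\Sigma_S$ converts this multiplicative ratio into the additive bound $|P_{\hat I}-\hat P|\le\epsilon^k$. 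Finally, $\mu_{\hat I}$ equals $\mu_I$ conditioned on $\bar A^{\ast}$, so $|P-P_{\hat I}|\le\mu_I(A^{\ast})\le\nu(A^{\ast})/\nu(\Omega_I)\le\alpha_1\epsilon^k/\alpha_1=\epsilon^k$; the triangle inequality gives $|P-\hat P|\le 2\epsilon^k$.

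The step I expect to be the main obstacle is justifying the ratio bound used in the previous paragraph: the $\epsilon^k$-correlation is phrased in terms of the joint partition functions $Z(\sigma,\tau)$, whereas $\hat L(\sigma)$ is a free-boundary partition function restricted to the middle slab, so one has to verify carefully that the $M$-to-$T$ boundary events, together with all events supported on $T$ alone, factor out into a $\sigma$-independent quantity once averaged against the $\epsilon^k$-correlation inequality. Once this single ratio inequality is established, the remaining pieces are routine triangle-inequality bookkeeping.
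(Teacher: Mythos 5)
Your high-level plan — augment $I$ once with parameters $(\epsilon^k,\alpha_2,\alpha_1\epsilon^k)$, define a locally computable surrogate $\hat P$, and then bound $|P-\hat P|$ by a triangle inequality through $P_{\hat I}:=\Pr_{\mu_{\hat I}}[\bar A_\lambda]$ — is exactly the paper's strategy, and your bound $|P-P_{\hat I}|\le\mu_I(A^{\ast})\le\epsilon^k$ is fine. The difficulty you flag at the end, however, is not a technicality to be verified but a genuine gap: the surrogate $\hat L(\sigma)$ as you defined it cannot be controlled by the $\epsilon^k$-correlation inequality, because it is not expressible as a $\tau$-weighted sum of the joint partition functions $\nu(\Omega_{\hat I}^{\sigma\wedge\tau})$. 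Concretely, write $\nu(\Omega_{\hat I}^{\sigma\wedge\tau})=\nu_S(\sigma)\nu_T(\tau)\sum_\rho\nu_M(\rho)\,\mathbb{1}_{SM}(\sigma,\rho)\,\mathbb{1}_{\mathrm{span}}(\rho,\tau)$, where $\mathbb{1}_{SM}$ checks the events (and $A^\ast$) supported on $S\cup M$ and $\mathbb{1}_{\mathrm{span}}$ checks the events that straddle $M$ and $T$. Any weighted sum $\sum_\tau w(\tau)\nu(\Omega_{\hat I}^{\sigma\wedge\tau})$ carries the factor $\sum_\tau w(\tau)\nu_T(\tau)\mathbb{1}_{\mathrm{span}}(\rho,\tau)$ inside the $\rho$-sum, and this factor depends on $\rho$; your $\hat L(\sigma)$ replaces it by the constant $1$, which is impossible to realize by any choice of weights $w$ unless the straddling events happen to be absent. \Cref{def:correlation} and \Cref{lem:correlation-extension} compare ratios of $\tau$-weighted sums of $\nu(\Omega_{\hat I}^{\sigma\wedge\tau})$ to one another — they say nothing about ratios involving a quantity that lies outside this family. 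So "averaging against the $\epsilon^k$-correlation inequality" does not, by itself, prove $|P_{\hat I}-\hat P|\le\epsilon^k$ for your $\hat P$.

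The paper's proof sidesteps this by choosing a surrogate that \emph{is} a legal $\tau$-weighted sum. Setting $q_1(\tau)$ to be the indicator that $\tau$ avoids all events supported inside $T$, and $q_2(\tau)\equiv1$, one has $P_{\hat I}=\frac{\sum p_2 q_1 \nu(\Omega_{\hat I}^{\sigma\wedge\tau})}{\sum p_1 q_1 \nu(\Omega_{\hat I}^{\sigma\wedge\tau})}$ and defines $\hat P\triangleq\frac{\sum p_2 q_2 \nu(\Omega_{\hat I}^{\sigma\wedge\tau})}{\sum p_1 q_2 \nu(\Omega_{\hat I}^{\sigma\wedge\tau})}$, where $p_1(\sigma)$ checks the events inside $S$ and $p_2=p_1\cdot\mathbb{1}[\text{avoid }A_\lambda]$. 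This $\hat P$ keeps the $M$-$T$ straddling events, averaged over $\tau$ with the \emph{free} product measure (which is why it is computable from $I(B_{\ell+1}(\Lambda))$), and both numerator and denominator now fit the form required by \Cref{lem:correlation-extension}, yielding $\hat P\le(1+\epsilon^k)P_{\hat I}\le(1+\epsilon^k)^2\hat P$ and hence $|P_{\hat I}-\hat P|\le\epsilon^k$ directly. If you replace your $\hat L$-based $\hat P$ by this one — equivalently, multiply the summand in $\hat L(\sigma)$ by $\Pr_{\tau\sim\nu_T}[\,\sigma\wedge\rho\wedge\tau\text{ avoids the straddling events}\,]$ — the rest of your argument goes through verbatim.
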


\begin{algorithm}
\caption{\RecursiveSampling($\vec{Y}$; $I, \Lambda, \epsilon, \gamma, \delta, \alpha$)}
\label{alg:recursive-sampling-without-cd}
\SetKwInOut{Input}{Input}
\SetKwInOut{Data}{Data}

\Input{LLL instance $I=\left(\{X_i\}_{i\in U}, \{A_v\}_{v\in V}\right)$,  subset $\Lambda\subseteq V$, parameter  $(\epsilon, \gamma, \delta, \alpha)$;}
\Data{assignment $\vec{Y}=(Y_i)_{i\in U}$ stored globally that can be updated by the algorithm;}

\tcp{\small Throughout the algorithm, $A_{\lambda}, A_{\bar{\lambda}}, \hat{I}, \hat{I}'$ are defined as in \eqref{eq:warm-up-def-event-A-lambda} and~\eqref{eq:warm-up-def-hat-I}.}

\SetKwIF{withprob}{}{}{with probability}{do}{}{}{}

    
initialize $i\gets 1$, and define $\ell_0\triangleq \ell_0(\epsilon,\gamma, \delta)$\; 

draw $\rho\in[0,1)$ uniformly at random\; \label{line:draw-rho}

\While{true\label{line:recursive-termination}}{


    ${\ell}_i \gets \ell_0\left(\zeta_0^i, \gamma, \alpha\cdot \zeta_0^i\right)$\;

    compute the smallest interval $[{L},{R}]$ containing $P\triangleq \Pr_{\vec{X}\sim \mu_I}[\,\vec{X}\text{ avoids }A_{\lambda}\,]$
    based on $\Lambda$, $A_{\lambda}$, $I\left(B_{\ell_0+{\ell}_i+1}(\Lambda)\right)$, assuming that $I$ is $\alpha$-satisfiable and $I\left(V\setminus B_{\ell_0+{\ell}_i+1}(\Lambda)\right)$ is $\gamma$-satisfiable\; \label{line:estimate}
    \tcp{\small By \Cref{lem:estimate-augmenting}, such interval $[{L},{R}]$ exists and satisfies ${R}-{L} \leq 4\zeta_0^i$.}



    \If{$\rho< {L}$ \label{line:first-if}}{
    \tcp{\small Enters the zone $[0,{L})\subseteq[0,P)$ for generating $\vec{Y}\sim\mu_{\hat{I}}$.}

            
                
        
        define $T\triangleq U\setminus \vbl(B_{\ell_0}(\Lambda))$\;

        {Define $f(\tau)\triangleq \frac{\nu\left(\Omega_{\hat{I}}^{\tau}\right)}{\nu\left(\Omega_{\hat{I}}^{Y_S\wedge \tau}\right)}$ for all $\tau\in \Sigma_T$ with $\nu\left(\Omega_{\hat{I}}^{Y_S\wedge \tau}\right)>0$, similar to \eqref{eq:bayes-filter-f-definition}}\;
        \withprob{$\frac{f\left(Y_{T}\right)}{\max f}$  \label{line:filter-soundness}}{ 
            update $\vec{Y}$ by redrawing $Y_{U\setminus T}\sim \mu_{\hat{I}, U\setminus T}^{Y_{T}}$; \label{line:resample}
            
            \tcp{\small  $\frac{f\left(Y_{T}\right)}{\max f}$ and $\mu_{\hat{I}, U\setminus T}^{Y_{T}}$ can be evaluated locally within $B_{\ell+1}(\Lambda)$.}
        } 
        \Else{
            initialize  $r\gets \ell_0+1$;

            
            \While{$\vec{Y}$ does not avoid the bad event $\aug{\hat{I}}{{1}/{2},\, \gamma,\, {\zeta_0\alpha}/{2}}{B_{r}(\Lambda)}$\label{line:violate1}}{
                grow the ball:  $r\gets r+\ell_0\left(\frac{1}{2}, \gamma,\frac{\zeta_0\alpha}{2}\right)$; \label{line:add-first-radius}
            }            
            
            
            \RecursiveSampling$\left(\vec{Y}; \hat{I}, B_{r}(\Lambda)\cup\{\lambda\}, \frac{1}{2}, \gamma, \frac{\zeta_0 \alpha}{2}, \frac{\alpha}{2}\right)$; \label{line:first-recursion}
        }
        
            \Return{}\;\label{line:first-break}
        }
        \ElseIf{$\rho\ge {R}$ \label{line:second-if}}{
        \tcp{\small Enters the zone $[{R},1)\subseteq[P,1)$ for generating $\vec{Y}\sim\mu_{\hat{I}'}$.}
        
            initialize $s\gets \ell_0 +1$\;

    
            \While{$\vec{Y}$ does not avoid the bad event $\aug{\hat{I}'}{1/2,\,\gamma,\, {\zeta_0\alpha  (1-R)}/{2}}{B_{s}(\Lambda)}$\label{line:violate2}}{
                grow the ball: $s\gets s+\ell_0\left(\frac{1}{2},\gamma, \frac{\zeta_0\alpha (1-R)}{2}\right)$; \label{line:add-second-radius}
            }
            \RecursiveSampling$\left(\vec{Y}; \hat{I}', B_{s}(\Lambda)\cup \left\{\overline{\lambda}\right\},  \frac{1}{2}, \gamma,  \frac{\zeta_0 \alpha  (1-R)}{2}, \alpha  (1-R) \right)$;\label{line:second-recursion}
            
            \Return{}\;\label{line:second-break}

        }
        \Else{
        \tcp{\small Enters the zone $[{L},{R})$ of indecision.}
        
                   enter the next iteration (and refine the estimation of $P$): $i\gets i+1$;  \label{line:add-estimate-radius}
        }
    }
\end{algorithm}

The high-level strategy for sampling can be outlined as follows. Given the current estimate interval $[L,R]$ of the threshold $P\in[L,R]$,  as provided by \Cref{lem:estimate-augmenting}, the algorithm operates as follows:
\begin{itemize}
    \item 
    With probability $L$, the algorithm is determined to enter the branch for sampling $\vec{Y}\sim \mu_{\hat{I}}$. 
    This can be resolved efficiently as in \Cref{alg:recursive-sampling-with-cd}, because of the correlation decay in the augmented instance~$\hat{I}$.
    \item 
    With probability $1-R$, the algorithm is determined to enter the branch for sampling $\vec{Y}\sim \mu_{\hat{I}'}$. 
    This can be resolved recursively with an appropriately enlarged neighborhood containing $\vbl(\lambda)$, 
    ensuring that \Cref{cond:warm-up-recursive-without-cd} is still satisfied invariantly.
    \item 
    Otherwise, the algorithm enters the so-called `zone of indecision', where it enlarges the local radius to gather more information in order to obtain a more accurate estimate $[L, R]$ of the threshold~$P$.
\end{itemize}
%
%
Overall, the above procedure generates a random assignment $\vec{Y}\sim \mu_I$. 
Throughout the recursive calls, we ensure to maintain the invariant \Cref{cond:warm-up-recursive-without-cd}. 
%
%
The procedure \RecursiveSampling($\vec{Y}$; $I, \Lambda, \epsilon, \gamma, \delta, \alpha$) is detailed in \Cref{alg:recursive-sampling-without-cd}.

Our use of `zones of indecision' for recursive bootstrapping of marginal probabilities is inspired from the Anand-Jerrum algorithm introduced in \cite{anand2022perfect} for solving perfect simulation of Gibbs distributions.%

The correctness of \Cref{alg:recursive-sampling-without-cd} follows from a structural induction.
The complexity of \Cref{alg:recursive-sampling-without-cd} is challenging to analyze due to the random recursion, nevertheless, we apply a potential method to bound it.
%
\begin{lemma}\label{lem:recursive-sample-correctness-complexity}
Assume that \Cref{cond:warm-up-recursive-without-cd} is satisfied by the input of
\RecursiveSampling$(\vec{Y}; I,\Lambda,\epsilon,\gamma,\delta,\alpha)$.
\begin{enumerate}
    \item \label{lem:item:recursive-sample-correctness}
    After \RecursiveSampling$(\vec{Y}; I,\Lambda,\epsilon,\gamma,\delta,\alpha)$ returns, $\vec{Y}$ follows the distribution $\mu_I$.
    \item \label{lem:item:recursive-sample-complexity}
    For any $0<\eta<1$, with probability $1-\eta$, \RecursiveSampling$(\vec{Y};I,\Lambda,\epsilon,\gamma,\delta,\alpha)$ accesses $I(B_{r}(\Lambda))$ and updates $Y_{\vbl(B_{r}(\Lambda))}$, where $r=\ell_0(\epsilon,\gamma, \delta)+\tO\left(\log\frac{1}{\gamma}\cdot \log^4 \frac{1}{\eta} + \log\frac{1}{\gamma}\cdot \log^2 \frac{1}{\eta}\cdot \log \frac{1}{\alpha}\right)$.
\end{enumerate}
%
\end{lemma}

\begin{remark}
It is important to note that \Cref{ass:oneAv} is not assumed in the statement of \Cref{lem:recursive-sample-correctness-complexity}.
Instead, the lemma holds as long as that \Cref{cond:warm-up-recursive-without-cd} is satisfied. 
The scenario described in \Cref{ass:oneAv} can in fact be incorporated into \Cref{cond:warm-up-recursive-without-cd} as a special case where $\Lambda=B_{r\cdot\ell}(v)$.

\Cref{lem:recursive-sample-correctness-complexity} will be formally proved in \Cref{sec:analysis-resampling}.
\end{remark}

\section{The Main Algorithm: General Case} \label{sec:algorithm}
In this section, we formally present the main sampling algorithm in \Cref{thm: sample-gibbs}.
Recall that the input instance is a $\gamma$-satisfiable LLL instance $I=\left(\{X_i\}_{i\in U},\{A_v\}_{v\in V}\right)$ with $n=|V|$ bad events. 
The network $G=D_I$ is its dependency graph.
Our goal is to draw a sample according to the LLL distribution $\mu_I$.


As discussed in \Cref{sec:special-case}: 
the sampling algorithm initially generates a random assignment $\vec{Y}$ according to the  product distribution $\nu$; 
and if there are some bad events occurring on $\vec{Y}$, 
the algorithm then tries to locally fix this prudctly generated $\vec{Y}$ to follow the the correct joint distribution $\mu_I$.

Through a series of expositions in the last section, the aforementioned problem was partially resolved in \Cref{sec:warm-up-recursive} under an idealized assumption (\Cref{ass:oneAv}), 
which posits that there is only one bad event $A_v$ that may occur on $\vec{Y}$ drawn from the product distribution $\nu$ initially.

To remove the dependency on \Cref{ass:oneAv}, 
the algorithm for the general case additionally clusters the erred bad events into local balls that are suitably far apart from each other.
These erred balls are then fixed one-by-one using \Cref{alg:recursive-sampling-without-cd} presented in the last section, 
using the sequential local paradigm (\SLV{}) described in \Cref{subsec:SLOCAL-LV}. 
Finally, according to \Cref{thm: SLV-local to local}, this sequential local procedure is converted into a \LOCAL{} algorithm.
Compared to \Cref{sec:special-case}, the new elements introduced in this section, namely clustering and \SLOCAL{} paradigm, are technically more conventional to local computation, 
while their inclusion does indeed introduce additional complexity to the presentation of the algorithm.


The algorithm for the general case consists of three phases: \emph{initialization}, \emph{clustering}, and \emph{resampling}. 
The first phase runs in the \LOCAL{} model and terminates within fixed $\Tilde{O}( \log^3 n)$ rounds.
\begin{itemize}
    \item \textbf{Initialization}: produce a random assignment $\vec{Y}=(Y_i)_{i\in U}$ distributed as the product distribution $\nu$,
    and a random set $\RV{R}\subseteq V$ of clustering centers  used by the next phase. (\Cref{subsec:algorithm-initialization})
\end{itemize}
Then this random $(\vec{Y},\RV{R})$ is then passed to a $2$-scan \SLV{} algorithm that runs on the same network, 
where the two scans correspond to the next two phases of the algorithm, respectively. 
\begin{itemize}
    \item \textbf{Clustering}: cluster the bad events that occur on $\vec{Y}$ into sufficiently far apart balls, ensuring that \Cref{cond:warm-up-recursive-without-cd} (indeed, a slightly stronger condition, \Cref{cond:clustering}) is satisfied invariantly. (\Cref{subsec:algorithm-clustering})
    \item \textbf{Resampling}:  properly fix the assignments on the balls to make $\vec{Y}$ follow the correct distribution $\mu_I$, which utilizes the \RecursiveSampling{} described in \Cref{alg:recursive-sampling-without-cd} as a black box.  (\Cref{subsec:algorithm-resampling})
\end{itemize}
Finally, by \Cref{thm: SLV-local to local}, the $2$-scan \SLV{} algorithm can be transformed to a \LOCAL{} algorithm, which altogether with the initialization phase, 
gives us the \LOCAL{} algorithm claimed in \Cref{thm: sample-gibbs}. 

At last, we wrap up the proof of \Cref{thm: sample-gibbs} in \Cref{subsec:wrpping-up}.

\subsection{The \emph{Initialization} phase}\label{subsec:algorithm-initialization}


The goal of this phase is to generate a random assignment $\vec{Y}=(Y_i)_{i\in U}$  and a random set $\RV{R}\subseteq V$ satisfying:

\begin{condition}\label{cond:initialization-stage}
The followings hold for $\vec{Y}=(Y_i)_{i\in U}$ and $\RV{R}\subseteq V$: 
    \begin{enumerate}
        \item $\vec{Y}$ follows the product distribution $\nu$. 
        \item For any $v\in V$, if $\vec{Y}$ does not avoid $A_v$ , 
        then  $\dist(u,v)\le {d\cdot \log n\cdot \log \log \log n}$ for some $u\in \RV{R}$, where $d$ is a large enough universal constant to be specified later.
        \item For any $0<\epsilon<1$, we have $|\RV{R}|=O\left(\log n \cdot \log \log n \cdot \log \frac{1}{\gamma} \cdot\log \frac{1}{\epsilon}\right)$ with probability at least $1-\epsilon$.
    \end{enumerate}
\end{condition}

Intuitively, this asks for generating a random vector $\vec{Y}=(Y_i)_{i\in U}$ obeying the marginal distributions $\nu_i$, 
and a small set $\RV{R}\subseteq V$ of ``centers'', 
such that all bad events that are made occur by $\vec{Y}$ are not far away from some center.
In the \LOCAL{} model, this can be achieved rather straightforwardly with the help of network decomposition,
which is a major building block for distributed algorithms.

\begin{definition}[network decomposition]\label{def:network-decomposition}
A weak $(c, d)$-network decomposition of $G=(V,E)$ is a pair $(\mathcal{S}, \mathcal{C})$, where $\mathcal{S}=\{S_1,S_2,\ldots\}$ is a partition of $V$ into vertex subsets, each with diameter at most $d$;
and $\mathcal{C}:\mathcal{S}\to [c]$ is a proper coloring of $\mathcal{S}$ such that $\mathcal{C}(S_1)\neq \mathcal{C}(S_2)$ for any $S_1,S_2\in \mathcal{S}$ with $\dist_G(S_1,S_2) = 1$.
\end{definition}



We adopt the recent bound for deterministic network decomposition from \cite{ghaffari2023improved}.

\begin{lemma}[\cite{ghaffari2023improved}]\label{lem:time-complexity-of-network-decomposition}
    There is a deterministic \LOCAL{} algorithm that, on any network $G$ with $n$ nodes, computes a weak $( O(\log n), O(\log n\cdot \log \log \log n))$-network decomposition of $G$ in { $\Tilde{O}(\log^3n)$} rounds. 
\end{lemma}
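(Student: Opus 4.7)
The plan is to follow the modern blueprint for deterministic polylogarithmic network decomposition: construct $c=O(\log n)$ color classes in sequence, each consisting of clusters that are pairwise at graph distance at least $2$ and of diameter at most $d=O(\log n\cdot \log\log\log n)$. Since the lemma is invoked as a black box in this paper, my actual writing plan is to cite the reference directly and merely verify that the parameters $(c,d)$ and the round budget $\tO(\log^3 n)$ line up with how the weak decomposition is consumed by the clustering phase (in particular, that the diameter $d$ matches the distance bound used in \Cref{cond:initialization-stage}).

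If I had to give a standalone proof, I would assign each node a unique binary identifier of $O(\log n)$ bits and run one \emph{phase} per bit. In phase $i$ every still-uncolored node begins as a singleton tentative cluster, and the clusters then play a synchronous ball-growing tournament in which neighboring clusters compare the $i$-th bit of their centers' identifiers: clusters whose bit is favorable absorb their neighbors, while clusters that cannot absorb any neighbor \emph{freeze}, are permanently colored $i$, and have their closed neighborhood removed from the graph before phase $i{+}1$; the remaining clusters release their nodes back as uncolored and retry. A classical counting argument shows that the uncolored population at least halves per phase, which gives the $c=O(\log n)$ color bound.

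The delicate quantitative step, where essentially all the difficulty lies, is the amortized diameter bound. A tournament round grows a cluster's radius by one, and the naive estimate allows $\Omega(\log n)$ per phase, which over $O(\log n)$ phases would compound into $\polylog(n)$ with a larger exponent than we can afford. To get the claimed $O(\log n\cdot \log\log\log n)$ diameter, I would run a hierarchical ball-growing scheme inside each phase and track a global potential that charges radius only to the shrinking ``frontier'' of clusters still attempting to absorb, so that the \emph{total} radius summed across all phases is bounded rather than being bounded phase by phase. The $\tO(\log^3 n)$ round complexity would then follow because each of the $O(\log n)$ phases takes $\tO(\log n)$ inner rounds for identifier arbitration and ball maintenance, multiplied by the $\log\log\log n$-type factor from the hierarchy.

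The main obstacle is exactly this amortized-radius analysis, together with confirming that all intermediate identifier broadcasts and arbitration steps can be implemented within the stated polylogarithmic per-phase round budget in the \LOCAL{} model. As already noted, in the present paper this obstacle is finessed by invoking the result as a black box, so the only residual task is parameter bookkeeping.
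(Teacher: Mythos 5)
The paper offers no proof of this lemma: it is stated and used purely as a black-box citation to \cite{ghaffari2023improved}, which is exactly what you propose to do. Your optional standalone sketch (bit-phase tournaments with cluster freezing and an amortized radius potential) gestures correctly at where the work in that reference lies, but since the paper itself does nothing beyond parameter bookkeeping, your plan coincides with the paper's treatment.
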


Note that each node $v\in V$ corresponds to a bad event $A_v$ defined on the variables in $\vbl(v)$, 
and one variable may appear in the $\vbl(v)$ for multiple $v$'s.
Alternatively, one can assign each variable to a unique node through the following partition:
\[
\forall v\in V,\quad U_v\triangleq\{i\in\vbl(v)\mid v\text{ has the smallest ID among all $v'\in V$ with }i\in\vbl(v')\}.
\]
It is obvious that $\{U_v\mid v\in V\}$ is a partition of $U$, assuming that $U=\vbl(V)$ is the set of variables ever appearing in any bad events. 
Also, each node $v\in V$ can compute $U_v$ within one round.

The algorithm for producing the desirable $(\vec{Y},\RV{R})$ is as follows.
At first, each node $v\in V$ locally draws $Y_i\sim\nu_i$ independently for all $i\in U_v$.
This gives the random assignment $\vec{Y}=(Y_i)_{i\in U}$, which obviously follows the marginal distributions $\nu_i$.
Next, we compute the $\RV{R}\subseteq V$. 
First, each node $v\in V$ checks if $A_v$ is avoided by $\vec{Y}$, which costs one round since $\vbl(v) \subseteq \bigcup_{w\in N^+(v)} U_{w}$. 
Then, we construct a weak $(c\cdot \log n, d\cdot  \log n\cdot \log \log \log n)$-network decomposition $(\mathcal{S}, \mathcal{C})$ 
for some suitable constant $c,d\in \mathbb{N}$, 
which according to \Cref{lem:time-complexity-of-network-decomposition}, can be done within $\tO(\log^3 n)$ rounds. 
After that, within each cluster $S\in \mathcal{S}$, 
each node $v\in S$ checks whether the union bad event $A_S\triangleq \bigcup_{v\in S} A_v$ is avoided by $\vec{Y}$,
and marks $v$ in $\RV{R}$ iff $A_S$ occurs and $v$ has the smallest ID within the cluster $S$.
This guarantees that for any bad event $A_v$ that occurs on $\vec{Y}$, the node  $v\in V$ must be $d\cdot\log n\cdot\log\log\log n$-close to a node in $\RV{R}$,
and the construction of such $\RV{R}$ takes $\tO(\log n)$ rounds given the network decomposition $(\mathcal{S}, \mathcal{C})$ , because the diameter of each cluster $S\in \mathcal{S}$ is at most $d\cdot\log n\cdot\log\log\log n$.
Together, the following is achieved.

\begin{lemma}\label{lem:initialization-stage}
        The initialization phase outputs $(\vec{Y}, \RV{R})$ satisfying \Cref{cond:initialization-stage} within $\Tilde{O}( \log^3 n)$ rounds.
\end{lemma}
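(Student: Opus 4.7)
Items 1 and 2 of \Cref{cond:initialization-stage} are essentially immediate from the construction and can be dispatched quickly. Item 1 holds because each variable $Y_i$ is independently drawn from $\nu_i$ by the unique node $v$ with $i\in U_v$, so $\vec{Y}\sim\nu$ by definition. For item 2, if $A_v$ occurs on $\vec{Y}$, then $v$ lies in some cluster $S\in\mathcal{S}$ for which the union event $A_S=\bigcup_{w\in S}A_w$ occurs, so the smallest-ID node $u\in S$ gets added to $\RV{R}$; since $\mathrm{diam}_G(S)\le d\cdot\log n\cdot\log\log\log n$ by the guarantee of \Cref{lem:time-complexity-of-network-decomposition}, we obtain $\dist(u,v)\le d\cdot\log n\cdot\log\log\log n$.

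The main content is item 3, the tail bound on $|\RV{R}|$. The plan is to exploit the coloring $\mathcal{C}:\mathcal{S}\to[c\log n]$. Within a single color class $\{S\in\mathcal{S}\mid \mathcal{C}(S)=c'\}$, any two distinct clusters satisfy $\dist_G(S_1,S_2)\ge 2$ in the dependency graph $G=D_I$, which by the definition of $D_I$ means $\vbl(S_1)\cap\vbl(S_2)=\emptyset$. Hence under the product measure $\nu$ the events $\{A_S\}_{\mathcal{C}(S)=c'}$ are mutually independent. Writing $p_S\triangleq\nu(A_S)$, the $\gamma$-satisfiability of $I$ gives
\[
\prod_{\mathcal{C}(S)=c'}(1-p_S)\;=\;\Pr_{\vec{X}\sim\nu}\Bigl[\bigcap_{\mathcal{C}(S)=c'}\overline{A_S}\Bigr]\;\ge\;\Pr_{\vec{X}\sim\nu}\Bigl[\bigcap_{v\in V}\overline{A_v}\Bigr]\;\ge\;\gamma,
\]
which after taking logarithms yields $\sum_{\mathcal{C}(S)=c'}p_S\le\log(1/\gamma)$. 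Let $N_{c'}$ denote the number of clusters of color $c'$ whose $A_S$ occurs; it is a sum of independent Bernoullis with mean at most $\log(1/\gamma)$. A standard Chernoff tail bound then gives $\Pr[N_{c'}\ge K]\le\epsilon/(c\log n)$ for $K=O\bigl(\log(1/\gamma)+\log\log n+\log(1/\epsilon)\bigr)$. Union-bounding over the $c\log n$ color classes, with probability at least $1-\epsilon$ we have $|\RV{R}|=\sum_{c'}N_{c'}\le c\log n\cdot K$, which (using $a+b+c\le 3abc$ for $a,b,c\ge 1$) is $O\bigl(\log n\cdot\log\log n\cdot\log(1/\gamma)\cdot\log(1/\epsilon)\bigr)$, as required.

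For the round complexity, constructing the weak $(c\log n,d\log n\log\log\log n)$-network decomposition costs $\tO(\log^3 n)$ rounds by \Cref{lem:time-complexity-of-network-decomposition}, drawing $\vec{Y}$ and computing the partition $\{U_v\}$ costs $O(1)$ rounds, and within each cluster the aggregation that evaluates $A_S$ and picks the smallest-ID node takes $O(\log n\log\log\log n)$ rounds (bounded by the diameter). The total is $\tO(\log^3 n)$.

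The main obstacle I expect is the concentration step for $|\RV{R}|$: one needs the insight that the \emph{coloring} of the network decomposition, applied to $D_I$, induces disjointness of variable scopes and hence true independence of the cluster-level events $A_S$ within a color class, which is what unlocks both the $\gamma$-satisfiability-based bound on $\sum p_S$ and the Chernoff concentration. The remaining bookkeeping (Chernoff with union bound over colors, conversion $a+b+c\le 3abc$) is routine.
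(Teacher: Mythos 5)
Your proof matches the paper's argument: it uses the same reduction of $|\RV{R}|$ to a sum over color classes, the same observation that a proper coloring of the network decomposition on $D_I$ makes the cluster-level events $\{A_S\}$ within a color class depend on disjoint variable sets and hence mutually independent under $\nu$, the same use of $\gamma$-satisfiability to bound the per-color expectation by $O(\log\frac{1}{\gamma})$, and the same Chernoff-plus-union-bound over the $O(\log n)$ colors. (You bound $\sum_{\mathcal{C}(S)=c'} p_S \le \log\frac{1}{\gamma}$ directly via logarithms, whereas the paper writes the expectation as $n_k(1-\gamma^{1/n_k})$ and then estimates; the two are equivalent.) The brief verification of items 1 and 2 of the condition, which the paper takes as given, is correct and a reasonable addition.
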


It only remains to verify that $|\RV{R}|=O\left(\log n \cdot \log \log n \cdot \log \frac{1}{\gamma} \cdot\log \frac{1}{\epsilon}\right)$ with probability at least $1-\epsilon$, which follows from the Chernoff bound.
A formal proof is included in \Cref{sec:proof-initialization-stage} for completeness. 

\subsection{The \emph{Clustering} phase} \label{subsec:algorithm-clustering}
The random assignment $\vec{Y}$ and node set $\RV{R}\subseteq V$ constructed in the initialization phase, is passed to a \SLV{} algorithm,
which runs on the same network $G=D_I$ and takes $\RV{R}$ as the set of active nodes.
The \SLV{} algorithm runs in two scans, where the first scan is the \emph{Clustering} phase.

Each node $v\in V$ maintains a local memory $M_v$, which initially stores its UID $\id(v)$, the private random bits, 
the part of the random assignment $(Y_i)_{i\in U_v}$ and the indicator of whether $v\in \RV{R}$.
%
%
The total order assumed on $V$ is a natural one: $u <  v$ iff $\id(u) <  \id(v)$ for any $u,v \in V$. 
%

The \SLV{} algorithm scans the active nodes  $v\in \RV{R}$ in order, 
and computes two parameters $p_v\in V\cup \{\perp\}$ and $r_v\in \mathbb{N}\cup\{\perp\}$ for each $v\in \RV{R}$,
%
which defines a collection of balls in the network $G$: 
\begin{align}\label{eq:def-clustering-balls}
        \RV{B}\triangleq \left\{B_{r_v}(p_v) \mid v\in \RV{R}\land p_v\neq \perp\land r_v\neq \perp\right\}.
\end{align}
%
The goal is to construct a collection $\RV{B}$ of far-apart and reasonably small balls, 
which together with the random assignment $\vec{Y}$, 
satisfy a clustered version of the augmented conditional Gibbs property.


The following notion of \emph{clustered conditional Gibbs property} is a refined variant of the augmented conditional Gibbs property defined in \Cref{def:aug-conditional-gibbs}, adapting to random clustering.

\begin{definition}[{clustered conditional Gibbs}]
\label{def:clustered-conditional-gibbs}
    Let  $\epsilon,\gamma\in(0,1)$, $\delta\in\left(0,\frac{\gamma}{2}\right)$, and $\ell=\ell_0(\epsilon, \gamma, \delta)$, where $\ell_0(\epsilon, \gamma, \delta)$ is defined as in \Cref{lem:augmenting-informal}.
    Let $I=\left(\{X_i\}_{i\in U},\{A_v\}_{v\in V}\right)$ be a LLL instance. 
    A random pair $(\vec{Y},\RV{\mathcal{B}})$, where $\vec{Y}=(Y_i)_{i\in U}\in\Sigma$ and $\RV{B}\subseteq {2^V}$,
    is said to satisfy the \emph{clustered conditional Gibbs property} on instance $I$ with parameter $(\epsilon,\gamma,\delta)$, if for any $\mathcal{B}\subseteq {2^V}$ with $\Pr(\RV{B}=\mathcal{B})>0$:
    \begin{enumerate}
        \item 
        the sub-instance $I\left(B_{\ell}(\Lambda)\setminus\Lambda\right)$ is $\gamma$-satisfiable for every $\Lambda \in \mathcal{B}$;
        
        \item for $S\triangleq \bigcup_{\Lambda \in \mathcal{B}} \vbl(\Lambda)$, $T\triangleq U\setminus \bigcup_{\Lambda\in \mathcal{B}} \vbl(B_{\ell}(\Lambda))$,
        for any $\sigma\in \Sigma_{S}$ with $\Pr\left(\RV{B}=\mathcal{B}\wedge Y_{S}=\sigma\right)>0$, 
        conditioned on that $\RV{B}=\mathcal{B}\land Y_{S}=\sigma$, the assignment $Y_{T}$ follows the marginal distribution $\mu^{\sigma}_{\widehat{I},T}$, i.e.
        \[
        \forall \tau\in\Sigma_T,\quad \Pr(Y_{T}=\tau\mid \RV{B}=\mathcal{B}\wedge Y_{S}=\sigma)=\mu^{\sigma}_{\widehat{I},T}(\tau),
        \]
        where $\widehat{I}$ stands for the LLL instance defined by $\widehat{I}=\left(\{X_i\}_{i\in U}, \{A_v\}_{v\in V} \cup \left\{\aug{I}{\epsilon, \gamma,\delta}{\Lambda} \mid \Lambda\in \mathcal{B}\right\}\right)$ 
        and recall that $\aug{I}{\epsilon, \gamma,\delta}{\Lambda}$ represents the bad event constructed in \Cref{lem:augmenting-informal}.
    \end{enumerate}
    All the above balls $B_\cdot(\cdot)$ are defined in the dependency graph $D_I$ of the LLL instance $I$.
\end{definition}

Recall the choice of parameter $(\epsilon_0,\gamma_0,\delta_0)$ in \eqref{eq:def-parameter-without-cd-expected-complexity} and let $\ell=\ell_0(\epsilon_0, \gamma_0,\delta_0)$, which is defined as in \Cref{lem:augmenting-informal} and formally in~\eqref{eq:def-ell-0}.
%
Our goal is to construct $\RV{B}$ so that $(\vec{Y},\RV{B})$ satisfies the following condition.

\begin{condition}\label{cond:clustering}
The random assignment $\vec{Y}=(Y_i)_{i\in U}$
and random collection of node sets $\RV{B}\subseteq {2^V}$,
satisfy
\begin{enumerate}
    \item $\dist_G(\mathcal{B}_1, \mathcal{B}_2) \geq 2(\ell+2)$ for any distinct $\mathcal{B}_1,\mathcal{B}_2\in \RV{B}$; 
        \item $(\vec{Y},\RV{B})$ satisfies the clustered conditional Gibbs property on instance $I$ with parameter $(\epsilon_0, \gamma_0,\delta_0)$.
\end{enumerate}
\end{condition}

The following is the \SLV{} algorithm for constructing such $\RV{B}\subseteq {2^V}$.
Each node $v\in \RV{R}$ maintains a variable $b_v\in V\cup \{\perp\}$ in its local memory $M_v$, which is initialized to $\perp$.
%
The \SLV{} algorithm sequentially processes each active node $v\in \RV{R}$ in order.
The pseudocode is given in \Cref{alg:clustering}.

After all active nodes $v\in \RV{R}$ have been processed, the 1st scan of the \SLV{} algorithm terminates, and the collection $\RV{B}$ of balls are constructed as in~\eqref{eq:def-clustering-balls} from the centers $p_v$ and radius $r_v$ computed for $v\in \RV{R}$.
Formally, the following are guaranteed for the clustering $\RV{B}$ and the random assignment $\vec{Y}$.

\begin{algorithm}[H]
\caption{The \SLV{} algorithm for Clustering at node $v\in \RV{R}$} \label{alg:clustering}

set $p_v\gets v$ and $r_v\gets 1+d\cdot \log n\cdot \log \log \log n$\;

\While{true}{
    \If{there exist $u\in \RV{R}\setminus\{v\}$ and $w\in B_{2(\ell+2)+r_v}(p_v)$ such that $b_w=u$\label{alg:clustering-line:if-1}}{
    let $c\in V$ be the node with the smallest $\id(c)$ satisfying  \hspace{200pt}
    $\dist(p_v,c)\leq r_u+\ell+2$ and $\dist(p_u,c)\leq r_v+\ell+2$\; 
    \tcp{\small Such $c\in V$ must exist since $\dist(p_u,p_v)\leq r_u+r_v+2(\ell+2)$.}
    set $p_v\gets c$  and  $r_v\gets r_u+r_v+2\cdot (\ell+2)$\; \label{line:6}
    
    $\forall w\in B_{r_u}(p_u)$:\,\, set $b_w\gets \perp$\; 
    
    set $p_u\gets \perp$ and $r_u\gets \perp$\; \label{line:7}
    
}

    \ElseIf{$\vec{Y}$ makes the bad event $\aug{I}{\epsilon_0, \gamma_0,\delta_0}{ B_{r_v}(p_v)}$ occur\label{alg:clustering-line:if-2}}
    {
        update the radius of the ball responsible for node $v$ to  $r_v\gets r_v+\ell$\; \label{line:9}

    }
    \Else{\label{alg:clustering-line:else}
        $\forall u\in B_{r_v}(p_v)$:\,\, set $b_u\gets v$\;
        
        \Return{};
    }
}

\end{algorithm}

\begin{lemma}\label{lem:clustering-correctness}
Assume \Cref{cond:initialization-stage}.
The followings hold after \Cref{alg:clustering} is sequentially executed on all node $v\in\RV{R}$ in the ascending order of $\id(v)$, 
which computes the values of $p_v\in V\cup \{\perp\}$ and $r_v\in \mathbb{N}\cup\{\perp\}$ for each $v\in \RV{R}$ and thereby constructs $\RV{B}\triangleq \left\{B_{r_v}(p_v) \mid v\in \RV{R}\land p_v\neq \perp\land r_v\neq \perp\right\}$.
\begin{enumerate}
    \item \label{lem:item:clustering-well-define}
    For any distinct $u,v\in \RV{R}$, if $p_u,p_v,r_u,r_v\not\in\{\perp\}$ then $B_{r_u}(p_u)\cap B_{r_v}(p_v)=\emptyset$, which guarantees that
    each ball $B=B_{r_v}(p_v)\in \RV{B}$ is uniquely identified by some node $v\in\RV{R}$ with $p_v,r_v\not\in\{\perp\}$.
    \item \label{lem:item:clustering-correctness}
    $(\vec{Y}, \RV{B})$ satisfies \Cref{cond:clustering}. 
    \item \label{lem:item:clustering-complexity}
    For any $0<\eta<1$, with probability at least $1-\eta$, the sum of radii of all balls in $\RV{B}$ is bounded as
    \[
    \mathcal{D}\triangleq\sum_{\substack{v\in \RV{R}\\p_v\neq \perp,r_v\neq \perp}} r_v=\tO\left(|\RV{R}| \cdot \log^2 n \cdot\log^2  \frac{1}{\gamma}\cdot \log\frac{1}{\eta}\right).
    \]
\end{enumerate}
\end{lemma}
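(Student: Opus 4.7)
The plan is to establish the three items of \Cref{lem:clustering-correctness} in order, treating the clustering \SLV{} procedure as a sequential process and maintaining invariants across the active nodes $v\in\RV{R}$ in processing order. For Item~\ref{lem:item:clustering-well-define} I proceed by structural induction: after each active node has been processed, any two alive balls (those with $p_w,r_w\neq\perp$) are pairwise at distance strictly greater than $2(\ell+2)$. This invariant is preserved because the Else branch of \Cref{alg:clustering} is entered only when no alive ball $B_{r_u}(p_u)$ intersects $B_{2(\ell+2)+r_v}(p_v)$, forcing $\dist(p_u,p_v) > r_u + 2(\ell+2) + r_v$; and because the If branch at \Cref{alg:clustering-line:if-1} absorbs two formerly-disjoint balls into a single merged ball whose explicit center $c$ and radius $r_u + r_v + 2(\ell+2)$ satisfy the triangle-inequality bounds ensuring that the merged ball contains both originals and remains far from every other alive ball. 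The distance half of \Cref{cond:clustering} is then immediate from this invariant, and the $\gamma_0$-satisfiability of each $I(B_\ell(\Lambda)\setminus\Lambda)$ required by \Cref{def:augmenting-event} to produce a valid augmenting event is inherited from the global $\gamma$-satisfiability of $I$ together with the choice $\gamma_0=\gamma/8<\gamma$ in~\eqref{eq:def-parameter-without-cd-expected-complexity}.

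The main technical obstacle is the clustered conditional Gibbs property in Item~\ref{lem:item:clustering-correctness}. My plan is to expand the joint probability as
\[
\Pr(\RV{B}=\mathcal{B},\,Y_S=\sigma,\,Y_T=\tau) = \nu_S(\sigma)\,\nu_T(\tau)\sum_{m\in\Sigma_M}\nu_M(m)\,\mathbb{1}\bigl[F(\sigma,m,\tau)=\mathcal{B}\bigr],
\]
where $F$ is the deterministic map sending $\vec{Y}$ to the algorithm's output (after first computing $\RV{R}$ from $\vec{Y}$) and $M = \bigcup_{\Lambda\in\mathcal{B}}\vbl(B_\ell(\Lambda))\setminus S$, and then to factor the indicator into pieces depending only on $(\sigma,m)$ versus on event-avoidance in $(\sigma,m,\tau)$. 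The key structural observation is that every augmenting event $\aug{I}{\epsilon_0,\gamma_0,\delta_0}{B_r(p)}$ queried by the algorithm is defined on variables in its $[1,\ell]$-ring, all of which lie in $S\cup M$, so the entire trajectory of ball growths and merges plus the exit condition that the final augmenting events are avoided depends on $\vec{Y}$ only through $(Y_S,Y_M)$. On the other hand, \Cref{cond:initialization-stage} forces every erred original bad event $A_v$ to lie inside some ball in $\mathcal{B}$, so $\{F(\vec{Y})=\mathcal{B}\}$ additionally requires that no $A_v$ with $\vbl(v)\not\subseteq S$ occur on $\vec{Y}$ --- these are precisely the constraints defining $\mu^\sigma_{\widehat{I},T}$. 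Factoring accordingly and normalizing yields $\Pr(Y_T=\tau\mid \RV{B}=\mathcal{B},Y_S=\sigma) = \mu^\sigma_{\widehat{I},T}(\tau)$, the desired conclusion.

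Finally, for Item~\ref{lem:item:clustering-complexity} I bound $\mathcal{D}$ by attributing radius growth to two disjoint sources: merges at \Cref{line:6} and augmenting-event growths at \Cref{line:9}. There are at most $|\RV{R}|-1$ merges throughout the execution, and each contributes an additive $2(\ell+2) = \tO(\log^2 n\log^2(1/\gamma))$ to the total alive-radius sum, so the merge contribution to $\mathcal{D}$ is $\tO(|\RV{R}|\log^2 n\log^2(1/\gamma))$. Each augmenting-event growth adds $\ell$ to some radius and, by \Cref{lem:augmenting-informal}, occurs only when the corresponding augmenting event fires on $\vec{Y}\sim\nu$, an event of probability at most $\delta_0 = O(\gamma/n^3)$. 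The main subtlety is that consecutive growth attempts on nested balls are not independent; I plan to handle this by observing that each fresh attempt tests the augmenting event on a disjoint outer ring of product-distributed variables, so independence is effectively restored and a Chernoff bound combined with a union bound over the $|\RV{R}|$ balls yields that the total number of augmenting-event growths is $\tO(|\RV{R}|\log(1/\eta))$ with probability at least $1-\eta$. Combining both contributions gives the claimed $\mathcal{D} = \tO(|\RV{R}|\log^2 n\log^2(1/\gamma)\log(1/\eta))$ bound.
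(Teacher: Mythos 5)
Your plan for Items~\ref{lem:item:clustering-well-define} and~\ref{lem:item:clustering-correctness} follows essentially the same route as the paper. For Item~\ref{lem:item:clustering-well-define}, the induction on the pairwise-distance invariant across the sequential scan is exactly what the paper does. For Item~\ref{lem:item:clustering-correctness}, your factorization of the joint probability over the middle region $M$, together with the observation that (given the relevant avoidance events) the clustering trajectory depends only on $(Y_S, Y_M)$, is logically equivalent to the paper's argument that the events $\mathcal{A}_1 \wedge \mathcal{A}_2$ and $\mathcal{A}_1 \wedge \mathcal{A}_3$ coincide, where $\mathcal{A}_3$ is the event that $\vec{Y}$ avoids all bad events in $\Phi(\mathcal{B}) \cup \Phi'(\mathcal{B})$. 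The hard content elided in your sketch is the step-by-step induction, carried out by the paper over the iterations of \Cref{alg:clustering}, showing that two assignments $Z, \hat Z$ agreeing on $S(\mathcal{B})$ and both avoiding $\Phi(\mathcal{B})\cup\Phi'(\mathcal{B})$ drive the algorithm through identical trajectories. Be aware that you will need the observation that any intermediate ball that is absorbed by a merge is contained (together with its $\ell$-expansion where the augmenting event lives) inside the surviving merged ball, so every augmenting event ever tested is defined on $S\cup M$. Your handling of the first condition of \Cref{def:clustered-conditional-gibbs} ($\gamma_0$-satisfiability of the ring sub-instances, via $\gamma_0\le\gamma$ and monotonicity of satisfiability under dropping constraints) is fine.

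There is a genuine gap in Item~\ref{lem:item:clustering-complexity}. You assert that each fresh growth attempt tests the augmenting event on a disjoint outer ring of \emph{product-distributed} variables, so that independence is ``effectively restored'' and a Chernoff bound applies. The rings are indeed disjoint, and $\nu\bigl(\aug{I}{\epsilon_0,\gamma_0,\delta_0}{B_r(p)}\bigr)\le\delta_0$ marginally under $\nu$; but by the time \Cref{alg:clustering} reaches the growth test at radius $r$, it has already conditioned on a nontrivial event involving the ring variables. Specifically, $\RV{R}$ was computed from $\vec{Y}$ in the Initialization phase, revealing which bad events occurred; and the fact that no merge fired while growing to radius $r$ means no occurred bad event has produced a center close enough to trigger \Cref{alg:clustering-line:if-1}, which in particular rules out occurrences of $A_w$ for a range of $w$ in or near the current ring. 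Under this conditioning the ring variables are \emph{not} product-distributed, so the per-step $\delta_0$ bound does not apply as stated, and a martingale/Chernoff argument would require a separate bound on the inflated conditional fire probability. Moreover, a union bound ``over the $|\RV{R}|$ balls'' quietly conditions on the random set $\RV{R}$, compounding the issue. The paper's proof sidesteps all of this: it takes the \emph{fixed} family $\mathcal{F}=\{\mathcal{F}_{v,r}\mid v\in V,\,1\le r\le n\}$, observes that $k$ firings of the growth condition force $k$ \emph{mutually independent} events of $\mathcal{F}$ to occur simultaneously (each firing consumes a fresh disjoint variable set that is then absorbed into a ball), and then applies an \emph{unconditional} union bound over all $\le n^{2k}$ mutually-independent $k$-subsets of $\mathcal{F}$, each occurring with probability $\le n^{-3k}$, giving $\Pr[\ge k\text{ firings}]\le n^{-k}$. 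This completely avoids reasoning about conditional laws of the ring variables (and incidentally yields the sharper $O(\log_n\frac{1}{\eta})$ count rather than your $\tO(|\RV{R}|\log\frac{1}{\eta})$, although both are absorbed by the final $\tO$). You should adopt that formulation.
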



\Cref{lem:clustering-correctness} is formally proved in \Cref{sec:proof-clustering-correctness}.
The following corollary follows easily since $\mathcal{D}$ is monotonically increasing during the scan and upper bounds the radii of the \SLV{} algorithm.

\begin{corollary}\label{lem:clustering-complexity-2}
For any $0<\eta<1$, with probability at least $1-\eta$, 
\Cref{alg:clustering} returns at every node $v\in \RV{R}$ within radius $\tO\left(|\RV{R}| \cdot \log^2 n \cdot \log^2 \frac{1}{\gamma}\cdot\log\frac{1}{\eta}\right)$.
\end{corollary}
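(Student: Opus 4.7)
The plan is to read the corollary off of item~\ref{lem:item:clustering-complexity} of \Cref{lem:clustering-correctness}, which already gives the high-probability bound $\mathcal{D}=\tO(|\RV{R}|\cdot\log^2 n\cdot\log^2\frac{1}{\gamma}\cdot\log\frac{1}{\eta})$ on the final sum of ball radii. The only additional content is to argue that the radius of the ball examined by the \SLV{} algorithm while processing any single active node $v\in\RV{R}$ is bounded by the \emph{final} value $\mathcal{D}$, up to a lower-order additive term.

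The key step is to track the running potential $\mathcal{D}_t\triangleq\sum_{w\in\RV{R}:\, p_w\neq\perp,\, r_w\neq\perp}r_w$ and verify it is monotonically non-decreasing across every transition of \Cref{alg:clustering}: in the merge branch (lines 6--7) the two contributions $r_u+r_v$ are replaced by the single contribution $r_u+r_v+2(\ell+2)$, an increase of exactly $2(\ell+2)$; in the growth branch (line 9) the value $r_v$ is incremented by $\ell$; and in the terminating branch no radius changes. Hence at every point in the scan, and in particular while $v$ is being processed, the current $r_v$ satisfies $r_v\le\mathcal{D}_t\le\mathcal{D}$.

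Now I would bound the node-centered access radius used by \Cref{thm: SLV-local to local}. When processing $v$, the furthest query is to $B_{r_v+2(\ell+2)}(p_v)$ on line~\ref{alg:clustering-line:if-1}. A straightforward induction on the operations shows that $v$ always lies in its own ball, i.e.~$\dist(v,p_v)\le r_v$: initially $p_v=v$ so $\dist(v,p_v)=0\le r_v$, and after a merge in line~\ref{line:6} the chosen center $c$ satisfies $\dist(p_v,c)\le r_u+\ell+2$, while the new radius is $r_u+r_v+2(\ell+2)$, so $\dist(v,c)\le\dist(v,p_v)+\dist(p_v,c)\le r_v+r_u+\ell+2\le r_u+r_v+2(\ell+2)$, the new $r_v$. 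Combining the triangle inequality with $\dist(v,p_v)\le r_v$ gives an access radius at $v$ of at most $2r_v+2(\ell+2)\le 2\mathcal{D}+2(\ell+2)$. Since $\ell=\tO(\log^2 n\cdot\log^2\frac{1}{\gamma})$ is dominated by the $\tO$ bound on $\mathcal{D}$, the access radius is $O(\mathcal{D})=\tO(|\RV{R}|\cdot\log^2 n\cdot\log^2\frac{1}{\gamma}\cdot\log\frac{1}{\eta})$ with probability at least $1-\eta$, which is the claim. No step here is a real obstacle: the monotonicity is a one-line inspection of the pseudocode, the invariant $\dist(v,p_v)\le r_v$ is a routine induction on merges, and the probabilistic content is entirely borrowed from \Cref{lem:clustering-correctness}.
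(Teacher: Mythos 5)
Your proof is correct and tracks the same logic the paper invokes (the paper only remarks that the corollary "follows easily since $\mathcal{D}$ is monotonically increasing during the scan and upper bounds the radii of the \SLV{} algorithm"); you have simply filled in the two small verifications the paper leaves implicit, namely the exact increments to $\mathcal{D}_t$ in each branch of \Cref{alg:clustering} and the invariant $\dist(v,p_v)\le r_v$ needed to translate the ball-centered radius $r_v+2(\ell+2)$ around $p_v$ into a node-centered radius around $v$.
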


\subsection{The \emph{Resampling} phase}\label{subsec:algorithm-resampling}
The second scan of the \SLV{} algorithm comprises the \emph{Resampling} phase.
The algorithm scans the active nodes $v\in\RV{R}$ sequentially in the order of their IDs,
and properly resamples part of the random assignment $\vec{Y}$ generated in the \emph{Initialization} phase, 
over the balls in $\RV{B}$ constructed in the \emph{Clustering} phase,
to finally obtain the random satisfying assignment distributed correctly as the LLL distribution $\mu_I$.

According to \Cref{lem:recursive-sample-correctness-complexity}, the task of resampling mentioned above can be resolved by \Cref{alg:recursive-sampling-without-cd} as long as \Cref{cond:warm-up-recursive-without-cd} is satisfied invariantly during the sequential and local execution of \Cref{alg:recursive-sampling-without-cd}.

To adapt \Cref{alg:recursive-sampling-without-cd}, designed for resampling a single local neighborhood, to the more general scenario where multiple balls need to be fixed one by one,
we introduce the \emph{Substituting} trick. 
This abstraction is formally described in the following technical lemma, the proof of which can be found in \Cref{sec:proof-substituting}.

\begin{lemma} \label{lem:substituting}
    Let  $\epsilon,\gamma\in(0,1)$, $\alpha\in(0,\gamma]$, $\delta\in\left(0,\frac{\gamma}{2}\right)$, and $\ell=\ell_0(\epsilon, \gamma, \delta)$ which is defined as in~\eqref{eq:def-ell-0}.
    Let $I=\left(\{X_i\}_{i\in U},\{A_v\}_{v\in V}\right)$ be a LLL instance. 
    For any nonempty set $\Lambda \subseteq V$ of bad events, 
    any $\sigma\in \Sigma_{\vbl(\Lambda)}$,
    if $I$ is $\alpha$-satisfiable and $I(B_{\ell}(\Lambda)\setminus \Lambda)$ is $\gamma$-satisfiable, 
    then we can construct a new random variable $X_{\beta}$ (with domain $\Sigma_{\beta}$ and distribution $\nu_{\beta}$) with $\beta\notin U$ and a new bad event $A_{\kappa}$ with $\kappa\notin V$ satisfying the followings: 
    \begin{enumerate}
        \item \label{lem:substituting-item-1} The constructions of $X_{\beta}$ and $A_{\kappa}$ depend only on the specifications of $\Lambda$, $\sigma$, $(\epsilon, \gamma,\delta)$  and  $I(B_{\ell+1}(\Lambda))$.
        The event $A_{\kappa}$ is defined on the random variables in $\vbl(\kappa)=\vbl(B_{\ell+1}(\Lambda))\setminus \vbl(B_{\ell}(\Lambda))\cup \{\beta\}$.  
        
        \item \label{lem:substituting-item-2}Let $T=U\setminus \vbl(B_{\ell}(\Lambda))$. 
        For any $W\subseteq T$, any $\omega \in \Sigma_{W}$ and $\overline{\omega}\in \Sigma_{\overline{W}}$, where $\overline{W}=T\setminus W$, we have 
        \[\mu^{\omega \wedge \sigma}_{\hat{I},\overline{W}} (\overline{\omega}) = \mu^{\omega}_{I_{\sigma},\overline{W}} (\overline{\omega}), \]
        where $\hat{I}$ stands for the LLL instance defined by $\hat{I}=\left(\{X_i\}_{i\in U}, \{A_v\}_{v\in V}\cup \left\{\aug{I}{\epsilon,\gamma,\delta}{\Lambda}\right\}\right)$, where $\aug{I}{\epsilon, \gamma,\delta}{\Lambda}$ represents the bad event constructed in \Cref{lem:augmenting},
        and $I_{\sigma}$ stands for the LLL instance defined by $I_{\sigma}=\left(\{X_i\}_{i\in (U\setminus \vbl(B_{\ell}(\Lambda)))\cup \{\beta\}}, \{A_v\}_{v\in (V\setminus B_{\ell+1}(\Lambda)) \cup \{\kappa\}}\right)$.
        
        \item \label{lem:substituting-item-3} The LLL instance $I_{\sigma}$ is $(1-\epsilon)\cdot (\alpha-\delta)$-satisfiable. 
    \end{enumerate}
    All balls $B_\cdot(\cdot)$ above are defined in the dependency graph $D_I$ of the original LLL instance $I$.
\end{lemma}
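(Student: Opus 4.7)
The plan is to let $A_\kappa$ encode, via a single Bernoulli test on the shell variables, the net weight that the ``middle'' region $M\triangleq\vbl(B_\ell(\Lambda))\setminus \vbl(\Lambda)$ contributes to the joint law on $T$ once one conditions on $Y_C=\sigma$ with $C\triangleq\vbl(\Lambda)$. Writing $T'\triangleq\vbl(B_{\ell+1}(\Lambda))\setminus \vbl(B_\ell(\Lambda))$ so that $\vbl(\kappa)=T'\cup\{\beta\}$, I define the local weight
\[
G(\tau')\;\triangleq\;\Pr_{Y_M\sim \nu_M}\!\bigl[\text{every event in $(B_{\ell+1}(\Lambda)\setminus \Lambda)$ with $\vbl\not\subseteq C$, and $A_\lambda$, is avoided at }(\sigma,Y_M,\tau')\bigr]
\]
for each $\tau'\in\Sigma_{T'}$, take $\nu_\beta$ to be uniform on a sufficiently fine rational discretization of $[0,1]$ (which is finite as required by the LLL framework, since $G$ takes finitely many rational values), and declare $A_\kappa$ to occur iff $X_\beta>G(Y_{T'})$, giving $\Pr[A_\kappa\text{ avoided}\mid Y_{T'}=\tau']=G(\tau')$. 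Property~(1) is then immediate: $G$ and $A_\lambda$ (the latter supplied by \Cref{lem:augmenting}) are both determined by $\sigma$ and $I(B_{\ell+1}(\Lambda))$.

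For property~(2), the key structural fact is that any bad event $v\in V\setminus B_{\ell+1}(\Lambda)$ has $\vbl(v)\subseteq T$, since otherwise $v$ would share a variable with some event in $B_\ell(\Lambda)$ and hence belong to $B_{\ell+1}(\Lambda)$. Combined with the convention that $\Omega^{\omega\wedge\sigma}$ only enforces avoidance of events with $\vbl\not\subseteq C\cup W$, expanding both sides via marginalization over $Y_M$ yields
\[
\mu^{\omega\wedge\sigma}_{\hat I,\bar W}(\bar\omega)\;\propto\;\nu_{\bar W}(\bar\omega)\cdot G(\tau')\cdot\mathbf{1}\!\bigl[\text{events in }V\setminus B_{\ell+1}(\Lambda)\text{ are avoided at }\tau\bigr],
\]
where $\tau=\omega\wedge\bar\omega$ and $\tau'$ denotes its restriction to $T'$. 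The same expansion for $I_\sigma$ produces an identical right-hand side, with $G(\tau')$ arising now as $\Pr[A_\kappa\text{ avoided}\mid Y_{T'}=\tau']$; normalization then gives the claimed equality of marginals.

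For property~(3), plugging the construction in yields
\[
\nu_{I_\sigma}\!\bigl(\Omega_{I_\sigma}\bigr)\;=\;\Pr_{Y\sim \nu}\!\bigl[\text{events in $(V\setminus \Lambda)\cup\{\lambda\}$ with $\vbl\not\subseteq C$ are avoided}\bigm|Y_C=\sigma\bigr],
\]
which is only a relaxation of ``$\hat I$ is satisfied'' (the purely $C$-local inside events of $V_C\setminus \Lambda$, which sit outside $I_\sigma$ anyway, are dropped). I would then invoke the $\epsilon$-correlation of $S=C$ and $T=U\setminus \vbl(B_\ell(\Lambda))$ in $\hat I$ provided by \Cref{lem:augmenting}: summing out two of the four arguments in \Cref{def:correlation}, combined with $\nu(\Omega_{\hat I})\ge \nu(\Omega_I)-\nu(A_\lambda)\ge \alpha-\delta$ from \Cref{lem:augmenting}, gives the conditional lower bound $(1-\epsilon)(\alpha-\delta)$.

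The hardest step is this last satisfiability bound: the $\epsilon$-correlation is a symmetric ratio inequality in four arguments and does not immediately produce a one-sided bound on a single conditional probability. One must separate the bad events of $\hat I$ into those touching $M$ (which define the ambient conditioning of \Cref{def:correlation}) and those not touching $M$, apply the correlation estimate in the resulting auxiliary distribution, and compose the two pieces, taking care to verify that the $V_C\setminus\Lambda$ events drop out harmlessly on both sides. A secondary technicality is ensuring that $\Sigma_\beta$ is finite and $\nu_\beta$ rational as the LLL framework requires, which is handled by discretizing $[0,1]$ at the common denominator of the finitely many values taken by $G$.
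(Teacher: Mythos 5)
Your construction for items~1 and~2 is in spirit the same as the paper's: you make the new event $A_\kappa$ on the shell $T'$, plus the auxiliary coin $X_\beta$, so that $\Pr[A_\kappa\text{ avoided}\mid Y_{T'}=\tau']$ is proportional to $G(\tau')$, the probability that the inner region $M$ can be completed feasibly given $(\sigma,\tau')$ in $\hat I$. Whether this proportionality is realized by a uniform coin with threshold $G$ (you) or by the paper's cumulative/telescoping distribution $\nu_\beta$ is a cosmetic choice, and, since the proportionality constant cancels in any ratio of marginals, item~2 indeed holds for either version.

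The genuine gap is in item~3, and it is not just a missing argument — your construction is off by a normalization that makes item~3 false. You set $\Pr[A_\kappa\text{ avoided}\mid\tau']=G(\tau')$, whereas the paper's $\nu_\beta$ is arranged so that $\Pr[A_\kappa\text{ avoided}\mid\tau']=P(\tau')/P(\pi_{\max})$, normalized by the \emph{maximum} over shell configurations. Under your version one computes exactly
$\nu(\Omega_{I_\sigma})=\nu(\Omega^\sigma_{\hat I})/\nu_S(\sigma)=:Q_\sigma$,
the conditional probability of avoiding the relevant events given $Y_C=\sigma$. This quantity is \emph{not} in general bounded below by $(1-\epsilon)\nu(\Omega_{\hat I})$: for a rare but feasible $\sigma$ that makes the inner completion hard, $Q_\sigma$ can be arbitrarily close to $0$ while $\nu(\Omega_{\hat I})\ge\alpha-\delta$ is bounded away. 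Your claim that $Q_\sigma$ is ``only a relaxation of $\hat I$ is satisfied'' confuses a conditional probability with the corresponding unconditional one; the relaxation inequality holds only after averaging over $\sigma$, not pointwise in $\sigma$. With the missing normalization inserted, the correct quantity is $Q_\sigma/\max_\pi\tilde P(\sigma,\pi)$, and the role of the $\epsilon$-correlation (applied in the form of \Cref{lem:correlation-extension}, with the $q_2$-weight concentrated on the maximizing shell configuration $\pi_{\max}$) is precisely to show that $Q_\sigma/\max_\pi\tilde P(\sigma,\pi)$ is $\sigma$-stable up to a $(1+\epsilon)$ factor; averaging that over $\sigma$ against the indicator of the $S$-internal events then yields $\nu(\Omega_{I_\sigma})\ge(1-\epsilon)\nu(\Omega_{\hat I})\ge(1-\epsilon)(\alpha-\delta)$. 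So the difficulty you flagged is real, but the fix requires both the normalization in the construction and this specific concentrated application of the correlation inequality, not merely ``summing out two of the four arguments.''
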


The core idea behind the above abstraction is to simulate each ball using a single variable. 
This allows us to translate the multi-ball invariant \Cref{cond:clustering} into \Cref{cond:warm-up-recursive-without-cd},
thus enabling us to reuse  \Cref{alg:recursive-sampling-without-cd} to tackle the challenge of multi-ball resampling.
It is worth noting that this step is introduced primarily for the convenience of reusing \Cref{alg:recursive-sampling-without-cd} with the current formulation.
However, even without this step, the sampling problem can still be resolved, albeit with a more intricate formulation in \Cref{alg:recursive-sampling-without-cd}.

%
With this abstraction, the \SLV{} algorithm for the \emph{Resampling} phase can be described as follows.
For the exposition, we write $V=\{v_1,v_2,...,v_{n}\}$, where the nodes are sorted in ascending order according to their IDs. 
The \SLV{} algorithm scans the nodes in $\RV{R}$ in this order.
Suppose that a node $v=v_k$ with $p_v,r_v\not\in\{\perp\}$ is being processed, while the local algorithm is oblivious to its rank $k$.
We are going to define the substituted instance $I'$ and $\vec{Y}'$ on which the \RecursiveSampling{} procedure is actually applied.
%

First, define 
    $N_k\triangleq\{j\mid v_j\in \RV{R} \land j>i \land p_{v_j} \neq \perp \land r_{v_j}\neq \perp\}$.
%
Recall the choice of parameter $(\epsilon_0,\gamma_0,\delta_0)$ in \eqref{eq:def-parameter-without-cd-expected-complexity}. 
For $j\in N_k$, let $X_{\beta_j}$ and $A_{\lambda_j}$ denote the respective random variable and bad event constructed in \Cref{lem:substituting} under parameter $\Lambda_j\triangleq B_{r_{v_j}}(p_{v_j})$, $Y_{\vbl(\Lambda_j)}$, $(\epsilon_0,\gamma_0,\delta_0)$, $I(B_{\ell_0(\epsilon_0,\gamma_0,\delta_0)+1}(\Lambda_j))$. 
Define 
\[
    U' \triangleq U\setminus \bigcup_{j\in N_k} \vbl(B_{\ell_0(\epsilon_0,\gamma_0,\delta_0)} (\Lambda_j))
    \quad\text{ and }\quad
    V' \triangleq V\setminus \bigcup_{j\in N_k} B_{\ell_0(\epsilon_0,\gamma_0,\delta_0)+1} (\Lambda_j).
\]
Then the substituted instance $I'$ is defined as follows:
\begin{align}
    I' &\triangleq \left(\{X_i\}_{i\in U'}\cup \{X_{\beta_j}\}_{j\in N_k}, \{A_v\}_{v\in V'} \cup \{A_{\lambda_j}\}_{j\in N_k} \right).\label{eq:I-substitute}
\end{align}
%
Let $\vec{Y}$ be the current assignment right before the \SLV{} algorithm starts at $v=v_k$.
For each $j\in N_k$, let $Y_{\beta_j}$ be drawn independently from the marginal distribution $\mu^{Y_{\vbl(\lambda_j) \setminus\{\beta_j\}}}_{I',\beta_j}$.
This is well-defined because $Y_{\vbl(\lambda_j) \setminus\{\beta_j\}}\subseteq U'$.
Then the substituted assignment $\vec{Y}'$ is the concatenation of $Y_{ U'}$ and $(Y_{\beta_j})_{j\in N_k}$, i.e.
\begin{align}
\vec{Y}'\triangleq Y_{ U'}\land (Y_{\beta_j})_{j\in N_k}.
    \label{eq:Y-substitute}
\end{align}
%

The \SLV{} algorithm at $v$ just calls \RecursiveSampling$(\vec{Y}'; I',B_{r_{v}}(p_{v}),\epsilon_0,\gamma_0,\delta_0,\gamma_0)$. 
Observe that although the definitions of $I'$ and $\vec{Y}'$ involve the global rank $k$ of the node $v$,
the actual constructions of $I'$ and $\vec{Y}'$ can be implicit during the recursion of \RecursiveSampling{},
so that the substituted parts of $I$ and $\vec{Y}$ are locally constructed when being accessed. 
This can be realized by local computation on the original $I$ and $\vec{Y}$, 
after extending the radius of the local algorithm by an additional $2(\mathcal{D}+|\RV{R}|\cdot \ell_0(\epsilon_0,\gamma_0,\delta_0)+1)$.
This implementation is formally explained in \Cref{sec:analysis-resampling}.

%
%

%

\begin{algorithm}
\caption{The \SLV{} algorithm for Resampling at node $v\in \RV{R}$}

\label{alg:resampling}
    


        
            

            

            
            
        

    
            \RecursiveSampling$(\vec{Y}'; I', B_{r_v}(p_v) ,\epsilon_0,\gamma_0,\delta_0,\gamma_0)$; \label{line:callrecursivesampling}

\tcp{\small The  $I'$ and $\vec{Y}'$  respectively defined in \eqref{eq:I-substitute}, \eqref{eq:Y-substitute} are implicitly given, where the substituted parts are realized at the time being accessed.}

            update $Y_{U'}\gets Y'_{U'}$; \label{line:resample-updating}

            \tcp{\small Only the updated variables need to be copied.}

\end{algorithm}

The following lemma states the correctness and efficiency of this algorithm and is proved in \Cref{sec:proof-resample-correctness-complexity}.

\begin{lemma}\label{lem:resample-correctness-complexity}
The followings hold after \Cref{alg:resampling} is sequentially executed on all node $v\in\RV{R}$, 
assuming that the input $\vec{Y}$ and $p_v,r_v$ for $v\in\RV{R}$ satisfy the properties asserted by~\Cref{lem:clustering-correctness}.
\begin{enumerate}
    \item \label{lem:item-resample-correctness}
    $\vec{Y}$ follows the distribution $\mu_I$.
    \item \label{lem:item-resample-complexity}
    For any $0<\eta<1$, with probability $1-\eta$, \Cref{alg:resampling} returns at every node $v\in \RV{R}$ within radius 
    $$\tO\left(|\RV{R}| \cdot \log^2 n \cdot\log^2 \frac{1}{\gamma}\cdot\log\frac{1}{\eta}\right)+\tO\left(\log^4 n \cdot \log^2\frac{1}{\gamma} \cdot\log^4 \frac{1}{\eta}\right).$$ 
\end{enumerate}
\end{lemma}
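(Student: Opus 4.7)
The plan is to carry out an induction on the order in which the active nodes in $\RV{R}$ are scanned by the \SLV{} algorithm, using \Cref{lem:substituting} at each step to reduce the multi-ball invariant \Cref{cond:clustering} to the single-ball invariant \Cref{cond:warm-up-recursive-without-cd}, then invoking \Cref{lem:recursive-sample-correctness-complexity} as a black box.

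For \Cref{lem:item-resample-correctness}, let $v_{i_1}<\cdots<v_{i_m}$ be the active nodes in $\RV{R}$ with $p_{v}\neq\perp$ and $r_v\neq\perp$, and write $\Lambda_j\triangleq B_{r_{v_{i_j}}}(p_{v_{i_j}})$. I will maintain the loop invariant that \emph{after} processing $v_{i_j}$, the pair $(\vec{Y},\{\Lambda_{j+1},\ldots,\Lambda_m\})$ satisfies the clustered conditional Gibbs property on instance $I$ with parameter $(\epsilon_0,\gamma_0,\delta_0)$. The base case $j=0$ is exactly \Cref{lem:item:clustering-correctness} of \Cref{lem:clustering-correctness}. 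For the inductive step, when $v=v_{i_j}$ is being processed, the substituted instance $I'$ defined in~\eqref{eq:I-substitute} collapses each future ball $\Lambda_{j+1},\ldots,\Lambda_m$ into one variable $X_{\beta}$ and one bad event $A_{\kappa}$ via \Cref{lem:substituting}, while the substituted assignment $\vec{Y}'$ in~\eqref{eq:Y-substitute} samples each $Y_{\beta_{i_{j'}}}$ from the correct marginal $\mu^{Y_{\vbl(\lambda_{j'})\setminus\{\beta_{j'}\}}}_{I',\beta_{j'}}$. Combining items \ref{lem:substituting-item-2} and \ref{lem:substituting-item-3} of \Cref{lem:substituting} with the inductive hypothesis, $(\vec{Y}',\{\Lambda_j\})$ satisfies \Cref{cond:warm-up-recursive-without-cd} on $I'$ with parameter $(\epsilon_0,\gamma_0,\delta_0,\gamma_0)$; the far-apart guarantee from \Cref{lem:item:clustering-well-define} is what ensures the substituted $X_{\beta_{j'}}$ lie outside $\vbl(B_{\ell+1}(\Lambda_j))$ and hence do not collide. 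Applying \Cref{lem:item:recursive-sample-correctness} of \Cref{lem:recursive-sample-correctness-complexity} yields $\vec{Y}'\sim\mu_{I'}$ after \Cref{line:callrecursivesampling}, and then unfolding the substitution via \Cref{lem:substituting-item-2} translates this into the desired clustered conditional Gibbs property for the remaining balls $\{\Lambda_{j+1},\ldots,\Lambda_m\}$ on $I$. After $j=m$ the remaining collection is empty, which by definition forces $\vec{Y}\sim\mu_I$.

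For \Cref{lem:item-resample-complexity}, I would combine three sources of radius. First, \Cref{lem:item:recursive-sample-complexity} of \Cref{lem:recursive-sample-correctness-complexity} bounds, with probability $1-\eta'$, the locality of each individual \RecursiveSampling{} call by $\ell_0(\epsilon_0,\gamma_0,\delta_0)+\tO(\log\frac{1}{\gamma}\cdot\log^4\frac{1}{\eta'})$. Second, because the substituted $I'$ and $\vec{Y}'$ are only implicit, each access to a substituted piece must be realized by gathering $I(B_{\ell_0+1}(\Lambda_{j'}))$ for a relevant future ball; this inflates the radius by at most $2(\mathcal{D}+|\RV{R}|\cdot\ell_0(\epsilon_0,\gamma_0,\delta_0)+1)$, where by \Cref{lem:item:clustering-complexity} of \Cref{lem:clustering-correctness} we have $\mathcal{D}=\tO\bigl(|\RV{R}|\cdot\log^2 n\cdot\log^2\frac{1}{\gamma}\cdot\log\frac{1}{\eta}\bigr)$ with probability $1-\eta/3$. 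Third, $|\RV{R}|=O\bigl(\log n\cdot\log\log n\cdot\log\frac{1}{\gamma}\cdot\log\frac{1}{\eta}\bigr)$ with probability $1-\eta/3$ by \Cref{cond:initialization-stage}. A union bound over the at most $|\RV{R}|$ invocations of \RecursiveSampling{} with $\eta'=\eta/(3|\RV{R}|)$ gives the stated bound $\tO\bigl(|\RV{R}|\cdot\log^2 n\cdot\log^2\frac{1}{\gamma}\cdot\log\frac{1}{\eta}\bigr)+\tO\bigl(\log^4 n\cdot\log^2\frac{1}{\gamma}\cdot\log^4\frac{1}{\eta}\bigr)$.

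The main obstacle will be the inductive step for correctness: namely, carefully untangling the translation between the substituted distribution $\mu_{I'}$ and the clustered conditional Gibbs property on $I$. The substitution compresses each unprocessed ball into a single auxiliary variable whose distribution is defined in terms of the current $\vec{Y}$, so the inductive step requires verifying that after \RecursiveSampling{} returns with $\vec{Y}'\sim\mu_{I'}$, the marginal on $U'$ matches the conditional marginal $\mu^{Y_{\vbl(\bigcup_{j'>j}\Lambda_{j'})}}_{\widehat I,\,U'}$ demanded by \Cref{def:clustered-conditional-gibbs} with the newly enlarged processed set. This step is where \Cref{lem:substituting-item-2} is used in its strongest form, and it must be checked that the independent resampling of the auxiliary variables in~\eqref{eq:Y-substitute} at the \emph{start} of each scan step does not spoil the invariant for the already-sampled $Y_{\vbl(\Lambda_{j'})}$ of previously processed balls, which are now part of the fixed boundary condition rather than random.
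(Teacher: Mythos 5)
Your plan follows essentially the same route as the paper: induction over the scan order of active nodes, with the clustered conditional Gibbs property as the loop invariant, Lemma \ref{lem:substituting} to collapse the future balls so that Lemma \ref{lem:recursive-sample-correctness-complexity} can be invoked on a single ball, and then an ``unfolding'' back to the original instance. You have also correctly identified the genuinely delicate part: after \RecursiveSampling{} returns $\vec{Y}'\sim\mu_{I'}$, one must translate this into the clustered conditional Gibbs property for $(\vec{Y}^{(k)},\{B_i\}_{i\in N_k})$ on the original $I$, and the auxiliary-variable resampling in \eqref{eq:Y-substitute} must be shown not to disturb the already-fixed boundary. The paper resolves this by a nested induction (Claim~\ref{cla:resample-induction-2}): one tracks the partially substituted instances $I^{(k,j)}$ and assignments $\vec{Y}^{(k,j)}$ for $j$ running from $k$ up to $n$, applying Lemma \ref{lem:substituting} item \ref{lem:substituting-item-2} once per remaining ball both on the way in and on the way out, and uses item \ref{lem:substituting-item-3} repeatedly to accumulate the $\gamma_0$-satisfiability bound of $I^{(k,n)}$ — this chain is the content you flag as ``the main obstacle'' but leave unproved. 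For the complexity part you make the union bound over the $|\RV{R}|$ invocations of \RecursiveSampling{} explicit (setting $\eta'=\eta/(3|\RV{R}|)$), whereas the paper absorbs it silently into the $\log^4 n$ prefactor; both lead to the stated bound. One small slip: you drop the $\log\frac1\gamma\cdot\log^2\frac1\eta\cdot\log\frac1\alpha$ term from Lemma \ref{lem:recursive-sample-correctness-complexity}, which with $\alpha=\gamma_0$ contributes the $\log^2\frac1\gamma\cdot\log^2\frac1\eta$ piece that the paper carries — it is subsumed by the final $\tO$, but worth keeping for transparency. You also write as if the substitution collapses all future balls into ``one variable $X_\beta$ and one bad event $A_\kappa$'' whereas each unprocessed ball gets its own $X_{\beta_{j'}}, A_{\lambda_{j'}}$; the rest of your text uses the per-ball notation correctly, so this is only a phrasing slip.
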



\subsection{Wrapping up (Proof of \Cref{thm: sample-gibbs})}\label{subsec:wrpping-up}
    
    First, we prove the correctness of sampling. 
    %
Let the LLL instance $I=(\{A_v\}_{v\in V},\{X_i\}_{i\in U})$ be  $\gamma$-satisfiable.
    By \Cref{lem:initialization-stage}, the \emph{Initialization} phase outputs $(\vec{Y},\RV{R})$ that satisfies \Cref{cond:initialization-stage}.
    Then, they are passed to a $2$-scan \SLOCAL{} algorithm on the set $\RV{R}$ of active nodes.
    %
    %
    Since $(\vec{Y}, \RV{R})$ satisfies \Cref{cond:initialization-stage}, by \Cref{lem:clustering-correctness}, after the 1st scan, 
    the assignment $\vec{Y}$, along with the values of $p_v\in V\cup \{\perp\}$ and $r_v\in \mathbb{N}\cup\{\perp\}$ for $v\in \RV{R}$ computed in this scan, 
    satisfy the the condition of \Cref{lem:resample-correctness-complexity}.
    Then by \Cref{lem:resample-correctness-complexity}, after the 2nd scan, the \SLV{} algorithm terminates and computes a random assignment $\vec{Y}\sim \mu_I$.
    By \Cref{thm: SLV-local to local}, the \SLV{} algorithm is faithfully simulated by a \LOCAL{} algorithm.

    Then, we bound the round complexity of the algorithm. 
    By \Cref{lem:initialization-stage}, the \emph{Initialization} phase takes fixed $\tO(\log^3 n)$ rounds in the \LOCAL{} model.
    By \Cref{thm: SLV-local to local}, the round complexity of the \LOCAL{} algorithm that simulates a \SLV{} algorithm is the product of the number of active nodes and the maximum radius of the \SLOCAL{} algorithm.
    For $i\in\{1,2\}$, let $R_i$ be the random variable that represents the maximum radius of the $i$-th scan of the  \SLV{} algorithm. 
    %
    %
    %
    %
    Let $\epsilon\in(0,1)$ be arbitrary.
    By \Cref{lem:initialization-stage}, we have 
    $|\RV{R}| =\tO\left( \log n   \cdot  \log \frac{1}{\gamma} \cdot \log \frac{1}{\epsilon} \right)$ 
    with probability at least $1-\frac{\epsilon}{3}$;
    by \Cref{lem:clustering-complexity-2}, we have 
    $R_1 =\tO\left( |\RV{R}|\cdot  \log^2 n \cdot \log^2 \frac{1}{\gamma} \cdot \log \frac{1}{\epsilon}\right)$
    with probability at least $1-\frac{\epsilon}{3}$;
    and by \Cref{lem:resample-correctness-complexity}, we have 
    $R_2 =\tO\left( |\RV{R}| \cdot  \log^4 n \cdot \log^2 \frac{1}{\gamma} \cdot \log^4\frac{1}{\epsilon} \right)$
    with probability at least $1-\frac{\epsilon}{3}$.
%
%
%
    Altogether, by union bound, with probability at least $1-\epsilon$, the round complexity of the \LOCAL{} algorithm 
    is bounded by
    \[|\RV{R}|\cdot \max(R_1, R_2) = \tO\left(\log^6 n\cdot\log^4 \frac{1}{\gamma}\cdot \log^6 \frac{1}{\epsilon}\right).\qedhere
    \]


\section{Related Work and Discussions}

The perfect simulation of Las Vegas algorithms is a fundamental problem.
In the celebrated work of Luby, Sinclaire, and Zuckerman~\cite{luby1993optimal}, an optimal strategy was given for speeding up Las Vegas algorithms.
Their approach was based on stochastic resetting, which requires global coordination and works for the Las Vegas algorithms with deterministic outputs,
or the interruptible random outputs.

The distribution of satisfying solutions of the Lov\'{a}sz local lemma (LLL) has drawn much attention, e.g.~in~\cite{GJL19,harris2020new}. 
Its perfect simulation was studied in~\cite{GJL19,he2022sampling,he2021perfect,feng2022towards},
where several key approaches for perfect sampling were applied, including partial rejection sampling (PRS) \cite{GJL19}, ``lazy depth-first'' method of Anand and Jerrum (\emph{a.k.a.}~the AJ algorithm)~\cite{anand2022perfect}, coupling from the past (CFTP)~\cite{PW96}, and coupling towards the past (CTTP)~\cite{feng2022towards}.

In the \LOCAL{} model,
Ghaffari, Harris and Kuhn~\cite{ghaffari2018derandomizing} showed that for distributed graph problems, 
where the goal is to construct feasible graph configurations, 
the fixed-round Las Vegas algorithms and the zero-error Las Vegas algorithms are equivalent to polylogarithmic rounds.
Their approach was based on a derandomization by conditional expectations, 
and hence was especially suitable for the tasks where the support of the output distribution, instead of the output distribution itself, is concerned,
such as the searching problems for constructing feasible solutions.


The \LOCAL{} algorithms and Gibbs distributions are intrinsically related.
For example, the distributions of the random bits on which a fixed-round Las Vegas  \LOCAL{} algorithm successfully returns are Gibbs distributions,
where the certifiers of local failures are the local constraints defining the Gibbs distribution.
In~\cite{feng2018local}, Feng and Yin gave a \LOCAL{} sampler with local failures for the Gibbs distributions with strong spatial mixing by
parallelizing the JVV sampler~\cite{jerrum1986random} using the network decomposition~\cite{linial1993low}.

\paragraph{Discussion of the current result.}
In this paper, we show that for local computation, 
the successful output of any fixed-round Las Vegas computation, where failures are reported locally, 
can be perfectly simulated with polylogarithmic overheads.

As by-products, this  gives perfect simulations, via efficient (polylogarithmic-round) local computation, 
for several fundamental classes of high-dimensional joint distributions, including: 
\begin{itemize}
    \item random satisfying solutions of Lov\'{a}sz local lemma with non-negligible satisfiability;
    \item uniform locally checkable labelings (LCLs) with non-negligible feasibility;
    \item Gibbs distributions satisfying the strong spatial mixing with exponential decay.
\end{itemize}

We develop a novel approach for augmenting Lov\'{a}sz local lemma (LLL) instances  
by introducing locally-defined new bad events, to create the desirable decay of correlation.
We also give a recursive local sampling procedure,
which utilizes the correlation decay to accelerate the sampling process, 
and meanwhile still keeps the sampling result correct, without being biased by the change to the LLL instance.

At first glance, 
it almost looks like we are creating mixing conditions out of nothing.
In particular, the approach seems to bypass the local-lemma-type conditions for sampling (e.g.~the one assumed in~\cite{he2022sampling}).
But indeed, our augmentation of LLL instances relies on that the LLL instances are fairly satisfiable 
(or at least the separator between the regions that we want to de-correlate should be enough satisfiable).
Sampling in such instances might already be tractable in conventional computation models, e.g.~in polynomial-time Turing machine, 
but the problem remains highly nontrivial for local computation.

This new approach for perfect simulation works especially well in the models where the locality is the sole concern.
A fundamental question is how this could be extended to the models where the computation and/or communication costs are also concerned, e.g.~$\mathsf{CONGEST}$ model or $\mathsf{PRAM}$ model.

\section{Analysis of Correlation Decay}\label{sec:analysis-correlation-decay}
In this section, we prove \Cref{lem:augmenting-informal}, the LLL augmentation lemma.
Recall the notions of balls  $B_\cdot(\cdot)$, shells $S_{[\cdot,\cdot]}(\cdot)$, and rings $\ring{}{[\cdot,\cdot]}{\cdot}$ defined in \Cref{sec:graph-notations}.
The lemma is formally restated as follows.

\begin{lemma}[formal restatement of \Cref{lem:augmenting-informal}]\label{lem:augmenting}
There is a universal constant $C_0>0$ such that the followings hold for any $\epsilon,\gamma\in(0,1)$, $\delta\in\left(0,\frac{\gamma}{2}\right)$, and for all $\ell\geq \ell_0(\epsilon, \gamma, \delta)$, where 
\begin{align}\label{eq:def-ell-0}
\ell_0(\epsilon,\gamma,\delta) 
\triangleq 
\left\lceil C_0 \cdot \log\frac{2}{\epsilon}\cdot\log\frac{2}{\gamma} \cdot \log\frac{1}{\delta} \cdot \log\left(2\log\frac{2}{\epsilon} \cdot \log\frac{2}{\gamma} \cdot\log\frac{1}{\delta}\right)\right\rceil.
\end{align}
%
Let  $I=\left(\{X_i\}_{i\in U},\{A_v\}_{v\in V}\right)$ be a LLL instance whose dependency graph is $D_I$.
For any nonempty $\Lambda\subseteq V$, 
if the sub-instance $I\left(S_{[1,\ell]}(\Lambda)\right)$ is $\gamma$-satisfiable,
then there exists a new bad event $A_{\lambda}$ with $\lambda\not\in V$ such that:
\begin{enumerate}
    \item \label{item:a-lamba-property} $A_{\lambda}$ is an event defined on the random variables in $\vbl(\lambda)=\ring{I}{[1,\ell]}{\Lambda}$, 
    and the construction of $A_{\lambda}$ depends only on the specifications of $\Lambda$, $\{A_v\}_{v\in B_{\ell+1}(\Lambda)}$, $\{X_i\}_{i\in \vbl(B_{\ell+1}(\Lambda))}$ and $(\epsilon, \gamma, \delta)$;
    \item \label{item:a-lambda-error-rate} $A_{\lambda}$ occurs with probability at most $\delta$ on independent random variables $\{X_i\}_{i\in U}$, i.e.~$\nu(A_{\lambda})\le \delta$;
    \item \label{item:augmented-cd} the variable sets $S=\ring{I}{0}{\Lambda}=\vbl(\Lambda)$ and $T=\ring{I}{[\ell+1,\infty]}{\Lambda}=U \setminus \vbl(B_{\ell}(\Lambda))$ are $\epsilon$-correlated in the augmented LLL  instance 
    \[\hat{I}=\left(\{X_i\}_{i\in U},\{A_v\}_{v\in V}\cup \{A_{\lambda}\}\right).\] 
\end{enumerate}
All balls $B_\cdot(\cdot)$ and shells $S_{[\cdot,\cdot]}(\cdot)$ in above are defined in the dependency graph $D_I$ of the LLL instance $I$.
\end{lemma}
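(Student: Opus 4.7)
The plan is to exploit the geometric separation provided by the ring structure of the dependency graph. First I would fix notation: writing $S = \vbl(\Lambda)$, $M = \ring{I}{[1,\ell]}{\Lambda}$, and $T = U \setminus \vbl(B_\ell(\Lambda))$, the variable set partitions as $U = S \cup M \cup T$. The key structural observation (which uses that $D_I$ is the shared-variable graph) is that every bad event $A_v$ with $\vbl(v)\cap M \neq \emptyset$ falls into one of two classes: \emph{left} events with $\vbl(v) \subseteq S \cup M$, occurring precisely when $v \in B_\ell(\Lambda)$, and \emph{right} events with $\vbl(v) \subseteq M \cup T$, occurring when $v \in B_{\ell+1}(\Lambda) \setminus B_\ell(\Lambda)$ (and in that case $\vbl(v)\cap M \subseteq \ring{I}{\ell}{\Lambda}$). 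Crucially, no bad event touches both $S$ and $T$. With this decomposition, one gets the factorization
\[
\nu\big(\Omega^{\sigma \wedge \tau}\big) \;=\; \nu(\sigma)\cdot\nu(\tau)\cdot Q(\sigma,\tau), \qquad Q(\sigma,\tau) \;=\; \Pr_{\vec{X} \sim \nu}\!\left[\,L_\sigma(\vec{X}_M)\cdot R_\tau(\vec{X}_M) = 1\,\right],
\]
where $L_\sigma,R_\tau \in \{0,1\}$ are the indicators that all left (resp.\ right) events are avoided given the boundary. Item~\ref{item:augmented-cd} then reduces to finding a middle-only event $A_\lambda$ under which $L_\sigma$ and $R_\tau$ approximately decorrelate, uniformly over the four boundary assignments.

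The construction of $A_\lambda$ I would pursue is iterative and coupling-based. Partition $M$ into the nested shells $M_k = \ring{I}{k}{\Lambda}$ for $k = 1,\dots,\ell$, and imagine a layer-by-layer sampling that grows a coupling of two copies of $\vec{X}_M$ corresponding to two different $S$-boundaries. The $\gamma$-satisfiability of $I(S_{[1,\ell]}(\Lambda))$ guarantees that the middle sub-instance is not degenerate, so conditional marginals on each shell stay ``spread out'' enough that each shell dissipates a disagreement with probability bounded away from zero. I would define $A_\lambda$ as the event, measurable in $\vec{X}_M$ alone, that a canonical realization of this coupling fails to erase disagreement across all $\ell$ shells: concretely, that $\vec{X}_M$ admits a percolating chain of ``sensitive'' shell-sites from $M_1$ to $M_\ell$, where sensitivity is an intrinsic property of the local conditional distributions witnessed by $\vec{X}_M$.

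Two analytic tasks remain: (i) bounding $\nu(A_\lambda) \le \delta$, and (ii) deducing the $\epsilon$-correlation in $\hat I$. For (i), the per-shell non-erasure probability should shrink by a factor polynomial in $\gamma$ (union bound plus sub-instance satisfiability), so after $\ell$ layers the total probability of a percolation path is at most $(1-\mathrm{poly}(\gamma))^{\ell}$ up to polylogarithmic union-bound losses; matching this to $\delta$ explains the $\log(1/\gamma)\log(1/\delta)$ factors in $\ell_0$. For (ii), on $\neg A_\lambda$ the outer shell $M_\ell$---and hence $R_\tau$---depends on the inner boundary only through an exponentially-small-in-$\ell$ amount of information, so plugging $Q_{\hat I}(\sigma,\tau) = \mathbb{E}[\mathbf{1}[\neg A_\lambda]\,L_\sigma R_\tau]$ into the definition of $\epsilon$-correlation gives the required $(1+\epsilon)$-multiplicative inequality after absorbing a union bound over all four boundary assignments; this absorption is what forces the extra $\log(1/\epsilon)$ factor in $\ell_0$.

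The main obstacle I anticipate is forcing $A_\lambda$ to be a function of $\vec{X}_M$ alone, while remaining uniformly effective across all $\sigma$, whereas the natural coupling definition inherently references a boundary. The resolution I would attempt is to define the percolation event purely via an intrinsic sensitivity measure of the middle configuration---the maximum discrepancy it could exhibit between any pair of boundary extensions---so that the same $A_\lambda$ dominates all boundaries simultaneously, at the cost of an additional $\log|\Sigma_S|$-type factor that gets absorbed into $\ell_0$. Once this is settled, items~\ref{item:a-lamba-property} and~\ref{item:a-lambda-error-rate} follow by bookkeeping, since every ingredient (the shells $M_1,\dots,M_\ell$ and the sub-instance $I(B_{\ell+1}(\Lambda))$) is computable from data within radius $\ell+1$ of $\Lambda$.
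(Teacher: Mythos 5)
Your high-level intuition---that the correlation between $S$ and $T$ rides on rare assignments in the middle ring, so it suffices to excise those with a small middle-only bad event $A_\lambda$---is exactly the insight the paper also relies on, and your bookkeeping for items 1 and 2's locality, and your factorization $\nu(\Omega^{\sigma\wedge\tau})=\nu(\sigma)\nu(\tau)Q(\sigma,\tau)$ with $Q$ a function of $\vec X_M$ alone, are sound. However, the central analytic step in your proposal does not hold, and this is not a presentation issue but a genuine gap.

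You assert that ``conditional marginals on each shell stay `spread out' enough that each shell dissipates a disagreement with probability bounded away from zero,'' and correspondingly that the per-shell non-erasure probability is $1-\mathrm{poly}(\gamma)$, giving a total percolation probability of $(1-\mathrm{poly}(\gamma))^\ell$. This is precisely what $\gamma$-satisfiability of $I(S_{[1,\ell]}(\Lambda))$ does \emph{not} give you. The paper's Example 6.1 is a direct refutation: the middle sub-instance there can be made $\gamma$-satisfiable for $\gamma$ arbitrarily close to $1$, yet one particular assignment $\sigma_1$ on $\ring{}{1}{\Lambda}$ deterministically forces $\sigma_2,\ldots,\sigma_\ell$ through every shell. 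Conditioned on such a $\sigma_1$, no shell dissipates anything---the coupling stays stuck with probability essentially $1$ at every layer---so there is no uniform per-shell contraction rate to iterate. The reason this instance is still $\gamma$-satisfiable is that $\sigma_1$ itself is rare; the smallness is \emph{global over the whole chain}, not witnessed shell by shell. The same confusion infects your item (ii): ``depends on the inner boundary only through an exponentially-small-in-$\ell$ amount of information'' is again a uniform per-layer decay claim that fails on the same example, even after conditioning on $\neg A_\lambda$, unless $A_\lambda$ has already been defined correctly---which would make the argument circular.

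The paper's actual construction therefore takes a different route that you would need to rediscover. It fixes a gap $D=\Theta(\tfrac{1}{\varepsilon_0}\log\tfrac{\ell}{\delta})$ and iteratively augments per-ring events $A_{\lambda_i}$: whenever some surviving assignment $\sigma$ on $\ring{}{i}{\Lambda}$ makes the \emph{conditional partial satisfying probability} $\partialsat{\Lambda}{i}{\sigma}{j}{\cdot}{J}$ between rings $i$ and a distant ring $j$ (with $|i-j|>D$) drop below $\delta/(2\ell)$, it is added to $A_{\lambda_i}$ (\Cref{alg:augmenting}). The rarity bound $\nu(A_\lambda)\le\delta$ then comes from a telescoping sum over iterations of the repeat loop, not from a product of per-shell contraction rates. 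And the correlation-decay step (item 3) never uses any per-shell coupling at all: the paper introduces \emph{good rings} (rings whose intersected bad events occur with probability at most $\varepsilon_0$), shows via satisfiability that at most $O(\log\tfrac{1}{\gamma})$ rings can fail to be good, and then proves partial $\epsilon$-correlation decay across long runs of consecutive good rings via an explicitly combinatorial argument about ``well-distributed'' conditional marginals (\Cref{lem:initialization,lem:iteration,lem:construct-well-distributed}). In other words, the decay is established only between rings separated by many good rings, not uniformly, and the whole structure of $\ell_0$ is a consequence of how many good rings one needs after discounting the non-good ones. Your coupling-based sketch skips precisely this distinction and would not survive the adversarial middle instance of Example 6.1.
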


\Cref{lem:augmenting} is crucial for proving our main technical result (\Cref{thm: sample-gibbs}).
The lemma basically says that in any LLL instance,
if two regions $S$ and $T$ are sufficiently distant apart and their separator is easy to satisfy, 
then a new rare bad event $A_\lambda$ can always be locally constructed depending only on the separator,
so that $S$ and $T$ are sufficiently de-correlated in the new LLL instance after being augmented with $A_\lambda$.

\subsection{Proof outline for the LLL augmentation lemma}\label{subsec:partial-correlated-rings}
The proof of \Cref{lem:augmenting} is fairly technical. Unlike most existing correlation decay results, the proof of \Cref{lem:augmenting} does not use potential analyses or coupling arguments.
Instead, our proof of \Cref{lem:augmenting} is more combinatorial and constructive, which is distinct from traditional analyses of correlation decays. 

Fix a subset $\Lambda\subseteq V$ of bad events in the LLL instance  $I=\left(\{X_i\}_{i\in U},\{A_v\}_{v\in V}\right)$.
Recall the rings ${\ring{}{r}{\Lambda}}$ and ${\ring{}{[i,j]}{\Lambda}}$ of variables defined in~\eqref{eq:def-variable-ring},
and rings $\eventaround{}{i}{j}{\Lambda}$ and $\eventin{}{j}{j}{\Lambda}$ of events defined in \eqref{eq:def-event-ring}.
\Cref{lem:augmenting} establishes a correlation decay between $S=\ring{I}{0}{\Lambda}$ and $T=\ring{I}{[\ell+1,\infty]}{\Lambda}$. 
%
%
Fix any $0\le i<j$, and boundary conditions $\sigma\in \Sigma_{\ring{}{i}{\Lambda}}$ and $\tau \in \Sigma_{\ring{}{j}{\Lambda}}$ respectively specified on ${\ring{}{i}{\Lambda}}$ and ${\ring{}{j}{\Lambda}}$.
Define:
\begin{align}
\psat{i}{\sigma}{j}{\tau}
&=\partialsat{\Lambda}{i}{\sigma}{j}{\tau}{I}\notag\\
&\triangleq \Pr_{\vec{X}\sim\nu}\left( \vec{X} \text{ avoids $A_v$ for all }v\in \eventaround{I}{i+1}{j-1}{\Lambda} \mid  X_{\ring{I}{i}{\Lambda}}=\sigma \wedge X_{\ring{I}{j}{\Lambda}}=\tau \right).   \label{eq:def-partialsat}
\end{align}
%
Intuitively, $\partialsat{\Lambda}{i}{\sigma}{j}{\tau}{I}$ represents the marginal probability for avoiding all the bad events intersected by the region sandwiched between the rings $\ring{}{i}{\Lambda}$ and $\ring{}{j}{\Lambda}$, given the boundary conditions $\sigma$ and $\tau$ respectively specified on these two rings.

Using this quantity, we define the following variant of the correlation decay property for rings.

\begin{definition}[partial $\epsilon$-correlated rings]\label{def:partial-correlated}
Fix any nonempty subset $\Lambda\subseteq V$ of bad events, and fix any $0\leq i<j$. 
    %
    The two rings $\ring{}{i}{\Lambda}$ and $\ring{}{j}{\Lambda}$ are said to be \concept{partiall $\epsilon$-correlated} 
    if for any $\sigma_1,\sigma_2\in \ring{}{i}{\Lambda}$ that avoid all bad events in $\eventin{}{i}{i}{\Lambda}$ and any $\tau_1,\tau_2\in \ring{}{j}{\Lambda}$ that avoid all bad events in $\eventin{}{j}{j}{\Lambda}$,
    \[
    \psat{i}{\sigma_1}{j}{\tau_1}\cdot\psat{i}{\sigma_2}{j}{\tau_2}\le (1+\epsilon)\cdot \psat{i}{\sigma_1}{j}{\tau_2}\cdot\psat{i}{\sigma_2}{j}{\tau_1}.
    \]
\end{definition}

\begin{remark}
This notion of partial $\epsilon$-correlated rings can be seen as a local variant of the $\epsilon$-correlated sets in \Cref{def:correlation}
since it only considers the bad events sandwiched between the two rings. 
\end{remark}

The following states a monotonicity property for the partial $\epsilon$-correlated rings.
\begin{lemma}[monotonicity of partial $\epsilon$-correlated rings]\label{lem:maintenance}
For any $0\leq i'\le i<j\le j'$, any $\epsilon>0$, if $\ring{}{i}{\Lambda}$ and $\ring{}{j}{\Lambda}$ are partial $\epsilon$-correlated, then $\ring{}{i'}{\Lambda}$ and $\ring{}{j'}{\Lambda}$ are partial $\epsilon$-correlated.
\end{lemma}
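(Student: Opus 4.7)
The plan is to establish a telescoping identity that expresses $\psat{i'}{\sigma}{j'}{\tau}$ as a weighted sum over intermediate boundary assignments at rings $\ring{I}{i}{\Lambda}$ and $\ring{I}{j}{\Lambda}$, and then apply the hypothesized partial $\epsilon$-correlation at $(i,j)$ termwise. The key structural fact, immediate from the definition of rings in terms of $D_I$, is that every bad event $A_v$ at distance $d = \dist_{D_I}(v, \Lambda)$ satisfies $\vbl(v) \subseteq \ring{I}{d-1}{\Lambda} \cup \ring{I}{d}{\Lambda}$ (with the convention $\ring{I}{-1}{\Lambda} = \emptyset$). Consequently, the events in $\eventaround{I}{i'+1}{j'-1}{\Lambda}$ partition, by distance from $\Lambda$, into an \emph{Inner} layer (distances $[i'+1, i]$, with variables confined to rings $[i', i]$), a \emph{Middle} layer (distances $[i+1, j]$, variables in rings $[i, j]$), and an \emph{Outer} layer (distances $[j+1, j']$, variables in rings $[j, j']$).

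Next I would extend the conditioning by additionally specifying $X_{\ring{I}{i}{\Lambda}} = \sigma'$ and $X_{\ring{I}{j}{\Lambda}} = \tau'$. Under this extended conditioning, the free variables used by the Inner, Middle, and Outer layers lie in the pairwise-disjoint ring-unions indexed by $[i'+1, i-1]$, $[i+1, j-1]$, and $[j+1, j'-1]$. Since $\nu$ is a product measure, the three avoidance events become conditionally independent, and integrating $\sigma' \sim \nu_{\ring{I}{i}{\Lambda}}$ and $\tau' \sim \nu_{\ring{I}{j}{\Lambda}}$ yields
\begin{align*}
\psat{i'}{\sigma}{j'}{\tau} \;=\; \sum_{\sigma',\,\tau'} \nu_{\ring{I}{i}{\Lambda}}(\sigma')\, \nu_{\ring{I}{j}{\Lambda}}(\tau') \cdot \psat{i'}{\sigma}{i}{\sigma'} \cdot \psat{i}{\sigma'}{j}{\tau'} \cdot \psat{j}{\tau'}{j'}{\tau},
\end{align*}
where the three $\psat$ factors correspond respectively to the Inner, Middle, and Outer layers. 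Any $\sigma'$ that triggers an event in $\eventin{I}{i}{i}{\Lambda}$ puts that event in the Inner layer and forces $\psat{i'}{\sigma}{i}{\sigma'} = 0$; similarly for $\tau'$ and the Outer layer. Hence the sum effectively ranges over locally feasible intermediate boundary conditions, exactly the ones covered by the hypothesis at $(i,j)$.

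Finally, I would set $A(\sigma, \sigma') \triangleq \nu_{\ring{I}{i}{\Lambda}}(\sigma') \cdot \psat{i'}{\sigma}{i}{\sigma'}$ and $B(\tau', \tau) \triangleq \nu_{\ring{I}{j}{\Lambda}}(\tau') \cdot \psat{j}{\tau'}{j'}{\tau}$, and expand $\psat{i'}{\sigma_1}{j'}{\tau_1} \cdot \psat{i'}{\sigma_2}{j'}{\tau_2}$ as a four-fold sum over $(\sigma'_1, \sigma'_2, \tau'_1, \tau'_2)$ weighted by $A(\sigma_1, \sigma'_1) A(\sigma_2, \sigma'_2) B(\tau'_1, \tau_1) B(\tau'_2, \tau_2)$ and carrying the middle factor $\psat{i}{\sigma'_1}{j}{\tau'_1} \cdot \psat{i}{\sigma'_2}{j}{\tau'_2}$. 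Applying the hypothesized partial $\epsilon$-correlation at $(i,j)$ termwise bounds this middle factor by $(1+\epsilon)$ times $\psat{i}{\sigma'_1}{j}{\tau'_2} \cdot \psat{i}{\sigma'_2}{j}{\tau'_1}$, and renaming $\tau'_1 \leftrightarrow \tau'_2$ in the dummy summation reassembles the right-hand side into $(1+\epsilon) \cdot \psat{i'}{\sigma_1}{j'}{\tau_2} \cdot \psat{i'}{\sigma_2}{j'}{\tau_1}$, which is the required inequality at $(i',j')$. The main obstacle I anticipate is the combinatorial bookkeeping in the event partition—specifically, verifying that every event straddling a boundary ring $i$ or $j$ is correctly absorbed into the Middle layer so that its avoidance is captured by $\psat{i}{\sigma'}{j}{\tau'}$, and handling degenerate edge cases (such as $i'=i$, $j=j'$, or $j=i+1$) by interpreting the corresponding $\psat$ factor as $1$ and collapsing the associated summation.
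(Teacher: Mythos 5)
Your one-pass, double-boundary decomposition aims at a cleaner argument, but the stated identity has a concrete gap: it omits the bad events whose variables lie entirely inside a single boundary ring, i.e.\ the events of $\eventin{I}{i}{i}{\Lambda}$ and $\eventin{I}{j}{j}{\Lambda}$. For $i'<i$ and $j<j'$, such a $v$ with $\vbl(v)\subseteq\ring{I}{i}{\Lambda}$ belongs to $\eventaround{I}{i'+1}{j'-1}{\Lambda}$ since its variables meet $\ring{I}{[i'+1,j'-1]}{\Lambda}$, yet none of your three factors requires it to be avoided: the Inner factor only conditions on avoiding $\eventaround{I}{i'+1}{i-1}{\Lambda}$, whose members have variables meeting $\ring{I}{[i'+1,i-1]}{\Lambda}$---a set disjoint from $\ring{I}{i}{\Lambda}$---and the Middle and Outer factors are ruled out by the analogous disjointness. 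This is also exactly where your patch fails: a $\sigma'$ that triggers some $A_v$ with $v\in\eventin{I}{i}{i}{\Lambda}$ does \emph{not} zero out the Inner factor, because $v\notin\eventaround{I}{i'+1}{i-1}{\Lambda}$; hence the sum over $(\sigma',\tau')$ is not confined to the feasible assignments to which the hypothesis at $(i,j)$ applies, and the termwise bound cannot be invoked as written.

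The repair mirrors the paper's weight $W(\rho)$ at a single boundary: replace $\nu_{\ring{I}{i}{\Lambda}}(\sigma')$ by $\nu_{\ring{I}{i}{\Lambda}}(\sigma')\cdot\prod_{v\in\eventin{I}{i}{i}{\Lambda}}I[\sigma'\text{ avoids }A_v]$, and symmetrically at $\ring{I}{j}{\Lambda}$. With these indicators in place the identity holds (modulo extra care when $j=i+1$ or $j'=j+1$, where events straddling two consecutive boundary rings must also be handled), the weight vanishes on exactly the assignments excluded from the hypothesis, and the termwise application followed by relabelling $\tau'_1\leftrightarrow\tau'_2$ reassembles both sides as you describe. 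Once corrected, your route genuinely differs from the paper's: they decompose at one boundary ring per step---first showing $\ring{I}{i}{\Lambda}$ and $\ring{I}{j'}{\Lambda}$ are partial $\epsilon$-correlated via a decomposition at $\ring{I}{j}{\Lambda}$, then obtaining the inward extension by symmetry, then chaining the two one-sided extensions---whereas you decompose at $\ring{I}{i}{\Lambda}$ and $\ring{I}{j}{\Lambda}$ simultaneously. The two-stage argument is more modular and sidesteps the double-boundary bookkeeping that caused the gap; the one-shot version is terser once the weights are set up correctly.
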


Meanwhile, the notion of partial $\epsilon$-correlated rings is strong enough to imply the $\epsilon$-correlated sets.
%
%
\begin{lemma}[$\epsilon$-correlated regions from partial $\epsilon$-correlated rings]\label{lem:partial-cd-to-cd}
    For $1\leq i<j$, if $\ring{}{i}{\Lambda}$ and $\ring{}{j}{\Lambda}$ are partial $\epsilon$-correlated, 
    then $\ring{}{[0,i-1]}{\Lambda}$ and $\ring{}{[j+1,\infty]}{\Lambda}$ are $\epsilon$-correlated. 
\end{lemma}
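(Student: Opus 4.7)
My plan is to decompose $\nu(\Omega^{\sigma\land\tau})$ along the two ring separators $\ring{}{i}{\Lambda}$ and $\ring{}{j}{\Lambda}$ and then appeal to the partial $\epsilon$-correlated rings hypothesis term by term. Write $S = \ring{}{[0,i-1]}{\Lambda}$, $T = \ring{}{[j+1,\infty]}{\Lambda}$, and $M = \ring{}{[i+1,j-1]}{\Lambda}$. I will focus on the main case $j\ge i+2$; the degenerate case $j=i+1$ (where $M$ is empty and single events may straddle both ring-separators) is handled by a short separate argument.

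The first step is to partition the set of events $\eventaround{}{i}{j}{\Lambda}$, which is exactly $\{v:\vbl(v)\not\subseteq S\cup T\}$, into three disjoint classes. Using the elementary structural fact that any $v\in V$ with $d=\dist_{D_I}(v,\Lambda)\ge 1$ satisfies $\vbl(v)\subseteq \ring{}{[d-1,d]}{\Lambda}$, the classes are: a left class $E_L$ of events whose variables lie in $\ring{}{[0,i]}{\Lambda}$ and touch $\ring{}{i}{\Lambda}$; a middle class $E_M=\eventaround{}{i+1}{j-1}{\Lambda}$, whose events automatically have $\vbl(v)\subseteq \ring{}{[i,j]}{\Lambda}$; and a right class $E_R$ of events whose variables lie in $\ring{}{[j,\infty]}{\Lambda}$ and touch $\ring{}{j}{\Lambda}$. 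The crucial containments $\eventin{}{i}{i}{\Lambda}\subseteq E_L$ and $\eventin{}{j}{j}{\Lambda}\subseteq E_R$ are immediate from these definitions, and the gap $j-i\ge 2$ ensures disjointness.

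Conditioning on $X_{\ring{}{i}{\Lambda}}=\alpha$ and $X_{\ring{}{j}{\Lambda}}=\beta$ and using independence under $\nu$, the indicator that $\sigma\land\alpha\land m\land\beta\land\tau$ avoids $\eventaround{}{i}{j}{\Lambda}$ factorizes as $K(\sigma,\alpha)\cdot M(\alpha,m,\beta)\cdot N(\beta,\tau)$, where $K$, $M$, $N$ record the avoidance of $E_L$, $E_M$, $E_R$ respectively. Summing the middle variables $m\in\Sigma_M$ against $\nu_M$ turns the middle factor into exactly $\partialsat{\Lambda}{i}{\alpha}{j}{\beta}{I}$, yielding the key identity
\[
\nu(\Omega^{\sigma\land\tau}) \;=\; \nu_S(\sigma)\,\nu_T(\tau)\sum_{\alpha,\beta} a_\sigma(\alpha)\, b_\tau(\beta)\, \partialsat{\Lambda}{i}{\alpha}{j}{\beta}{I},
\]
where $a_\sigma(\alpha)\triangleq \nu_{\ring{}{i}{\Lambda}}(\alpha)\,K(\sigma,\alpha)\ge 0$ and $b_\tau(\beta)\triangleq \nu_{\ring{}{j}{\Lambda}}(\beta)\,N(\beta,\tau)\ge 0$. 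The containment $\eventin{}{i}{i}{\Lambda}\subseteq E_L$ forces the support of $a_\sigma$ to consist only of $\alpha$ that avoid $\eventin{}{i}{i}{\Lambda}$, and symmetrically for $b_\tau$.

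The conclusion then reduces to a purely algebraic manipulation. Expanding both $\nu(\Omega^{\sigma_1\land\tau_1})\,\nu(\Omega^{\sigma_2\land\tau_2})$ and $\nu(\Omega^{\sigma_1\land\tau_2})\,\nu(\Omega^{\sigma_2\land\tau_1})$ via the identity, the $\nu_S\nu_T$ prefactors match on both sides and leave a common fourfold sum over quadruples $(\alpha_1,\alpha_2,\beta_1,\beta_2)$. Every such quadruple in the common support has all four boundary assignments avoiding the corresponding inner events, so the partial $\epsilon$-correlated rings hypothesis (\Cref{def:partial-correlated}) applies pointwise to give $\partialsat{\Lambda}{i}{\alpha_1}{j}{\beta_1}{I}\cdot\partialsat{\Lambda}{i}{\alpha_2}{j}{\beta_2}{I}\le(1+\epsilon)\,\partialsat{\Lambda}{i}{\alpha_1}{j}{\beta_2}{I}\cdot\partialsat{\Lambda}{i}{\alpha_2}{j}{\beta_1}{I}$; summing over the quadruples and repackaging delivers the desired $\epsilon$-correlation inequality. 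I expect the principal obstacle to be the combinatorial bookkeeping of the three-class partition: one must rule out accidental double-counting of events at distances $i+1$ and $j$ between $E_L/E_M$ and $E_M/E_R$, and verify that the support constraints on $a_\sigma, b_\tau$ line up exactly with the hypothesis in \Cref{def:partial-correlated}.
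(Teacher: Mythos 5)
Your core argument for $j\ge i+2$ is essentially the paper's: both expand $\nu(\Omega^{\sigma\wedge\tau})$ as a sum over boundary assignments on the two separating rings, factor the conditional avoidance indicator into a left factor, a middle factor equal to $P(\alpha,\beta)$, and a right factor, and then apply the pointwise inequality from \Cref{def:partial-correlated} term by term under the fourfold sum. (The paper splits out events supported entirely inside $R_i(\Lambda)$ or $R_j(\Lambda)$ by restricting the outer sum to the feasible subsets $\Sigma_i\times\Sigma_j$ instead of folding them into your $K$ and $N$, but the two bookkeeping choices are interchangeable.)

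The one genuine issue is your assertion that $j=i+1$ is ``handled by a short separate argument.'' It is not: the lemma is actually false for $j=i+1$. Take $U=\{x_0,\dots,x_4\}$ binary uniform with bad events $A_{v_k}: x_{k-1}\ne x_k$ for $k=1,\dots,4$, so the dependency graph is the path $v_1$--$v_2$--$v_3$--$v_4$. With $\Lambda=\{v_1\}$ one gets $R_0(\Lambda)=\{x_0,x_1\}$, $R_1(\Lambda)=\{x_2\}$, $R_2(\Lambda)=\{x_3\}$, $R_3(\Lambda)=\{x_4\}$. For $i=1$, $j=2$ the set of middle events $V_{[i+1,j-1]}(\Lambda)$ is empty, so $P\equiv 1$ and $R_1(\Lambda)$ and $R_2(\Lambda)$ are vacuously partial $0$-correlated; yet with $S=R_0(\Lambda)$, $T=R_3(\Lambda)$, the avoided constraints in $\Omega^{\sigma\wedge\tau}$ force $x_1=x_2=x_3=x_4$, so $\nu(\Omega^{\sigma_1\wedge\tau_1})\cdot\nu(\Omega^{\sigma_2\wedge\tau_2})>0$ while $\nu(\Omega^{\sigma_1\wedge\tau_2})\cdot\nu(\Omega^{\sigma_2\wedge\tau_1})=0$ for suitable $\sigma_1,\sigma_2,\tau_1,\tau_2$, and $\epsilon$-correlation fails for every finite $\epsilon$. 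The correct fix is simply to add $j\ge i+2$ to the hypothesis; the paper's own proof tacitly needs this (its $f$, $g$, $P$ partition misses events straddling $R_i(\Lambda)$ and $R_{i+1}(\Lambda)$ when $j=i+1$), and every invocation of the lemma has $j-i$ on the order of $\ell$, so the missing hypothesis is harmless in context.
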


The proofs of \Cref{lem:maintenance} and \Cref{lem:partial-cd-to-cd} are due to routine verification of the definitions of partial $\epsilon$-correlated rings, which are deferred to \Cref{sec:partial-cd-proofs}.

Our goal is then to ensure that the rings  $\ring{}{1}{\Lambda}$ and $\ring{}{\ell}{\Lambda}$ are partial $\epsilon$-correlated.
By \Cref{lem:partial-cd-to-cd}, this would directly imply that  $S=\ring{}{0}{\Lambda}$ and $T=\ring{}{[\ell+1,\infty]}{\Lambda}$ are $\epsilon$-correlated.
However, such partial $\epsilon$-correlated rings $\ring{}{1}{\Lambda}$ and $\ring{}{\ell}{\Lambda}$ may not always hold in a LLL instance with $\gamma$-satisfiable $I\left(S_{[1,\ell](\Lambda)}\right)$.

To see this, consider the following counterexample.
\begin{example}[strong correlation enforced by rare bad assignments]
In this LLL instance, suppose there exists a sequence of assignments $\sigma_1\in \Sigma_{\ring{}{1}{\Lambda}}$, $\dots$, $\sigma_{\ell}\in \Sigma_{\ring{}{\ell}{\Lambda}}$ such that for any $1\leq i< \ell$, if $\ring{}{i}{\Lambda}$ is assigned $\sigma_i$, then $\ring{}{i+1}{\Lambda}$ must be assigned $\sigma_{i+1}$. This condition can be enforced by defining a bad event on each $\ring{}{i}{\Lambda}\cup\ring{}{i+1}{\Lambda}$, which occurs if and only if $\ring{}{i}{\Lambda}$ is assigned $\sigma_i$ but $\ring{}{i+1}{\Lambda}$ is not assigned $\sigma_{i+1}$. Apart from these bad events, the LLL instance does not contain any other bad events.\footnote{Note that such a LLL instance with rings $\ring{}{i}{\Lambda}$ is well-defined: For the bad events $\{A_{v_i}\}_{1\le i< \ell}$, it holds that $\vbl(v_i)=\ring{}{i}{\Lambda}\cup\ring{}{i+1}{\Lambda}$ and $\vbl(v_i)\cap\vbl(v_{i+1})=\ring{}{i}{\Lambda}$. It can be verified that the dependency graph gives the rings $\ring{}{i}{\Lambda}$.}
With sufficiently small $\nu_{\ring{}{i}{\Lambda}}(\sigma_i)$ for $1\leq i\leq \ell-1$, the entire LLL instance can be made sufficiently $\gamma$-satisfiable. This is because a bad event occurs only if a specific $\sigma_i$ is assigned to some ring $\ring{}{i}{\Lambda}$. However, $\ring{}{1}{\Lambda}$ and $\ring{}{\ell}{\Lambda}$ are strongly correlated: if $\ring{}{1}{\Lambda}$ is assigned $\sigma_1$, then $\ring{}{\ell}{\Lambda}$ is constrained to be assigned~$\sigma_{\ell}$.

Nevertheless, the marginal probability of $\ring{}{1}{\Lambda}$ being assigned $\sigma_1$ is very small, as it enforces $\ring{}{i}{\Lambda}$ to be assigned $\sigma_i$ for all $1<i\leq \ell$. Consequently, by introducing a new bad event that prevents $\ring{}{1}{\Lambda}$ from being assigned $\sigma_1$, which occurs with small probability, one can make $\ring{}{1}{\Lambda}$ and $\ring{}{\ell}{\Lambda}$ decorrelated.
\end{example}

Remarkably, the strategy used to address the extreme counterexample described above is also effective in general.
For a general LLL instance, we set a reasonably large threshold $D$, dependent on the radius $\ell$ and the probability $\delta$. 
It appears that, for any $1\leq i,j\leq \ell$ with $|i-j|>D$,
those `bad' assignments on $\ring{}{i}{\Lambda}$ which may induce significant correlation between $\ring{}{i}{\Lambda}$ and $\ring{}{j}{\Lambda}$ must have small marginal probabilities.
We further prove that, by additionally avoiding a new bad event $A_{\lambda}$ which includes all such `bad' assignments with small marginal probabilities, the correlation between $\ring{}{1}{\Lambda}$ and $\ring{}{\ell}{\Lambda}$ can be substantially reduced.

\subsection{Construction of the local rare bad event $A_{\lambda}$} \label{sec:bad-event-construction}
We now describe the construction of the new bad event $A_{\lambda}$.
Fix
$\epsilon,\gamma\in(0,1)$, $\delta\in\left(0,\frac{\gamma}{2}\right)$, $\ell\geq \ell_0(\epsilon, \gamma, \delta)$,
and a nonempty subset $\Lambda\subseteq V$ of bad events.
Suppose that the sub-instance $I(\ring{}{[1,\ell]}{\Lambda})$ is $\gamma$-satisfiable.
Let $\varepsilon_0$ be a sufficiently small constant to be fixed later, and define $D\triangleq \frac{1}{\varepsilon_0}\log\frac{\ell}{\delta}$.

The new bad event $A_\lambda$ claimed in \Cref{lem:augmenting} is constructed by the procedure described in \Cref{alg:augmenting}.

\begin{algorithm}
\caption{Construction of the bad event $A_{\lambda}$.}
    \label{alg:augmenting}
    \For{$1\le i\le \ell$}{    
    initialize $A_{\lambda_i}$ $\gets$ the trivial event defined on the variables in $\vbl(\lambda_i)\triangleq \ring{I}{i}{\Lambda}$ that never occurs\;}
    define $J\triangleq \left(\{X_i\}_{i\in U},\{A_v\}_{v\in V}\cup\{A_{\lambda_i}\}_{1\le i\le \ell}\right)$ and $D\triangleq \frac{1}{\varepsilon_0}\log\frac{\ell}{\delta}$\;
\Repeat{nothing has changed to $\{A_{\lambda_i}\}_{1\leq i\leq \ell}$}{
\If{there exist $1\leq i<j\leq \ell$ with $|i-j|>D$ and $\sigma\in{\Sigma_{\vbl(\lambda_i)}}$ avoiding $A_{\lambda_i}$  s.t. \hspace{200pt}
$\mathop{\mathbb{E}}_{\tau}\left[\partialsat{\Lambda}{i}{\sigma}{j}{\tau}{J} \right]<\frac{\delta}{2\ell}$ where $\tau\sim \nu_{\vbl(\lambda_j)}$
}{
 update the bad event $A_{\lambda_i}$ to further include $\sigma$; 
}
\If{there exist $1\leq i<j\leq \ell$ with $|i-j|>D$  and $\tau\in{\Sigma_{\vbl(\lambda_j)}}$ avoiding $A_{\lambda_j}$ s.t. \hspace{200pt}
 and $\mathop{\mathbb{E}}_{\sigma}\left[\partialsat{\Lambda}{i}{\sigma}{j}{\tau}{J} \right]<\frac{\delta}{2\ell}$ where $\sigma\sim \nu_{\vbl(\lambda_i)}$
}{
        update the bad event $A_{\lambda_j}$ to further include $\tau$; 
}
}    
let $A_\lambda$ be the event defined on the variables in $\vbl(\lambda)\triangleq \vbl(\lambda_1)\uplus\vbl(\lambda_2)\uplus\cdots\uplus\vbl(\lambda_{\ell})$ such that
$$A_\lambda =\left(\bigcap_{v\in \eventin{I}{1}{\ell}{\Lambda}}\overline{A_v}\right)\cap \left(\bigcup_{1\le i\le \ell}A_{\lambda_i}\right);$$
\end{algorithm}


%
%

\begin{remark}[consistency of the rings]
   Note that \Cref{alg:augmenting} iteratively updates the sequence of bad events  $\{A_{\lambda_i}\}_{1\leq i\leq \ell}$ specified respectively on the rings $\vbl(\lambda_i)\triangleq \ring{I}{i}{\Lambda}$.
   It is easy to see that throughout the process, $\vbl(\lambda_i)$ remains the same and hence $\ring{J}{i}{\Lambda}=\ring{I}{i}{\Lambda}$, where $J$ denotes the updated LLL instance.
\end{remark}

\begin{remark}[locality of \Cref{alg:augmenting}]\label{remark:locality-augment}
It is easy to see that the construction of the bad event $A_{\lambda}$ depends only on $\Lambda$, $\{A_v\}_{v\in B_{\ell+1}(\Lambda)}$, $\{X_i\}_{i\in \vbl(B_{\ell+1}(\Lambda))}$ and $(\epsilon, \gamma, \delta)$.
\Cref{item:a-lamba-property} of \Cref{lem:augmenting} is satisfied.
\end{remark}

Furthermore, the bad event $A_{\lambda}$ constructed in \Cref{alg:augmenting} occurs with a small probability. 
This is because all the bad assignments included by $A_{\lambda}$ have small marginal probabilities.
\Cref{item:a-lambda-error-rate} of \Cref{lem:augmenting} then follows. 

\begin{lemma}[rarity of the new bad event] \label{lem:A-lambda-error-rate}
    $A_{\lambda}$ occurs with probability at most $\delta$.
\end{lemma}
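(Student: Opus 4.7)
The plan is to bound $\Pr[A_{\lambda}]$ by a union bound over the additions performed by \Cref{alg:augmenting}. Enumerate the additions in the order they occur as $(i_1,\sigma_1),(i_2,\sigma_2),\dots$, where at step $t$ the assignment $\sigma_t$ is added to $A_{\lambda_{i_t}}$ with a witness index $j_t$ satisfying $|i_t-j_t|>D$ and $\mathbb{E}_{\tau}[\partialsat{\Lambda}{i_t}{\sigma_t}{j_t}{\tau}{J_{t-1}}]<\delta/(2\ell)$ (or the symmetric condition from the second \texttt{if}), where $J_{t-1}$ denotes the instance right before step $t$. On the event $A_{\lambda}$ some $A_{\lambda_i}$ occurs, so the random index $T\triangleq\min\{t:X_{\vbl(\lambda_{i_t})}=\sigma_t\}$ is well-defined, and a union bound gives
\[
\Pr[A_{\lambda}]\le \sum_t \Pr\!\left[T=t\,\wedge\, \bigcap_{v\in \eventin{I}{1}{\ell}{\Lambda}}\overline{A_v}\right].
\]

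To control the $t$-th term I combine two observations. First, by the minimality of $T$, the assignment $\vec{X}$ avoids every $A_{\lambda_k}$ present in $J_{t-1}$; in particular it avoids those with $k$ strictly between $i_t$ and $j_t$. Second, every $V$-event $v$ with $\vbl(v)\cap \ring{}{[\min(i_t,j_t)+1,\,\max(i_t,j_t)-1]}{\Lambda}\ne\emptyset$ already lies in $\eventin{I}{1}{\ell}{\Lambda}$: such a $v$ has a variable in some ring $\ring{}{k}{\Lambda}$ with $k\ge 2$, forcing $v\in B_{k+1}(\Lambda)$ and hence $\vbl(v)\subseteq\ring{}{[0,k+1]}{\Lambda}\subseteq\ring{}{[0,\ell]}{\Lambda}$; moreover, possessing a variable in a ring at distance $\ge 2$ forbids $v\in B_1(\Lambda)$, which would be the only way for $v$ to also touch ring $0$. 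Combining the two observations, $T=t$ together with the avoidance of all $v\in\eventin{I}{1}{\ell}{\Lambda}$ implies both $X_{\vbl(\lambda_{i_t})}=\sigma_t$ and that $\vec{X}$ avoids every event (both the $V$-events and the $\lambda_k$-events present in $J_{t-1}$) in $\eventaround{J_{t-1}}{\min(i_t,j_t)+1}{\max(i_t,j_t)-1}{\Lambda}$.

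Since $\nu$ is a product measure, averaging out $X_{\vbl(\lambda_{j_t})}$ rewrites the witness expectation as a pure conditional probability, giving
\[
\Pr\!\left[\vec{X}\text{ avoids }\eventaround{J_{t-1}}{\min(i_t,j_t)+1}{\max(i_t,j_t)-1}{\Lambda} \mid X_{\vbl(\lambda_{i_t})}=\sigma_t\right]=\mathbb{E}_{\tau}\!\left[\partialsat{\Lambda}{i_t}{\sigma_t}{j_t}{\tau}{J_{t-1}}\right]<\frac{\delta}{2\ell}.
\]
Each $t$-th term is therefore at most $\nu_{\vbl(\lambda_{i_t})}(\sigma_t)\cdot\delta/(2\ell)$, and summing yields
\[
\Pr[A_{\lambda}]\le \frac{\delta}{2\ell}\sum_{i=1}^{\ell}\nu_{\vbl(\lambda_i)}\!\left(A_{\lambda_i}\right)\le \frac{\delta}{2\ell}\cdot\ell=\frac{\delta}{2}\le\delta.
\]

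The main obstacle I anticipate is the structural claim that $V$-events with a variable in a middle ring are confined to $\ring{}{[1,\ell]}{\Lambda}$; this has to be verified symmetrically for both addition directions ($i_t<j_t$ from the first \texttt{if} and $j_t<i_t$ from the second) and hinges on ruling out any event that simultaneously touches ring $0$ and a ring at distance $\ge 2$ from $\Lambda$. The remaining ingredients (the union bound over additions, the reformulation of the witness expectation via the product structure of $\nu$, and the final summation bounded by $\sum_i\nu_{\vbl(\lambda_i)}(A_{\lambda_i})\le \ell$) are routine.
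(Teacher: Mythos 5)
Your argument is correct and follows essentially the same route as the paper: partition $A_\lambda$ according to the first iteration $t$ at which a newly-added assignment $\sigma_t$ is hit, observe that (together with avoidance of $\eventin{I}{1}{\ell}{\Lambda}$) this forces avoidance of all events in $\eventaround{J_{t-1}}{\min(i_t,j_t)+1}{\max(i_t,j_t)-1}{\Lambda}$, bound the $t$-th term by $\nu_{\vbl(\lambda_{i_t})}(\sigma_t)\cdot\delta/(2\ell)$ via the loop's witness condition, and sum. Your "stopping time $T$" is just a re-packaging of the paper's disjoint decomposition over $k$, and the structural containment you spell out (that any $V$-event touching a ring with index in $[\min+1,\max-1]$ already lies in $\eventin{I}{1}{\ell}{\Lambda}$, so that $\bigcap_{v\in\eventin{J^{(k-1)}}{1}{\ell}{\Lambda}}\overline{A_v}\subseteq\bigcap_{v\in\eventaround{J^{(k-1)}}{\min+1}{\max-1}{\Lambda}}\overline{A_v}$) is the same inequality the paper uses implicitly.
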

\begin{proof}
    %
    For $k\ge 0$, let $J^{(k)}$ denote the LLL instance $J$ after $k$ iterations of the \textbf{repeat} loop in~\Cref{alg:augmenting}. 
    Let $\left\{A_{\lambda_i}^{(k)}\right\}_{1\leq i\leq \ell}$ denote the $\{A_{\lambda_i}\}_{1\leq i\leq \ell}$ after $k$ iterations. 
    Further suppose that in the $k$-th iteration, the bad event $A_{\lambda_{i_k}}$ is picked to update 
    and is made  occur on the $\tau^{(k)}\in \Sigma_{\vbl(\lambda_{i_k})}$.
%

    For $\vec{X}$ drawn from the product distribution $\nu$, the probability of $A_{\lambda}$ can be calculated as:
\begin{align*}
    \Pr_{\vec{X}\sim\nu}(A_{\lambda})
    &=\Pr_{\vec{X}\sim \nu}\left(\left(\bigcap_{v\in \eventin{I}{1}{\ell}{\Lambda}}\overline{A_v}\right)\cap \left(\bigcup_{1\le i\le \ell}A_{\lambda_i}\right)\right) \\
    &= \Pr_{\vec{X}\sim \nu}\left(  \left(\bigcap_{v\in \eventin{I}{1}{\ell}{\Lambda}}\overline{A_v}\right) \cap \left(\exists k\geq 1:   X_{\ring{J}{i_k}{\Lambda}}= \tau^{(k)}\right)  \right) \\
    &=\sum_{k\ge 1} \Pr_{\vec{X}\sim \nu}\left(\left(\bigcap_{v\in \eventin{I}{1}{\ell}{\Lambda}}\overline{A_v} \right)
        \cap \left(\forall 1\le j<k: X_{\ring{J}{i_j}{\Lambda}}\neq \tau^{(j)}\right)
        \cap \left(X_{\ring{J}{i_k}{\Lambda}}= \tau^{(k)}\right)\right)\\
    &=\sum_{k\ge 1} \Pr_{\vec{X}\sim \nu}\left(\left(\bigcap_{v\in \eventin{I}{1}{\ell}{\Lambda}}\overline{A_v} \right)
        \cap \left(\bigcap_{1\leq i\leq \ell} \overline{A_{\lambda_i}^{(k-1)}}\right)
        \cap \left( X_{\ring{J}{i_k}{\Lambda}}= \tau^{(k)}\right)\right)
\end{align*}
Due to the definition of $J^{(k)}$, we have
\[
\left(\bigcap_{v\in \eventin{I}{1}{\ell}{\Lambda}}\overline{A_v} \right)
        \cap \left(\bigcap_{1\leq i\leq \ell} \overline{A_{\lambda_i}^{(k-1)}}\right)
        = \bigcap_{v\in \eventin{J^{(k-1)}}{1}{\ell}{\Lambda}}\overline{A_v}.
\]
And in \Cref{alg:augmenting}, for each iteration $k\geq 1$, there exists a $1\leq j_k\leq \ell$ with $|i_k-j_k|>D$ such that
\begin{itemize}
    \item if $i_k<j_k$, then $\partialsat{\Lambda}{i_k}{\tau^{(k)}}{j_k}{*}{J^{(k-1)}} 
    = \Pr_{\vec{X}\sim \nu}\left(\bigcap_{v\in \eventaround{J^{(k-1)}}{i_k+1}{j_k-1}{\Lambda}}\overline{A_v} ~\middle|~ X_{\ring{J}{i_k}{\Lambda}}= \tau^{(k)}\right) 
    <\frac{\delta}{2\ell}$;
    \item if $i_k>j_k$, then $\partialsat{\Lambda}{j_k}{*}{i_k}{\tau^{(k)}}{J^{(k-1)}}
    = \Pr_{\vec{X}\sim \nu}\left(\bigcap_{v\in \eventaround{J^{(k-1)}}{j_k+1}{i_k-1}{\Lambda}}\overline{A_v} ~\middle|~ X_{\ring{J}{i_k}{\Lambda}}= \tau^{(k)}\right)<\frac{\delta}{2\ell}$.
\end{itemize}
Thus, we have
\begin{align*}
    \Pr_{\vec{X}\sim\nu}(A_{\lambda}) =&\sum_{k\ge 1} \Pr_{\vec{X}\sim \nu}\left(\left(\bigcap_{v\in \eventin{J^{(k-1)}}{1}{\ell}{\Lambda}}\overline{A_v} \right) 
        \cap \left(X_{\ring{J}{i_k}{\Lambda}}= \tau^{(k)}\right) \right)\\
    =&\sum_{k\ge 1} \Pr_{\vec{X}\sim \nu}\left(\bigcap_{v\in \eventin{J^{(k-1)}}{1}{\ell}{\Lambda}}\overline{A_v} ~\middle|~ X_{\ring{J}{i_k}{\Lambda}}= \tau^{(k)}\right) \cdot \nu_{\ring{J}{i_k}{\Lambda}}(\tau^{(k)}) \\
    \leq& \sum_{k\ge 1} \Pr_{\vec{X}\sim \nu}\left(\bigcap_{v\in \eventaround{J^{(k-1)}}{\min(i_k,j_k)+1}{\max(i_k,j_k)-1}{\Lambda}}\overline{A_v} ~\middle|~ X_{\ring{J}{i_k}{\Lambda}}= \tau^{(k)}\right) \cdot \nu_{\ring{J}{i_k}{\Lambda}}(\tau^{(k)}) \\
    \leq& \sum_{k\ge 1} \frac{\delta}{2\ell} \cdot \nu_{\ring{J}{i_k}{\Lambda}}(\tau^{(k)})
    \leq \frac{\delta}{2\ell}\sum_{1\leq i\leq \ell} \sum_{\sigma\in \Sigma_{\ring{J}{i}{\Lambda}}} \nu_{\ring{J}{i}{\Lambda}} (\sigma)
    \leq \delta.\qedhere
\end{align*}
\end{proof}



{

\subsection{Decay of correlation in augmented LLL}



It remains to prove \Cref{item:augmented-cd} of \Cref{lem:augmenting}, the decay of correlation between $S\triangleq\ring{I}{0}{\Lambda}$ and $T\triangleq\ring{I}{[\ell+1,\infty]}{\Lambda}$ in the augmented LLL instance $\hat{I}\triangleq \left(\{X_i\}_{i\in U}, \{A_v\}_{v\in V} \cup \{A_{\lambda}\}\right)$. 
%


Let $J^*$ denote the LLL instance $J$ when \Cref{alg:augmenting} terminates.  
%
%
%
The following is easy to verify.
\begin{proposition}
For any $\sigma\in \Sigma_S$ and $\tau\in \Sigma_T$, it holds that $\nu(\Omega_{J^*}^{\sigma\land\tau}) = \nu(\Omega_{\hat{I}}^{\sigma\land \tau})$.
\end{proposition}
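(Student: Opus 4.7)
The plan is to show directly that $\Omega_{J^*}^{\sigma\land\tau} = \Omega_{\hat{I}}^{\sigma\land\tau}$ as sets, from which the equality of $\nu$-measures is immediate. To this end, I will unfold both sides using the definition of $\Omega^{\cdot}$ in \eqref{eq:def-Omega-tau} with $\Lambda$ there instantiated to $S\cup T$. Both $J^*$ and $\hat I$ share the same ``$I$-part'' of bad events, so the constraints arising from $\{A_v\}_{v\in V}$ with $\vbl(v)\not\subseteq S\cup T$ contribute identically to both sides. Thus the only thing to check is that the added constraint in $J^*$ (avoiding every $A_{\lambda_i}$ whose $\vbl(\lambda_i)=\ring{I}{i}{\Lambda}$ is disjoint from $S\cup T$) is equivalent, on assignments already satisfying the $I$-constraints, to the added constraint in $\hat I$ (avoiding $A_\lambda$).

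The key observation, which I expect to do the work, is that for every $v\in \eventin{I}{1}{\ell}{\Lambda}$ we have $\vbl(v)\subseteq \ring{I}{[1,\ell]}{\Lambda}$, so $\vbl(v)$ is disjoint from $S\cup T=\ring{I}{0}{\Lambda}\cup\ring{I}{[\ell+1,\infty]}{\Lambda}$ and, in particular, $\vbl(v)\not\subseteq S\cup T$. Hence every such $A_v$ must already be avoided by any $\rho\in\Omega_{\hat I}^{\sigma\land\tau}$ (and any $\rho\in\Omega_{J^*}^{\sigma\land\tau}$) simply from the inherited $I$-constraints. Consequently, under these inherited constraints, the left factor $\bigcap_{v\in\eventin{I}{1}{\ell}{\Lambda}}\overline{A_v}$ in the definition of $A_\lambda$ holds identically, and so $A_\lambda$ is logically equivalent to $\bigcup_{1\le i\le \ell}A_{\lambda_i}$. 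Therefore avoiding $A_\lambda$ is equivalent to avoiding every $A_{\lambda_i}$, which identifies the two sets exactly.

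With this equivalence in hand the conclusion is immediate: $\Omega_{J^*}^{\sigma\land\tau}=\Omega_{\hat I}^{\sigma\land\tau}$, hence $\nu(\Omega_{J^*}^{\sigma\land\tau})=\nu(\Omega_{\hat I}^{\sigma\land\tau})$. The only subtlety worth flagging explicitly is the rings-disjointness claim $\ring{I}{[1,\ell]}{\Lambda}\cap(S\cup T)=\emptyset$; this is a direct unwinding of the ring definition in~\eqref{eq:def-variable-ring}, since rings at different radii around $\Lambda$ partition $\vbl(V)$ by distance in $D_I$. There is no real obstacle here — the statement is a bookkeeping identity whose content is that, relative to boundary data on $S\cup T$, the ``separator-only'' events $\{A_v\}_{v\in\eventin{I}{1}{\ell}{\Lambda}}$ are common to both instances and render the conjunctive prefix of $A_\lambda$ vacuous.
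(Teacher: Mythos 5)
Your proof is correct, and since the paper only asserts this proposition as ``easy to verify'' without giving its own argument, your write-up supplies exactly the definitional unwinding the authors had in mind. The substance is precisely your two observations: (i) the rings $\ring{I}{r}{\Lambda}$ partition $U$ by distance in $D_I$, so $\ring{I}{[1,\ell]}{\Lambda}$ is disjoint from $S\cup T$ and hence every $v\in\eventin{I}{1}{\ell}{\Lambda}$ and every $\lambda_i$ (with $1\le i\le\ell$) has $\vbl(\cdot)\not\subseteq S\cup T$; and (ii) the pre-filter $\bigcap_{v\in\eventin{I}{1}{\ell}{\Lambda}}\overline{A_v}$ in the definition of $A_\lambda$ is therefore automatically enforced on any $\rho$ lying in either $\Omega^{\sigma\land\tau}$ set (since those $A_v$ are among the inherited $I$-constraints on $\Omega^{\sigma\land\tau}$ when applying \eqref{eq:def-Omega-tau} with $\Lambda$ there taken to be $S\cup T$), so $\overline{A_\lambda}$ coincides with $\bigcap_{1\le i\le\ell}\overline{A_{\lambda_i}}$ on that domain, giving set equality and hence equality of $\nu$-measures.
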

Thus, to guarantee that $S$ and $T$ are $\epsilon$-correlated in $\hat{I}$, it suffices to prove that they are $\epsilon$-correlated in $J^*$.
%
%
By \Cref{lem:partial-cd-to-cd}, it suffices to prove that $\ring{}{1}{\Lambda}$ and $\ring{}{\ell}{\Lambda}$ are partial $\epsilon$-correlated in $J^*$.

In the following, we omit the LLL instance $J^*$ and the subset $\Lambda$ of variables in our notations and write:
\begin{align*}
\psat{i}{\sigma}{j}{\tau}
\triangleq\partialsat{\Lambda}{i}{\sigma}{j}{\tau}{J^*}, \quad
\ri{r}
\triangleq \ring{J^*}{r}{\Lambda}, \quad
\ein{i}{j}
\triangleq \eventin{J^*}{i}{j}{\Lambda} \quad\text{and}\quad
\earound{i}{j}
\triangleq\eventaround{J^*}{i}{j}{\Lambda}.
\end{align*}
And when we say $\epsilon$-correlated, it always means the $\epsilon$-correlated in $J^*$. 
Furthermore, we use  $V^*$ to denote the set of the bad events in the LLL instance $J^*$, that is, $V^*\triangleq V\cup\{\lambda_i\mid 1\leq i\leq \ell\}$.  

To establish a partial correlation decay between  $\ri{1}$ and $\ri{\ell}$, we consider the \concept{good} rings, 
the bad events intersected by which happen with small probability. Here, $\varepsilon_0$ is a constant to be fixed later.
\begin{definition}
\label{def:good-rings}
    For $i\geq 0$, the $\ri{i}$ is called a \concept{good ring}, if $\bigcup_{v\in \earound{i}{i}} A_v$ occurs with probability at most $\varepsilon_0$.
\end{definition}

First, we observe that a local cluster of good rings may create certain partial correlation decays. 
\begin{lemma}\label{lem:initialization}
    For $1\leq i\leq \ell-3D$, if $\ri{i}$, $\ri{i+1}$, $\dots$, $\ri{i+3D}$ are good rings, then $\ri{i}$ and $\ri{i+3D}$ are partial $\frac{1}{\varepsilon_0}$-correlated. 
\end{lemma}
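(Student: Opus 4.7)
The plan is to approximately factorize $\psat{i}{\sigma}{i+3D}{\tau}$ into a product of a function of $\sigma$ and a function of $\tau$, using an intermediate good ring as a ``screen,'' and then derive the partial $(1/\varepsilon_0)$-correlation directly from this factorization.

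The first step is to fix a middle index $m = i + \lfloor 3D/2 \rfloor$ and partition the events in $\earound{i+1}{i+3D-1}$ using the structural fact that any bad event $v$ at dependency-graph distance $d_v$ from $\Lambda$ satisfies $\vbl(v) \subseteq \ri{d_v - 1} \cup \ri{d_v}$. This yields a left group $V_L$ of events at distance $d_v \le m$ (with variables in $\ri{[i,m]}$) and a right group $V_R$ of events at distance $d_v \ge m+1$ (with variables in $\ri{[m, i+3D]}$), sharing only the variables in $\ri{m}$. Conditioning on $X_{\ri{m}} = \alpha$ thus decouples the two groups, giving the exact representation
\[
\psat{i}{\sigma}{i+3D}{\tau} = \sum_\alpha \nu_{\ri{m}}(\alpha)\, L(\sigma,\alpha)\, R(\alpha,\tau),
\]
where $L(\sigma,\alpha)$ and $R(\alpha,\tau)$ are the marginal probabilities of satisfying $V_L$ and $V_R$, respectively, given the boundary conditions $\sigma, \tau$ and $\alpha$.

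The second step is to use the $D$-ring buffers on each side of $m$, together with the goodness of all rings in the range $[i, i+3D]$, to show that $L(\sigma,\alpha)$ and $R(\alpha,\tau)$ vary sufficiently mildly in $\alpha$ to yield an approximate factorization $\psat{i}{\sigma}{i+3D}{\tau} \asymp F(\sigma) H(\tau)$. In particular, I would restrict to the ``good'' set $G$ of $\alpha$ avoiding the bad events in $\earound{m}{m}$, whose total $\nu_{\ri{m}}$-mass is $\ge 1-\varepsilon_0$ by goodness of $\ri{m}$, and control the residual contribution from $\alpha \notin G$ via the same goodness. Substituting the factorization into the ratio that defines partial $\epsilon$-correlation from Definition~\ref{def:partial-correlated} causes $F$ and $H$ to cancel in numerator and denominator, leaving only the multiplicative slack from the factorization; the parameters $D = \varepsilon_0^{-1}\log(\ell/\delta)$ and the assumed goodness are calibrated precisely so that this slack is bounded by $1 + 1/\varepsilon_0$.

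The main obstacle I expect is quantitatively controlling how much $L(\sigma,\alpha)$ and $R(\alpha,\tau)$ can vary across $\alpha \in G$. A naive union bound across the $D$ buffer rings gives multiplicative losses that can already exceed $1$, so the argument must carefully exploit that each good ring contributes a small multiplicative rather than additive perturbation, for example through a symmetrization and Cauchy--Schwarz argument applied to the anti-symmetric defect
\[
L(\sigma_1,\alpha) L(\sigma_2,\beta) - L(\sigma_1,\beta) L(\sigma_2,\alpha)
\]
that appears naturally when one writes the difference $\psat{i}{\sigma_1}{i+3D}{\tau_1}\psat{i}{\sigma_2}{i+3D}{\tau_2} - \psat{i}{\sigma_1}{i+3D}{\tau_2}\psat{i}{\sigma_2}{i+3D}{\tau_1}$. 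A secondary technical point is to ensure that none of the four partial-sat terms in the ratio is vanishingly small, which uses the hypothesis of Definition~\ref{def:partial-correlated} that $\sigma_1,\sigma_2,\tau_1,\tau_2$ themselves avoid local bad events in $\ein{i}{i}$ and $\ein{i+3D}{i+3D}$, together with the lower bound on the contribution of $\alpha \in G$ coming from goodness.
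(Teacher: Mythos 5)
Your overall architecture is the right one: condition on a middle ring to obtain a decomposition of the form
\[
\psat{i}{\sigma}{i+3D}{\tau}
=\sum_{\rho\in\Sigma_{\ri{m}}}\nu_{\ri{m}}(\rho)
\cdot\prod_{v\in\ein{m}{m}}I[\rho\text{ avoids }A_v]
\cdot\psat{i}{\sigma}{m}{\rho}\cdot\psat{m}{\rho}{i+3D}{\tau},
\]
and then cash this out into a near-factorization. The paper follows exactly this line. The gap, however, is in your ``second step.''

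You claim that goodness of the intermediate rings, propagated across the $D$-ring buffers and cleaned up by a symmetrization/Cauchy--Schwarz argument, suffices to show that $L(\sigma,\rho)$ and $R(\rho,\tau)$ vary mildly in $\rho$. This is false: goodness of the rings alone (Definition~\ref{def:good-rings}, i.e.\ each ring's bad events occur with probability at most~$\varepsilon_0$) does not prevent strong conditional dependence between far-apart rings. The paper's own example in Section~6.1 --- a chain of low-probability constraints enforcing ``if $\ri{r}=\sigma_r$ then $\ri{r+1}=\sigma_{r+1}$'' --- produces a $\gamma$-satisfiable instance where every ring is good yet a single boundary assignment on $\ri{1}$ determines $\ri{\ell}$ completely. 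No amount of Cauchy--Schwarz massaging can overcome this, because the statement you are trying to prove would simply be false for the non-augmented instance. The restriction to boundary conditions avoiding $\ein{i}{i}$ and $\ein{i+3D}{i+3D}$ is not a mere technical safeguard against vanishing denominators, as you treat it; it is precisely what rules out those pathological assignments --- and that only works because $\ein{\cdot}{\cdot}$ in the instance $J^*$ contains the augmenting events $A_{\lambda_r}$ produced by Algorithm~\ref{alg:augmenting}.

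The missing ingredient is the \emph{well-distributed} property (the definition preceding Lemma~\ref{lem:well-distributed-property}) and, crucially, Lemma~\ref{lem:construct-well-distributed}: in $J^*$, if there are $>D$ consecutive good rings between $i$ and $j$, then for every $\tau$ avoiding $A_{\lambda_j}$, the ring $\ri{i}$ is well-distributed based on $(j,\tau)$. The proof of that lemma is where the buffer depth $D=\varepsilon_0^{-1}\log(\ell/\delta)$ is really used: if $\ri{i}$ were not well-distributed, one could push the failure through $D$ consecutive good rings and force $\psat{i}{\Sigma_{\ri{i}}}{j}{\tau}\le(6/7)^D<\delta/(2\ell)$, contradicting the termination condition of the \textbf{repeat} loop in Algorithm~\ref{alg:augmenting}. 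So $D$ is not a ``buffer for amplifying decorrelation,'' as you suggest; it is a buffer long enough that the only way to violate well-distributedness would have forced the augmentation procedure to keep running. Once $\ri{m-1},\ri{m},\ri{m+1}$ are well-distributed based on $(i,\sigma)$ and on $(j,\tau)$, Lemma~\ref{lem:well-distributed-property} gives the sandwich
\[
\left(\tfrac{1}{2}-\varepsilon_0\right)\left(\tfrac{1}{8}\right)^2
\psat{i}{\sigma}{m}{\Sigma_{\ri{m}}}\,\psat{m}{\Sigma_{\ri{m}}}{j}{\tau}
\;<\;\psat{i}{\sigma}{j}{\tau}\;\le\;
3^2\,\psat{i}{\sigma}{m}{\Sigma_{\ri{m}}}\,\psat{m}{\Sigma_{\ri{m}}}{j}{\tau},
\]
and the claimed $1+1/\varepsilon_0$ bound is immediate from taking a ratio of four such terms. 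Without Lemma~\ref{lem:construct-well-distributed} --- or some equivalent that invokes the $J^*$-specific termination condition --- your proposed argument has no way to obtain these constant-factor multiplicative bounds, and the approach stalls.
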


Next, we observe that consecutive good rings may enhance the partial correlation decay.

\begin{lemma}\label{lem:iteration}
For any $1< i<i+2 D<j\leq \ell-2D$, any $\epsilon>0$,  if $\ri{i}$ and $\ri{j}$ are partial $\epsilon$-correlated and $\ri{i},\ri{i+1},\ldots,\ri{j+2D}$ are \gr rings, then $\ri{i}$ and $\ri{j+2D}$ are partial \co{$(1-\varepsilon_0)\epsilon$}.
\end{lemma}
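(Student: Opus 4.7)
The plan is to lift the partial $\epsilon$-correlation between $\ri{i}$ and $\ri{j}$ to a partial $(1-\varepsilon_0)\epsilon$-correlation between $\ri{i}$ and $\ri{j+2D}$, by inserting an intermediate split at $\ri{j}$ and using the goodness of the rings in $[j+1, j+2D-1]$ to produce a Dobrushin-style contraction factor $(1-\varepsilon_0)$. All quantities below are taken in the augmented instance $J^*$, so the bad events $A_{\lambda_r}$ constructed by \Cref{alg:augmenting} are already in force.

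First, I would establish the two-layer decomposition
\[
\psat{i}{\sigma}{j+2D}{\tau'} \;=\; \sum_{\tau \in \Sigma_{\ri{j}}} \bar\nu(\tau)\,\psat{i}{\sigma}{j}{\tau}\,\psat{j}{\tau}{j+2D}{\tau'},
\]
with $\bar\nu(\tau) = \nu_{\ri{j}}(\tau)\,\mathbf{1}[\tau \text{ avoids } \ein{j}{j}]$. This identity rests on the structural fact that the variables of any bad event span at most two consecutive rings -- an immediate consequence of how rings are defined from the dependency graph -- so every event in $\earound{i+1}{j+2D-1}$ is entirely contained either in $\vbl(\ri{[i,j]})$ or in $\vbl(\ri{[j,j+2D]})$, with the overlap being exactly $\ein{j}{j}$. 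Conditioning on $X_{\ri{j}} = \tau$ then makes the two sides independent under the product measure.

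Next, introduce the reweighted measure $p_\sigma(\tau) \propto \bar\nu(\tau)\,\psat{i}{\sigma}{j}{\tau}$ on $\Sigma_{\ri{j}}$. The partial $\epsilon$-correlation hypothesis at $[i,j]$ translates directly into the pointwise ratio bound $(1+\epsilon)^{-1} \le p_{\sigma_1}(\tau)/p_{\sigma_2}(\tau) \le 1+\epsilon$ together with the normalization $\mathbb{E}_{p_{\sigma_2}}[p_{\sigma_1}/p_{\sigma_2}] = 1$. Dividing out the $\sigma$-dependent normalizers in the decomposition, the target correlation ratio becomes
\[
\frac{\mathbb{E}_{p_{\sigma_1}}[h(\cdot,\tau_1')]\,\mathbb{E}_{p_{\sigma_2}}[h(\cdot,\tau_2')]}{\mathbb{E}_{p_{\sigma_1}}[h(\cdot,\tau_2')]\,\mathbb{E}_{p_{\sigma_2}}[h(\cdot,\tau_1')]},
\qquad h(\tau,\tau') \triangleq \psat{j}{\tau}{j+2D}{\tau'},
\]
and bounding this by $1+(1-\varepsilon_0)\epsilon$ is a contraction problem for the integration kernel $h$.

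The main obstacle, where I expect the bulk of the technical work to lie, is to exhibit a common subprobability component of mass at least $\varepsilon_0$ across the family of measures $\tau \mapsto p_\sigma(\tau)\,h(\tau,\tau')$ parameterized by $(\sigma,\tau')$: concretely, a probability measure $\mu$ on $\Sigma_{\ri{j}}$ and a factor $\bar h(\tau')$ with
\[
p_\sigma(\tau)\,h(\tau,\tau') \;\ge\; \varepsilon_0\,\mu(\tau)\,\bar h(\tau')
\]
pointwise for every admissible $\sigma$ and $\tau'$. Once such a common component is in hand, writing $\mathbb{E}_{p_\sigma}[h(\cdot,\tau')] = \varepsilon_0\,\bar h(\tau') + (1-\varepsilon_0)\,R_\sigma(\tau')$ and observing that the residuals $R_\sigma$ still inherit the $(1+\epsilon)$-ratio bound from $p_\sigma$ yields, through a direct convexity computation, the claimed bound $1+(1-\varepsilon_0)\epsilon$ on the correlation ratio. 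The $\varepsilon_0$-mass extraction is precisely where the $2D$-ring buffer and the algorithm-enforced lower bound $\mathbb{E}_\tau[\psat{\cdot}{\cdot}{\cdot}{\cdot}] \ge \delta/(2\ell)$ on boundary assignments avoiding the constructed $A_{\lambda_r}$'s are used: goodness of every intermediate ring guarantees that a non-negligible fraction of configurations lies in a "typical" set insensitive to the outer boundary $\tau'$, and pinning an interior reference ring $\ri{m}$ to a value whose partial satisfaction probability is bounded below by the $\delta/(2\ell)$ guarantee furnishes the required $\varepsilon_0$-mass common component. Making this reference-ring construction explicit and verifying the pointwise inequality constitutes the heart of the proof.
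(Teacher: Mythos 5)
Your high-level plan---decompose through the intermediate ring $\ri{j}$ and extract a contraction factor from the goodness of the buffer rings---is the right architecture, and your two-layer decomposition identity is exactly the $W(\rho)$ factorization the paper uses. But there is a genuine gap in the second half, and it sits precisely where you flagged ``the bulk of the technical work,'' so it cannot be waved away.

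The step ``the residuals $R_\sigma$ still inherit the $(1+\epsilon)$-ratio bound from $p_\sigma$'' is false in general: subtracting a common nonnegative piece can \emph{worsen} a ratio bound, not preserve it. Concretely, $1.5/1=1.5$, but $(1.5-0.5)/(1-0.5)=2$. So even if you construct the minorizing pair $(\mu,\bar h)$, passing from $A_\sigma(\tau')=\varepsilon_0\bar h(\tau')+(1-\varepsilon_0)R_\sigma(\tau')$ to a ratio bound on $A$ using a ratio bound on $R$ is not a ``direct convexity computation''---you would need to first \emph{prove} the ratio bound on $R$, and it does not follow from the one on $p_\sigma$. Separately, the pointwise minorization $p_\sigma(\tau)\,h(\tau,\tau')\ge\varepsilon_0\,\mu(\tau)\,\bar h(\tau')$ uniformly in $\sigma$, $\tau$, \emph{and} $\tau'$ is stronger than what is actually available: nothing prevents $h(\tau,\tau')=\psat{j}{\tau}{j+2D}{\tau'}$ from vanishing for some $(\tau,\tau')$ with $\mu(\tau)>0$ and $\bar h(\tau')>0$. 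The well-distributed machinery gives only \emph{aggregate} lower bounds (e.g.\ a $\ge 1/8$-fraction of mass on a set of $\nu$-measure $\ge 3/4$), not pointwise lower bounds over the whole domain.

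The paper's proof avoids both of these pitfalls by never subtracting. After the same factorization at $\ri{j}$, it writes $P(\sigma_s,\tau_t')$ as a double sum over pairs $(\rho_1,\rho_2)\in\Sigma_{\ri{j}}^2$, splits that index set into $S_{\sigma_1,\sigma_2}=\{(\rho_1,\rho_2):P(\sigma_1,\rho_1)P(\sigma_2,\rho_2)\le P(\sigma_1,\rho_2)P(\sigma_2,\rho_1)\}$ and its complement, and observes that on $S_{\sigma_1,\sigma_2}$ the cross term dominates with ratio $\le 1$, while off it the existing $[i,j]$ partial $\epsilon$-correlation gives ratio $\le 1+\epsilon$. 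It then lower-bounds the \emph{fraction} $Q_1/(Q_1+Q_2)$ contributed by $S_{\sigma_1,\sigma_2}$ by a constant: that is where the good-ring buffer enters, through \Cref{lem:construct-well-distributed} (which makes $\ri{j-1},\ri{j},\ri{j+1}$ well-distributed simultaneously based on $(i,\sigma_s)$ and $(j+2D,\tau_t')$) and \Cref{lem:well-distributed-property} (whose items 2 and 3 give the two-sided aggregate bounds $3\nu$ and $\ge\tfrac18$ on $\ge\tfrac34$ of the mass). This yields $\tfrac{P_1+P_2}{Q_1+Q_2}\le 1+\epsilon\bigl(1-\tfrac{Q_1}{Q_1+Q_2}\bigr)\le 1+(1-\varepsilon_0)\epsilon$. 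To repair your proof you would need to replace the minorize-and-subtract scheme with this split-the-cross-term argument, or else supply an independent proof that the residuals satisfy a ratio bound---the latter is not true in general and would require using the minorization in a different way than outlined.
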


Using \Cref{lem:initialization,lem:iteration}, we can prove \Cref{lem:augmenting}, the LLL augmentation lemma.

\begin{proof}[Proof of \Cref{lem:augmenting}]  

    \Cref{item:a-lamba-property} follows from \Cref{remark:locality-augment}.
    \Cref{item:a-lambda-error-rate} follows from \Cref{lem:A-lambda-error-rate}. 
    
    To prove \Cref{item:augmented-cd}, it is sufficient to prove the existence of a long enough sequence of consecutive good rings.
%
    %
    By the assumption, the sub-instance $I\left(S_{[1,\ell]}(\Lambda)\right)$ is $\gamma$-satisfiable;
    and by \Cref{lem:A-lambda-error-rate}, the new bad event $A_{\lambda}$ defined on $\ri{[1,\ell]}$ occurs with small probability $\delta<\frac{\gamma}{2}$.
    Therefore, the sub-instance $J^*(S_{[1,\ell]}(\Lambda))$ is $\frac{\gamma}{2}$-satisfiable.
    %
    %
    
    Let  $\ri{i_1}, \ri{i_2}, \ldots,\ri{i_k}$ be the longest possble sub-sequence of non-good rings satisfying that $|i_j-i_{j+1}|>2$ for all $1\le j<k$. 
    Observe that the corresponding sub-sequence of events $\bigcup_{v\in \earound{i_j}{i_j}} A_v$, where $1\le j\le k$, must be mutually independent since they cannot share any variables because $|i_j-i_{j+1}|>2$.
    %
    Due to such mutual independence, the satisfying probability of the sub-instance $J^*(S_{[1,\ell]}(\Lambda))$ is upper bounded by the product of the probabilities of the non-occurrence for $\bigcup_{v\in \earound{i_j}{i_j}} A_v$ over all $1\le j\le k$, which is further bounded by $(1-\epsilon_0)^k$, because each $\ri{i_j}$ is non-good ring.
    Recall that $J^*(S_{[1,\ell]}(\Lambda))$ is $\frac{\gamma}{2}$-satisfiable.
    Thus, the length of this sub-sequence is bounded as $k\le \log_{1-\epsilon_0} \frac{\gamma}{2}$. 
    Note that this sub-sequence $\ri{i_1}, \ri{i_2},...,\ri{i_k}$ of non-good rings partitions the $\ri{[1,\ell]}$ into $k+1$ parts, with the first part being $\ri{[1, i_1-1]}$, the last part being $\ri{[i_k,\ell]}$, and the $j$-th part being $\ri{[i_{j-1}, i_j-1]}$ for $1<j<k+1$. 
    And apparently, no part can contain more than 5 non-good rings. Otherwise, the sub-sequence of non-good rings $\ri{i_1}, \ri{i_2}, \ldots,\ri{i_k}$ can be made longer and still satisfying $|i_j-i_{j+1}|>2$, which contradicts to the assumption that it is the longest such sequence.
    Thus, there are at most $5 \cdot \left(\log_{1-\epsilon_o} \frac{\gamma}{2}+1\right)$ non-good rings among $\ri{1},\ri{2},\ldots,\ri{\ell}$.
    If 
    \begin{equation} \label{eq:continues-good-rings}
        \ell> 2\cdot \left(5\cdot \left(\log_{1-\epsilon_o} \frac{\gamma}{2}+1\right) \right) \cdot (3D+2D\cdot \lceil \log_{1-\varepsilon_0} (\varepsilon_0\epsilon)\rceil ),
    \end{equation}
    then there must exist a sequence of consecutive good rings $\ri{i},\ri{i+1},\ldots , \ri{j}$ such that $1\leq i \le j\leq \ell$ and $j-i\geq 3D+2D\cdot \lceil\log_{1-\varepsilon_0}(\varepsilon_0\epsilon)\rceil$.
    Recall that $D=\frac{1}{\varepsilon_0}\log \frac{\ell}{\delta}$. By setting $C_0$ to be sufficient large, all $\ell\geq \ell_0(\epsilon,\gamma,\delta)$ 
    satisfy~\eqref{eq:continues-good-rings}. 

    With this long enough sequence of consecutive good rings, we are able to prove \Cref{item:augmented-cd} as follows.
    By \Cref{lem:initialization}, $\ri{i}$ and $\ri{i+3D}$ are partial $\frac{1}{\varepsilon_0}$-correlated.
    By \Cref{lem:iteration}, for $1\leq k\leq \lceil \log_{1-\varepsilon_0} (\varepsilon_0\epsilon)\rceil$, $\ri{i}$ and $\ri{i+3D+k\cdot 2D}$ are partial $\frac{(1-\varepsilon_0)^k}{\varepsilon_0}$-correlated.  
    Thus, $R_i$ and $R_j$ are partial $\epsilon$-correlated. 
    By \Cref{lem:maintenance}, $\ri{1}$ and $\ri{\ell}$ are partial $\epsilon$-correlated. 
    By \Cref{lem:partial-cd-to-cd}, $S=R_0$ and $T=R_{[\ell+1,\infty]}$ are $\epsilon$-correlated. 
\end{proof}
}

\subsection{Correlation decay between good rings}
It remains to prove \Cref{lem:initialization,lem:iteration}, which shows that consecutive good rings may create and enhance partial correlation decays.

We extend the definition of probability $\psat{i}{\sigma}{j}{\tau}$ in \eqref{eq:def-partialsat} to the classes $C_i\subseteq \Sigma_{\ri{i}}$ and $C_j\subseteq \Sigma_{\ri{j}}$ of assignments on the rings $\ri{i}$ and $\ri{j}$:
 \begin{align*}
     \psat{i}{C_i}{j}{\tau}
     \triangleq \sum_{\sigma\in C_i} \nu_{\ri{i}}(\sigma) \cdot \psat{i}{\sigma}{j}{\tau}
     \quad\text{ and }\quad
     \psat{i}{\sigma}{j}{C_j}
     \triangleq \sum_{\tau\in C_j} \nu_{\ri{j}}(\tau) \cdot \psat{i}{\sigma}{j}{\tau}.
 \end{align*}
In particular, we observe that
\begin{align*}
    \psat{i}{\sigma}{j}{\Sigma_{\ri{j}}}
    =
    \mathop{\mathbb{E}}_{\tau\sim \nu_{\ri{j}}}\left[\psat{i}{\sigma}{j}{\tau} \right] 
    =\Pr_{\vec{X}\sim\nu}\left( \vec{X} \text{ avoids $A_v$ for all }v\in \earound{i+1}{j-1} \mid  X_{\ri{i}}=\sigma \right),
\end{align*}
and $\psat{i}{\Sigma_{\ri{i}}}{j}{\tau}=\mathop{\mathbb{E}}_{\sigma}\left[\psat{i}{\sigma}{j}{\tau} \right]$ is symmetrically satisfied.

\begin{definition}
Let $1\leq i<j \leq \ell$.
For any $\tau\in \Sigma_{\ri{j}}$, we say that $\ri{i}$ is \concept{well-distributed based on} $(j,\tau)$, 
if $\tau$ avoids $A_{\lambda_j}$ and for any $C\subseteq  \Sigma_{\ri{i}}$ satisfying $\nu_{\ri{i}}(C)>\frac{1}{4}$, 
we have $\frac{\psat{i}{C}{j}{\tau}}{\psat{i}{\Sigma_{\ri{i}}}{j}{\tau}}>\frac{\nu_{\ri{i}}(C)}{2}$.

And symmetrically, for any  $\sigma\in \Sigma_{\ri{i}}$ we say that $\ri{j}$ is \emph{well-distributed based on} $(i,\sigma)$ 
if $\sigma$ avoids $A_{\lambda_i}$ and for any $C\subseteq  \Sigma_{\ri{j}}$ satisfying $\nu_{\ri{j}}(C)>\frac{1}{4}$, 
we have $\frac{\psat{i}{\sigma}{j}{C}}{\psat{i}{\sigma}{j}{\Sigma_{\ri{j}}}}>\frac{\nu_{\ri{j}}(C)}{2}$.

\end{definition} 

The following lemma states the good properties for being well-distributed and holds generally. 

\begin{lemma}\label{lem:well-distributed-property}
Let $1<i<j\leq \ell$. For any $\tau\in \Sigma_{\ri{j}}$, if $\ri{i}$ is well-distributed based on $(j,\tau)$ and $\ri{i}$ is a good ring, then the followings hold:
\begin{enumerate}
\item $\ri{i-1}$ is well-distributed based on $(j,\tau)$;
\item for any $\sigma\in\Sigma_{\ri{i-1}}$: $\psat{i-1}{\sigma}{j}{\tau}\le 3\nu_{\ri{i-1}}(\sigma) \cdot \psat{i-1}{\Sigma_{\ri{i-1}}}{j}{\tau}$;
\item $\nu_{\ri{i}}(S)\geq \frac{3}{4}$ for $S=\left\{\pi\in \Sigma_{\ri{i}} ~\middle|~ \frac{\psat{i}{\pi}{j}{\tau}}{\psat{i}{\Sigma_{\ri{i}}}{j}{\tau}}\geq \frac{1}{8}\right\}$. 
\end{enumerate}
And the symmetric holds for the good ring $\ri{j}$ that is well-distributed based on $(i,\sigma)$ for any $\sigma\in \Sigma_{\ri{i}}$:
\begin{enumerate}
\item $\ri{j+1}$ is well-distributed based on $(i,\sigma)$;
\item for any $\tau\in\Sigma_{\ri{j+1}}$: $\psat{i}{\sigma}{j+1}{\tau}\le 3\nu_{\ri{j+1}}(\tau) \cdot \psat{i}{\sigma}{j+1}{\Sigma_{\ri{j+1}}}$;
\item $\nu_{\ri{j}}(S)\geq \frac{3}{4}$ for $S=\left\{\pi\in \Sigma_{\ri{j}} ~\middle|~ \frac{\psat{i}{\sigma}{j}{\pi}}{\psat{i}{\sigma}{j}{\Sigma_{\ri{j}}}}\geq \frac{1}{8}\right\}$. 
\end{enumerate}
\end{lemma}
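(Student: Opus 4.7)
The plan is to prove parts (3), (2), (1) in that order, and then obtain the symmetric statements by swapping the roles of left and right. Part (3) is immediate from the well-distributedness hypothesis by contrapositive. Writing $\bar{S}=\Sigma_{\ri{i}}\setminus S$ and assuming toward contradiction that $\nu_{\ri{i}}(\bar{S})>1/4$, the hypothesis applied with $C=\bar{S}$ gives $\psat{i}{\bar{S}}{j}{\tau}>(\nu_{\ri{i}}(\bar{S})/2)\,\psat{i}{\Sigma_{\ri{i}}}{j}{\tau}$, while the defining inequality of $\bar{S}$ yields, after summing with $\nu$-weights, $\psat{i}{\bar{S}}{j}{\tau}<(\nu_{\ri{i}}(\bar{S})/8)\,\psat{i}{\Sigma_{\ri{i}}}{j}{\tau}$. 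The two bounds are inconsistent, so $\nu_{\ri{i}}(S)\geq 3/4$.

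For part (2) I would decompose $\psat{i-1}{\sigma}{j}{\tau}$ by conditioning on the value $\pi$ of $X_{\ri{i}}$. The key structural observation is that any bad event $v\in V$ with $\vbl(v)\cap\ri{i-1}\neq\emptyset$ must lie in $B_i(\Lambda)$ in the dependency graph $D_I$, since it is adjacent to some node in $B_{i-1}(\Lambda)$ through a shared variable in $\ri{i-1}$, and hence $\vbl(v)\subseteq\ri{[0,i]}$. This cleanly splits $\earound{i}{j-1}$ into an \emph{interior} part whose events have $\vbl\subseteq\ri{[0,i]}$, together with an \emph{exterior} part whose events satisfy $\vbl\cap\ri{i-1}=\emptyset$ and therefore (by the same structural fact applied at ring $\ri{i}$) $\vbl\subseteq\ri{[i,\infty)}$. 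Given $(\sigma,\pi,\tau)$, the two parts use disjoint random variables ($\ri{[0,i-2]}$ on one side, $\ri{[i+1,\infty)}\setminus\ri{j}$ on the other) and factorize; the exterior factor, once the local-at-$\ri{i}$ events are pulled out, equals $\psat{i}{\pi}{j}{\tau}$. Part (3) then shows $\psat{i}{\pi}{j}{\tau}$ is at least $\tfrac{1}{8}\psat{i}{\Sigma_{\ri{i}}}{j}{\tau}$ on a $\nu_{\ri{i}}$-mass-$\geq 3/4$ set of $\pi$, and the good ring assumption $\Pr_\nu[\bigcup_{v\in\earound{i}{i}}A_v]\leq\varepsilon_0$ uniformly bounds the interior and local-at-$\ri{i}$ contributions. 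A careful ratio computation then yields the factor $3\nu_{\ri{i-1}}(\sigma)$.

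Part (1) follows from the same decomposition viewed through the conditional marginal of $X_{\ri{i-1}}$. The requirement that $\tau$ avoids $A_{\lambda_j}$ is inherited verbatim from the hypothesis on $\ri{i}$; for the quantitative part, the factorization above makes the conditional distribution of $X_{\ri{i-1}}$ (given $\pi,\tau$ and avoidance of $\earound{i}{j-1}$) proportional to $\nu_{\ri{i-1}}(\sigma)\cdot\Pr[\text{interior avoided}\mid\sigma,\pi]$, which the good ring assumption pins to within $O(\varepsilon_0)$ of a $\sigma$-independent constant for a $\nu_{\ri{i}}$-typical $\pi$. Averaging over $\pi$ with weights from the conditional distribution of $X_{\ri{i}}$ and reapplying the well-distributedness of $\ri{i}$ used in part (3) then delivers $\psat{i-1}{C}{j}{\tau}/\psat{i-1}{\Sigma_{\ri{i-1}}}{j}{\tau}>\nu_{\ri{i-1}}(C)/2$ whenever $\nu_{\ri{i-1}}(C)>1/4$, provided $\varepsilon_0$ is a sufficiently small absolute constant. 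The main obstacle throughout will be the combinatorial bookkeeping of which bad events depend on which rings---particularly tracking how $A_{\lambda_i}$ sits inside the interior part---and verifying that the $O(\varepsilon_0)$ slack accumulated through the decomposition still produces the clean constants $3$ and $\tfrac{1}{2}$ stated in the lemma.
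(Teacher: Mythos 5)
Your part (3) is correct, and the contrapositive via $\bar S=\Sigma_{\ri{i}}\setminus S$ is a touch cleaner than the paper's sorting argument. The structural backbone for parts (1) and (2) is also right: conditioning on $X_{\ri{i}}=\pi$ factorizes $\earound{i}{j-1}$ into a local piece $\earound{i}{i}\setminus\earound{i+1}{i+1}$ (events supported on $\ri{[i-1,i]}$, hence determined by $\sigma$ and $\pi$) and an exterior piece $\earound{i+1}{j-1}$ (events supported on $\ri{[i,\infty]}$), which is exactly the paper's identity $\psat{i-1}{\sigma}{j}{\tau}=\psat{i}{K_\sigma}{j}{\tau}$, where $K_\sigma\subseteq\Sigma_{\ri{i}}$ is the set of $\pi$ such that $\sigma\wedge\pi$ avoids the local piece.

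The genuine gap is in how you propose to use the well-distributedness of $\ri{i}$ in parts (1) and (2), and it costs a hard constant factor of roughly $4$, not the ``$O(\varepsilon_0)$ slack'' you hope for. Routing through part (3) replaces the set-level hypothesis with the pointwise bound $\psat{i}{\pi}{j}{\tau}\geq\frac18\psat{i}{\Sigma_{\ri{i}}}{j}{\tau}$ on a mass-$\geq\frac34$ set; combining this with the good-ring bound on the local piece delivers only $\psat{i-1}{\Sigma_{\ri{i-1}}}{j}{\tau}\geq\frac18\left(\frac34-O(\varepsilon_0)\right)\psat{i}{\Sigma_{\ri{i}}}{j}{\tau}$, hence a part-(2) ratio at best about $\frac{32}{3}\approx 10.7$, not $3$. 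The paper instead applies the hypothesis \emph{directly} to the set $K_\sigma$: Markov on the good-ring quantity puts $\nu_{\ri{i}}(K_\sigma)\geq\frac34$ for all $\sigma$ outside a set of $\nu_{\ri{i-1}}$-mass at most $4\varepsilon_0$, and for each such $\sigma$ the well-distributedness of $\ri{i}$ gives $\psat{i}{K_\sigma}{j}{\tau}>\frac{\nu_{\ri{i}}(K_\sigma)}{2}\psat{i}{\Sigma_{\ri{i}}}{j}{\tau}\geq\frac38\psat{i}{\Sigma_{\ri{i}}}{j}{\tau}$, a factor $4$ tighter than what part (3) yields. Part (1) uses the same trick at threshold $\nu_{\ri{i}}(K_\sigma)\geq\frac89$, extracting a $\frac49$ factor that is precisely what is needed to clear the well-distributedness bar $\frac{\nu(C)}{2}$ as $\nu(C)\to\frac14$; the pointwise $\frac18$ from part (3) would not. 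The constants $3$, $\frac34$, $\frac18$ are then consumed as $3^{\pm 4}$ and $(1/8)^4$ in \Cref{lem:initialization,lem:iteration}, so a proof achieving a looser constant would not substitute directly without re-tuning downstream arithmetic.
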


\begin{proof}
    We prove these properties one by one.
\begin{enumerate}
    \item Note that for any $v\in \earound{i}{i} \setminus \earound{i+1}{i+1}$, we have $\vbl(v)\subseteq \ri{i-1}\cup \ri{i}$. 
    For any $\sigma\in \Sigma_{\ri{i-1}}$, define 
    \[
    K_{\sigma}\triangleq \left\{\rho \in \Sigma_{\ri{i}} \mid \sigma \land \rho \text{ avoids all bad events $A_v$ for  $v\in \earound{i}{i} \setminus \earound{i+1}{i+1}$}\right\}.\]
    For any $C\subseteq \Sigma_{\ri{i-1}}$ with $\nu_{\ri{i-1}}(C)>\frac{1}{4}$, we define $C'\triangleq \{\sigma\in C \mid \nu_{\ri{i}}(K_{\sigma}) >1-\frac{1}{9}\} $. 
    Then, for any $\sigma'\in C\setminus C'$, and $\vec{X}$ drawn from product distribution ${\nu}$, we have 
    \[ \Pr\left( \text{$\vec{X}$ avoids all bad events $A_v$ for $v\in  \earound{i}{i} \setminus \earound{i+1}{i+1}$} \mid X_{\ri{i-1}}=\sigma'\right)\leq \frac{8}{9}.\]
    Since $\ri{i}$ is a good ring, $\bigcup_{v\in \earound{i}{i} \setminus \earound{i+1}{i+1}} A_v$  occurs with probability at most $\varepsilon_0$. 
    We have $\nu_{\ri{i-1}}(C\setminus C')\le 9\varepsilon_0$, which means $\nu_{\ri{i-1}}(C') \geq \nu_{\ri{i-1}}(C)-9\varepsilon_0$.
    Since $\ri{i}$ is well-distributed based on $(j,\tau)$, for any $\sigma\in \Sigma_{\ri{i-1}}$ with $\nu_{\ri{i}}(K_\sigma)\ge \frac{8}{9}$, we have
    \[\psat{i}{K_{\sigma}}{j}{\tau} = \frac{\psat{i}{K_{\sigma}}{j}{\tau}}{\psat{i}{\Sigma_{\ri{i}}}{j}{\tau}} \cdot \psat{i}{\Sigma_{\ri{i}}}{j}{\tau} > \frac{\nu_{\ri{i}}(K_{\sigma})}{2}\cdot \psat{i}{\Sigma_{\ri{i}}}{j}{\tau} \geq  \frac{4}{9} \cdot  \psat{i}{\Sigma_{\ri{i}}}{j}{\tau}.\]
    Thus, for any $C\subseteq \Sigma_{\ri{i-1}}$ with $\nu_{\ri{i-1}}(C)>\frac{1}{4}$, we have 
    \[\psat{i-1}{C}{j}{\tau} \geq \sum_{\sigma\in C'}\nu_{\ri{i-1}}(\sigma)\psat{i}{K_\sigma}{j}{\tau}  > \frac{4}{9}\left(\nu_{\ri{i-1}}(S)-9\varepsilon_0\right)\psat{i}{\Sigma_{\ri{i}}}{j}{\tau}.\]
    Note that $\psat{i-1}{\Sigma_{R_{i-1}}\setminus C}{j}{\tau}\le (1-\nu_{\ri{i-1}}(C))\cdot \psat{i}{\Sigma_{\ri{i}}}{j}{\tau}=(1-\nu_{\ri{i-1}}(C))\cdot \psat{i}{\Sigma_{\ri{i}}}{j}{\tau}$, hence
    \[\frac{\psat{i-1}{C}{j}{\tau}}{\psat{i-1}{\Sigma_{\ri{i-1}}\setminus C}{j}{\tau}} > \frac{4}{9}\cdot \frac{\nu_{\ri{i-1}}(C)-9\varepsilon_0}{1-\nu_{\ri{i-1}}(C)} \geq \frac{\nu_{\ri{i-1}}(C)/2}{1-\nu_{\ri{i-1}}(C) /2}. \]
    The last inequality holds for $\nu_{\ri{i-1}}(C) \geq \frac{1}{4}$ and sufficiently small $\varepsilon_0$. Thus, we obtain that
    $$\frac{\psat{i-1}{C}{j}{\tau}}{\psat{i-1}{*}{j}{\tau}}=\frac{\psat{i-1}{C}{j}{\tau}}{\psat{i-1}{\Sigma_{\ri{i-1}}\setminus C}{j}{\tau}+\psat{i-1}{C}{j}{\tau}}>\frac{\nu_{\ri{i-1}}(C)}{2}.$$
    
    \item We define $C''\triangleq \{\sigma\in \Sigma_{\ri{i-1}}\mid \nu_{\ri{i}}(K_\sigma)\geq 1-\frac{1}{4}\}$. 
    By the same argument as before, we have $\nu_{\ri{i-1}}(C'')\ge 1-4\varepsilon_0$ and 
    \[\psat{i-1}{\Sigma_{\ri{i-1}}}{j}{\tau} \geq \psat{i-1}{C''}{j}{\tau} \geq \frac{3}{8}\left(1-4\varepsilon_0\right)\cdot \psat{i}{\Sigma_{\ri{i}}}{j}{\tau}.\]
    Since $\psat{i-1}{\sigma}{j}{\tau}\leq \nu_{\ri{i-1}}(\sigma)\cdot\psat{i}{\Sigma_{\ri{i}}}{j}{\tau}$ for any $\sigma\in \Sigma_{\ri{i-1}}$. For sufficient small $\varepsilon_0$, we have 
    \[\psat{i-1}{\sigma}{j}{\tau}\le 3\nu_{\ri{i-1}}(\sigma)\cdot\psat{i-1}{\Sigma_{\ri{i-1}}}{j}{\tau}.\]

    \item Sort  $\Sigma_{\ri{i}}=\{\sigma_1,\sigma_2,...,\sigma_{|\Sigma_{\ri{i}}|}\}$ in the non-decreasing order according to the value of $\psat{i}{\sigma}{j}{\tau}$. 
    Let $k\in \mathbb{N}$ be the smallest number such that $\sum_{l\le k}\nu_{\ri{i}}(\sigma_l)>\frac{1}{4}$. 
    Then, we have $\frac{\psat{i}{\sigma_k}{j}{\tau}}{ \psat{i}{\Sigma_{\ri{i}}}{j}{\tau}}=\frac{1}{\nu_{\ri{i}}(\sigma_k)} \cdot \frac{\psat{i}{\{\sigma_k\}}{j}{\tau}}{ \psat{i}{\Sigma_{\ri{i}}}{j}{\tau}}\ge \frac{1}{8}$,
    because otherwise $\frac{\psat{i}{\{\sigma_1,\ldots,\sigma_k\}}{j}{\tau}}{\psat{i}{\Sigma_{\ri{i}}}{j}{\tau}}<\sum_{l\le k}\frac{1}{8}\nu_{R_i}(\sigma_l)\leq \frac{1}{8}$, contradicting that $\ri{i}$ is well-distributed based on $(j,\tau)$. 
    Therefore, we have $\{\sigma_k,\sigma_{k+1},...,\sigma_{|\Sigma_{\ri{i}}|}\} \subseteq S$ and 
    \[
    \nu_{\ri{i}}(S)\geq \nu_{\ri{i}}(\{\sigma_k, \sigma_{k+1}, ..., \sigma_{|\Sigma_{\ri{i}}|}\}) \geq \frac{3}{4}.
    \]
\end{enumerate}
The symmetric case that $\ri{j}$ is well-distributed based on $(i,\sigma)$ follows by symmetry.
\end{proof}

The next lemma guarantees the existence of a well-distributed ring in the $J^*$ produced by \Cref{alg:augmenting}.
\begin{lemma}\label{lem:construct-well-distributed}
In the $J^*$ produced by \Cref{alg:augmenting},
    for $1\leq i,j\leq\ell$ with $|i-j|>D$, if $\ri{k}$ is a good  ring for all $\min(i,j) \leq k\leq \max(i,j)$, 
    then $\ri{i}$ is well-distributed based on $(j,\tau)$ for every $\tau\in \Sigma_{\ri{j}}$ avoiding $A_{\lambda_j}$.
\end{lemma}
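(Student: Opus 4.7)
The plan is to prove this by a clean induction on $|i-j|$, using the propagation furnished by Lemma~\ref{lem:well-distributed-property} as the inductive engine. By the symmetry between the two halves of that lemma, it suffices to treat the case $1\le i<j\le\ell$; the case $i>j$ is handled identically using the dual statement and propagation in the opposite direction.

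The base case will be $i=j-1$. Here $\earound{j}{j-1}$ is indexed by the empty ring range $\ring{J^*}{[j,j-1]}{\Lambda}=\emptyset$, so $\psat{j-1}{\sigma}{j}{\tau}=1$ identically in $\sigma,\tau$. For any $C\subseteq\Sigma_{\ri{j-1}}$ with $\nu_{\ri{j-1}}(C)>\tfrac{1}{4}$, the ratio $\psat{j-1}{C}{j}{\tau}/\psat{j-1}{\Sigma_{\ri{j-1}}}{j}{\tau}$ collapses to $\nu_{\ri{j-1}}(C)>\nu_{\ri{j-1}}(C)/2$. Together with the hypothesis that $\tau$ avoids $A_{\lambda_j}$, this shows that $\ri{j-1}$ is well-distributed based on $(j,\tau)$.

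The inductive step will repeatedly invoke the first conclusion of Lemma~\ref{lem:well-distributed-property}: whenever $\ri{i'}$ is well-distributed based on $(j,\tau)$ and $\ri{i'}$ is a good ring, the same holds for $\ri{i'-1}$. Iterating from $i'=j-1$ downward to $i'=i+1$ is legitimate because every ring $\ri{i'}$ in the range $\{i+1,\dots,j-1\}$ lies inside $[\min(i,j),\max(i,j)]$ and is therefore good by hypothesis, while the index constraint $1<i'\le j-1$ required by Lemma~\ref{lem:well-distributed-property} is met since $j-i>1$. After $j-1-i$ such steps one concludes that $\ri{i}$ is well-distributed based on $(j,\tau)$, as claimed.

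All the genuine analytic content sits inside Lemma~\ref{lem:well-distributed-property}, so the induction itself is essentially bookkeeping and I do not expect any serious obstacle. I note in passing that the condition $|i-j|>D$ in the statement is not actually consumed by this argument; it is retained because the lemma is used downstream in \Cref{lem:initialization} and \Cref{lem:iteration}, where the termination condition of \Cref{alg:augmenting}---guaranteeing $\mathop{\mathbb{E}}_{\tau\sim\nu_{\vbl(\lambda_j)}}[\psat{i}{\sigma}{j}{\tau}]\ge \delta/(2\ell)$ for every $\sigma$ avoiding $A_{\lambda_i}$ precisely when $|i-j|>D$---is what then supplies the quantitative partial correlation-decay bounds on top of the qualitative well-distributedness furnished here.
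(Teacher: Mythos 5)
Your proof is correct, and in fact it is cleaner and more elementary than the paper's.

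The paper proves this lemma \emph{by contradiction} and leans on the termination condition of \Cref{alg:augmenting}. It supposes $\ri{i}$ is not well-distributed, propagates non-well-distributedness upward to every $\ri{k}$ for $i\le k<j$ via the contrapositive of item 1 of \Cref{lem:well-distributed-property}, and then uses the defining inequality of non-well-distributedness to obtain a geometric decay $\psat{k}{\Sigma_{\ri{k}}}{j}{\tau}\le \tfrac{6}{7}\cdot\psat{k+1}{\Sigma_{\ri{k+1}}}{j}{\tau}$ anchored at $\psat{j-1}{\Sigma_{\ri{j-1}}}{j}{\tau}=1$. Running the chain over at least $D$ steps gives $\psat{i}{\Sigma_{\ri{i}}}{j}{\tau}\le(6/7)^D<\delta/(2\ell)$, contradicting the termination condition of the \textbf{repeat} loop, which enforces $\mathbb{E}_{\sigma}\bigl[\psat{i}{\sigma}{j}{\tau}\bigr]\ge\delta/(2\ell)$ exactly for pairs $|i-j|>D$. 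This is where the hypothesis $|i-j|>D$ is consumed.

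Your induction sidesteps all of that. The observation that $\psat{j-1}{\,\cdot\,}{j}{\tau}\equiv 1$ (because $\earound{j}{j-1}=\emptyset$) makes $\ri{j-1}$ trivially well-distributed based on $(j,\tau)$ for any $\tau$ avoiding $A_{\lambda_j}$, and then item 1 of \Cref{lem:well-distributed-property} carries this downward across the required span of good rings; positivity of the normalizing quantity $\psat{\,\cdot\,}{\Sigma_{\ri{\cdot}}}{j}{\tau}$ is preserved at each step (this is implicit in item 2 of \Cref{lem:well-distributed-property}, which gives a quantitative lower bound on $\psat{i-1}{\Sigma_{\ri{i-1}}}{j}{\tau}$ in terms of $\psat{i}{\Sigma_{\ri{i}}}{j}{\tau}$), so the ratio in the definition is always well-defined. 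As you point out, this argument never touches $|i-j|>D$ and hence gives a slightly stronger statement. In fact, it exposes a small redundancy in the paper's own argument: once non-well-distributedness has been propagated to $k=j-1$, it already conflicts with $\psat{j-1}{\Sigma_{\ri{j-1}}}{j}{\tau}=1$, so the subsequent decay chain and the invocation of the termination condition are unnecessary.

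One correction to your closing remark: the termination condition is not invoked \emph{downstream} in \Cref{lem:initialization} or \Cref{lem:iteration}; it is invoked by the paper inside its own proof of this very lemma. The downstream lemmas merely apply \Cref{lem:construct-well-distributed} to index pairs that happen to satisfy $|i-j|>D$ (e.g.\ $j-i=3D$ in \Cref{lem:initialization}), and then build the quantitative decay out of items 2 and 3 of \Cref{lem:well-distributed-property} rather than out of the termination condition directly.
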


\begin{proof}
    We prove for the case that $i<j$. The case with $i>j$ follows by symmetry.

    By contradiction, assume that there is some $\tau\in \Sigma_{\ri{j}}$  avoiding $A_{\lambda_j}$ such that  $\ri{i}$ is not well-distributed based on $(j,\tau)$.
    %
    By \Cref{lem:well-distributed-property}, $\ri{k}$ is not well-distributed based on $(j,\tau)$ for all $i\leq k < j$. 
    Thus, for all such $k$, there exists a $C\subseteq \Sigma_{\ri{k}}$ such that $\nu_{\ri{k}}(C)>\frac{1}{4}$ and $\frac{\psat{k}{C}{j}{\tau)}}{ \psat{k}{\Sigma_{\ri{k}}}{j}{\tau
    }}\leq \frac{\nu_{\ri{k}}(C)}{2}.$
    When $k=j-1$, we have $\psat{k}{\Sigma_{\ri{k}}}{j}{\tau}=1$.
    For $i\leq k<j-1$, it can be verified by the law of total probability that $\psat{k}{\Sigma_{\ri{k}}\setminus C}{j}{\tau}\le (1-\nu_{\ri{k}}(C))\cdot \psat{k+1}{\Sigma_{\ri{k+1}}}{j}{\tau}$, hence 
    \[\psat{k}{\Sigma_{\ri{k}}}{j}{\tau} =\frac{\psat{k}{\Sigma_{\ri{k}}\setminus C}{j}{\tau}}{1-\frac{\psat{k}{C}{j}{\tau}}{ \psat{k}{\Sigma_{\ri{k}}}{j}{\tau}}} \le \frac{(1-\nu_{\ri{k}}(C))\cdot \psat{k+1}{\Sigma_{\ri{k+1}}}{j}{\tau}}{1-{\nu_{\ri{k}}(C)}/{2}} \le \frac{6}{7}\cdot \psat{k+1}{\Sigma_{\ri{k+1}}}{j}{\tau}, \]
    which implies that  $\psat{i}{\Sigma_{\ri{i}}}{j}{\tau}\leq \left({\frac{6}{7}}\right)^{D}$. 
    Thus, we have $\psat{i}{\Sigma_{\ri{i}}}{j}{\tau}<\frac{\delta}{2\ell}$ for all sufficiently small constant $\varepsilon_0$, contradicting the termination condition of the \textbf{repeat} loop in \Cref{alg:augmenting}.
\end{proof}


{
Now we are ready to prove \Cref{lem:initialization,lem:iteration} for partial correlation decay between good rings.

\begin{proof}[Proof of \Cref{lem:initialization}]
    Let $j\triangleq i+3D$ and $k\triangleq \lfloor\frac{i+j}{2}\rfloor$.
    By \Cref{lem:construct-well-distributed}, $\ri{k-1}$, $\ri{k}$, $\ri{k+1}$ are well-distributed based on both $(i,\sigma)$ and $(j,\tau)$ for any $\sigma\in \Sigma_{\ri{i}}$ avoiding $A_{\lambda_i}$  and any $\tau\in \Sigma_{\ri{j}}$ avoiding $A_{\lambda_j}$. 
    %
    %
    According to \Cref{lem:well-distributed-property}, for any $\sigma\in \Sigma_{\ri{i}}$ avoiding bad events in $\ein{i}{i}$  and any $\tau\in \Sigma_{\ri{j}}$ avoiding bad events in $\ein{j}{j}$, 
    \begin{align*}
        \psat{i}{\sigma}{j}{\tau}&=\sum_{\rho\in \Sigma_{\ri{k}} }\nu_{\ri{k}}(\rho) \cdot \prod_{v\in \ein{k}{k}}I[\rho\text{ avoids }A_v] \cdot \psat{i}{\sigma}{k}{\rho}\cdot \psat{k}{\rho}{j}{\tau}\\
        &>{\left(\frac{1}{2}-\varepsilon_0\right) }\cdot \left(\frac{1}{8}\right)^{2}\cdot \psat{i}{\sigma}{k}{\Sigma_{\ri{k}}}\cdot \psat{k}{\Sigma_{\ri{k}}}{j}{\tau};\\
    %
\text{and }\quad        \psat{i}{\sigma}{j}{\tau}&=\sum_{\rho\in \Sigma_{\ri{k}}} \nu_{\ri{k}}(\rho) \cdot \prod_{v\in \ein{k}{k}}I[\rho\text{ avoids }A_v] \cdot  \psat{i}{\sigma}{k}{\rho}\cdot \psat{k}{\rho}{j}{\tau} \\
        &\le 3^2\cdot \psat{i}{\sigma}{k}{\Sigma_{\ri{k}}}\cdot \psat{k}{\Sigma_{\ri{k}}}{j}{\tau},
    \end{align*}
    Thus, for any $\sigma_1,\sigma_2\in \Sigma_{\ri{i}}$ avoiding bad events in $\ein{i}{i}$ and any $\tau_1,\tau_2\in \Sigma_{\ri{j}}$ avoiding bad events in $\ein{j}{j}$, 
    \[\frac{\psat{i}{\sigma_1}{j}{\tau_1}\cdot \psat{i}{\sigma_2}{j}{\tau_2}}{\psat{i}{\sigma_2}{j}{\tau_1}\cdot \psat{i}{\sigma_1}{j}{\tau_2}}\le \frac{3^4}{\left(\left(\frac{1}{2} -\varepsilon_0 \right)\cdot \left(\frac{1}{8}\right)^2\right)^2}\le  1+\frac{1}{\varepsilon_0}, \]
    for all sufficiently small constant $\varepsilon_0\le 7\times10^{-7}$.
\end{proof}

\begin{proof}[Proof of \Cref{lem:iteration}]
    Let $j'\triangleq j+2D$. 
    %
    %
    %
    For any $\sigma_1,\sigma_2\in \Sigma_{\ri{i}}$ avoiding bad events in $\ein{i}{i}$  and any $\tau_1,\tau_2\in \Sigma_{\ri{j'}}$ avoiding bad events in $\ein{j'}{j'}$, 
    we consider the following two cases:
    \begin{itemize}
        \item Case I: $P(\sigma_1,\tau_1)\cdot P(\sigma_2,\tau_2)=0$. Then, $P(\sigma_1,\tau_1)\cdot P(\sigma_2,\tau_2) \leq (1+\epsilon) \cdot P(\sigma_1,\tau_2)\cdot P(\sigma_2,\tau_1)$.   
        
        \item Case II: $P(\sigma_1,\tau_1)\cdot P(\sigma_2,\tau_2)>0$. For $\rho\in \Sigma_{\ri{j}}$, define $W(\rho)\triangleq \nu_{R_j}(\rho)\cdot \prod_{v\in \ein{j}{j}} I[\rho \text{ avoids } A_v]$.  Let $S_j\triangleq\{\rho\in \Sigma_{R_j}\mid \text{$\rho$ avoids bad events in $\ein{j}{j}$}\}$. 
        Then,
        \[\begin{aligned}
            P(\sigma_1,\tau_1)\cdot P(\sigma_2,\tau_2)&=\sum_{\rho_1,\rho_2\in \Sigma_{R_j}}W(\rho_1)\cdot W(\rho_2)\cdot P(\sigma_1,\rho_1)\cdot P(\sigma_2,\rho_2) \cdot P(\rho_1,\tau_1)\cdot P(\rho_2,\tau_2)  \\
            &= \sum_{\rho_1,\rho_2\in S_j}W(\rho_1)\cdot W(\rho_2)\cdot P(\sigma_1,\rho_1)\cdot P(\sigma_2,\rho_2) \cdot P(\rho_1,\tau_1)\cdot P(\rho_2,\tau_2)
        \end{aligned}\]
        and 
        \[P(\sigma_1,\tau_2)\cdot P(\sigma_2,\tau_1)=\sum_{\rho_1,\rho_2\in S_j}W(\rho_1)\cdot W(\rho_2)\cdot P(\sigma_1,\rho_2)\cdot P(\sigma_2,\rho_1) \cdot P(\rho_1,\tau_1)\cdot P(\rho_2,\tau_2).\]
        Recall that $\ri{i}$ and $\ri{j}$ are partial \co{$\epsilon$}. 
        According to~\Cref{def:partial-correlated}, for any $\sigma_1,\sigma_2\in \Sigma_{\ri{i}}$ avoiding bad events in $\ein{i}{i}$ and $\rho_1,\rho_2\in \Sigma_{\ri{j}}$ avoiding bad events in $\ein{j}{j}$, it holds that  
        \[P(\sigma_1,\rho_1)\cdot P(\sigma_2,\rho_2) \leq (1+\epsilon)\cdot P(\sigma_1,\rho_2)\cdot P(\sigma_2,\rho_1).\] 
        Then, we have $P(\sigma_1,\tau_2)\cdot P(\sigma_2,\tau_1)>0$. 
        Let \[S_{\sigma_1,\sigma_2}\triangleq \{(\rho_1,\rho_2)\in S_j\times S_j\mid P(\sigma_1,\rho_1)\cdot P(\sigma_2,\rho_2) \leq \cdot P(\sigma_1,\rho_2)\cdot P(\sigma_2,\rho_1).\] 
        Thus, for any $\rho_1,\rho_2\in S_j$,  it holds that $(\rho_1,\rho_2)\in S_{\sigma_1,\sigma_2}$ or $(\rho_2,\rho_1)\in S_{\sigma_1,\sigma_2}$. 
        We define 
        \[P_1\triangleq \sum_{(\rho_1,\rho_2)\in S_{\sigma_1,\sigma_2}}W(\rho_1)\cdot W(\rho_2)\cdot P(\sigma_1,\rho_1)\cdot P(\sigma_2,\rho_2) \cdot P(\rho_1,\tau_1)\cdot P(\rho_2,\tau_2),\]
        \[P_2\triangleq \sum_{(\rho_1,\rho_2)\notin S_{\sigma_1,\sigma_2}}W(\rho_1)\cdot W(\rho_2)\cdot P(\sigma_1,\rho_1)\cdot P(\sigma_2,\rho_2) \cdot P(\rho_1,\tau_1)\cdot P(\rho_2,\tau_2),\]
        \[Q_1\triangleq \sum_{(\rho_1,\rho_2)\in S_{\sigma_1,\sigma_2}}W(\rho_1)\cdot W(\rho_2)\cdot P(\sigma_1,\rho_2)\cdot P(\sigma_2,\rho_1) \cdot P(\rho_1,\tau_1)\cdot P(\rho_2,\tau_2),\]
        \[Q_2\triangleq \sum_{(\rho_1,\rho_2)\notin S_{\sigma_1,\sigma_2}}W(\rho_1)\cdot W(\rho_2)\cdot P(\sigma_1,\rho_2)\cdot P(\sigma_2,\rho_1) \cdot P(\rho_1,\tau_1)\cdot P(\rho_2,\tau_2).\]
        According to \Cref{lem:construct-well-distributed} and \Cref{lem:well-distributed-property}, $\ri{j-1}$, $\ri{j}$, $\ri{j+1}$ are well-distributed based on $(i,\sigma_1)$, $(i,\sigma_2)$,$(j',\tau_1)$ and $(j',\tau_2)$, and it holds that
        \begin{align*}
 Q_1+Q_2
 &\le 3^4\cdot \psat{i}{\sigma_1}{j}{\Sigma_{\ri{j}}}\cdot \psat{i}{\sigma_2}{j}{\Sigma_{\ri{j}}}\cdot  \psat{j}{\Sigma_{\ri{j}}}{j'}{\tau_1}\cdot \psat{j}{\Sigma_{\ri{j}}}{j'}{\tau_2},\\
Q_1
&\ge \frac{1}{2}\cdot \left(\frac{1}{2} -\varepsilon_0 \right)^2\left(\frac{1}{8}\right)^4\cdot \psat{i}{\sigma_1}{j}{\Sigma_{\ri{j}}}\cdot \psat{i}{\sigma_2}{j}{\Sigma_{\ri{j}}}\cdot  \psat{j}{\Sigma_{\ri{j}}}{j'}{\tau_1}\cdot \psat{j}{\Sigma_{\ri{j}}}{j'}{\tau_2}.
        \end{align*}
        Then, we can bound
        \[
        \begin{aligned}
            \frac{P(\sigma_1,\tau_1)\cdot P(\sigma_2,\tau_2)}{P(\sigma_1,\tau_2)\cdot P(\sigma_2,\tau_1)}&=\frac{\sum_{\rho_1,\rho_2\in S_j}W(\rho_1)\cdot W(\rho_2)\cdot P(\sigma_1,\rho_1)\cdot P(\sigma_2,\rho_2) \cdot P(\rho_1,\tau_1)\cdot P(\rho_2,\tau_2)}{\sum_{\rho_1,\rho_2\in S_j}W(\rho_1)\cdot W(\rho_2)\cdot P(\sigma_1,\rho_2)\cdot P(\sigma_2,\rho_1) \cdot P(\rho_1,\tau_1)\cdot P(\rho_2,\tau_2)} \\
            &=\frac{P_1+P_2}{Q_1+Q_2} \leq \frac{Q_1+(1+\epsilon) \cdot Q_2}{Q_1+Q_2} \leq 1+\epsilon\cdot \left(1-\frac{Q_1}{Q_1+Q_2}\right)  \\
            &\le 1+\epsilon\cdot \left(1-{\frac{1}{2}\cdot \left(\frac{1}{2} -\varepsilon_0 \right)^2\left(\frac{1}{8}\right)^4}\cdot {3^{-4}}\right) \leq 1+(1-\varepsilon_0)\cdot \epsilon,
        \end{aligned}
        \]
        for all sufficiently small constant $\varepsilon_0\le 2\times 10^{-5}$.
    \end{itemize}
Altogether, this proves that $R_i$ and $R_{j'}$ are partial $(1-\varepsilon_0)\cdot \epsilon$ correlated. 
\end{proof}
}

\subsection{Omitted proofs of the partial correlation decay}\label{sec:partial-cd-proofs}


\begin{proof}[Proof of \Cref{lem:maintenance}]
    %
    %
    %
    We first prove that $\ring{I}{i}{\Lambda}$ and $\ring{I}{j'}{\Lambda}$ are partial $\epsilon$-correlated. 
    For $\sigma_1,\sigma_2\in \Sigma_{\ring{I}{i}{\Lambda}}$ avoiding bad events in $\eventin{I}{i}{i}{\Lambda}$  and $\tau_1,\tau_2\in \Sigma_{\ring{I}{j'}{\Lambda}}$ avoiding bad events in $\eventin{I}{j'}{j'}{\Lambda}$, consider the following two cases:

    \begin{itemize}
        \item Case I: $P(\sigma_1,\tau_1)\cdot P(\sigma_2,\tau_2)=0$. Then, $P(\sigma_1,\tau_1)\cdot P(\sigma_2,\tau_2) \leq (1+\epsilon) \cdot P(\sigma_1,\tau_2)\cdot P(\sigma_2,\tau_1)$.   
        
        \item Case II: $P(\sigma_1,\tau_1)\cdot P(\sigma_2,\tau_2)>0$. 
        For $\rho\in \Sigma_{\ring{I}{j}{\Lambda}}$, let $W(\rho)\triangleq \nu_{\ring{I}{j}{\Lambda}}(\rho)\cdot \prod_{v\in \eventin{I}{j}{j}{\Lambda}}I [\rho\text{ avoids} A_v]$. 
        Let $S_j\triangleq\{\rho\in \Sigma_{\ring{I}{j}{\Lambda}} \mid \text{$\rho$ avoids bad events in $\eventin{I}{j}{j}{\Lambda}$}\}$.
        Then,
        \[\begin{aligned}
            P(\sigma_1,\tau_1)\cdot P(\sigma_2,\tau_2)&=\sum_{\rho_1,\rho_2\in \Sigma_{R_j}}W(\rho_1)\cdot W(\rho_2)\cdot P(\sigma_1,\rho_1)\cdot P(\sigma_2,\rho_2) \cdot P(\rho_1,\tau_1)\cdot P(\rho_2,\tau_2)  \\
            &= \sum_{\rho_1,\rho_2\in S_j}W(\rho_1)\cdot W(\rho_2)\cdot P(\sigma_1,\rho_1)\cdot P(\sigma_2,\rho_2) \cdot P(\rho_1,\tau_1)\cdot P(\rho_2,\tau_2)
        \end{aligned}\]
        and 
        \[P(\sigma_1,\tau_2)\cdot P(\sigma_2,\tau_1)=\sum_{\rho_1,\rho_2\in S_j}W(\rho_1)\cdot W(\rho_2)\cdot P(\sigma_1,\rho_2)\cdot P(\sigma_2,\rho_1) \cdot P(\rho_1,\tau_1)\cdot P(\rho_2,\tau_2).\]
        Recall that $\ring{I}{i}{\Lambda}$ and $\ring{I}{j}{\Lambda}$ are partial \co{$\epsilon$}. 
        According to~\Cref{def:partial-correlated}, for any $\sigma_1,\sigma_2\in \Sigma_{\ring{I}{i}{\Lambda}}$ avoiding bad events in $\eventin{I}{i}{i}{\Lambda}$ and $\rho_1,\rho_2\in \Sigma_{\ring{I}{j}{\Lambda}}$ avoiding bad events in $\eventin{I}{j}{j}{\Lambda}$, it holds that  
        \[P(\sigma_1,\rho_1)\cdot P(\sigma_2,\rho_2) \leq (1+\epsilon)\cdot P(\sigma_1,\rho_2)\cdot P(\sigma_2,\rho_1).\] 
        Then, we have $P(\sigma_1,\tau_2)\cdot P(\sigma_2,\tau_1)>0$ and

        \[
        \begin{aligned}
            \frac{P(\sigma_1,\tau_1)\cdot P(\sigma_2,\tau_2)}{P(\sigma_1,\tau_2)\cdot P(\sigma_2,\tau_1)}&=\frac{\sum_{\rho_1,\rho_2\in S_j}W(\rho_1)\cdot W(\rho_2)\cdot P(\sigma_1,\rho_1)\cdot P(\sigma_2,\rho_2) \cdot P(\rho_1,\tau_1)\cdot P(\rho_2,\tau_2)}{\sum_{\rho_1,\rho_2\in S_j}W(\rho_1)\cdot W(\rho_2)\cdot P(\sigma_1,\rho_2)\cdot P(\sigma_2,\rho_1) \cdot P(\rho_1,\tau_1)\cdot P(\rho_2,\tau_2)}, \\ &\leq 1+\epsilon.
        \end{aligned}\]
    \end{itemize}
    This proves that $\ring{I}{i}{\Lambda}$ and $\ring{I}{j'}{\Lambda}$ are partial $\epsilon$-correlated. By symmetric argument, it also follows that  $\ring{I}{i'}{\Lambda}$ and $\ring{I}{j}{\Lambda}$ are partial $\epsilon$-correlated. Now treating $i$ and $j'$ as our new ``$i$'' and ``$j$'', applying the old result with $\ring{I}{i'}{\Lambda}$ and $\ring{I}{j}{\Lambda}$ on this new instance, gives us that the partial $\epsilon$-correlation between $\ring{I}{i'}{\Lambda}$ and $\ring{I}{j'}{\Lambda}$ in the original instance $I$.
%
    %
\end{proof}

\begin{proof}[Proof of \Cref{lem:partial-cd-to-cd}]
    For $1\leq i<j$, if $\ring{}{i}{\Lambda}$ and $\ring{}{j}{\Lambda}$ are partial $\epsilon$-correlated, 
    then $\ring{}{[0,i-1]}{\Lambda}$ and $\ring{}{[j+1,\infty]}{\Lambda}$ are $\epsilon$-correlated. 

    Let $S\triangleq \ring{}{[0,i-1]}{\Lambda}$, $T\triangleq \ring{}{[j+1,\infty]}{\Lambda}$.
    For any $\sigma\in \Sigma_S$, $\pi\in \Sigma_{\ring{}{i}{\Lambda}}$, define \[f(\sigma,\pi)\triangleq \nu_{\ring{}{i}{\Lambda}}(\pi)\cdot \prod_{v\in \eventin{I}{i-1}{i}{\Lambda} \cap \eventaround{I}{i}{i}{\Lambda}}I[\sigma \wedge \pi \text{ avoids } A_v].\]
    For any $\tau \in \Sigma_T$, $\rho \in \Sigma_{\ring{}{j}{\Lambda}}$, define \[g(\tau,\rho)\triangleq \nu_{\ring{}{i}{\Lambda}}(\rho)\cdot \prod_{v\in \eventin{I}{j}{j+1}{\Lambda} \cap \eventaround{I}{j}{j}{\Lambda} }I[\tau\wedge \rho \text{ avoids } A_v].\]
    %
    %
    %
    Let $\Sigma_i$ denote the assignments in $\Sigma_{\ring{I}{i}{\Lambda}}$ that avoid all bad events in $\eventin{I}{i}{i}{\Lambda}$. Let $\Sigma_j$ denote the assignments in $\Sigma_{\ring{I}{j}{\Lambda}}$ that avoid all bad events in $\eventin{I}{j}{j}{\Lambda}$. 
    Then, for any $\sigma\in \Sigma_S$, $\tau\in \Sigma_T$, we have 
    \[\nu(\Omega^{\sigma\wedge \tau}) = \nu_S(\sigma) \cdot \nu_T(\tau)\cdot \sum_{\pi\in \Sigma_i, \rho\in \Sigma_j} f(\sigma,\pi)\cdot  g(\tau,\rho) \cdot P(\pi,\rho) .\]
    Recall that $\ring{I}{i}{\Lambda}$ and $\ring{I}{j}{\Lambda}$ are partial $\epsilon$-correlated.
    Thus, for any $\pi_1,\pi_2\in \Sigma_{\ring{I}{i}{\Lambda}}$ and $\rho_1,\rho_2\in \Sigma_{\ring{I}{j}{\Lambda}}$, it holds that 
    \[P(\pi_1,\rho_1)\cdot P(\pi_2,\rho_2)\leq (1+\epsilon)\cdot P(\pi_1,\rho_2)\cdot P(\pi_2,\rho_1)\]
    %
    %
    Thus, for any $\sigma_1, \sigma_2\in \Sigma_S$, $\tau_1,\tau_2\in \Sigma_T$, $\nu(\Omega^{\sigma_1\wedge \tau_1}) \cdot \nu(\Omega^{\sigma_2\wedge \tau_2})$ is equal to 
    \[\nu_S(\sigma_1) \cdot \nu_T(\tau_1)\cdot \nu_S(\sigma_2) \cdot \nu_T(\tau_2) \cdot \sum_{\substack{\pi_1, \pi_2\in \Sigma_i\\ \rho_1,\rho_2\in \Sigma_j}} f(\sigma_1,\pi_1) \cdot g(\tau_1,\rho_1) \cdot  f(\sigma_2,\pi_2) \cdot g(\tau_2,\rho_2) \cdot P(\pi_1,\rho_1)  \cdot P(\pi_2,\rho_2), \]
    which is bounded by $(1+\epsilon)$ multiplies 
    \[\nu_S(\sigma_1) \cdot \nu_T(\tau_1)\cdot \nu_S(\sigma_2) \cdot \nu_T(\tau_2) \cdot \sum_{\substack{\pi_1, \pi_2\in \Sigma_i\\ \rho_1,\rho_2\in \Sigma_j}} f(\sigma_1,\pi_1) \cdot g(\tau_1,\rho_1) \cdot  f(\sigma_2,\pi_2) \cdot g(\tau_2,\rho_2) \cdot P(\pi_1,\rho_2)  \cdot P(\pi_2,\rho_1).\]
    This is exactly $\nu(\Omega^{\sigma_1\wedge \tau_1}) \cdot \nu(\Omega^{\sigma_2\wedge \tau_2})\leq (1+\epsilon)\nu(\Omega^{\sigma_1\wedge \tau_2}) \cdot \nu(\Omega^{\sigma_2\wedge \tau_1})$.
    Thus, $S$ and $T$ are $\epsilon$-correlated.
\end{proof}

\section{Analysis of Initialization and Clustering}\label{sec:clustering}

In this section, we first prove the correctness and efficiency of the \emph{Initialization} phase (\Cref{lem:initialization-stage}),
and then we  prove the correctness and efficiency of the \emph{Clustering} phase (\Cref{lem:clustering-correctness}).

\subsection{Analysis of \emph{Initialization} (Proof of \Cref{lem:initialization-stage})}\label{sec:proof-initialization-stage}

It is sufficient to show that for any $\epsilon\in (0,1)$, the event $|\RV{R}|=O\left(\log n \cdot \log \log n \cdot \log \frac{1}{\gamma} \cdot\log \frac{1}{\epsilon}\right)$ holds with probability at least $1-\epsilon$.
This can be proved by the Chernoff bound.




For each $S\in \mathcal{S}$, let $Y_S\in\{0,1\}$ be the random variable that indicates whether  $A_S$ occurs on $\vec{Y}$.
%
Then $|\RV{R}|=\sum_{S\in\mathcal{S}}Y_S=\sum_{k=1}^{c\log n}\left(\sum_{S\in \mathcal{S},\mathcal{C}(S)=k}Y_S\right)$.
For $1\leq k\leq c\cdot \log n$, let $n_k=\left|\{S\in\mathcal{S}|\mathcal{C}(S)=k\}\right|$, which is the number of clusters with color $k$. 
Then, for each $1\leq k\leq c\cdot \log n$, we have 
\[
\mathbf{E}\left[\sum_{S\in \mathcal{S},\mathcal{C}(S)=k}Y_S\right]=\sum_{S\in \mathcal{S},\mathcal{C}(S)=k}\left(1-\Pr[Y_S=0]\right) \leq n_k-n_k\cdot \left(\prod_{S\in \mathcal{S},\mathcal{C}(S)=k}\Pr[Y_S=0]\right)^{\frac{1}{n_k}}.
\]
Since $\mathcal{C}$ is a proper coloring of $\mathcal{S}$,
$\{Y_S\}_{S\in \mathcal{S}, \mathcal{C}(S)=k}$ are mutually independent random variables.
Recall that the distributed LLL instance $I$ is $\gamma$-satisfiable. 
We have $\prod_{S\in \mathcal{S},\mathcal{C}(S)=k}\Pr[Y_S=0]\ge\gamma$, which implies 
\[\mathbf{E}\left[\sum_{S\in \mathcal{S},\mathcal{C}(S)=k}Y_S\right]\le n_k\cdot \left(1-\gamma^{\frac{1}{n_k}}\right)=n_k\cdot \left(1-\mathrm{e}^{-\frac{1}{n_k}\ln\frac{1}{\gamma}}\right)=O\left(\log\frac{1}{\gamma}\right).\]
Suppose that $c_1$ is a sufficiently large constant.
By Chernoff bound, for any $0<\epsilon<1$, we have
\[\Pr\left[\sum_{S\in \mathcal{S},\mathcal{C}(S)=k}Y_S\ge c_1\cdot (\log\log n)\cdot \log\frac{1}{\gamma}\cdot \log \frac{1}{\epsilon}\right]\le\frac{\epsilon}{c\log n}.\]
By the union bound, we have 
\[\Pr\left[\sum_{S\in \mathcal{S} }Y_S\ge (c\log n) \cdot  c_1\cdot (\log\log n)\cdot\log\frac{1}{\gamma}\cdot\log \frac{1}{\epsilon}\right]\le\frac{\epsilon}{c\log n}\cdot c\log n=\epsilon.\]
Thus, for any $0<\epsilon<1$, we have $|\RV{R}|=O\left(\log n\cdot \log \log n \cdot \log \frac{1}{\gamma}\cdot\log  \frac{1}{\epsilon}\right)$ with probability at least $1-\epsilon$.

\subsection{Analysis of \emph{Clustering} (Proof of \Cref{lem:clustering-correctness})}\label{sec:proof-clustering-correctness}

\paragraph{Balls are uniquely identified and far apart.}

According to the definition of $\RV{B}$, 
if for any distinct $u,v\in \RV{R}$ with $r_v,p_v,r_u,p_u\notin \{\perp\}$ we have $\dist_G(B_{r_v}(p_v), B_{r_u}(p_u)) \geq 2(\ell+2)$, then the followings hold:
\begin{enumerate}
    \item for any distinct $u,v\in \RV{R}$, if $p_u,p_v,r_u,r_v\not\in\{\perp\}$ then $B_{r_u}(p_u)\cap B_{r_v}(p_v)=\emptyset$.
    \item $\dist_G(\mathcal{B}_1,\mathcal{B}_2)\geq 2(\ell+2)$ for any distinct $\mathcal{B}_1, \mathcal{B}_2 \in \RV{B}$.
\end{enumerate}

It then remains to show that we indeed have $\dist_G(B_{r_v}(p_v), B_{r_u}(p_u)) \geq 2(\ell+2)$ for any distinct $u,v\in \RV{R}$ with $r_v,p_v,r_u,p_u\notin \{\perp\}$.
We prove this by induction. 
Suppose that the sequence $\{v_1,v_2,...,v_{|\RV{R}|}\}$ is obtained by sorting $\RV{R}$ in ascending order of IDs. 
\Cref{alg:clustering} is applied on the nodes in $\RV{R}$ in this order.

For the induction basis: initially all $v\in \RV{R}$ set $p_v$ and $r_v$ to $\perp$, which satisfies the hypothesis trivially.
%

Now, suppose the induction hypothesis holds before the execution of \Cref{alg:clustering} on node $v_i$ for some $1\leq i\leq n$.
%
%
After  \Cref{alg:clustering} terminates on $v_i$, it holds that, it holds that $\dist_G(B_{r_{v_i}}(p_{v_i}), B_{r_{v_j}}(p_{v_j}))\geq 2(\ell+2)$ for any  $1\leq j\leq i$, if $p_{v_j}, r_{v_j}\notin \{\perp\}$.
Otherwise, the \textbf{while} loop in \Cref{alg:clustering} would not terminate. 
According to  \Cref{alg:clustering}, for $1\leq j<i$, the only possible modification can be made to $p_{v_j}$ and $r_{v_j}$ during the execution of the algorithm at node $v_i$ is to set them to $\perp$.
Thus, by the induction hypothesis, it still holds that for any distinct $1\leq j,k<i$ such that $p_{v_j}, r_{v,j}, p_{v,k}, r_{v_k}\notin \{\perp\}$, we have $\dist_G(B_{r_{v_j}}(p_{v_j}), B_{r_{v_k}}(p_{v_k}))\geq 2(\ell+2)$.
And for $j>i$, it holds that $p_{v_j}=r_{v_j}=\perp$.
Altogether, it holds that after \Cref{alg:clustering} terminates on node $v_i$, for any distinct $1\leq j,k\leq |\RV{R}|$ such that $p_{v_j}, r_{v,j}, p_{v,k}, r_{v_k}\notin \{\perp\}$, we have $\dist_G(B_{r_{v_j}}(p_{v_j}), B_{r_{v_k}}(p_{v_k}))\geq 2(\ell+2)$.

Thus, after \Cref{alg:clustering} has been sequentially executed on all node $v\in\RV{R}$, for any distinct $u,v\in \RV{R}$ with $r_v,p_v,r_u,p_u\notin \{\perp\}$, it holds that $\dist_G(B_{r_v}(p_v), B_{r_u}(p_u)) \geq 2(\ell+2)$.

\paragraph{Clustered Conditional Gibbs Property. }
Then, we prove that $(\vec{Y},\RV{B})$ satisfies the clustered conditional Gibbs property (as defined in \Cref{def:clustered-conditional-gibbs}) on instance $I$ with parameter $(\epsilon_0, \gamma_0,\delta_0)$ . 

For any $\mathcal{B}\subseteq 2^V$, define 
\begin{align*}
    S(\mathcal{B})
    &\triangleq \bigcup_{\Lambda\in \mathcal{B}} \vbl(\Lambda), 
    &T(\mathcal{B})
    &\triangleq U\setminus \bigcup_{\Lambda\in \mathcal{B}} \vbl(B_{\ell}(\Lambda)), \\
    \Phi(\mathcal{B})
    &\triangleq \left\{A_v \mid v\in V\setminus \bigcup_{\Lambda \in \mathcal{B}} \Lambda \right\},
    &\Phi'(\mathcal{B})
    &\triangleq \{\aug{I}{\epsilon_0, \gamma_0,\delta_0}{\Lambda} \mid  \Lambda\in \mathcal{B} \}.
\end{align*}
%
%
For any $\mathcal{B}\subseteq 2^V$ and $\sigma\in \Sigma_{S(\mathcal{B})}$ with $\Pr[\RV{B}=B\wedge Y_{S(\mathcal{B})} = \sigma]>0$, we define following three events:
\begin{align*}
    \mathcal{A}_1(\mathcal{B}, \sigma):\quad &Y_{S(\mathcal{B})}= \sigma,  \\
    \mathcal{A}_2(\mathcal{B}, \sigma):\quad &\RV{B}=\mathcal{B},  \\
    \mathcal{A}_3(\mathcal{B}, \sigma):\quad &\vec{Y}\text{ avoids all bad events in $\Phi(\mathcal{B})\cup \Phi'(\mathcal{B})$. }
\end{align*}

For any $\mathcal{B}\subseteq 2^V$ and $\sigma\in \Sigma_{S(\mathcal{B})}$ with $\Pr[\RV{B}=\mathcal{B}\wedge Y_{S(\mathcal{B})} = \sigma]>0$, we will prove the equivalence between $\mathcal{A}_1(\mathcal{B}, \sigma) \wedge \mathcal{A}_2(\mathcal{B}, \sigma)$ and $\mathcal{A}_1(\mathcal{B}, \sigma) \wedge \mathcal{A}_3(\mathcal{B}, \sigma)$. 

First, we show \textbf{$\mathcal{A}_1(\mathcal{B}, \sigma) \wedge \mathcal{A}_2(\mathcal{B}, \sigma)\implies \mathcal{A}_1(\mathcal{B}, \sigma) \wedge \mathcal{A}_3(\mathcal{B}, \sigma)$}.
Suppose that $\mathcal{A}_1(\mathcal{B}, \sigma)$ and $\mathcal{A}_2(\mathcal{B}, \sigma)$ happen together. 
If a bad events in $\Phi(\mathcal{B})$ occurs on $\vec{Y}$, according to \Cref{cond:initialization-stage}, 
it must be included in at least one ball, centered on a node in $\RV{R}$ with radius $2d\cdot \log n\log \log n+1$.
And \Cref{alg:clustering} will only combine balls into bigger balls.
Thus, it must hold that $v\in \Lambda$ for some $\Lambda\in \mathcal{B}$, a contradiction. 
If a bad event in $\Phi'(\mathcal{B})$ occurs on $\vec{Y}$, then according to \Cref{alg:clustering}, the algorithm will not terminate with $\RV{B}=\mathcal{B}$, a contradiction.
Therefore, we have the occurrence of $\mathcal{A}_3(\mathcal{B}, \sigma)$.

Next, we show \textbf{$\mathcal{A}_1(\mathcal{B}, \sigma) \wedge \mathcal{A}_3(\mathcal{B}, \sigma)\implies \mathcal{A}_1(\mathcal{B}, \sigma) \wedge \mathcal{A}_2(\mathcal{B}, \sigma)$}. This is proved by induction. 
%
%
%
Define 
\[\mathcal{Z}\triangleq \{Z\in {\Sigma} \mid Z_{S(\mathcal{B})}=\sigma \text{ and $Z$ avoids all bad events in $\Phi(\mathcal{B})\cup \Phi'(\mathcal{B})$ }\}. \]
Since $\Pr[\RV{B}=\mathcal{B}\wedge Y_{S(\mathcal{B})}=\sigma]>0$, there exists at least one assignment $Z\in \mathcal{Z}$ such that $\mathcal{B}$ is output by \Cref{alg:clustering} after running on $Z$.
Next, we will prove that for any $\hat{Z} \in \mathcal{Z}$, \Cref{alg:clustering} will output the same $\mathcal{B}$ after running on $Z'$, which implies $\mathcal{A}_1(\mathcal{B}, \sigma)\wedge \mathcal{A}_3(\mathcal{B}, \sigma) \implies \mathcal{A}_2(\mathcal{B}, \sigma)$. 

%
%
Consdier a fixed $\hat{Z} \in \mathcal{Z}$. Define
\[D \triangleq \{v\in V \mid \text{ $A_v$ occurs on $Z$}\} \quad\text{ and }\quad \hat{D} \triangleq \{v\in V \mid \text{ $A_v$ occurs on $\hat{D}$}\}. \]
Recall that $Z$ avoids all the bad events in $\Phi(\mathcal{B})$.
Thus, for any $v\in D$, it holds that $v\in \bigcup_{\Lambda\in \mathcal{B}} \Lambda$, which means $\vbl(v)\subseteq S(\mathcal{B})$.
The same argument also holds for $\hat{Z}$, which gives $\vbl(v)\subseteq S(\mathcal{B})$.
Assuming $\mathcal{A}_1(\mathcal{B}, \sigma) \wedge \mathcal{A}_2(\mathcal{B}, \sigma)$, it holds that $Z_{S(\mathcal{B})}=\hat{Z}_{S(\mathcal{B})}=\sigma$.
Thus, we have $D=\hat{D}$.

A random set $\RV{R}\subseteq V$ is computed in the  \emph{Initialization} phase from the random assignment $\vec{Y}$ generated according to the product distribution $\nu$.
Denote by $R,\hat{R}\subseteq V$ the respective $\RV{R}$ sets computed in \emph{Initialization} phase from $Z,\hat{Z}$.
It can be verified that $R=\hat{R}$ since $D=\hat{D}$.

Let the nodes in $R=\hat{R}$ be sorted in the ascending order of IDs as $\{v_1,v_2,...,v_{|R|}\}$.
%
Let $1\leq i\leq |R|$. Suppose that  \Cref{alg:clustering} is executed at node $v_i\in R$ on the assignment $\vec{Y}=Z$.
Denote by $N_i$ the total number of iterations of the \textbf{while} loop, and for $0\leq j\leq N_i$ and  $1\leq k\leq |R|$, 
let $p^{(i,j)}_{v_k}$ and $r^{(i,j)}_{v_k}$ respectively denote the $p_{v_k}$ and $r_{v_k}$ computed right after the $j$-th iteration. 
Let $\hat{N}_i$, $\hat{p}^{(i,j)}_{v_k}$ and $\hat{r}^{(i,j)}_{v_k}$ be similarly defined for  $1\leq i\leq |\hat{R}|=|R|$ and $0\leq j\leq \hat{N}_i$ when \Cref{alg:clustering} is executed at node $v_i\in \hat{R}=R$ on the assignment $\vec{Y}=\hat{Z}$.
By convention, let $N_0=\hat{N}_0=0$  and $p^{(0,0)}_{v_i}=r^{(0,0)}_{v_i}=\hat{p}^{(0,0)}_{v_i}=\hat{r}^{(0,0)}_{v_i}=\perp$. 

Next, we prove by induction that, 
for any $0\leq i\leq |R|=|\hat{R}|$, 
it holds that $N_i=\hat{N}_i$,
and furthermore, for any $0\leq j\leq N_i=\hat{N}_i$, any $1\leq k\leq |R|=|\hat{R}|$
it always holds that $p^{(i,j)}_{v_k} = \hat{p}^{(i,j)}_{v_k}$ and $r^{(i,j)}_{v_k} = \hat{r}^{(i,j)}_{v_k}$.

The induction basis holds trivially as $N_0=\hat{N}_0=0$  and $p^{(0,0)}_{v_i}=r^{(0,0)}_{v_i}=\hat{p}^{(0,0)}_{v_i}=\hat{r}^{(0,0)}_{v_i}=\perp$. 

Suppose $p^{(i,j)}_{v_k} = \hat{p}^{(i,j)}_{v_k}$ and $r^{(i,j)}_{v_k} = \hat{r}^{(i,j)}_{v_k}$ for some $0\leq i\leq |R|$, $0\leq j\leq N_i$, $1\leq k\leq |R|$,
and further suppose $N_i=\hat{N}_i$ if $j=N_i$. 
Then, we prove that the same holds for the next iteration.
If $j=N_i$, by the same initialization in \Cref{alg:clustering}, 
we have $p^{(i+1,0)}_{v_k} = \hat{p}^{(i+1,0)}_{v_k}$ and $r^{(i+1,0)}_{v_k} = \hat{r}^{(i+1,0)}_{v_k}$ for $1\leq k\leq |R|$. 
If $j<N_i$, 
%
consider  the following three cases for the $j$-th iteration in \Cref{alg:clustering} at node $v_i$:
        \begin{itemize}
            \item  
            \textbf{Case 1: the \textbf{If} condition in \Cref{alg:clustering-line:if-1} is satisfied.} 
            In this case, there exists $v_{i'}\in R=\hat{R}$ with $i'\neq i$ such that $\dist(B_{r^{(i,j)}_{v_i}}(p^{(i,j)}_{v_i}), B_{\hat{r}^{(i,j)}_{v_{i'}}}(\hat{p}^{(i,j)}_{v_{i'}}))\leq (2\ell+2)$.
            By I.H.: $p^{(i,j)}_{v_k} = \hat{p}^{(i,j)}_{v_k}$ and $r^{(i,j)}_{v_k} = \hat{r}^{(i,j)}_{v_k}$ for $1\leq k\leq |R|$.
            Then the same \textbf{If} condition must be satisfied by the same $v_{i'}$ when $\vec{Y}=\hat{Z}$.
            Thus, $p^{(i,j+1)}_{v_k} = \hat{p}^{(i,j+1)}_{v_k}$ and $r^{(i,j+1)}_{v_k} = \hat{r}^{(i,j+1)}_{v_k}$ for $1\leq k\leq |R|$.

            \item 
            \textbf{Case 2: the \textbf{If} condition in \Cref{alg:clustering-line:if-2} is satisfied.}
            In this case, the new bad event $A_\lambda$ defined on variables outside the ball $B_{r^{(i,j)}_{v_i}}(p^{(i,j)}_{v_i})$ occurs on $Z$ and the radius grows as $r^{(i,j+1)}_{v_i}=r^{(i,j)}_{v_i}+\ell_0(\epsilon_0,\gamma_0,\delta_0)$.
            Since on $\vec{Y}=Z$, $\mathcal{B}$ is eventually output, $A_\lambda$ must be defined over $S(\mathcal{B})$.
            Hence the same \textbf{If} condition must be satisfied when $\vec{Y}=\hat{Z}$. 
            We have $p^{(i,j+1)}_{v_k} = \hat{p}^{(i,j+1)}_{v_k}$ and $r^{(i,j+1)}_{v_k} = \hat{r}^{(i,j+1)}_{v_k}$ for $1\leq k\leq |R|$.

            \item 
            \textbf{Case 3: otherwise.}
            In this case, it can be verified that the above two \textbf{If} conditions are not satisfied on $\vec{Y}=\hat{Z}$ either.
            It is obvious to see this for the \textbf{If} condition in \Cref{alg:clustering-line:if-1},
            since $p^{(i,j)}_{v_k} = \hat{p}^{(i,j)}_{v_k}$ and $r^{(i,j)}_{v_k} = \hat{r}^{(i,j)}_{v_k}$ for $1\leq k\leq |R|$ by I.H..
            For the \textbf{If} condition in \Cref{alg:clustering-line:if-2}, we consider the two subcases:
            (a)~if $B_{r^{(i,j)}_{v_i}}(p^{(i,j)}_{v_i})\in \mathcal{B}$, by the occurrence of $\mathcal{A}_3(\mathcal{B}, \sigma)$ , 
            the condition will not be triggered on $\hat{Z}$;
            (b)~if $B_{r^{(i,j)}_{v_i}}(p^{(i,j)}_{v_i})\notin \mathcal{B}$, in \Cref{alg:clustering}, the bad event $A_\lambda$ must be defined on $S(\mathcal{B})$. Thus, the condition will not be triggered on $\hat{Z}$.
            Thus, we have $p^{(i,j+1)}_{v_k} = \hat{p}^{(i,j+1)}_{v_k}$ and $r^{(i,j+1)}_{v_k} = \hat{r}^{(i,j+1)}_{v_k}$ for $1\leq k\leq |R|$,
            and further have $N_i=\hat{N}_i$.
        \end{itemize}

Thus, $p^{(|R|,N_{|R|})}_{v_k} = \hat{p}^{(|R|,N_{|R|})}_{v_k}$ and $r^{(|R|,N_{|R|})}_{v_k} = \hat{r}^{(|R|,N_{|R|})}_{v_k}$ for $1\leq k\leq |R|$.
Since $\mathcal{B}$ is output by \Cref{alg:clustering} after running on $Z$,
this shows that $\mathcal{B}$ is also output by \Cref{alg:clustering} after running on~$\hat{Z}$.

This proves the equivalence between $\mathcal{A}_1(\mathcal{B}, \sigma) \wedge \mathcal{A}_2(\mathcal{B}, \sigma)$ and $\mathcal{A}_1(\mathcal{B}, \sigma) \wedge \mathcal{A}_3(\mathcal{B}, \sigma)$.
Now we are ready to prove the clustered conditional Gibbs property of $(\vec{Y},\RV{B})$.
For any $\mathcal{B}\subseteq 2^V$ and $\sigma\in \Sigma_{S(\mathcal{B})}$ with $\Pr[\RV{B}=\mathcal{B}\wedge Y_{S(\mathcal{B})}=\sigma]>0$, and any $\tau\in \Sigma_{T(\mathcal{B})}$, 
\[\begin{aligned}
    \Pr[Y_{T(\mathcal{B})}=\tau \mid \RV{B}=\mathcal{B}\wedge Y_{S(\mathcal{B})}=\sigma ] &= \Pr[Y_{T(\mathcal{B})}=\tau \mid \mathcal{A}_1(\mathcal{B}, \sigma)\wedge \mathcal{A}_2(\mathcal{B}, \sigma)] \\
    &= \Pr[Y_{T(\mathcal{B})}=\tau \mid \mathcal{A}_1(\mathcal{B}, \sigma)\wedge \mathcal{A}_3(\mathcal{B}, \sigma)].
\end{aligned}\]
Due to \Cref{cond:initialization-stage}, $\vec{Y}$ follows the product distribution $\nu$.
Thus, the above conditional probability is precisely 
$\mu^{\sigma}_{\widehat{I}, T(\mathcal{B})}(\tau)=
\Pr[Y_{T(\mathcal{B})}=\tau \mid \mathcal{A}_1(\mathcal{B}, \sigma)\wedge \mathcal{A}_3(\mathcal{B}, \sigma)]$, where
\[\widehat{I} =\left(\{X_i\}_{i\in U}, \{A_v\}_{v\in V} \cup \Phi'(\mathcal{B})\right)=\left(\{X_i\}_{i\in U}, \{A_v\}_{v\in V} \cup\{\aug{I}{\epsilon_0, \gamma_0,\delta_0}{\Lambda} \mid  \Lambda\in \mathcal{B} \}\right).
\]

Thus, after \Cref{alg:clustering} has been sequentially executed on all nodes $v\in\RV{R}$, it holds that $(\vec{Y},\RV{B})$ satisfies the clustered conditional Gibbs property  on instance $I$ with parameter $(\epsilon_0, \gamma_0,\delta_0)$.

\paragraph{Balls are reasonably small.}
%
During the execution of \Cref{alg:clustering}  on some node $v\in \RV{R}$, the value of 
    $\mathcal{D}\triangleq\sum_{{v\in \RV{R}: p_v\neq \perp,r_v\neq \perp}} r_v$
may increase in the following cases:

\begin{itemize}

\item \textbf{Case 1: at initialization.} 
For a node $v\in V$, it increases $\mathcal{D}$ by $1+d\cdot \log n\cdot \log \log \log n$ while initializes its radius $r_v$ from $\perp$ to $1+d\cdot \log n\cdot \log \log \log n$. 
In the \emph{Clustering} phase, when \Cref{alg:clustering} is sequentially applied to all $v\in\RV{R}$, such initialization can happen at most $|\RV{R}|$ times. 
Thus, the total contribution to $\mathcal{D}$ of this case is bounded by $|\RV{R}| \cdot (1+d\cdot \log n\cdot \log \log \log n)=\tO(|\RV{R}|\cdot \log n)$. 

\item \textbf{Case 2: when the \textbf{If} condition in \Cref{alg:clustering-line:if-1} is triggered. }
In \Cref{alg:clustering}, once the \textbf{if} condition in \Cref{alg:clustering-line:if-1} is triggered, $\mathcal{D}$ will increase by $2\ell+1$,
and meanwhile, at least one node $u\in \RV{R}$ with $p_u\neq \perp$ will becomes $p_u=\perp$. 
Thus, in the \emph{Clustering} phase, this condition can be triggered at most $|\RV{R}|$ times.
The total contribution to  $\mathcal{D}$ of this case is bounded by $|\RV{R}| \cdot (2\ell+1) = \tO(|\RV{R}| \cdot \log^2 n \log^2 \frac{1}{\gamma})$. 

\item \textbf{Case 3: when the \textbf{If} condition in \Cref{alg:clustering-line:if-2} is triggered. }
For $v\in V$, and $1 \leq r \leq n$, define the event 
\[ \text{ $\mathcal{F}_{v,r}$ :\quad  $\aug{I}{\epsilon_0,\gamma_0,\delta_0}{B_{r}(v)}$ occurs on $\vec{Y}$. }\] 
We define $\mathcal{F} \triangleq \{\mathcal{F}_{v,r}\mid v\in V, 1\leq r \leq n \}$. 
For any $v\in V$ and $1\leq r\leq  n$, according to \Cref{lem:augmenting}, the probability of $\mathcal{F}_{v,r}$ is at most $\frac{1}{n^3}$.
And it can be observed that the \textbf{if} condition in \Cref{alg:clustering-line:if-2} is triggered $k$ times for some $k\ge 1$, only if at least $k$ events in $\mathcal{F}$ happen.
Moreover, these $k$ events must be mutually independent.
This is because once the \textbf{if} condition is triggered and the constructed augmenting event occurs, 
the involved random variable will be included in a ball and will not be used by any other constructed augmenting event that triggers \textbf{if} condition next time.

Thus, it is sufficient to bound the probability that there exists a  subset of $k$ mutually independent events in $\mathcal{F}$ such that all of them happen together, which is  
\[\sum_{\substack{F \subseteq  \mathcal{F}, |F|=k, \\ \text{the events in $F$} \\ \text{are independent}}} \prod_{f\in F} \Pr[\text{$f$ happens}] \leq  \sum_{\substack{F \subseteq  \mathcal{F}, |F|=k, \\ \text{the events in $F$} \\ \text{are independent}}} \prod_{f\in F}\frac{1}{n^3} \leq n^{2k} \cdot \frac{1}{n^{3k}} =\frac{1}{n^{k}}. \]
For any $0<\eta<1$, the probability that the \textbf{if} condition in \Cref{alg:clustering-line:if-2} is triggered at least $k=\log_n \frac{1}{\eta}$ times is at most $n^{-k}\le \eta$. 
Note that each time the \textbf{if} condition is triggered, the value of $\mathcal{D}$ will increase by $\ell$. 
Thus, with probability $1-\eta$, the contribution  $\mathcal{D}$ of this case is bounded by $\tO(|\RV{R}| \cdot \log^2 n \cdot \log^2 \frac{1}{\gamma} \cdot \log \frac{1}{\eta})$. 
\end{itemize}

Overall, for any $0<\eta<1$, we have $\mathcal{D}=\tO(|\RV{R}| \cdot \log^2 n \cdot \log^2 \frac{1}{\gamma} \cdot \log \frac{1}{\eta})$ with probability $1-\eta$.


\section{Analysis of Resampling}\label{sec:analysis-resampling}

In this section, we analyze the \emph{Resampling} phase of the algorithm.
The correctness and efficiency of the \RecursiveSampling{} procedure (\Cref{lem:item:recursive-sample-correctness,lem:item:recursive-sample-complexity} in \Cref{lem:recursive-sample-correctness-complexity}) are respectively proved in \Cref{sec:proof-recursive-sample-correctness,sec:proof-recursive-sample-complexity}.
The accuracy of the estimation  (\Cref{lem:estimate-augmenting}) in the augmented instance is proved in \Cref{sec:proof-estimate-augmenting}.
The correctness of substituting (\Cref{lem:substituting}) is proved in \Cref{sec:proof-substituting}.
Finally,  the analysis of the \emph{Resampling} phase (\Cref{lem:recursive-sample-correctness-complexity}) is wrapped up in \Cref{sec:proof-resample-correctness-complexity}.

\subsection{Correctness of \RecursiveSampling{} (Proof of \Cref{lem:item:recursive-sample-correctness} in \Cref{lem:recursive-sample-correctness-complexity})} \label{sec:proof-recursive-sample-correctness}



First, we prove \Cref{lem:item:recursive-sample-correctness} of \Cref{lem:recursive-sample-correctness-complexity}, 
which guarantees the correctness of \Cref{alg:recursive-sampling-without-cd}.

Assume that \Cref{cond:warm-up-recursive-without-cd} is satisfied by the input of
\RecursiveSampling$(\vec{Y}; I,\Lambda,\epsilon,\gamma,\delta,\alpha)$.
Our goal is to show that $\vec{Y}\sim\mu_I$ when the procedure returns. 
This is proved by a structural induction.
For the induction basis $\Lambda=V$,
\Cref{line:resample} is executed with probability $1$ and the resampled $\vec{Y}$ follows the distribution $\mu_I$.

For the general case,
assume that all recursive calls made within \RecursiveSampling$(\vec{Y}; I,\Lambda,\epsilon,\gamma,\delta,\alpha)$ return with the correct sampling results
as long as \Cref{cond:warm-up-recursive-without-cd} is satisfied by the input arguments to these recursive calls.
%
We then prove $\vec{Y}\sim\mu_I$ when \RecursiveSampling$(\vec{Y};I,\Lambda,\epsilon,\gamma,\delta,\alpha)$ returns.

%

In \Cref{alg:recursive-sampling-without-cd}, 
a $\rho\in[0,1)$ is drawn uniformly at random beforehand.
Then an estimation $[L,R]$ of the probability $P=\Pr_{\vec{X}\sim \mu_I}[\,\vec{X}\text{ avoids }A_{\lambda}\,]$ 
is dynamically improved based on local information, 
to determine whether $\rho<P$ (in which case \Cref{line:first-if} is satisfied, and the algorithm  enters the zone for generating $\vec{Y}\sim\mu_{\hat{I}}$) 
or $\rho\ge P$ (in which case \Cref{line:second-if} is satisfied, and the algorithm enters the zone for generating $\vec{Y}\sim\mu_{\hat{I}'}$).
This inspires us to define the following events:
%
\begin{equation}\nonumber
\begin{aligned} 
    &\mathcal{F}_1: \quad \text{\Cref{line:first-if} is satisfied, and the algorithm  enters the zone for generating $\vec{Y}\sim\mu_{\hat{I}}$}\\
    &\overline{\mathcal{F}_1}: \quad \text{\Cref{line:second-if} is satisfied, and the algorithm enters the zone for generating $\vec{Y}\sim\mu_{\hat{I}'}$.}
\end{aligned}
\end{equation}
Once the algorithm enters one of these two zones, it will not leave the zone until the algorithm returns. Thus, the above two events are mutually exclusive.
Furthermore, at least one of them must occur eventually. 
Indeed, for $P\triangleq\Pr_{\vec{X}\sim \mu_I}\left[\,\vec{X}\text{ avoids }A_{\lambda}\,\right]$, we have the following claim.
%

\begin{claim}\label{cla:sum-estimate-probability}
    $\Pr[\mathcal{F}_1]=P$ and $\Pr[\overline{\mathcal{F}_1}]=1-P$.
\end{claim}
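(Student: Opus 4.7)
The plan is to show that the algorithm's decision between $\mathcal{F}_1$ and $\overline{\mathcal{F}_1}$ reduces, almost surely, to comparing the uniform random variable $\rho \in [0,1)$ drawn at \Cref{line:draw-rho} against the true marginal probability $P$, and then to conclude the claim from the uniform distribution of $\rho$.

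First, I would verify that \Cref{lem:estimate-augmenting} applies at every iteration $i$ of the main \textbf{while} loop. By \Cref{cond:warm-up-recursive-without-cd}, $I$ is $\alpha$-satisfiable and $I(V\setminus\Lambda)$ is $\gamma$-satisfiable; since $I(V\setminus B_{\ell_0+\ell_i+1}(\Lambda))$ is obtained from $I(V\setminus\Lambda)$ by removing additional bad events (which can only increase the probability of avoiding all remaining events), this larger sub-instance is also $\gamma$-satisfiable. Moreover $A_\lambda$ is defined on variables inside $\vbl(B_{\ell_0}(\Lambda))$, so the hypotheses of \Cref{lem:estimate-augmenting} are met with the parameter choices $(\zeta_0^i, \gamma, \alpha\cdot\zeta_0^i)$ used on \Cref{line:estimate}. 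Consequently, the smallest interval $[L_i,R_i]$ containing $P$ that is locally computable from $I(B_{\ell_0+\ell_i+1}(\Lambda))$ satisfies both $P\in[L_i,R_i]$ and $R_i-L_i\le 4\zeta_0^i$.

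Next, I would analyze what happens depending on $\rho$. If $\rho<P$, then because $L_i\le P$ for every $i$ and $R_i-L_i\le 4\zeta_0^i\to 0$ (which forces $L_i\uparrow P$), there exists a finite iteration $i$ with $\rho<L_i$; the first such iteration triggers \Cref{line:first-if}, the algorithm enters the zone for sampling from $\mu_{\hat I}$, returns via \Cref{line:first-break}, and $\mathcal{F}_1$ occurs. Symmetrically, if $\rho>P$ then eventually $\rho\ge R_i$, \Cref{line:second-if} fires, and $\overline{\mathcal{F}_1}$ occurs. Because $\rho$ is drawn uniformly from $[0,1)$, we have $\Pr[\rho=P]=0$, so one of $\mathcal{F}_1,\overline{\mathcal{F}_1}$ occurs almost surely; combined with their mutual exclusiveness, this yields
\[
\Pr[\mathcal{F}_1]=\Pr[\rho<P]=P,\qquad \Pr[\overline{\mathcal{F}_1}]=\Pr[\rho\ge P]=1-P.
\]

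The only genuine obstacle is ensuring the estimation machinery behaves as advertised at \emph{every} iteration, which reduces to the applicability check for \Cref{lem:estimate-augmenting} described above; once that is in hand, the claim is a direct consequence of the monotone shrinking of $[L_i,R_i]$ around $P$ and the uniform distribution of $\rho$.
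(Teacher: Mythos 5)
Your argument is essentially the same as the paper's: both rest on the two facts from \Cref{lem:estimate-augmenting} that $P\in[L_i,R_i]$ at every iteration and $R_i-L_i\le 4\zeta_0^i$, and both conclude that the branch taken is determined (almost surely) by comparing the uniform $\rho$ against $P$. The paper packages this as an explicit telescoping sum over the iteration at which the loop exits (with $L_i,R_i$ defined as $\inf/\sup$ over the class of LLL instances consistent with the revealed local information), whereas you argue the a.s.\ event identity $\mathcal{F}_1 = \{\rho<P\}$ directly; the two are interchangeable. A few small points of rigor to tighten in your write-up: the shorthand ``$L_i\uparrow P$'' overstates what is available --- $L_i$ need not be monotone, only $L_i\to P$ (indeed $L_i=R_i=P$ once the ball reaches the whole instance) --- and the argument that the loop does not escape into the \emph{wrong} zone should be made explicit, namely that $\rho<P\le R_j$ for all $j$ rules out \Cref{line:second-if} ever firing when $\rho<P$ (and symmetrically $\rho>P\ge L_j$ rules out \Cref{line:first-if} when $\rho>P$). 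Also, when you check the hypotheses of \Cref{lem:estimate-augmenting}, note that $A_\lambda$ is defined on $\vbl(B_{\ell_0}(\Lambda))$ rather than on $\vbl(\Lambda)$, so the lemma should be invoked with $B_{\ell_0}(\Lambda)$ in the role of its ``$\Lambda$'' (which is why the locally available data is $I(B_{\ell_0+\ell_i+1}(\Lambda))$); your observation that $I(V\setminus B_{\ell_0+\ell_i+1}(\Lambda))$ inherits $\gamma$-satisfiability from $I(V\setminus\Lambda)$ still covers the required hypothesis.
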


\begin{proof}
    Let $\ell_0=\ell_0(\epsilon_0,\gamma_0,\delta_0)$.
    By \Cref{lem:augmenting}, $A_{\lambda}$ is defined on the random variables in $\vbl(B_{\ell_0}(\Lambda))$.
    For $i\ge 1$, let $\ell_i\triangleq \ell_0\left(\zeta_0^i, \alpha_1, \alpha_2\cdot \zeta_0^i\right)$. 
    For $i\ge 1$, let
    \[\begin{aligned}
        \Class{J}_i \triangleq \{J=(\{X_i\}_{i\in U_J}, \{A_v\}_{v\in V_J}) ~\mid~& 
    \text{$J$ is $\alpha$-satisfiable} \\
    &\text{and \,$J\left(V_J\setminus B^{D_J}_{\ell_0+\ell_i}(\Lambda)\right)$ is $\gamma$-satisfiable} \\
    &\text{and }\left.J\left(B^{D_J}_{\ell_0+\ell_i+1}(\Lambda)\right)=I\left(B^{D_I}_{\ell_0+\ell_i+1}(\Lambda)\right)  \right\}.
    \end{aligned}\]
    %
    For $i\ge 1$, let 
    \[
        L_i\triangleq\inf_{J\in \Class{J}_i} \Pr_{\vec{X}\sim \mu_J}[\text{ $\vec{X}$ avoids $A_{\bar{\lambda}}$ } ] \quad  
        \text{ and } \quad 
        R_i\triangleq\sup_{J\in \Class{J}_i} \Pr_{\vec{X}\sim \mu_J}[\text{ $\vec{X}$ avoids $A_{\bar{\lambda}}$ } ].
    \]
    Since $I\in \Class{J}_i$ for all $i\ge 1$, it always holds that $P\in [L_i,R_i]$. In particular, $P=L_i=R_i$ when $L_1=R_i$.
    And by \Cref{lem:estimate-augmenting},  we have ${R_i}-{L_i} \leq 4\zeta_0^i$ for all $i\ge 1$.

    Let $i_{\max}$ denote the smallest integer $i\ge 1$ with $B_{\ell_0+\ell_i}(\Lambda)=V$.
    Let $j_{\max}$ denote the smallest integer $i\ge 1$ with $L_i=R_i$.
    Observe that $j_{\max}\leq i_{\max}$.
    Then, the \textbf{while} loop stops within at most $j_{\max}$ iterations.
    Moreover, for $1\leq i\leq j_{\max}$, the values of $L_i$ and $R_i$ are computed in \Cref{line:estimate} in the $i$-th iteration.
    
    The probability of $\mathcal{F}_1$ is then calculated.
    By convention, assume $L_0=0$ and $R_0=1$.
    It holds that
    \[\begin{aligned}
        \Pr[\mathcal{F}_1] &= \sum_{1\leq i\leq i_{\max}} \Pr\left[\left(\rho< \max_{0\leq j\leq i} L_j \right) \wedge \left(\max_{0\leq j<i} L_j \leq \rho < \min_{0\leq j<i} R_j\right) \right] \\
        &= \sum_{1\leq i\leq i_{\max}} \frac{\max_{0\leq j\leq i} L_j- \max_{0\leq j<i} L_j}{\min_{0\leq j<i} R_j - \max_{0\leq j<i} L_j} \cdot \left( \min_{0\leq j<i} R_j - \max_{0\leq j<i} L_j \right)\\
        &= \sum_{1\leq i\leq i_{\max}} \left(\max_{0\leq j\leq i} L_j- \max_{0 \leq j<i} L_j \right) =\max_{0\leq i\leq i_{\max}} L_i = P.
    \end{aligned} \]
    As for the probability of  $\overline{\mathcal{F}_1}$, it holds that 
        \begin{align*}
        \Pr[\overline{\mathcal{F}_1}] &= \sum_{1\leq i\leq i_{\max}} \Pr\left[\left(\rho\geq  \min_{0\leq j\leq i} R_j \right) \wedge \left(\max_{0\leq j<i} L_j \leq \rho < \min_{0\leq j<i} R_j\right) \right] \\
        &= \sum_{1\leq i\leq i_{\max}} \frac{\min_{0\leq j< i} R_j - \min_{0\leq j\leq i} R_j }{\min_{0\leq j<i} R_j - \max_{0\leq j<i} L_j} \cdot \left( \min_{0\leq j<i} R_j - \max_{0\leq j<i} L_j \right)\\
        &= \sum_{1\leq i\leq i_{\max}} \left( \min_{0\leq j< i} R_j - \min_{0\leq j\leq i} R_j \right) = 1-\min_{0\leq i\leq i_{\max}} R_i =1-P.\qedhere
    \end{align*} 
\end{proof}

The following claim guarantees the soundness of \Cref{line:filter-soundness} in \Cref{alg:recursive-sampling-without-cd}.

\begin{claim} \label{cla:filter-soundness}
        With probability 1, $\max f>0$ and $\nu\left(\Omega_{\hat{I}}^{ Y_S \wedge Y_T}\right)>0$,
        where $f(\tau)\triangleq \frac{\nu\left(\Omega_{\hat{I}}^{\tau}\right)}{\nu\left(\Omega_{\hat{I}}^{Y_S\wedge \tau}\right)}$ for all $\tau\in \Sigma_T$ with $\nu\left(\Omega_{\hat{I}}^{Y_S\wedge \tau}\right)>0$, similar to \eqref{eq:bayes-filter-f-definition}.
\end{claim}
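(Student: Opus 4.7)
The plan is to verify the two assertions of the claim separately. For $\nu(\Omega_{\hat{I}}^{Y_S \land Y_T}) > 0$ almost surely, I would appeal to \Cref{cond:warm-up-recursive-without-cd}, which grants the augmented conditional Gibbs property for $(\vec{Y},\Lambda)$ on instance $I$ with parameter $(\epsilon,\gamma,\delta)$. By \Cref{def:aug-conditional-gibbs}, conditional on $Y_S$ the value $Y_T$ is distributed as $\mu_{\hat{I}, T}^{Y_S}$, whose support is precisely $\{\tau \in \Sigma_T : \nu(\Omega_{\hat{I}}^{Y_S \land \tau}) > 0\}$; since a random draw lies in the support of its distribution with probability one, this assertion is immediate.

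For the main assertion $\max f > 0$, the plan is to exhibit a concrete witness $\tau^* \in \Sigma_T$ in the domain of $f$ with $f(\tau^*) > 0$. First I would argue that $\hat{I}$ is satisfiable: by \Cref{cond:warm-up-recursive-without-cd}, $I$ is $\alpha$-satisfiable and $\delta < \zeta_0 \alpha$; combined with $\nu(A_\lambda) \leq \delta$ from \Cref{lem:augmenting}, this gives $\nu(\Omega_{\hat{I}}) \geq \nu(\Omega_I) - \nu(A_\lambda) \geq \alpha - \delta > 0$. Picking any $\sigma^* \in \Omega_{\hat{I}}$ and setting $\tau^* = \sigma^*_T$, the witness $\sigma^*$ avoids every bad event of $\hat{I}$ and satisfies $\sigma^*_T = \tau^*$, so $\sigma^* \in \Omega_{\hat{I}}^{\tau^*}$ and therefore $\nu(\Omega_{\hat{I}}^{\tau^*}) > 0$, which handles the numerator of $f(\tau^*)$.

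The central step is then to show $\nu(\Omega_{\hat{I}}^{Y_S \land \tau^*}) > 0$, which is where the correlation decay granted by the LLL augmentation enters. Under \Cref{cond:warm-up-recursive-without-cd}, \Cref{lem:augmenting} tells us that $S = \vbl(\Lambda)$ and $T = U \setminus \vbl(B_\ell(\Lambda))$ are $\epsilon$-correlated in $\hat{I}$. Applying the inequality of \Cref{def:correlation} to $(\sigma_1,\tau_1) = (Y_S, Y_T)$ and $(\sigma_2,\tau_2) = (\sigma^*_S, \tau^*)$, the left-hand side $\nu(\Omega_{\hat{I}}^{Y_S\land Y_T}) \cdot \nu(\Omega_{\hat{I}}^{\sigma^*_S \land \tau^*})$ is positive (the first factor from the assertion just proved, the second witnessed by $\sigma^*$), which forces $\nu(\Omega_{\hat{I}}^{Y_S \land \tau^*})$ on the right-hand side to be positive as well. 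Hence $f(\tau^*) = \nu(\Omega_{\hat{I}}^{\tau^*}) / \nu(\Omega_{\hat{I}}^{Y_S \land \tau^*}) > 0$, so $\max f \geq f(\tau^*) > 0$.

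The main obstacle I anticipate is conceptual rather than technical: a naive approach would try to show $\nu(\Omega_{\hat{I}}^{Y_T}) > 0$ directly from a witness $\sigma \in \Omega_{\hat{I}}^{Y_S \land Y_T}$ granted by the easier assertion, but such $\sigma$ need not avoid bad events $A_v$ with $\vbl(v) \subseteq S$, and when $Y_S$ itself triggers such events, no extension using $\sigma_S = Y_S$ can lie in $\Omega_{\hat{I}}^{Y_T}$. The plan above bypasses this difficulty by anchoring $\tau^*$ to a globally satisfying assignment $\sigma^*$ and transporting positivity from $(Y_S, Y_T)$ to $(Y_S, \tau^*)$ via the $\epsilon$-correlation inequality, used as a support-swap principle. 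In the boundary case $S \cup T = U$ (empty interior ring), \Cref{def:correlation} does not directly apply, but then $A_\lambda$ is trivial and $\nu(\Omega_{\hat{I}}^{Y_S \land \tau}) = \nu_S(Y_S)\nu_T(\tau) > 0$ for every $\tau \in \Sigma_T$, so the conclusion is immediate from $\Omega_I \neq \emptyset$.
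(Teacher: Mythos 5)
Your proposal is correct and follows essentially the same route as the paper's proof. Both arguments (i) derive $\nu(\Omega_{\hat{I}}^{Y_S\wedge Y_T})>0$ from the augmented conditional Gibbs property guaranteed by \Cref{cond:warm-up-recursive-without-cd}, (ii) use $\nu(\Omega_{\hat{I}})\ge\alpha-\delta>0$ to produce a pair $(\sigma_2,\tau_2)$ with $\nu(\Omega_{\hat{I}}^{\tau_2})>0$ and $\nu(\Omega_{\hat{I}}^{\sigma_2\wedge\tau_2})>0$, and (iii) apply the $\epsilon$-correlated inequality from \Cref{lem:augmenting} to transfer positivity to $\nu(\Omega_{\hat{I}}^{Y_S\wedge\tau_2})$, which makes $\tau_2$ (your $\tau^*$) a positive point of $f$. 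The only cosmetic difference is that you realize the existential witness concretely by picking $\sigma^*\in\Omega_{\hat{I}}$ and taking $(\sigma_2,\tau_2)=(\sigma^*_S,\sigma^*_T)$, whereas the paper merely asserts existence; both work for the same reason. Your closing remark correctly identifies why $\tau^*=Y_T$ cannot be used directly, which is the same pitfall the paper's construction avoids.
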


\begin{proof}
    Let $S=\vbl(\Lambda)$ and $T=U \setminus \vbl(B_{\ell_0(\epsilon,\gamma,\delta)}(\Lambda))$.
    We show that for any $\sigma_1 \in \Sigma_S$ and $\tau_1\in \Sigma_T$ with $\Pr[Y_S=\sigma_1 \wedge Y_T=\tau_1]>0$, it always holds that $\nu\left(\Omega_{\hat{I}}^{\sigma_1\wedge \tau_1}\right)>0$, and furthermore, it holds that $\max f>0$ conditioned on $Y_S=\sigma_1$, 
    which will prove the claim.
    Fix any $\sigma_1 \in \Sigma_S$ and $\tau_1\in \Sigma_T$ with $\Pr[Y_S=\sigma_1 \wedge Y_T=\tau_1]>0$.
    Since the input to \RecursiveSampling{} satisfies \Cref{cond:warm-up-recursive-without-cd}, it holds that $(\vec{Y}, \Lambda)$ satisfies the augmented conditional Gibbs property on instance $I$ with parameter $(\epsilon,\gamma,\delta)$,
    %
    which means $\mu^{\sigma_1}_{\hat{I}, T}(\tau_1)>0$, and hence $\nu\left(\Omega_{\hat{I}}^{\sigma_1\wedge \tau_1}\right)>0$,
    for the LLL instance $\hat{I}$ defined as in \eqref{eq:warm-up-def-hat-I}.
    
    For $\nu\left(\Omega_{\hat{I}}\right) \geq \alpha-\delta>0$, there exist $\sigma_2 \in \Sigma_S$ and $\tau_2 \in \Sigma_T$ such that  $\nu\left(\Omega^{\tau_2}_{\hat{I}}\right)>0$  and $\nu\left(\Omega^{\sigma_2 \wedge \tau_2}_{\hat{I}}\right)>0$. 
    According to \Cref{lem:augmenting}, we have that  $S$ and $T$ are $\epsilon$-correlated in instance $\hat{I}$, i.e.:
    \[\nu\left(\Omega^{\sigma_1 \wedge\tau_1}_{\hat{I}}\right)\cdot \nu\left(\Omega_{\hat{I}}^{\sigma_2\wedge \tau_2}\right) \leq (1+\epsilon) \cdot \nu\left(\Omega^{\sigma_1\wedge \tau_2}_{\hat{I}}\right)\cdot \nu\left(\Omega_{\hat{I}}^{\sigma_2\wedge \tau_1}\right). \]
    Thus,  it holds that $\nu\left(\Omega^{\sigma_1\wedge \tau_2}_{\hat{I}}\right)>0$ and $\max f\geq f(\tau_2) = \nu\left(\Omega^{\tau_2}_{\hat{I}}\right)/{\nu\left(\Omega^{\sigma \wedge \tau_2}_{\hat{I}}\right)}>0$.
\end{proof}



\begin{claim}\label{cla:while-termination-0}
    Conditioned on $\mathcal{F}_1$, when \RecursiveSampling$(\vec{Y}; I,\Lambda,\epsilon,\gamma,\delta,\alpha)$ returns, $\vec{Y}$ follows  $\mu_{\hat{I}}$. 
\end{claim}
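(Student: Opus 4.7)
The plan is to condition on $\mathcal{F}_1$ and decompose the argument along the two sub-branches of the $\rho < L$ case in \Cref{alg:recursive-sampling-without-cd}: whether the Bayes filter at \Cref{line:filter-soundness} succeeds or fails. A key preliminary observation is that the event $\mathcal{F}_1$ is determined by the uniform random variable $\rho$ drawn at \Cref{line:draw-rho} (independent of $\vec{Y}$) together with the estimated intervals $[L_i, R_i]$, which by \Cref{line:estimate} are functions only of the LLL instance in $B_{\ell_0 + \ell_i + 1}(\Lambda)$ and not of $\vec{Y}$. Hence conditioning on $\mathcal{F}_1$ preserves the conditional distribution of $\vec{Y}$, so $(\vec{Y}, \Lambda)$ still satisfies the augmented conditional Gibbs property on $I$ with parameter $(\epsilon, \gamma, \delta)$; in particular $Y_T \sim \mu_{\hat{I}, T}^{Y_S}$ for $S = \vbl(\Lambda)$ and $T = U \setminus \vbl(B_{\ell_0}(\Lambda))$.

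For the filter-success branch, I would apply the Bayes law calculation displayed in \eqref{eq:f-succeed-correct-distribution}: since the filter succeeds with probability $f(Y_T)/\max f \propto \mu_{\hat{I}, T}(Y_T) / \mu_{\hat{I}, T}^{Y_S}(Y_T)$ and $Y_{U \setminus T}$ is then resampled from $\mu_{\hat{I}, U \setminus T}^{Y_T}$, the resulting $\vec{Y}$ conditioned on filter success follows $\mu_{\hat{I}}$ exactly. \Cref{cla:filter-soundness} ensures this resampling is well-defined with probability one.

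For the filter-failure branch, the algorithm grows a ball to radius $r$ such that $\vec{Y}$ avoids $\aug{\hat{I}}{1/2, \gamma, \zeta_0 \alpha/2}{B_r(\Lambda)}$ and then invokes \RecursiveSampling$(\vec{Y}; \hat{I}, B_r(\Lambda) \cup \{\lambda\}, 1/2, \gamma, \zeta_0 \alpha/2, \alpha/2)$. By structural induction, it suffices to verify that this input satisfies \Cref{cond:warm-up-recursive-without-cd} for $\hat{I}$ with the new parameters. The parameter inequalities are routine; $\hat{I}$ is $(\alpha/2)$-satisfiable because $\nu(\Omega_{\hat{I}}) \ge \nu(\Omega_I) - \nu(A_\lambda) \ge \alpha - \delta \ge \alpha/2$ for sufficiently small $\zeta_0$; and $\hat{I}((V \cup \{\lambda\}) \setminus (B_r(\Lambda) \cup \{\lambda\})) = I(V \setminus B_r(\Lambda))$ is $\gamma$-satisfiable by monotonicity from the $\gamma$-satisfiability of $I(V \setminus \Lambda)$ guaranteed by the input.

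The main obstacle is verifying the augmented conditional Gibbs property on $\hat{I}$ for $(\vec{Y}, B_r(\Lambda) \cup \{\lambda\})$ with parameter $(1/2, \gamma, \zeta_0\alpha/2)$ after the filter-fail conditioning. The bias introduced by weighting with $(1 - f(Y_T)/\max f)$, combined with the while loop's guarantee that the doubly-augmenting event on $B_r(\Lambda)$ is avoided, must together produce the correct marginal distribution over the new exterior. The idea I would use is that the Bayes filter is precisely the change-of-measure that converts $\mu_{\hat{I}, T}^{Y_S}$ into $\mu_{\hat{I}, T}$; the complementary failure weight therefore corresponds to conditioning on an auxiliary ``residual'' event, and after composing with the conditioning on avoidance of the growing augmenting event, a direct Bayes calculation analogous to \eqref{eq:f-succeed-correct-distribution} extended over the enlarged ball should yield the required conditional marginal of the doubly-augmented instance on $T' = U \setminus \vbl(B_{\ell'}(B_r(\Lambda) \cup \{\lambda\}))$. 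Once this invariant is established, the induction hypothesis gives $\vec{Y} \sim \mu_{\hat{I}}$ after the recursive call, completing the proof.
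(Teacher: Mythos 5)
Your treatment of the filter-success branch is essentially the paper's: you fix the conditioning, apply the Bayes calculation of~\eqref{eq:f-succeed-correct-distribution}, and conclude $\vec{Y}\sim\mu_{\hat I}$. Your verification that the parameters, the $\frac{\alpha}{2}$-satisfiability of $\hat I$, and the $\gamma$-satisfiability of $\hat I\bigl((V\cup\{\lambda\})\setminus(B_r(\Lambda)\cup\{\lambda\})\bigr)=I(V\setminus B_r(\Lambda))$ carry over to the recursive call also matches the paper.

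The gap is in how you propose to verify the augmented conditional Gibbs property for the recursive input in the filter-failure branch. You write that conditioning on failure introduces a bias ``$1-f(Y_T)/\max f$'' which ``must together produce the correct marginal distribution over the new exterior,'' and propose to correct for it via an auxiliary ``residual event'' and an extended Bayes calculation over the enlarged ball. This misdiagnoses the situation and would not go through cleanly. The paper's argument is simpler and rests on a locality observation you have missed: the function $f(\tau)=\nu(\Omega_{\hat I}^\tau)/\nu(\Omega_{\hat I}^{Y_S\wedge\tau})$ depends on $\tau\in\Sigma_T$ only through its restriction to the thin shell near $B_{\ell_0}(\Lambda)$---the numerator and denominator are determined by bad events $A_v$ with $\vbl(v)\not\subseteq T$, all of which live inside $B_{\ell_0+1}(\Lambda)$, so the values of $\tau$ far from $\Lambda$ cancel. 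Since the recursion starts with $r\geq\ell_0+1$, the set $T\cap\vbl(B_{\ell_0+1}(\Lambda))$ is contained in $\hat S=\vbl(B_r(\Lambda))$. Therefore the entire filter outcome $\neg\mathcal{F}_2$ is a function of $Y_{\hat S}$ (together with an independent coin) and is conditionally independent of $Y_{\hat T}$ given $Y_{\hat S}=\sigma$ and $\mathcal{A}$. In other words, there is \emph{no} bias on $Y_{\hat T}$ to correct. The paper's proof then needs only one further observation, namely that (given $Y_{\hat S}=\sigma$, which already forces the earlier radii to fail) the event ``$r=x$'' is equivalent to $\mathcal{A}$, the avoidance of $A_\kappa=\aug{\hat I}{1/2,\gamma,\zeta_0\alpha/2}{B_x(\Lambda)}$; combining this with the input's augmented conditional Gibbs hypothesis yields $Y_{\hat T}\sim\mu^\sigma_{\hat I_\kappa,\hat T}$, and the induction hypothesis applies. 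Without the conditional independence step, your ``change of measure'' sketch has no leverage on the exterior marginal, because the failure weight $1-f(Y_T)/\max f$ does not factor against $\mu^\sigma_{\hat I,T}$ in a way that produces a recognizable conditional Gibbs form for the doubly-augmented instance.
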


    \begin{proof}
        By \Cref{cla:filter-soundness}, the Bayes filter constructed in \Cref{line:filter-soundness} is well-defined.
        Let $\mathcal{F}_2$ denote the event that the Bayes filter defined in \Cref{line:filter-soundness} succeeds. 
        Depending on whether $\mathcal{F}_2$ happens, we proceed in two cases. 

        First, assume that $\mathcal{F}_2$ does not happen.
        In this case, we only need to verify that the input of the recursive call  \RecursiveSampling$\left(\vec{Y}; \hat{I}, B_{r}(\Lambda)\cup\{\lambda\}, \frac{1}{2}, \gamma, \frac{\zeta_0 \alpha}{2}, \frac{\alpha}{2}\right)$ in \Cref{line:first-recursion} satisfies \Cref{cond:warm-up-recursive-without-cd}.
        %
        %
        %
        For any fixed $x\in \mathbb{N}^+$ with $\Pr[\mathcal{F}_1 \wedge \neg \mathcal{F}_2\wedge r=x ]>0$, 
        conditioned on $\mathcal{F}_1 \wedge \neg \mathcal{F}_2 \wedge r=x$,
      the properties asserted in \Cref{cond:warm-up-recursive-without-cd} are verified one by one on the input to this recursive call 
      as follow. 
        \begin{enumerate}
            \item By our assumption, the original input $(\vec{Y}; I,\Lambda,\epsilon,\gamma,\delta,\alpha)$ satisfies \Cref{cond:warm-up-recursive-without-cd}, which means $0<\epsilon\leq \frac{1}{2}$, $0<\alpha \leq \gamma<1$ and $0<\delta<\zeta_0 \cdot \alpha$.
            The same can be easily verified for $\left(\frac{1}{2}, \gamma, \frac{\zeta_0 \alpha}{2}, \frac{\alpha}{2}\right)$. 
            
            \item Since $(\vec{Y}; I,\Lambda,\epsilon,\gamma,\delta,\alpha)$ satisfies \Cref{cond:warm-up-recursive-without-cd}, the LLL instance $I(V\setminus \Lambda)$ is $\gamma$-satisfiable. 
            For $x>\ell_0(\epsilon,\gamma,\delta)$, we have $\hat{I}((V\cup \{\lambda\}) \setminus (B_{x}(\Lambda)\cup \{\lambda\}) ) = I(V\setminus B_x(\Lambda))$, which must also be $\gamma$-satisfiable since $V\setminus B_x(\Lambda)\subseteq V\setminus \Lambda$. 
            Furthermore, it holds that $\nu(\Omega_{\hat{I}})\geq \alpha - \delta \geq (1-\zeta_0) \alpha$.
            By choosing $\zeta_0>0$ to be a sufficient small constant, the LLL instance $\hat{I}$ is at least $\frac{\alpha}{2}$-satisfiable.
            
            \item 
            Let $\hat{S}=\vbl(B_{x}(\Lambda))$ and $\hat{T}=U\setminus \vbl(B_{x+\ell_0\left( \frac{1}{2}, \gamma, \frac{\zeta_0 \alpha}{2} \right)}(\Lambda))$. 
            Let $A_{\kappa}$ and $\hat{I}_{\kappa}$, where $\kappa\not\in V$, respectively denote the bad event and the augmented LLL instance constructed in \Cref{lem:augmenting}, where 
            \[
                A_{\kappa}\triangleq\aug{\hat{I}}{ \frac{1}{2}, \gamma, \frac{\zeta_0 \alpha}{2}}{B_x(\Lambda)}, 
                \quad\text{ and }\quad 
                \hat{I}_{\kappa}\triangleq (\{X_i\}_{i\in U}, \{A_v\}_{v\in V}\cup\{A_{\lambda}\}\cup \{A_{\kappa}\}).
            \]
            Let $\mathcal{A}$ denote the event that $\vec{Y}$ avoids the bad event $A_{\kappa}$. 
            For any $\sigma \in \Sigma_{\hat{S}}$ with $\Pr[\mathcal{F}_1 \wedge \neg \mathcal{F}_2\wedge r=x \wedge Y_{\hat{S}}=\sigma ]>0$,
            since $Y_{\hat{S}}=\sigma$ already ensures $r_1\geq x$, we have
            \[\mathcal{F}_1 \wedge \neg \mathcal{F}_2\wedge r=x \wedge Y_{\hat{S}}=\sigma   \iff \mathcal{F}_1 \wedge \neg \mathcal{F}_2\wedge \mathcal{A} \wedge Y_{\hat{S}}=\sigma .\]
            Moreover, given $\mathcal{A}$ and $Y_{\hat{S}} = \sigma$, the event $\neg \mathcal{F}_2$ is conditionally independent of $Y_{\hat{T}}$.
            Recall that $(\vec{Y},\Lambda)$ satisfies the augmented conditional Gibbs property on instance $I$ with parameter $(\epsilon,\gamma,\delta)$.
            %
            %
            %
            Thus, conditioned on $\mathcal{F}_1 \wedge \neg \mathcal{F}_2\wedge r=x \wedge Y_{\hat{S}}=\sigma$, it holds that $Y_{\hat{T}}$ follows the distribution $\mu^{\sigma}_{\hat{I}_1, \hat{T}}$.
            %
            %
            Thus, $(\vec{Y}, B_x(\Lambda)\cup\{\lambda\})$ is augmented conditional Gibbs on instance $\hat{I}$ with parameter $\left(\frac{1}{2}, \gamma, \frac{\zeta_0 \alpha}{2} \right)$.
        \end{enumerate}
        Altogether, conditioned on $\mathcal{F}_1\land \neg \mathcal{F}_2\land r=x$, the input to the recursive call in \Cref{line:first-recursion} satisfies \Cref{cond:warm-up-recursive-without-cd}.
        By the induction hypothesis, right after the recursive call in \Cref{line:first-recursion} returns, $\vec{Y}$ follows the distribution $\mu_{\hat{I}}$.
        Since this holds for all possible $x\in \mathbb{N}^+$, we have that the output $\vec{Y}\sim \mu_{\hat{I}}$ conditioned on $\mathcal{F}_1$ and $\neg \mathcal{F}_2$.

        The remaining case is that $\mathcal{F}_2$ happens.
        Let $S=\vbl(\Lambda)$ and $T=U \setminus \vbl(B_{\ell_0(\epsilon,\gamma,\delta)}(\Lambda))$.
        Fix any $\sigma \in \Sigma_S$ with $\Pr[\mathcal{F}_1 \wedge \mathcal{F}_2 \wedge Y_S=\sigma]>0$.
        We prove that, conditioned on $\mathcal{F}_1\land \mathcal{F}_2\land Y_S=\sigma$, the $\vec{Y}$ follows the distribution $\mu_{\hat{I}}$ after  \Cref{line:resample} being executed. 
        In the following analysis, let $\vec{Y}=(Y_i)_{i\in U}$ denote the original input random assignment and let $\vec{Y}'=(Y'_i)_{i\in U}$ denote the $\vec{Y}$ after \Cref{line:resample}  being executed.

        Recall that $S$ and $T$ are $\epsilon$-correlated in instance $\hat{I}$.
        For any $\pi\in \Sigma$ with $\nu(\Omega_{\hat{I}}^{\pi_T\wedge \sigma})=0$, it holds that $\mu_{\hat{I}, T}(\pi_T)=0$.
        Thus, for any $\pi\in \Sigma$ with $\mu_{\hat{I}, T}(\pi_T)>0$, it holds that $\nu(\Omega_{\hat{I}}^{\pi_T\wedge \sigma})>0$.
        For any $\pi\in \Sigma$ with $\mu_{\hat{I}}(\pi)>0$, conditioned on $\mathcal{F}_1$ and $Y_S=\sigma$, it holds that
        \[\begin{aligned}
            \Pr[\vec{Y}'=\pi \wedge \mathcal{F}_2] &= \Pr[Y'_T=\pi_T]\cdot \Pr[Y'_{U\setminus T} = \pi_{U\setminus T} \mid Y'_T=\pi_T] \cdot \Pr[\mathcal{F}_2\mid \vec{Y}'=\pi] \\
            &= \mu^{\sigma}_{\hat{I}, T}(\pi_T)\cdot \mu^{\pi_{T}}_{\hat{I}, U\setminus T}(\pi_{U\setminus T}) \cdot \frac{\nu\left(\Omega_{\hat{I}}^{\pi_T}\right)}{\nu\left(\Omega_{\hat{I}}^{\sigma\wedge \pi_T }\right)} \cdot \frac{1}{\max f} = \mu_{\hat{I}}(\pi) \cdot \frac{\nu\left(\Omega_{\hat{I}}\right)}{\nu\left(\Omega_{\hat{I}}^{\sigma}\right)} \cdot \frac{1}{\max f}.
        \end{aligned}\]
        Then, conditioned on $\mathcal{F}_1$ and $Y_S=\sigma$, it holds that
        \[\Pr[\mathcal{F}_2] = \sum_{\pi\in \Sigma} \Pr[\vec{Y}'=\pi \wedge \mathcal{F}_2] =  \sum_{\pi\in \Sigma} \mu_{\hat{I}}(\pi) \cdot \frac{\nu\left(\Omega_{\hat{I}}\right)}{\nu\left(\Omega_{\hat{I}}^{\sigma}\right)} \cdot \frac{1}{\max f} = \frac{\nu\left(\Omega_{\hat{I}}\right)}{\nu\left(\Omega_{\hat{I}}^{\sigma}\right)} \cdot \frac{1}{\max f}. \]
        Thus, conditioned on $\mathcal{F}_1\land \mathcal{F}_2\land Y_S=\sigma$, the output $\vec{Y}'$ follows $\mu_{\hat{I}}$. 
        By the law of total probability, conditioned on $\mathcal
        F_1$ and $\mathcal{F}_2$, after \RecursiveSampling$(\vec{Y}; I,\Lambda,\epsilon,\gamma,\delta,\alpha)$ returns, $\vec{Y}$ follows  $\mu_{\hat{I}}$.
\end{proof}

    \begin{claim}\label{cla:while-termination-1}
        Conditioned on $\overline{\mathcal{F}_1}$, when \RecursiveSampling$(\vec{Y}; I,\Lambda,\epsilon,\gamma,\delta,\alpha)$ returns, $\vec{Y}$ follows  $\mu_{\hat{I}'}$.
    \end{claim}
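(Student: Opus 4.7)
The plan is to mirror the structure used in the proof of \Cref{cla:while-termination-0} for the case where $\mathcal{F}_2$ does not happen, since in the second branch (when $\overline{\mathcal{F}_1}$ occurs) the algorithm directly makes a recursive call at \Cref{line:second-recursion} without any Bayes filter. Concretely, I would fix any $y\in \mathbb{N}^+$ such that $\Pr[\overline{\mathcal{F}_1}\wedge s=y]>0$, condition on $\overline{\mathcal{F}_1}\wedge s=y$, verify that the input to \RecursiveSampling$\left(\vec{Y};\hat{I}',B_{y}(\Lambda)\cup\{\bar\lambda\},\tfrac{1}{2},\gamma,\tfrac{\zeta_0\alpha(1-R)}{2},\alpha(1-R)\right)$ satisfies \Cref{cond:warm-up-recursive-without-cd}, and then invoke the induction hypothesis to conclude $\vec{Y}\sim\mu_{\hat{I}'}$. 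Summing over $y$ via the law of total probability finishes the claim.

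To check \Cref{cond:warm-up-recursive-without-cd} on the new input, I would proceed in three steps. First, the parameter constraints are immediate: $0<\tfrac{1}{2}\leq \tfrac{1}{2}$, $0<\alpha(1-R)\leq \gamma<1$ (using $\alpha\leq \gamma$ and $1-R\leq 1$), and $0<\tfrac{\zeta_0\alpha(1-R)}{2}<\zeta_0\cdot \alpha(1-R)$. Second, for the satisfiability of $\hat{I}'$ I would use the identity $\nu(\Omega_{\hat{I}'})=\nu(\Omega_{I})-\nu(\Omega_{\hat{I}})=\nu(\Omega_I)\cdot(1-P)$, so that combined with $\nu(\Omega_I)\geq \alpha$ (from \Cref{cond:warm-up-recursive-without-cd} on the original input) and $1-P\geq 1-R$ (by \Cref{lem:estimate-augmenting} or by the construction of $[L,R]$), we obtain $\nu(\Omega_{\hat{I}'})\geq \alpha(1-R)$. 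The sub-instance condition $\hat{I}'\bigl((V\cup\{\bar\lambda\})\setminus(B_y(\Lambda)\cup\{\bar\lambda\})\bigr)=I(V\setminus B_y(\Lambda))$ being $\gamma$-satisfiable follows as before from $V\setminus B_y(\Lambda)\subseteq V\setminus\Lambda$ together with the assumption on the original input.

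The main obstacle, and the step that requires the most care, is verifying the augmented conditional Gibbs property for $(\vec{Y},B_y(\Lambda)\cup\{\bar\lambda\})$ on instance $\hat{I}'$ with parameter $(\tfrac{1}{2},\gamma,\tfrac{\zeta_0\alpha(1-R)}{2})$. The key point is to recognize that, analogously to the $\neg\mathcal{F}_2$ case of \Cref{cla:while-termination-0}, the event ``$s=y$'' on the conditioning side can be replaced by the event ``$\vec{Y}$ avoids $\aug{\hat{I}'}{1/2,\gamma,\zeta_0\alpha(1-R)/2}{B_y(\Lambda)}$'' once we condition on $Y_{\hat{S}}=\sigma$ for $\hat{S}=\vbl(B_y(\Lambda))$, because the loop exit condition at \Cref{line:violate2} is a function of the variables already frozen by $\sigma$. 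After this substitution, the remaining claim is that for any $\sigma$, conditioned on $\overline{\mathcal{F}_1}\wedge Y_{\hat{S}}=\sigma\wedge \mathcal{A}$ (where $\mathcal{A}$ is the avoidance event just named), the tail $Y_{\hat{T}}$ follows $\mu^{\sigma}_{\hat{I}'_2,\hat{T}}$ for the doubly-augmented instance $\hat{I}'_2\triangleq \hat{I}'\cup\{\aug{\hat{I}'}{1/2,\gamma,\zeta_0\alpha(1-R)/2}{B_y(\Lambda)}\}$; this reduces via the augmented conditional Gibbs property of the original input (applied to $I$ and $\Lambda$, noting $\hat{I}=I\cup\{A_\lambda\}$ and $\hat{I}'=I\cup\{A_{\bar\lambda}\}$ share the same avoidance set outside of $A_\lambda,A_{\bar\lambda}$) to a routine Bayes-law manipulation. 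Once this Gibbs property is established, the induction hypothesis applies and the claim follows.
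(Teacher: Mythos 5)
Your proposal follows essentially the same route as the paper's proof: condition on $\overline{\mathcal{F}_1}\wedge s=y$, verify the three parts of \Cref{cond:warm-up-recursive-without-cd} for the recursive call at \Cref{line:second-recursion}, and then apply the structural induction hypothesis with the law of total probability over $y$. Your verification of the parameter constraints and of the satisfiability bound ($\nu(\Omega_{\hat{I}'}) = \nu(\Omega_I)(1-P)\geq \alpha(1-R)$ since $\Omega_I=\Omega_{\hat{I}}\sqcup\Omega_{\hat{I}'}$ and $P\leq R$) matches the paper exactly, and you correctly identify the key step for the augmented conditional Gibbs property: that given $Y_{\hat{S}}=\sigma$, the event $s=y$ is equivalent to the avoidance event $\mathcal{A}$ for $A_\kappa=\aug{\hat{I}'}{1/2,\gamma,\zeta_0\alpha(1-R)/2}{B_y(\Lambda)}$, because the loop-exit condition at \Cref{line:violate2} for all smaller radii is determined by variables already frozen under~$\sigma$. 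Both your write-up and the paper's are terse about the final step deriving $Y_{\hat{T}}\sim\mu^{\sigma}_{\hat{I}'_2,\hat{T}}$ from the original input's augmented conditional Gibbs property, but the ingredients you cite (that $\rho$ is independent of $\vec{Y}$ so conditioning on $\overline{\mathcal{F}_1}$ does not bias $\vec{Y}$, and that $\hat{S}$ contains $\vbl(\lambda)$ so the occurrence of $A_{\bar\lambda}$ is determined by $\sigma$) are the correct ones, so no genuine gap is present.
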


    \begin{proof}

    In this case, we only need verify that the input $\left(\vec{Y}; \hat{I}', B_{s}(\Lambda)\cup \left\{\overline{\lambda}\right\},  \frac{1}{2}, \gamma,  \frac{\zeta_0 \alpha  (1-\hat{R})}{2}, \alpha  (1-R) \right)$ of the recursive call to \RecursiveSampling{}
    in \Cref{line:second-recursion} satisfies \Cref{cond:warm-up-recursive-without-cd}.
        %
        %
        %
        For any fixed $x\in \mathbb{N}^+$ with $\Pr(\overline{\mathcal{F}_1} \wedge s=x )>0$,  conditioned on $\overline{\mathcal{F}_1}$ and $ s=x$,
        the properties asserted in \Cref{cond:warm-up-recursive-without-cd} are verified one by one on the input to this recursive call as follows. 
        \begin{enumerate}
            \item By our assumption, the original input $(\vec{Y}; I,\Lambda,\epsilon,\gamma,\delta,\alpha)$ satisfies \Cref{cond:warm-up-recursive-without-cd}, which means $0<\epsilon\leq \frac{1}{2}$, $0<\alpha \leq \gamma<1$ and $0<\delta<\zeta_0 \cdot \alpha$.
            The same can be verified on $\left(\frac{1}{2}, \gamma,  \frac{\zeta_0 \alpha  (1-\hat{R})}{2}, \alpha  (1-R) \right)$. 

            \item Since $(\vec{Y}; I,\Lambda,\epsilon,\gamma,\delta,\alpha)$ satisfies \Cref{cond:warm-up-recursive-without-cd}, the LLL instance $I(V\setminus \Lambda)$ is $\gamma$-satisfiable. 
            For $x>\ell_0(\epsilon,\gamma,\delta)$, we have $\hat{I}'((V\cup \{\bar{\lambda}\}) \setminus (B_{x}(\Lambda)\cup \{\bar{\lambda}\}) ) = I(V\setminus B_x(\Lambda))$, which must also be $\gamma$-satisfiable since $V\setminus B_x(\Lambda)\subseteq V\setminus \Lambda$. 
            Furthermore, it holds that $\nu(\Omega_{\hat{I}'})\geq \nu(\Omega_I)\cdot (1-P) \geq \alpha\cdot (1-R)$.
            %
            
            \item 
            Let $\hat{S}=\vbl(B_{x}(\Lambda))$ and $\hat{T}=U\setminus \vbl(B_{x+\ell_0\left( \frac{1}{2}, \gamma, \frac{\zeta_0 \alpha}{2} \right)}(\Lambda))$. 
            Let $A_{\kappa}$ and $\hat{I}_{\kappa}$, where $\kappa\not\in V$, respectively denote the bad event and the augmented LLL instance constructed in \Cref{lem:augmenting}, where 
            \[
                A_{\kappa}\triangleq\aug{\hat{I}'}{ \frac{1}{2}, \gamma,  \frac{\zeta_0 \alpha  (1-R)}{2}}{B_x(\Lambda)}, 
                \quad\text{ and }\quad 
                \hat{I}_1'\triangleq (\{X_i\}_{i\in U}, \{A_v\}_{v\in V}\cup\{A_{\bar{\lambda}}\}\cup \{A_{\kappa}\}\}.
            \]
            Let $\mathcal{A}$ denote the event that $\vec{Y}$ avoids the bad event $A_{\kappa}$. 
            For any $\sigma \in \Sigma_{\hat{S}}$ with $\Pr[\overline{\mathcal{F}_1} \wedge s=x \wedge Y_{\hat{S}}=\sigma ]>0$, since $Y_{\hat{S}}=\sigma$ already ensures $s\geq x$, we have
            \[\overline{\mathcal{F}_1} \wedge s=x \wedge Y_{\hat{S}}=\sigma   \iff \neg \overline{\mathcal{F}_1} \wedge \mathcal{A} \wedge Y_{\hat{S}}=\sigma .\]
            %
            %
            Recall that by our assumption, $(\vec{Y},\Lambda)$ satisfies the augmented conditional Gibbs property on instance $I$ with parameter $(\epsilon,\gamma,\delta)$.
            %
            %
            Thus, conditioned on $\overline{\mathcal{F}_1} \wedge s=x \wedge Y_{\hat{S}}=\sigma$, it holds that $Y_{\hat{T}}$ follows the distribution $\mu^{\sigma}_{\hat{I}_1', \hat{T}}$, i.e. $(\vec{Y}, B_x(\Lambda)\cup\{\overline{\lambda}\})$ satisfies the augmented conditional Gibbs property on instance $\hat{I}'$ with parameter $\left(\frac{1}{2}, \gamma,  \frac{\zeta_0 \alpha  (1-R)}{2} \right)$. 
        \end{enumerate}
        Altogether, conditioned on $\overline{\mathcal{F}_1}$ and $r=x$, the input to the recursive call in \Cref{line:second-recursion} satisfies \Cref{cond:warm-up-recursive-without-cd}.
        By the induction hypothesis, right after the recursive call in \Cref{line:second-recursion} returns, $\vec{Y}$ follows the distribution $\mu_{\hat{I}}$.
                Since this holds for all possible $x\in \mathbb{N}^+$, by the law of total probability, we have that the output $\vec{Y}\sim \mu_{\hat{I}'}$ conditioned on $\overline{\mathcal{F}_1}$.
    \end{proof}

Now we are ready to finalize the proof of the correctness of \RecursiveSampling{}.
    %
Denote by $\vec{Y}^*$ the random assignment $\vec{Y}$ when \RecursiveSampling$(\vec{Y}; I,\Lambda,\epsilon,\gamma,\delta,\alpha)$ returns.
By \Cref{cla:sum-estimate-probability} and \Cref{cla:while-termination-0},
    for any $\pi\in \Sigma_V$ avoiding  $A_{\lambda}$,  it holds that 
    \[\begin{aligned}
        \Pr[\vec{Y}^*=\pi] &= \Pr[\vec{Y}^*=\pi \mid \mathcal{F}_1] \cdot \Pr[\mathcal{F}_1] +\Pr[\vec{Y}^*=\pi \mid \overline{\mathcal{F}_1}] \cdot \Pr[\overline{\mathcal{F}_1}] \\
        &= \mu_{\hat{I}}(\pi) \cdot P = \mu_I(\pi). 
    \end{aligned} \]
On the other hand, by \Cref{cla:sum-estimate-probability} and \Cref{cla:while-termination-1},  for any $\pi\in \Sigma_V$ not avoiding $A_{\lambda}$, it holds that 
    \[\begin{aligned}
        \Pr[\vec{Y}^*=\pi] &= \Pr[\vec{Y}^*=\pi \mid \mathcal{F}_1] \cdot \Pr[\mathcal{F}_1] +\Pr[\vec{Y}^*=\pi \mid \overline{\mathcal{F}_1}] \cdot \Pr[\overline{\mathcal{F}_1}] \\
        &= \mu_{\hat{I}'}(\pi) \cdot (1-P)= \mu_I(\pi). 
    \end{aligned} \]

This proves that $\vec{Y}\sim\mu_I$ when \RecursiveSampling$(\vec{Y}; I,\Lambda,\epsilon,\gamma,\delta,\alpha)$ returns, assuming \Cref{cond:warm-up-recursive-without-cd} satisfied initially by the input,
which finishes the inductive proof of the correctness of \RecursiveSampling{}.
    

\subsection{Efficiency of \RecursiveSampling{} (Proof of \Cref{lem:item:recursive-sample-complexity} in \Cref{lem:recursive-sample-correctness-complexity})}
\label{sec:proof-recursive-sample-complexity}

We now prove \Cref{lem:item:recursive-sample-complexity} of \Cref{lem:recursive-sample-correctness-complexity}, which bounds the efficiency of \Cref{alg:recursive-sampling-without-cd}.

Assume that \Cref{cond:warm-up-recursive-without-cd} is satisfied by the input of
\RecursiveSampling$(\vec{Y}; I,\Lambda,\epsilon,\gamma,\delta,\alpha)$.
To  upper bound the complexity of the recursive algorithm,
we construct a potential $\mathcal{P}\ge 0$,
%
%
which is computed during the execution of \RecursiveSampling{}  according to the following rules which are added into \Cref{alg:recursive-sampling-without-cd}:
\begin{itemize}
\item $\mathcal{P}$ is initialized to $0$, and the current value of $\mathcal{P}$ is returned whenever the algorithm returns;
\item when the recursive call in \Cref{line:first-recursion} returns with some value $\mathcal{P}_1$,  
the value of $\mathcal{P}$ is increased by~$\mathcal{P}_1$;
%
\item when the recursive call in \Cref{line:second-recursion} returns with some value $\mathcal{P}_2$,  
the current value of $\mathcal{P}$ is increased by $\mathcal{P}_2+ \lceil\log \frac{1}{1-R}\rceil+1$, where $R$ is the current estimation upper bound calculated in \Cref{line:estimate};
\item whenever  \Cref{line:add-estimate-radius} is executed, the value of $\mathcal{P}$ is increased by 1.
%
 \end{itemize}
This procedure for computing the potential $\mathcal{P}$ is explicitly described in \Cref{alg:recursive-sampling-ideal-P}.
Note that other than the part for computing the value of $\mathcal{P}$, \Cref{alg:recursive-sampling-ideal-P} is exactly the same as \Cref{alg:recursive-sampling-without-cd}.
Therefore, we refer to them both by the same name  \RecursiveSampling$(\vec{Y}; I,\Lambda,\epsilon,\gamma,\delta,\alpha)$.

\begin{algorithm}
\caption{\RecursiveSampling($\vec{Y}$; $I, \Lambda, \epsilon, \gamma, \delta, \alpha$) that computes $\mathcal{P}$ during execution}
\label{alg:recursive-sampling-ideal-P}
\SetKwInOut{Input}{Input}
\SetKwInOut{Data}{Data}
\SetKwInOut{Output}{Output}

\Input{LLL instance $I=\left(\{X_i\}_{i\in U}, \{A_v\}_{v\in V}\right)$,  subset $\Lambda\subseteq V$, parameter  $(\epsilon, \gamma, \delta, \alpha)$;}
\Data{assignment $\vec{Y}=(Y_i)_{i\in U}$ stored globally that can be updated by the algorithm;}
\Output{integer $\mathcal{P}\ge 0$;}

\tcp{\small An integer $\mathcal{P}\ge 0$ is computed and returned (\textcolor{blue}{which are highlighted}); otherwise, the algorithm is the same as  \Cref{alg:recursive-sampling-without-cd}.}

\SetKwIF{withprob}{}{}{with probability}{do}{}{}{}

    
initialize \textcolor{blue}{$\mathcal{P}\gets 0$}, $i\gets 1$, and define $\ell_0\triangleq \ell_0(\epsilon,\gamma, \delta)$\; 

\tcp{\small The value of $\mathcal{P}$ is initialized to 0.}

draw $\rho\in[0,1)$ uniformly at random\; 

\While{true
}{


    ${\ell}_i \gets \ell_0\left(\zeta_0^i, \gamma, \alpha\cdot \zeta_0^i\right)$\;

    compute the smallest interval $[{L},{R}]$ containing $P\triangleq \Pr_{\vec{X}\sim \mu_I}[\,\vec{X}\text{ avoids }A_{\lambda}\,]$
    based on $\Lambda$, $A_{\lambda}$, $I\left(B_{\ell_0+{\ell}_i+1}(\Lambda)\right)$, assuming that $I$ is $\alpha$-satisfiable and $I\left(V\setminus B_{\ell_0+{\ell}_i+1}(\Lambda)\right)$ is $\gamma$-satisfiable\; 



    \If{$\rho< {L}$ 
    }{

            
                
        
        define $T\triangleq U\setminus \vbl(B_{\ell_0}(\Lambda))$\;

        {Define $f(\tau)\triangleq \frac{\nu\left(\Omega_{\hat{I}}^{\tau}\right)}{\nu\left(\Omega_{\hat{I}}^{Y_S\wedge \tau}\right)}$ for all $\tau\in \Sigma_T$ with $\nu\left(\Omega_{\hat{I}}^{Y_S\wedge \tau}\right)>0$, similar to \eqref{eq:bayes-filter-f-definition}}\;
        \withprob{$\frac{f\left(Y_{T}\right)}{\max f}$ 
        }{ 
            update $\vec{Y}$ by redrawing $Y_{U\setminus T}\sim \mu_{\hat{I}, U\setminus T}^{Y_{T}}$\; 
            
        } 
        \Else{
            initialize  $r\gets \ell_0+1$\;

            
            \While{$\vec{Y}$ does not avoid the bad event $\aug{\hat{I}}{{1}/{2},\, \gamma,\, {\zeta_0\alpha}/{2}}{B_{r}(\Lambda)}$
            }{
                grow the ball:  $r\gets r+\ell_0\left(\frac{1}{2}, \gamma,\frac{\zeta_0\alpha}{2}\right)$ and  \textcolor{blue}{$\mathcal{P}\gets \mathcal{P}+1$}\; \label{line:compute-p-r1}
                
 \tcp{\small The value of $\mathcal{P}$ is increased by 1.}
            }            
            
            
            \textcolor{blue}{$\mathcal{P}\gets\mathcal{P}\, +\, $}\RecursiveSampling$\left(\vec{Y}; \hat{I}, B_{r}(\Lambda)\cup\{\lambda\}, \frac{1}{2}, \gamma, \frac{\zeta_0 \alpha}{2}, \frac{\alpha}{2}\right)$\; \label{line:compute-p-p1}


            \tcp{\small The value of $\mathcal{P}$ is increased by the amount returned by the recursive call.}
        }
        
            \Return{\textcolor{blue}{$\mathcal{P}$}}\; 
        }
        \ElseIf{$\rho\ge {R}$ 
        }{
        
            initialize $s\gets \ell_0 +1$\;

    
            \While{$\vec{Y}$ does not avoid the bad event $\aug{\hat{I}'}{1/2,\,\gamma,\, {\zeta_0\alpha  (1-R)}/{2}}{B_{s}(\Lambda)}$ 
            }{
                grow the ball: $s\gets s+\ell_0\left(\frac{1}{2},\gamma, \frac{\zeta_0\alpha (1-R)}{2}\right)$ and  \textcolor{blue}{$\mathcal{P}\gets \mathcal{P}+1$}\; \label{line:compute-p-r2}

 \tcp{\small The value of $\mathcal{P}$ is increased by 1.}
            }
            \textcolor{blue}{$\mathcal{P}\gets \mathcal{P}+ \lceil \log \frac{1}{1-R}\rceil + $}\RecursiveSampling$\left(\vec{Y}; \hat{I}', B_{s}(\Lambda)\cup \left\{\overline{\lambda}\right\},  \frac{1}{2}, \gamma,  \frac{\zeta_0 \alpha  (1-R)}{2}, \alpha  (1-R) \right)$\; \label{line:compute-p-p2}

            
            \tcp{\small The value of $\mathcal{P}$ is increased by the amount returned by the recursive call.}
            
           \Return{\textcolor{blue}{$\mathcal{P}$}}\; 

        }
        \Else{
        
        enter the next iteration (and refine the estimation of $P$): $i\gets i+1$ and  \textcolor{blue}{$\mathcal{P}\gets \mathcal{P}+1$}\;   \label{line:compute-p-estimate}

                   \tcp{\small The value of $\mathcal{P}$ is increased by 1.}
        }
    }
\end{algorithm}

%
Next, we prove \Cref{cla:bound-radius-using-p} and \Cref{cla:bound-p},
where \Cref{cla:bound-radius-using-p} says that the potential $\mathcal{P}$ computed as above can be used to bound the radius $r$ of \RecursiveSampling$(\vec{Y}; I,\Lambda,\epsilon,\gamma,\delta,\alpha)$ stated in  \Cref{lem:item:recursive-sample-complexity} of \Cref{lem:recursive-sample-correctness-complexity},
and  \Cref{cla:bound-p} gives an upper bound on the potential~$\mathcal{P}$.

\begin{claim}\label{cla:bound-radius-using-p}
    Suppose that \Cref{cond:warm-up-recursive-without-cd} is satisfied by the input of
    \RecursiveSampling$(\vec{Y}; I,\Lambda,\epsilon,\gamma,\delta,\alpha)$,     
    who accesses $I(B_{r}(\Lambda))$, updates $Y_{\vbl(B_{r}(\Lambda))}$,
    and returns a $\mathcal{P}\ge 0$.
    It holds that 
    $$r\le \ell_0(\epsilon,\gamma,\delta) + O\left(\mathcal{P}\cdot\left(\mathcal{P} + \log \frac{1}{\alpha} \right) \cdot \log \frac{1}{\gamma} \cdot \log\left(\mathcal{P}\log \frac{1}{\gamma}\log \frac{1}{\alpha} \right)\right).$$
\end{claim}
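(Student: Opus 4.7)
The plan is to prove the bound by unrolling the recursion of \RecursiveSampling{} into a chain and applying a telescoping estimate. Since each invocation makes at most one recursive call---either the Case 1 branch at Line~\ref{line:first-recursion} or the Case 2 branch at Line~\ref{line:second-recursion}---the execution forms a chain $C_0, C_1, \ldots, C_D$ with parameters $(\epsilon_d, \gamma, \delta_d, \alpha_d)$ at level $d$. These start at $(\epsilon, \gamma, \delta, \alpha)$; a Case 1 transition sends them to $(\tfrac{1}{2}, \gamma, \tfrac{\zeta_0 \alpha_d}{2}, \tfrac{\alpha_d}{2})$, and a Case 2 transition to $(\tfrac{1}{2}, \gamma, \tfrac{\zeta_0 \alpha_d(1-R_d)}{2}, \alpha_d(1-R_d))$. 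Writing $k_d$ for the number of estimation refinements and $m_d$ for the number of inner while-loop iterations at level $d$, and $\ell_i^{(d)} \triangleq \ell_0(\zeta_0^i, \gamma, \alpha_d \zeta_0^i)$, the radius $r_d$ accessed around the center $\Lambda_d$ satisfies
\[
r_d \;\le\; \ell_0(\epsilon_d, \gamma, \delta_d) + \ell_{k_d+1}^{(d)} + (m_d + 1)\cdot \ell_0\bigl(\tfrac{1}{2}, \gamma, \tfrac{\zeta_0 \alpha_{d+1}}{2}\bigr) + r_{d+1} + O(1),
\]
and unrolling this recurrence reduces the claim to bounding the sum of the per-level contributions for $d = 0, \ldots, D-1$ by the target quantity.

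Next I would amortize $\log(1/\alpha_d)$ and $D$ against the potential $\mathcal{P}$. Let $T_2$ index the Case 2 levels. By the accounting rules, $\mathcal{P} \ge \sum_d (k_d + m_d) + \sum_{d \in T_2}(\lceil \log \tfrac{1}{1-R_d}\rceil + 1)$, and the telescoping identity $\log(1/\alpha_d) = \log(1/\alpha) + |\{d' < d : \text{Case 1}\}| + \sum_{d' \in T_2, d' < d} \log \tfrac{1}{1-R_{d'}}$ then yields $\log(1/\alpha_d) \le \log(1/\alpha) + O(\mathcal{P})$ once $D$ is controlled. Plugging this into the explicit formula~\eqref{eq:def-ell-0} for $\ell_0$ bounds each per-level summand by $O\bigl((\mathcal{P} + \log \tfrac{1}{\alpha}) \log \tfrac{1}{\gamma} \log(\mathcal{P} \log \tfrac{1}{\gamma} \log \tfrac{1}{\alpha})\bigr)$, and summing over $D = O(\mathcal{P})$ levels produces the target bound. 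The ${+}\,\ell_{k_d+1}^{(d)}$ refinement contribution is handled analogously using $k_d \le \mathcal{P}$ and the explicit growth of $\ell_0$ in its first argument.

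The hard part is bounding the chain depth $D$ by $O(\mathcal{P})$. A Case 2 transition explicitly charges at least $1$ unit to $\mathcal{P}$ via the $+1$ in its accounting rule, but a Case 1 transition halves $\alpha$ without necessarily executing either the inner while-loop or an estimation refinement, so on the face of it a Case 1 chain could proceed without incrementing $\mathcal{P}$. The fact that the target bound must degenerate to $r \le \ell_0(\epsilon, \gamma, \delta)$ when $\mathcal{P} = 0$ forces $D = 0$ in that case, which means the potential accounting should charge at least one unit per recursive call in both branches. The cleanest fix is to augment Algorithm~\ref{alg:recursive-sampling-ideal-P}'s rule for Case 1 so that the increment reads $\mathcal{P} \gets \mathcal{P} + \mathcal{P}_1 + 1$ rather than $\mathcal{P} \gets \mathcal{P} + \mathcal{P}_1$; this does not affect \Cref{lem:item:recursive-sample-correctness} and immediately gives $D \le \mathcal{P}$, after which the telescoping in the previous paragraph closes the argument by routine arithmetic on the definition of $\ell_0$ in~\eqref{eq:def-ell-0}.
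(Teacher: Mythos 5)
Your unrolled-chain telescoping is just an iterative restatement of the paper's structural induction (the paper defines $\ellf(\mathcal{P},\gamma,\alpha)$ and inducts through the three cases), so the high-level strategy is the same. But you have correctly put your finger on a genuine defect in the paper's published accounting: in the paper's Case~1 the increment is $\mathcal{P}\gets\mathcal{P}+\mathcal{P}_1$ with no $+1$, while Case~2 charges $\mathcal{P}_2+\lceil\log\frac{1}{1-R}\rceil+1$. With the Case~1 rule as written, a recursion in which $K=K_1=0$ leaves $\mathcal{P}$ unchanged ($\mathcal{P}=\mathcal{P}_1$) while replacing $\alpha$ by $\alpha/2$; since $\ellf(\mathcal{P},\gamma,\cdot)$ is strictly increasing in $\log\frac{1}{\alpha}$ (for $\mathcal{P}\ge 1$), the putative inequality $(K_1{+}1)\ell_1+\ellf(\mathcal{P}_1,\gamma,\tfrac{\zeta_0\alpha}{2})+\ell_3\le \ellf(\mathcal{P},\gamma,\alpha)$ is already violated by the middle term alone, regardless of how large $c$ is chosen. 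So the paper's claim that ``choosing $c$ large enough'' closes the induction is false in exactly the scenario you describe, and your diagnosis --- that the chain depth $D$ is not controlled by $\mathcal{P}$ without charging Case~1 --- is the right explanation.

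Your proposed repair (add $+1$ to the Case~1 rule in \Cref{alg:recursive-sampling-ideal-P}) is sound: $\mathcal{P}$ is a pure bookkeeping variable, so this does not change the algorithm's behavior or \Cref{lem:item:recursive-sample-correctness}, and with $m\triangleq K+K_1+1\ge 1$ the induction closes, since halving $\alpha$ costs only $+1$ in $\log\frac{1}{\alpha}$ inside the $\ellf$ term (the paper's induction hypothesis actually yields $\ellf(\mathcal{P}_1,\gamma,\tfrac{\alpha}{2})$; the written $\tfrac{\zeta_0\alpha}{2}$ there is a typo for $\tfrac{\alpha}{2}$, which you should use --- with $\tfrac{\zeta_0\alpha}{2}$ the per-level $\log\frac{1}{\alpha}$ increment is $\log\frac{2}{\zeta_0}$ rather than $1$, and a single unit of $\mathcal{P}$ would not pay for it). The one thing your write-up leaves implicit is that this modification must be carried forward into \Cref{cla:bound-p}: there, $\mathcal{E}_3$ uses $\mathcal{P}_1=\mathcal{P}-K-K_1$ and the $+1$ subtracts one more, so the threshold $d_1\log^2\frac{1}{\eta}$ still follows with a mildly larger $d_1$; you should state this explicitly so the two claims are proved against the same accounting.
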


\begin{proof}

    Let $c>1$ be a sufficiently large constant. 
    For $\mathcal{P}\in \mathbb{N}$, $\gamma\in (0,1)$, $\alpha\in (0,1)$,  define
    \[\ellf(\mathcal{P},\gamma,\alpha)= c\cdot \mathcal{P}\cdot\left(\mathcal{P} + \log \frac{1}{\alpha} \right) \cdot \log \frac{1}{\gamma} \cdot \log\left(\mathcal{P}\log \frac{1}{\gamma}\log \frac{1}{\alpha} \right).\]
    
    \Cref{cla:bound-radius-using-p} is proved by showing  that $r\le \ell_0(\epsilon,\gamma,\delta)+\ellf(\mathcal{P},\gamma,\alpha)$.
    This is proved by a structural induction.
    Suppose that the input is the LLL instance $I=\left(\{X_i\}_{i\in U}, \{A_v\}_{v\in V}\right)$, the random assignment $\vec{Y}=(Y_i)_{i\in U}$, and the subset $\Lambda\subseteq V$ of bad events, along with the parameter  $(\epsilon, \gamma, \delta, \alpha)$ satisfying \Cref{cond:warm-up-recursive-without-cd}.
    
    First, for the induction basis, when  $\Lambda=V$, the induction hypothesis holds as $r=0$ and $\mathcal{P}=0$.

    Now, consider the general case. 
    %
    %
    %
    %
    Define
    \begin{align*}
        \ell
        &\triangleq \ell_0(\epsilon,\gamma,\delta)= O\left( \log \frac{1}{\epsilon} \log \frac{1}{\gamma} \log \frac{1}{\delta} \log\left(\log \frac{1}{\epsilon} \log \frac{1}{\gamma} \log \frac{1}{\alpha} \right)\right), \\
        \ell_1
        &\triangleq\ell_0\left(\frac{1}{2}, \gamma, \frac{\zeta_0 \alpha}{2}\right)= O\left( \log \frac{1}{\gamma} \log \frac{1}{\alpha} \log\left( \log \frac{1}{\gamma} \log \frac{1}{\alpha} \right)\right) ,\\
        \ell_2
        &\triangleq\ell_0\left(\frac{1}{2},\gamma, \frac{\zeta_0\alpha (1-R)}{2}\right)= O\left( \log \frac{1}{\gamma} \log \frac{1}{\alpha (1-R)} \log\left(\log \frac{1}{\gamma} \log \frac{1}{\alpha (1-R)} \right)\right),\\
        \ell_3
        &\triangleq \ell_0\left(\zeta_0^K, \gamma, \alpha\cdot \zeta_0^K\right) = O\left( K\log \frac{1}{\gamma} \left(\log \frac{1}{\alpha}+K\right) \log\left(K\log \frac{1}{\gamma} \left(\log \frac{1}{\alpha}+K \right)\right)\right),\\
        &\quad\text{ where $K$ stands for the number of times that \Cref{line:compute-p-estimate} is executed.}
    \end{align*}

The induction then proceeds in the following three cases:
    \begin{itemize}
        \item \textbf{Case 1: \Cref{line:compute-p-p1} is executed.} 
        Let $K_1$ denote the number of times that \Cref{line:compute-p-r1} is executed.
        Let $\mathcal{P}_1$ denote the value returned by the recursive call in \Cref{line:compute-p-p1}.
        In this case, we have $\mathcal{P}=K+ \mathcal{P}_1+K_1$.
        By the induction hypothesis, it holds that 
        \[r\leq \ell+(K_1+1)\cdot \ell_1 + \ellf\left(\mathcal{P}_1,\gamma,\frac{\zeta_0 \alpha}{2}\right)+\ell_3.\]

        \item \textbf{Case 2: \Cref{line:second-recursion} is executed. }
        Let $K_2$ denote the number of times that \Cref{line:compute-p-r2} is executed.
        Let $\mathcal{P}_2$ denote the value returned by the recursive call in  \Cref{line:compute-p-p2}.
        In this case we have $\mathcal{P}=K + \mathcal{P}_2 + K_2 + \lceil \log \frac{1}{1-R}\rceil$.
        By the induction hypothesis, it holds  that 
        \[r \leq \ell+ (K_2+1)\cdot \ell_2 +\ellf\left(\mathcal{P}_2,\gamma, \frac{\zeta_0\alpha (1-R)}{2}\right) +\ell_3.\]

        \item \textbf{Case 3: otherwise. }
        In this case, we have $\mathcal{P}=K$.
        It holds that 
        \[r \leq \ell+\ell_3+1.\]
\end{itemize}

By choosing  $c$ to be a large enough constant, one can verify that $r\le \ell_0+\ellf(\mathcal{P},\gamma,\alpha)$ in all cases.
\end{proof}

\begin{claim}\label{cla:bound-p}
Suppose that \Cref{cond:warm-up-recursive-without-cd} is satisfied by the input of
    \RecursiveSampling$(\vec{Y}; I,\Lambda,\epsilon,\gamma,\delta,\alpha)$,     
    who returns a $\mathcal{P}\ge 0$.
    For any $\eta\in (0,1)$, with probability at least $1-\eta$, it holds that $\mathcal{P}=O\left(\log^2\frac{1}{\eta}\right)$. 
\end{claim}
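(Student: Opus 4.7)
The plan is to decompose $\mathcal{P}$ along the recursion structure and bound each component via concentration. The first key observation is that each invocation of \RecursiveSampling{} makes at most one recursive call: once \Cref{line:first-if} triggers, the algorithm either resamples at \Cref{line:resample} and returns, or recurses at \Cref{line:first-recursion} and then returns; once \Cref{line:second-if} triggers, the algorithm recurses at \Cref{line:second-recursion} and returns. Hence the execution is a linear chain of calls $C_0,C_1,\ldots,C_D$, and the returned $\mathcal{P}$ equals $\sum_{i=0}^{D}\Delta_i$, where each per-call contribution $\Delta_i$ consists of the zone-of-indecision increments $K^{(i)}$ counted at \Cref{line:compute-p-estimate}, the ball-growth increments $K_1^{(i)}$ or $K_2^{(i)}$ counted at \Cref{line:compute-p-r1} or \Cref{line:compute-p-r2}, and (in the Case 2 branch only) the extra $\lceil\log\frac{1}{1-R^{(i)}}\rceil+1$ added at \Cref{line:compute-p-p2}.

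I would then bound the chain length $D$. A call $C_i$ recurses only if either the Bayes filter at \Cref{line:filter-soundness} fails -- which has probability at most $\epsilon^{(i)}\le 1/2$ by the $\epsilon$-correlation property in the augmented instance (the same calculation as in \eqref{eq:bayes-filter-whp}) -- or Case 2 fires, which by \Cref{cla:sum-estimate-probability} has probability $1-P^{(i)}\le\delta^{(i)}/\alpha^{(i)}\le\zeta_0$ via \Cref{lem:augmenting}. So the continuation probability at every call is at most $1/2+\zeta_0<2/3$ for sufficiently small $\zeta_0$, giving $\Pr[D\ge k]\le(2/3)^k$ and hence $D=O(\log(1/\eta))$ with probability at least $1-\eta/2$.

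Next I would bound each $\Delta_i$ by a light-tailed random variable. For $K^{(i)}$: by \Cref{lem:estimate-augmenting}, the $k$-th iteration of the outer \textbf{while} is entered only when the uniformly drawn $\rho$ lies in an interval of length at most $4\zeta_0^{k-1}$, so $\Pr[K^{(i)}\ge k]\le 4\zeta_0^{k-1}$. For the ball-growth counts: each additional growth step fires only when a fresh augmenting event of probability at most $\zeta_0\alpha/2$ under $\nu$ occurs on $\vec{Y}$; since the growth step equals the shell width $\ell_0$, successive augmenting events live on disjoint shells, and their joint probability decays geometrically as $\mu_{\hat{I}}\bigl(\bigcap_{j\le k}E_j\bigr)\le\nu\bigl(\bigcap_{j\le k}E_j\bigr)/\nu(\Omega_{\hat{I}})\le O(\zeta_0)^k$. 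For the log term: Case 2 is entered only when $\rho^{(i)}\ge R^{(i)}$, so $\log\frac{1}{1-R^{(i)}}\le\log\frac{1}{1-\rho^{(i)}}$, and the latter has exponential tail because $\rho^{(i)}$ is uniform on $[0,1)$. Combining these yields $\Delta_i=O(\log(1/\eta'))$ with probability at least $1-\eta'$; a union bound with $\eta'=\eta/(4D_{\max})$ over the chain of length $D_{\max}=O(\log(1/\eta))$ then gives $\mathcal{P}=D\cdot O(\log(1/\eta))=O(\log^2(1/\eta))$ with probability at least $1-\eta$.

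The main technical obstacle will be the ball-growth bound. While the single-shell marginal bound $\zeta_0$ is immediate from \Cref{lem:augmenting} combined with the augmented conditional Gibbs property, obtaining the geometric tail $\Pr[K_j^{(i)}\ge k]\le O(\zeta_0)^k$ requires propagating the bound through a chain of correlated conditional probabilities on nested shells. The distribution of $\vec{Y}$ at call $C_i$ is not the product $\nu$ but a conditional Gibbs distribution inherited from the history of previous calls, so I would invoke the independence of augmenting events on disjoint shells under $\nu$ together with the uniform bound $\mu_{\hat{I}}(\cdot)\le\nu(\cdot)/\nu(\Omega_{\hat{I}})$, and carefully track the conditioning on $Y_S$ via the augmented conditional Gibbs structure already maintained in the correctness proof of \Cref{sec:proof-recursive-sample-correctness}.
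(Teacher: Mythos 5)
Your approach is genuinely different from the paper's at the top level. The paper proves the claim by structural induction, setting up events $\mathcal{E}_1, \ldots, \mathcal{E}_6$ and showing $\Pr[\mathcal{P} > d_1\log^2\tfrac{1}{\eta}] < \eta$ closes under recursion: with $k=\Theta(\log\frac{1}{\eta})$, the non-recursive contributors ($K$, $K_1$, $K_2$, $\lceil\log\frac{1}{1-R}\rceil$) each exceed $k$ with probability $O(\eta)$, while the recursive contributors $\mathcal{P}_1,\mathcal{P}_2$ enter with probability $\le\tfrac12$ and $\le\zeta_0$ respectively, so applying the inductive hypothesis at a slightly weaker failure probability $1.1\eta$ makes the total still $<\eta$; the loss $\log^2\frac{1}{\eta}-\log^2\frac{1}{1.1\eta}\approx 2\log(1.1)\log\frac{1}{\eta}$ absorbs the additive $O(\log\frac{1}{\eta})$ incurred at each level. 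You instead linearize the recursion as a chain $C_0,\ldots,C_D$ (correctly: each call makes at most one recursive call), bound $D=O(\log\frac{1}{\eta})$ from the continuation probability $\le\frac12+O(\zeta_0)$, bound each $\Delta_i=O(\log\frac{1}{\eta})$ with a per-level tail, and union bound. Morally the two proofs extract the same information from the algorithm; yours is a direct tail-sum, the paper's is an amortized recursion. If one is being strict, your union bound requires that each per-level tail hold \emph{conditionally on the prefix} $C_0,\ldots,C_{i-1}$; this is true because \Cref{cond:warm-up-recursive-without-cd} is verified invariantly in \Cref{sec:proof-recursive-sample-correctness}, and it is worth saying so explicitly rather than treating the $\Delta_i$ as if they were unconditionally light-tailed.

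The genuine gap is in the ball-growth tail, and you have half-identified it. The bound $\mu_{\hat{I}}(\cdot)\le\nu(\cdot)/\nu(\Omega_{\hat{I}})$ is not the one you need: at a given call, $\vec{Y}_T$ is distributed as $\mu^{\sigma}_{\hat{I},T}$ \emph{conditioned on} $Y_S=\sigma$ (via the augmented conditional Gibbs property), not as a marginal of $\mu_{\hat{I}}$. The conditional version of your bound has $\nu(\Omega^{\sigma}_{\hat{I}})$ in the denominator, and nothing in \Cref{cond:warm-up-recursive-without-cd} lower bounds this for an arbitrary realized $\sigma$ — it can be tiny. ``Carefully track the conditioning on $Y_S$'' is exactly the hard step, and it does not come for free from the correlation-decay property alone, because $\epsilon$-correlatedness controls ratios of partition functions across pairs of boundary conditions, not the partition function for a single unlucky $\sigma$. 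The paper's resolution is \Cref{lem:substituting}: it replaces the conditioned instance by an \emph{unconditioned} instance $I_\sigma$ (with a new variable $X_\beta$ and constraint $A_\kappa$ absorbing $\sigma$) that is $(1-\epsilon)(\alpha-\delta)$-satisfiable \emph{uniformly in $\sigma$}, with $Y_{U'}\sim\mu_{I_\sigma,U'}$; then the shell events are independent under $\nu$ and $\mu_{I_\sigma}(\cap_j A_j)\le\nu'(\cap_j A_j)/\nu'(\Omega_{I_\sigma})\le(\zeta_0\alpha)^k/((1-\epsilon)(\alpha-\delta))$ follows immediately. Without that device (or something that does the same work), the denominator you face is $\sigma$-dependent and your geometric tail bound does not close. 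The rest of your plan — the zone-of-indecision tail from \Cref{lem:estimate-augmenting}, the $\log\frac{1}{1-R}\le\log\frac{1}{1-\rho}$ coupling to the uniform $\rho$, the continuation-probability bound — is correct and matches the ingredients the paper uses for $\mathcal{E}_1,\mathcal{E}_5,\mathcal{E}_3,\mathcal{E}_6$.
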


\begin{proof}

    We prove that,  for some constant $d_1$, it holds that $\Pr[\mathcal{P}> d_1 \cdot \log^2 \frac{1}{\eta}]<\eta$ for any $\eta\in (0,1)$.

    This is proved by induction.
    Suppose that the input is the LLL instance $I=\left(\{X_i\}_{i\in U}, \{A_v\}_{v\in V}\right)$, the random assignment $\vec{Y}=(Y_i)_{i\in U}$, and the subset $\Lambda\subseteq V$ of bad events, along with the parameter  $(\epsilon, \gamma, \delta, \alpha)$ satisfying \Cref{cond:warm-up-recursive-without-cd}.
    
    For the induction basis, when $\Lambda=V$,
    the induction hypothesis holds trivially since $\mathcal{P}=0$.
    
    For the general case.
    by the induction hypothesis, all recursive calls return the correctly bounded values of $\mathcal{P}$'s as long as \Cref{cond:warm-up-recursive-without-cd} is satisfied by the input.
    We prove  that \RecursiveSampling$(\vec{Y};I,\Lambda,\epsilon,\gamma,\delta,\alpha)$ returns a correctly bounded $\mathcal{P}$.

    Let $K$ denote the number of times that \Cref{line:compute-p-estimate} is executed.
    Let $K_1$ denote the number of times that \Cref{line:compute-p-r1} is executed.
    Let $K_2$ denote the number of times that \Cref{line:compute-p-r2} is executed.
    If \Cref{line:compute-p-p1} is executed, let $\mathcal{P}_1$ denote the value returned by the recursive call in \Cref{line:compute-p-p1}.
    If \Cref{line:compute-p-p2} is executed, let $\mathcal{P}_2$ denote the value returned by the recursive call in \Cref{line:compute-p-p2}.
    
    For $\eta\in (0,1)$, let $k\triangleq 10 \log \frac{1}{\eta}+d_2$ for sufficient large constant $d_2$.
    One can verify that if  $\mathcal{P}>d_1\cdot \log^2 \frac{1}{\eta}$, then necessarily at least one of the following events must have happened.
    \begin{itemize}
        \item \textbf{Event $\mathcal{E}_1$:  $K>k$.}
        The probability that \Cref{line:compute-p-estimate} is executed more than $k$ times is $R-L$, where $R$ and $L$ respectively take values of these variables in the $k$-th iteration of the \textbf{while} loop.
        %
        %
        %
        By \Cref{lem:estimate-augmenting}, we have $R-L\leq 4\zeta_0^k$.
        By choosing $\zeta_0$ to be a small enough constant, we have 
        \[\Pr(\mathcal{E}_1) = \Pr(K > k) \leq 4\zeta_0^k \leq 0.01\eta. \]
            
        \item \textbf{Event $\mathcal{E}_2$: \Cref{line:compute-p-p1} is executed and $K_1 \geq k$.}
        We have 
        $\Pr(\mathcal{E}_2)\leq \Pr\left(\forall A\in\Phi: \text{$A$ occurs on $\vec{Y}$}\right)$,
        where
        \[\Phi \triangleq \left\{\aug{\hat{I}}{\frac{1}{2}, \frac{1}{2}, \gamma,\frac{\zeta_0\alpha}{2}}{B_{r}(\Lambda)} ~\middle| ~r=\ell_0(\epsilon,\gamma,\delta)+1+i \cdot \ell_0\left(\frac{1}{2}, \gamma,\frac{\zeta_0\alpha}{2}\right) \text{, for } 0\leq i<k\right\}.
        \]
        Let $X_{\beta}$ and $A_{\kappa}$ denote the respective random variable and bad event constructed in \Cref{lem:substituting} under parameter $\Lambda$, $Y_{\vbl(\Lambda)}$, $(\epsilon,\gamma,\delta)$, $I(B_{\ell(\epsilon,\gamma,\delta)+1}(\Lambda))$.
        Define 
        \begin{equation}\label{eq:event-2-U-V-I}
            \begin{aligned}
            & U' \triangleq U\setminus \vbl(B_{\ell_0(\epsilon,\gamma,\delta)} (\Lambda)), \\
            & V' \triangleq V\setminus B_{\ell_0(\epsilon,\gamma,\delta)+1} (\Lambda), \\
            & I'\triangleq \left(\{X_i\}_{i\in U'}\cup \{X_{\beta}\}, \{A_v\}_{v\in V'} \cup \{A_{\kappa}\}\right).
        \end{aligned}
        \end{equation}
        Let $Y_{\beta}$ be drawn independent according to the marginal distribution $\mu^{Y_{\vbl(\kappa) \setminus\{\beta\}}}_{I',\beta}$.
        This is well-defined because $Y_{\vbl(\kappa) \setminus\{\beta\}}\subseteq U'$.
        Then, define assignment $\vec{Y}'$ as
        \begin{align} \label{eq:event-2-V}
            \vec{Y}'\triangleq Y_{ U'}\land Y_{\beta}.
        \end{align}
        %
        %
        Since $(\vec{Y}, \Lambda)$ satisfies the augmented conditional Gibbs property on $I$ with parameter $(\epsilon,\gamma,\delta)$, we have $Y_{U'}\sim \mu^{Y_{\vbl(\Lambda)}}_{\hat{I}, U'}$,
        where $\hat{I}$ is defined as in \eqref{eq:warm-up-def-hat-I}.
        %
        %
        By \Cref{lem:substituting}, we have $Y_{U'}\sim \mu_{I',U'}$, which means $\vec{Y}'\sim \mu_{I'}$,
        %
        %
        %
        i.e.~for any $A\in \Phi$, we have that $A$ occurs on $\vec{Y}$ if and only if $A$ occurs on $\vec{Y}'$.
        Thus, 
        \[\begin{aligned}
                \Pr(\mathcal{E}_2) &\leq \Pr\left(\forall A\in\Phi: \text{$A$ occurs on $\vec{Y}$}\right) 
                =\Pr\left(\forall A\in\Phi: A\text{ occurs on }\vec{Y}'\right) \\
                &\leq \frac{(\zeta_0 \alpha)^{k}}{(1-\epsilon)(\alpha-\delta)} \leq 4\zeta_0^k\leq 0.01\eta,
        \end{aligned}\]
         by choosing $\zeta_0$ to be a sufficiently small constant.
            
        \item \textbf{Event $\mathcal{E}_3$: \Cref{line:compute-p-p1} is executed and $\mathcal{P}_1\ge d_1\log^2 \frac{1}{1.1\eta}$.}
        We first bound the probability that \Cref{line:compute-p-p1} is executed.
        Recall that $S$ and $T$ are $\epsilon$-correlated in instance $\hat{I}$,
        where $\hat{I}$ is defined as in \eqref{eq:warm-up-def-hat-I}.
        According to \Cref{def:correlation}, it holds with probability $1$ that 
        \[\frac{f(Y_T)}{f_{\max}}\geq \frac{1}{1+\epsilon} \geq \frac{1}{2}. \]
        Thus, the probability that \Cref{line:compute-p-p1} is executed is at most $\frac{1}{2}$.
        By the induction hypothesis, we have $\Pr(\mathcal{P}_1>d_1\cdot \log^2\frac{1}{1.1\eta}\mid\text{\Cref{line:compute-p-p1} is executed})<1.1\eta$.
        Overall, we have
        \[\Pr(\mathcal{E}_3) \leq \frac{1}{2} \cdot 1.1\eta \leq 0.55\eta. \]
            
        \item \textbf{Event $\mathcal{E}_4$: \Cref{line:compute-p-p2} is executed and $K_2\geq k$.}
        The probability of $\mathcal{E}_4$ can be bounded similarly as $\mathcal{E}_2$.
        Let $I'$, $Y'$ be defined as in \eqref{eq:event-2-U-V-I} and \eqref{eq:event-2-V} respectively.
        Let $\hat{I}'$ defined as in \eqref{eq:warm-up-def-hat-I}.
        Define
        \[\Phi' \triangleq \left\{\aug{\hat{I}'}{\frac{1}{2},\gamma, \frac{\zeta_0\alpha (1-R)}{2}}{B_{r}(\Lambda)} ~\middle| ~r=\ell_0(\epsilon,\gamma,\delta)+1+i \cdot \ell_0\left(\frac{1}{2},\gamma, \frac{\zeta_0\alpha (1-R)}{2}\right) \text{, for } 0\leq i<k\right\}\]
        %
        %
        For any $A\in \Phi'$,  we have that $A$ occurs on $\vec{Y}$ if and only if $A$ occurs on $\vec{Y}'$, and
        \[\begin{aligned}
                \Pr(\mathcal{E}_4) 
                &\leq \Pr\left(\forall A\in\Phi': \text{$A$ occurs on $\vec{Y}$}\right) 
                =\Pr\left(\forall A\in\Phi': \text{$A$ occurs on $\vec{Y}'$}\right) \\
                &\leq \frac{(\zeta_0 \alpha)^{k}}{(1-\epsilon)(\alpha-\delta)} \leq 4\zeta_0^k\leq 0.01\eta. 
        \end{aligned}\]
            
        \item \textbf{Event $\mathcal{E}_5$: \Cref{line:compute-p-p2} is executed and $\log \frac{1}{1-R}\geq k$. }
        The probability that \Cref{line:compute-p-p2} is executed given $R$, is precisely $1-R$.
        Thus, we have
        \[\Pr(\mathcal{E}_5) = \Pr\left(\text{ \Cref{line:compute-p-p2} is executed }  ~\middle|~ \log \frac{1}{1-R}\geq k\right) \cdot \Pr\left(\log \frac{1}{1-R}\geq k\right) \leq 2^{-k}\leq 0.01\epsilon. \]
            
        \item \textbf{Event $\mathcal{E}_6$: \Cref{line:compute-p-p2} is executed and $\mathcal{P}_2>d_1\log^2 \frac{1}{1.1\eta}$.}
        The probability that \Cref{line:compute-p-p2} is executed is bounded by $1-R$.
        Recall that $1-R$ is the lower bound of $P\triangleq \Pr_{\vec{X}\sim \mu_I}(\text{ $\vec{X}$ does not avoids $A_{\lambda}$})$.
        Assuming \Cref{cond:warm-up-recursive-without-cd}, this probability $P$ is at most $\frac{\delta}{\alpha}\leq \zeta_0$.
        By the induction hypothesis, we have $\Pr(\mathcal{P}_2>d_1\cdot \log^2\frac{1}{1.1\eta})<1.1\eta$.
        By choosing $\zeta_0$ to be a sufficiently small constant, we have
        \[\Pr(\mathcal{E}_6) \leq \zeta_0 \cdot 1.1\eta \leq 0.01\eta. \]
            
        \end{itemize}

        By the union bound applied to all the above cases, we have 
        \[\Pr\left( \mathcal{P}>d_1\cdot \log^2 \frac{1}{\eta}\right)<\eta,
        \]
        for any $\eta\in (0,1)$.
    This proves \Cref{cla:bound-p}.
    \end{proof}
Combine \Cref{cla:bound-radius-using-p} and \Cref{cla:bound-p}. For any $0<\eta<1$, with probability at least $1-\eta$, 
the following upper bound holds for the radius $r$ of \Cref{alg:recursive-sampling-without-cd}, assuming that its input satisfies \Cref{cond:warm-up-recursive-without-cd}:
    %
    \[
    r\le \ell_0(\epsilon,\gamma,\delta) + O\left(\left(
    \log\frac{1}{\gamma}\cdot \log^4 \frac{1}{\eta} + \log\frac{1}{\gamma}\cdot \log^2 \frac{1}{\eta}\cdot \log \frac{1}{\alpha}\right)\cdot \log \left(\log \frac{1}{\eta}\cdot\log \frac{1}{\gamma}\cdot\log \frac{1}{\alpha} \right)\right) .\] 
This proves  \Cref{lem:item:recursive-sample-complexity} in \Cref{lem:recursive-sample-correctness-complexity}.





        

        

\subsection{Accuracy of estimation (Proof of \Cref{lem:estimate-augmenting})}\label{sec:proof-estimate-augmenting}
We prove  \Cref{lem:estimate-augmenting}, which guarantees the accuracy of the estimation of the probability 
\[
P\triangleq \Pr_{\vec{X}\sim \mu_I}[\,\vec{X}\text{ avoids }A_{\lambda}\,]
\]
from the local information.

%
We first prove the following technical lemma, which follows directly from~\Cref{def:correlation}. 

\begin{lemma}\label{lem:correlation-extension}
Let $I=(\{X_i\}_{i\in U}$, $\{A_v\}_{v\in V})$ be a LLL instance. 
For any $\epsilon>0$, any disjoint  $S,T\subset U$ with $S \cup T\neq U$, any real functions:
\[p_1,p_2:\ \Sigma_{S}\to \mathbb{R}_{\ge 0}, \quad\quad\quad q_1,q_2:\ \Sigma_{T}\to\mathbb{R}_{\ge 0},\]
if $S$ and $T$ are $\epsilon$-correlated in $I$, it holds that 
\[
\begin{aligned}
    &\sum_{\sigma\in \Sigma_{S},\tau\in\Sigma_{T}}p_1(\sigma)\cdot q_1(\tau)\cdot \nu\left  (\Omega_I^{\sigma\wedge \tau}\right)  \cdot \sum_{\sigma\in \Sigma_{S},\tau\in\Sigma_{T}}p_2(\sigma)\cdot q_2(\tau) \cdot \nu\left  (\Omega_I^{\sigma\wedge \tau}\right)\\
    \leq  (1+\epsilon)\cdot  &\sum_{\sigma\in \Sigma_{S},\tau\in\Sigma_{T}}p_1(\sigma)\cdot q_2(\tau) \cdot \nu\left  (\Omega_I^{\sigma\wedge \tau}\right)\cdot \sum_{\sigma\in \Sigma_{S},\tau\in\Sigma_{T}}p_2(\sigma)\cdot q_1(\tau) \cdot \nu\left  (\Omega_I^{\sigma\wedge \tau}\right).
\end{aligned}\]
\end{lemma}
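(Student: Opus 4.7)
The plan is to recognize this as a direct lifting of the pointwise $\epsilon$-correlated inequality via nonnegative linearity. The key observation is that both sides, once the products of single sums are expanded, can be written as a single quadruple sum over $(\sigma_1,\sigma_2)\in\Sigma_S^2$ and $(\tau_1,\tau_2)\in\Sigma_T^2$ with the same nonnegative weight $p_1(\sigma_1)q_1(\tau_1)p_2(\sigma_2)q_2(\tau_2)$; only the $\nu$-factors differ, and those are controlled by the definition of $\epsilon$-correlation applied tuple-by-tuple.

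First I would expand the left-hand side as
\[
\mathrm{LHS} = \sum_{\sigma_1,\sigma_2\in\Sigma_S}\sum_{\tau_1,\tau_2\in\Sigma_T} p_1(\sigma_1)\,q_1(\tau_1)\,p_2(\sigma_2)\,q_2(\tau_2)\cdot \nu\left(\Omega_I^{\sigma_1\wedge\tau_1}\right)\cdot \nu\left(\Omega_I^{\sigma_2\wedge\tau_2}\right).
\]
Next I would expand the right-hand side and perform the crucial relabeling: in the first factor write the dummy pair as $(\sigma_1,\tau_2)$ and in the second factor write it as $(\sigma_2,\tau_1)$, which puts the right-hand side in the form
\[
\mathrm{RHS} = (1+\epsilon)\sum_{\sigma_1,\sigma_2\in\Sigma_S}\sum_{\tau_1,\tau_2\in\Sigma_T} p_1(\sigma_1)\,q_1(\tau_1)\,p_2(\sigma_2)\,q_2(\tau_2)\cdot \nu\left(\Omega_I^{\sigma_1\wedge\tau_2}\right)\cdot \nu\left(\Omega_I^{\sigma_2\wedge\tau_1}\right).
\]
The $\epsilon$-correlation of $S$ and $T$ (\Cref{def:correlation}) then gives, for every fixed quadruple $(\sigma_1,\sigma_2,\tau_1,\tau_2)$,
\[
\nu\left(\Omega_I^{\sigma_1\wedge\tau_1}\right)\cdot \nu\left(\Omega_I^{\sigma_2\wedge\tau_2}\right) \leq (1+\epsilon)\cdot \nu\left(\Omega_I^{\sigma_1\wedge\tau_2}\right)\cdot \nu\left(\Omega_I^{\sigma_2\wedge\tau_1}\right).
\]
Multiplying this pointwise inequality by the nonnegative weight $p_1(\sigma_1)q_1(\tau_1)p_2(\sigma_2)q_2(\tau_2)$ and summing over all quadruples preserves the inequality and yields exactly $\mathrm{LHS}\le \mathrm{RHS}$.

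There is no real obstacle here beyond the bookkeeping of the relabeling; once the LHS and RHS are both identified as the same weighted quadruple sum differing only in which $\nu$-terms appear, the lemma is immediate. The one degenerate case to mention is when one of $S,T$ is empty, in which case the definition of $\epsilon$-correlated is vacuous and the inequality holds trivially (indeed with equality, since the two sides of the product simply collapse).
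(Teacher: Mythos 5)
Your proposal is correct and follows essentially the same argument as the paper's proof: expand both products into a single quadruple sum over $(\sigma_1,\sigma_2,\tau_1,\tau_2)$, apply the pointwise $\epsilon$-correlation inequality term by term, and sum with the nonnegative weights. The remark about the degenerate case where $S$ or $T$ is empty is a harmless sanity check but not needed, since the pointwise inequality degenerates to an equality there anyway.
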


\begin{proof}[Proof of \Cref{lem:correlation-extension}]

    The lemma follows by verifying:
\begin{align*}
    &\sum_{\sigma\in \Sigma_{S},\tau\in\Sigma_{T}}p_1(\sigma)\cdot q_1(\tau)\cdot \nu\left  (\Omega_I^{\sigma\wedge \tau}\right)  \cdot \sum_{\sigma\in \Sigma_{S},\tau\in\Sigma_{T}}p_2(\sigma)\cdot q_2(\tau) \cdot \nu\left  (\Omega_I^{\sigma\wedge \tau}\right)\\
    =& \sum_{\sigma_1,\sigma_2\in \Sigma_S, \tau_1, \tau_2 \in \Sigma_T} p_1(\sigma_1) \cdot q_1(\tau_1) \cdot  p_2(\sigma_2) \cdot q_2(\tau_2)  \cdot \nu\left(\Omega_I^{\sigma_1\wedge \tau_1}\right) \cdot \nu\left(\Omega_I^{\sigma_2\wedge \tau_2}\right) \\
    \leq& \sum_{\sigma_1,\sigma_2\in \Sigma_S, \tau_1, \tau_2 \in \Sigma_T} p_1(\sigma_1) \cdot q_1(\tau_1) \cdot  p_2(\sigma_2) \cdot q_2(\tau_2)  \cdot  (1+\epsilon) \cdot \nu\left(\Omega_I^{\sigma_1\wedge \tau_2}\right) \cdot \nu\left(\Omega_I^{\sigma_2\wedge \tau_1}\right)\\
    \leq & (1+\epsilon)\cdot  \sum_{\sigma\in \Sigma_{S},\tau\in\Sigma_{T}}p_1(\sigma)\cdot q_2(\tau) \cdot \nu\left  (\Omega_I^{\sigma\wedge \tau}\right)\cdot \sum_{\sigma\in \Sigma_{S},\tau\in\Sigma_{T}}p_2(\sigma)\cdot q_1(\tau) \cdot \nu\left  (\Omega_I^{\sigma\wedge \tau}\right).\qedhere
\end{align*}    
\end{proof}

Now we can prove \Cref{lem:estimate-augmenting}. 

\begin{proof}[Proof of \Cref{lem:estimate-augmenting}]

    
        


    For any $\sigma\in \Sigma_{\vbl(\Lambda)}$, define
        \[\phi(\sigma)\triangleq \begin{cases}
            1 & \text{$\sigma$ avoids the bad event $A_{\lambda}$, } \\
            0 & \text{$\sigma$ does not avoid the bad event $A_{\lambda}$. }
        \end{cases}\]
    For any $v\in V$, $\vbl(v)\subseteq W \subseteq V$, $\sigma\in \Sigma_W$, define
        \[\phi_v(\sigma)\triangleq \begin{cases}
            1 & \text{$\sigma$ avoids the bad event $A_v$, } \\
            0 & \text{$\sigma$ does not avoid the bad event $A_v$. }
        \end{cases}\]
    Let  $S=\vbl(\Lambda)$ and  $T=U\setminus\vbl(B_{\ell}(\Lambda))$.
    For any  $\sigma\in \Sigma_{S}$, define
    \[p_1(\sigma) \triangleq \prod_{\vbl(v)\subseteq S} \phi_v(\sigma) , \quad p_2(\sigma) \triangleq p_1(\sigma) \cdot \phi(\sigma). \]
    For any $\tau\in \Sigma_T$, define
    \[q_1(\tau) \triangleq  \prod_{\vbl(v)\subseteq T } \phi_v(\tau), \quad q_2(\tau) \triangleq 1. \]
    We considered a new bad event $A_{w}$ constructed by \Cref{lem:augmenting} and a augmented LLL instance $I_w$:
    \[    A_{w} \triangleq \aug{I}{\epsilon^k, \alpha_2, \alpha_1\cdot \epsilon^k}{\Lambda}, \text{ where }w\not\in V,
    \quad\text{ and }\quad 
    I_w=(\{X_i\}_{i\in U},\{A_v\}_{v\in V\cup\{w\}} ).\]
    %
    %
    According to \Cref{lem:augmenting}, we have $S$ and $T$ are $\epsilon^k$-correlated in  LLL instance $I_w$. 
    By \Cref{lem:correlation-extension}, we have 
    \begin{equation}
        \label{eq:estimate-epsilon-correlated}
        \begin{aligned}
            &\sum_{\sigma\in \Sigma_{S},\tau\in\Sigma_{T}}p_1(\sigma)\cdot q_1(\tau)\cdot \nu\left (\Omega_{I_w}^{\sigma\wedge \tau}\right)  \cdot \sum_{\sigma\in \Sigma_{S},\tau\in\Sigma_{T}}p_2(\sigma)\cdot q_2(\tau) \cdot \nu\left  (\Omega_{I_w}^{\sigma\wedge \tau}\right)\\
            \leq  (1+\epsilon^k)\cdot  &\sum_{\sigma\in \Sigma_{S},\tau\in\Sigma_{T}}p_1(\sigma)\cdot q_2(\tau) \cdot \nu\left  (\Omega_{I_w}^{\sigma\wedge \tau}\right)\cdot \sum_{\sigma\in \Sigma_{S},\tau\in\Sigma_{T}}p_2(\sigma)\cdot q_1(\tau) \cdot \nu\left  (\Omega_{I_w}^{\sigma\wedge \tau}\right).
        \end{aligned}
    \end{equation}
    Let 
    \[P\triangleq\Pr_{\vec{X}\sim \mu_{I}}[\text{ $\vec{X}$ avoids $A_{\lambda}$ }]     , 
    \quad\text{ and }\quad P_w\triangleq\Pr_{\vec{X}\sim \mu_{I_w}}[\text{ $\vec{X}$ avoids $A_{\lambda}$}].\]
    It holds that 
    \[P_w=\frac{\sum_{\sigma\in \Sigma_{S},\tau\in\Sigma_{T}}p_2(\sigma)\cdot q_1(\tau) \cdot \nu\left  (\Omega_{I_w}^{\sigma\wedge \tau}\right)}{\sum_{\sigma\in \Sigma_{S},\tau\in\Sigma_{T}}p_1(\sigma)\cdot q_1(\tau)\cdot \nu\left  (\Omega_{I_w}^{\sigma\wedge \tau}\right)}. \]
    Then, define
    \[\hat{P}\triangleq \frac{\sum_{\sigma\in \Sigma_{S},\tau\in\Sigma_{T}}p_2(\sigma)\cdot q_2(\tau) \cdot \nu\left  (\Omega_{I_w}^{\sigma\wedge \tau}\right)}{\sum_{\sigma\in \Sigma_{S},\tau\in\Sigma_{T}}p_1(\sigma)\cdot q_2(\tau) \cdot \nu\left  (\Omega_{I_w}^{\sigma\wedge \tau}\right)}. \]
    The above $P_w$ and $\hat{P}$ are well-defined, since their denominators are non-zero, i.e.
    %
    \[\begin{aligned}
        \sum_{\sigma\in \Sigma_{S},\tau\in\Sigma_{T}}p_1(\sigma)\cdot q_2(\tau) \cdot \nu\left  (\Omega_{I_w}^{\sigma\wedge \tau}\right) &\geq \sum_{\sigma\in \Sigma_{S},\tau\in\Sigma_{T}}p_1(\sigma)\cdot q_1(\tau)\cdot \nu\left (\Omega_{I_w}^{\sigma\wedge \tau}\right) \\
        &= \nu(\Omega_{I_w}) \geq \alpha_2-\alpha_1\cdot \epsilon^k>0.
    \end{aligned}\]
    Obviously, the value of $\hat{P}$ can be determined by $\Lambda$, $A_{\lambda}$ and $I(B_{\ell+1}(\Lambda))$. 
    By \Cref{eq:estimate-epsilon-correlated} and simply swap the definition of $q_1$ with $q_2$, we directly have 
    \begin{align}\label{eq:estimate-1}
        \hat{P} \leq \left(1+\epsilon^k\right) \cdot P_w \leq \left(1+\epsilon^k\right)^2 \cdot \hat{P}.
    \end{align}
    %
    %
    %
    Now we try to estimate the difference between $P$ and $P_w$. 
    First, we have 
    \begin{equation}\label{eq:estimate-2}
    \begin{aligned}
        P & = \frac{\sum_{\sigma\in \Sigma_S}  p_2(\sigma) \cdot \nu(\Omega^{\sigma}_{I}) }{\nu\left(\Omega_{I}\right)}
        \leq \frac{\sum_{\sigma\in \Sigma_S}  p_2(\sigma)\cdot (\nu(\Omega^{\sigma}_{I_w}) + \nu(A_{w}) \cdot \nu_S(\sigma)) }{\nu(\Omega_{I})}\\
        &\leq \frac{\sum_{\sigma\in \Sigma_S}  p_2(\sigma)\cdot \nu(\Omega^{\sigma}_{I_w}) }{\nu(\Omega_{I_w})} +  \frac{\sum_{\sigma\in \Sigma_S} p_2(\sigma) \cdot  \nu(A_w)\cdot \nu_S(\sigma)}{\nu(\Omega_{I})}
        \leq P_w + \frac{\alpha_1\cdot \epsilon^k}{\alpha_1} =  P_w + \epsilon^k.
    \end{aligned}
    \end{equation}
    On the other hand, we have 
    \begin{equation}\label{eq:estimate-3}
        \begin{aligned}
        P &= \frac{\sum_{\sigma\in \Sigma_S}  p_2(\sigma)\cdot \nu(\Omega^{\sigma}_{I}) }{\nu(\Omega_{I})}
        \geq \frac{\sum_{\sigma\in \Sigma_S}  p_2(\sigma)\cdot \nu(\Omega^{\sigma}_{I_w}) }{\nu(\Omega_{I_w})} \cdot \frac{\nu(\Omega_{I_w})}{\nu(\Omega_{I})}\\
        & \geq \frac{\sum_{\sigma\in \Sigma_S}  p_2(\sigma) \cdot \nu(\Omega^{\sigma}_{I_w}) }{\nu(\Omega_{I_w})} \cdot \frac{\nu(\Omega_{I})-\nu(A_w)}{\nu(\Omega_{I})}
        \geq P_w \cdot \left(1-\frac{\alpha_1\cdot \epsilon^k}{\alpha_1}\right) = P_w\cdot (1-\epsilon^k).
    \end{aligned}
    \end{equation}
    By combining \eqref{eq:estimate-1} \eqref{eq:estimate-2} \eqref{eq:estimate-3}, we have 
    \[\hat{P} - 2 \epsilon^k \leq  P \leq  \hat{P}+2\epsilon^k.\qedhere
    \]
\end{proof}

\subsection{Analysis of \emph{Substituting} (Proof of \Cref{lem:substituting})}\label{sec:proof-substituting}

We prove \Cref{lem:substituting}, which guarantees the soundness of the substituting trick. 
We  first construct the new random variable $X_{\beta}$ and bad event $A_{\kappa}$, 
and then prove the identity stated in \Cref{lem:substituting}.

\paragraph{The construction of $X_{\beta}$. }
We construct a new random variable $X_{\beta}$ follows a distribution $\nu_{\beta}$ over a domain $\Sigma_{\beta}$. 
Here is the construction of $\Sigma_{\beta}$ and $\nu_{\beta}$ .
\begin{itemize}
    
    \item The domain $\Sigma_{\beta}$. 
    Let $R=\vbl(B_{\ell+1}(\Lambda)) \setminus \vbl(B_{\ell}(\Lambda))$. 
    We set $\Sigma_{\beta}=\Sigma_{R}$. 
        
    \item The distribution $\nu_{\beta}$.
    Let $S=\vbl(\Lambda)$ and $T=U\setminus \vbl(B_{\ell} (\Lambda))$.
    Let $\omega \in \Sigma_{T \setminus R}$ be an arbitrary configuration.
    For any $\pi\in \Sigma_{\beta}$, define
    \[P(\pi) \triangleq \frac{\nu\left( \Omega_{\hat{I}}^{\omega \wedge \pi \wedge \sigma}\right)}{\nu_T(\omega \wedge \pi) \cdot \nu_{S}(\sigma)}. \]
    %
    %
    Note that $P(\pi)$ does not depend on $\omega$. 
    Order all the configurations $\pi\in \Sigma_{\beta}$ increasingly by $P(\pi)$ as $\pi_1, \pi_2, ..., \pi_{|\Sigma_{\beta}|}$. 
    Then, for $1\leq i\leq |\Sigma_{\beta}|$, we set
    \[\nu_{\beta}(\pi_i) = 
        \begin{cases}
            \frac{P\left(\pi_{1}\right)}{P\left(\pi_{|\Sigma_{\beta}|}\right)} &  i=1, \\
            \frac{P\left(\pi_{i}\right) - P\left(\pi_{i-1}\right)}{P\left(\pi_{|\Sigma_{\beta}|} \right)} & 1<i\leq |\Sigma_{\beta}|.
        \end{cases}
    \]
\end{itemize}

\paragraph{The construction of $A_{\kappa}$. } 
We construct the new bad event $A_{\kappa}$ as follows. The event $A_{\kappa}$ is defined on the variables in $\vbl(\kappa) \triangleq R\cup \{\beta\}$ such that  for any $\tau\in \Sigma_{\vbl(\kappa)}$,
\[
A_{\kappa}\text{ occurs if and only if } P(\tau_{R})< P(\tau_{\beta}).
\]
%

\begin{proof}[Proof of \Cref{lem:substituting}]
We prove the claimed properties one by one. 

To prove \Cref{lem:substituting-item-1} in \Cref{lem:substituting}, it is easy to verify that the constructions of $X_{\beta}$ and $A_{\kappa}$ depend only on the specifications of $\Lambda$, $\sigma$, $(\epsilon, \gamma,\delta)$  and  $I(B_{\ell+1}(\Lambda))$.

Next, we prove \Cref{lem:substituting-item-2} in \Cref{lem:substituting}. Fix any $W\subseteq T$ and its complement $\overline{W}= T\setminus W$. 
Define the following set of assignments:
\[\Sigma_{\overline{W}}' \triangleq \{\overline{w}\in \Sigma_{\overline{W}} \mid \text{$\overline{\omega}\wedge \omega$ avoids all the bad events $A_v$ such that $\vbl(v)\subseteq T$ and $\vbl(v)\nsubseteq W$}\}.\]
Let $U_{\sigma}\triangleq (U\setminus \vbl(B_{\ell}(\Lambda)))\cup \{\beta\}$.
For any $\omega\in \Sigma_{W}$ and $\overline{\omega}\in \Sigma_{\overline{W}}$, it holds that 
\[\mu^{\omega \wedge \sigma}_{\hat{I},\overline{W}} (\overline{\omega}) =\begin{cases}
    \frac{\nu\left(\Omega_{\hat{I}}^{\sigma \wedge \omega \wedge \overline{\omega}}\right)}{\nu\left(\Omega_{\hat{I}}^{\sigma \wedge \omega }\right)} & \text{if $\nu\left(\Omega_{\hat{I}}^{\sigma \wedge \omega }\right)>0$ and $ \overline{\omega}\in \Sigma_{\overline{W}}' $ }, \\    
    0 & \text{otherwise. } \\
\end{cases}\]
and
\[\mu^{\omega}_{I_{\sigma},\overline{W}} (\overline{\omega}) =\begin{cases}
    \frac{\nu_{U_{\sigma}}\left(\Omega_{I_{\sigma}}^{\omega \wedge \overline{\omega}}\right)}{\nu_{U_{\sigma}}\left(\Omega_{I_{\sigma}}^{ \omega }\right)} & \text{if $\nu_{U_{\sigma}}\left(\Omega_{I_{\sigma}}^{ \omega }\right)>0$ and  $\overline{\omega}\in \Sigma_{\overline{W}}'$},\\

    0 & \text{otherwise}. 
    
\end{cases}\]
We consider the following  cases:
\begin{itemize}
    \item \textbf{Case 1: $\overline{\omega}\notin \Sigma_{\overline{W}}'$.} 
    In this case, it holds that $\mu^{\omega \wedge \sigma}_{\hat{I},\overline{W}} (\overline{\omega}) = \mu^{\omega}_{I_{\sigma},\overline{W}} (\overline{\omega})=0$. 

    \item \textbf{Case 2: $\overline{\omega}\in \Sigma_{\overline{W}}'$ and  $\nu\left(\Omega_{\hat{I}}^{\sigma \wedge \omega }\right)=0$.} 
    In this case, we have 
    \[\nu_{U_{\sigma}}\left(\Omega_{I_{\sigma}}^{ \omega }\right)
    =\sum_{\overline{\omega}\in \Sigma_{\overline{W}}'} \nu_{U_{\sigma}}\left(\Omega_{I_{\sigma}}^{\omega \wedge \overline{\omega}} \right)
    =\sum_{\overline{\omega}\in \Sigma_{\overline{W}}'} \nu\left(\Omega_{\hat{I}}^{\sigma \wedge \omega \wedge \overline{\omega}} \right)
    =\nu\left(\Omega_{\hat{I}}^{\sigma \wedge \omega }\right)=0.\]
    Therefore, it holds that $\mu^{\omega \wedge \sigma}_{\hat{I},\overline{W}} (\overline{\omega}) = \mu^{\omega}_{I_{\sigma},\overline{W}} (\overline{\omega})=0$.

    \item \textbf{Case 3: $\overline{\omega}\in \Sigma_{\overline{W}}'$ and  $\nu\left(\Omega_{I_{\sigma}}^{ \omega }\right)>0$.}
    In this case, it holds that
    \[\mu^{\omega \wedge \sigma}_{\hat{I},\overline{W}} (\overline{\omega}) 
    = \frac{\nu\left(\Omega_{\hat{I}}^{\sigma \wedge \omega \wedge \overline{\omega}}\right)}{\nu\left(\Omega_{\hat{I}}^{\sigma \wedge \omega }\right)} 
    = \frac{\nu\left(\Omega_{\hat{I}}^{\sigma \wedge \omega \wedge \overline{\omega}}\right)}{\sum_{\overline{\omega}\in \Sigma_{\overline{W}}'} \nu\left(\Omega_{\hat{I}}^{\sigma \wedge \omega \wedge \overline{\omega}} \right)}\]
    and
    \[\begin{aligned}
    \mu^{\omega}_{I_{\sigma},\overline{W}} (\overline{\omega}) 
    &=  \frac{\nu_{U_{\sigma}}\left(\Omega_{I_{\sigma}}^{\omega \wedge \overline{\omega}}\right)}{\nu_{U_{\sigma}}\left(\Omega_{I_{\sigma}}^{ \omega }\right)} 
    = \frac{\nu_{U_{\sigma}}\left(\Omega_{I_{\sigma}}^{\omega \wedge \overline{\omega}}\right)}{\sum_{\overline{\omega}\in \Sigma_{\overline{W}}'} \nu_{U_{\sigma}}\left(\Omega_{I_{\sigma}}^{ \omega \wedge \overline{\omega}}\right)} 
    = \frac{\nu_T(\omega \wedge \overline{\omega}) \cdot \nu_{\beta} ( (\omega \wedge \overline{\omega})_R)}{\sum_{\overline{\omega}\in \Sigma_{\overline{W}}'} \nu_T(\omega \wedge \overline{\omega}) \cdot \nu_{\beta} ( (\omega \wedge \overline{\omega})_R)} \\ 
    &= \frac{\nu_T(\omega \wedge \overline{\omega}) \cdot P( (\omega \wedge \overline{\omega})_R)}{\sum_{\overline{\omega}\in \Sigma_{\overline{W}}'} \nu_T(\omega \wedge \overline{\omega}) \cdot P ( (\omega \wedge \overline{\omega})_R)} 
    = \frac{\nu\left(\Omega_{\hat{I}}^{\sigma \wedge \omega \wedge \overline{\omega}}\right)}{\sum_{\overline{\omega}\in \Sigma_{\overline{W}}'} \nu\left(\Omega_{\hat{I}}^{\sigma \wedge \omega \wedge \overline{\omega}} \right)}
    \end{aligned}\]
    Thus, we have $\mu^{\omega \wedge \sigma}_{\hat{I},\overline{W}} (\overline{\omega}) = \mu^{\omega}_{I_{\sigma},\overline{W}} (\overline{\omega})$.
\end{itemize}

Combing three cases, we prove that  
\[\forall \omega\in \Sigma_{W}, \overline{\omega}\in \Sigma_{\overline{W}}: \quad \mu^{\omega \wedge \sigma}_{\hat{I},\overline{W}} (\overline{\omega}) = \mu^{\omega}_{I_{\sigma},\overline{W}} (\overline{\omega}).\]

At last, we prove \Cref{lem:substituting-item-3} in \Cref{lem:substituting}.
As before, for any $v\in V$, $\omega\in \Sigma_W$, where $\vbl(v)\subseteq W \subseteq V$, let $\phi_v(\omega)$ indicate whether $\omega$ avoids  $A_v$.
Let $S=\vbl(\Lambda)$ and $T=U\setminus \vbl(B_{\ell}(\Lambda))$.
%
%
For any $x\in \Sigma_S$, define    
\[
        p_1(x)\triangleq  \prod_{ \vbl(v)\subseteq S} \phi_v(x)
        \quad   \text{ and }\quad
        p_2(x) \triangleq  
        \begin{cases}
            \frac{1}{\nu_S(x)} &  x=\sigma , \\
            0 & x\neq \sigma.
        \end{cases}
\]
%
%
For any $y\in \Sigma_T$, define
\[      q_1(y)\triangleq  \prod_{\vbl(v)\subseteq T} \phi_v(y)
        \quad   \text{ and }\quad
        q_2(y) \triangleq 
        \begin{cases}
            \frac{1}{\nu_T(y)} &  y_R=\pi_{|\Sigma_{\beta}|} , \\
            0 & y_R\neq\pi_{|\Sigma_{\beta}|}.
        \end{cases}
\]
It holds that
    \[\begin{aligned}
        \nu_{U_{\sigma}}(\Omega_{I_{\sigma}}) &= \sum_{\tau\in \Sigma_T} \nu_{T}(\tau) \cdot q_1(\tau) \cdot \frac{P(\tau_{R})}{P(\pi_{|\Sigma_{\beta}|})} = \frac{\sum_{\sigma\in \Sigma_S, \tau\in \Sigma_T}  p_2(\sigma) \cdot q_1(\tau)\cdot \nu(\Omega_{\hat{I}}^{\sigma\wedge\tau})}{\sum_{\sigma\in \Sigma_S, \tau\in \Sigma_T}  p_2(\sigma) \cdot q_2(\tau)\cdot \nu(\Omega_{\hat{I}}^{\sigma\wedge\tau})}. 
    \end{aligned} \]
    and
    \[\begin{aligned}
        \nu(\Omega_{\hat{I}}) &\leq \frac{ \nu_{U}(\Omega_{\hat{I}})}{\sum_{\sigma\in \Sigma_S, \tau\in \Sigma_T}  p_1(\sigma) \cdot q_2(\tau)\cdot \nu(\Omega_{\hat{I}}^{\sigma\wedge\tau})} = \frac{\sum_{\sigma\in \Sigma_S, \tau\in \Sigma_T}  p_1(\sigma) \cdot q_1(\tau)\cdot \nu(\Omega_{\hat{I}}^{\sigma\wedge\tau})}{\sum_{\sigma\in \Sigma_S, \tau\in \Sigma_T}  p_1(\sigma) \cdot q_2(\tau)\cdot \nu(\Omega_{\hat{I}}^{\sigma\wedge\tau})}.
    \end{aligned}\]
%

%
By \Cref{lem:correlation-extension}, we have $\nu_{U}(\Omega_{\hat{I}})\leq (1+\epsilon) \cdot \nu_{U_{\sigma}}(\Omega_{I_{\sigma}})$, which implies $\nu_{U_{\sigma}}(\Omega_{I_{\sigma}})\geq (1-\epsilon) \cdot \nu_{U}(\Omega_{\hat{I}})$. Recall that the original LLL instance $I$ is $\alpha$-satisfiable and the additional bad event occurs with probability no more than $\delta$. Therefore, the LLL instance $I_{\sigma}$ is at least $(1-\epsilon)(\alpha-\delta)$-satisfiable. 
\end{proof}

\subsection{Wrapping up the analysis of \emph{Resampling} (Proof of \Cref{lem:resample-correctness-complexity})}\label{sec:proof-resample-correctness-complexity}
At last, we wrap up the analysis of the \emph{Resampling} phase and prove \Cref{lem:resample-correctness-complexity}.

\begin{proof}[Proof of \Cref{lem:item-resample-correctness} of \Cref{lem:resample-correctness-complexity}]
Denote by $v_1,v_2,...,v_{n}$ the sequence of nodes obtained by sorting $V$ in ascending order according to IDs.
For $1\leq k\leq n$,  denote by $\RV{B}_k$ the ball associated to node $v_k$, i.e.
\[ \RV{B}_k\triangleq \begin{cases}
     B_{r_{v_k}}(p_{v_k}) & \text{if }v_k\in \RV{R}\text{ and }p_{v_k},r_{v_k}\not\in\{\perp\}, \\
    \emptyset& \text{otherwise.}
\end{cases}\]
%
Fix arbitrary $B_1,B_2,...,B_n\subseteq V$ and assignments $\sigma_1\in \Sigma_{\vbl(B_1)}, \sigma_2\in \Sigma_{\vbl(B_2)}$, ..., $\sigma_n\in \Sigma_{\vbl(B_n)}$ satisfying 
\[\Pr\left[\left(\bigwedge_{k=1}^{n} \RV{B}_k=B_k\right) \wedge \left(\bigwedge_{k=1}^{n} Y_{\vbl(B_k)}=\sigma_k\right) \right]>0.\]
Assume that the input $\vec{Y}$ and $p_v,r_v$ for $v\in\RV{R}$ satisfy the properties asserted by~\Cref{lem:clustering-correctness}.
Our goal is to show that conditioned on $\RV{B}_k=B_k$ and $Y_{\vbl(B_k)}=\sigma_k$ for all $1\leq k\leq n$, right after \Cref{alg:resampling} being sequentially executed on all nodes in $\RV{R}$, the output assignment $\vec{Y}$  follows the distribution $\mu_I$.

%
For $0\leq k\leq n$, let $\Vec{Y}^{(k)}$ denote the random assignment $\Vec{Y}$ right after \Cref{alg:resampling} terminates on all the active nodes $v_j$ with $j\leq k$.
For $0\leq k \leq n$, define
\[N_k\triangleq \{j \mid  k<j\leq n \land B_j\neq \emptyset \}. \]
%
%
%
%
Recall the choice of parameter $(\epsilon_0,\gamma_0,\delta_0)$ in \eqref{eq:def-parameter-without-cd-expected-complexity} and let $\ell\triangleq \ell_0(\epsilon_0,\gamma_0,\delta_0)$.
In the rest of the proof, we show that $(\vec{Y}^{(k)}, \{B_i\}_{i\in N_k})$ satisfies the clustered conditional Gibbs property on $I$ with parameter $(\epsilon_0,\gamma_0,\delta_0)$ for any $0\leq k\leq n$.
In particular, this guarantees that  $(\vec{Y}^{(n)}, \emptyset)$ satisfies the clustered conditional Gibbs property, which implies that the final output $\vec{Y}^{(n)}$ follows the distribution $\mu_I$.
%

For $k\in N_{0}$, let $A_{\lambda_k}\triangleq \aug{I}{\epsilon_0,\gamma_0,\delta_0}{B_k}$.
For $0\leq k\leq  n$, define
\[\begin{aligned}
    & \overline{U}^{(k)} \triangleq U\setminus \bigcup_{j\in N_k} \vbl(B_{\ell_0(\epsilon_0,\gamma_0,\delta_0)} (B_j)), \quad \text{and} \quad \overline{I}^{(k)} \triangleq  \left(\{X_i\}_{i\in U}, \{A_v\}_{v\in V}\cup \{A_{\lambda_j}\}_{j \in N_k}\right). 
\end{aligned}\]
The goal is to prove that $Y^{(k)}_{\overline{U}^{(k)}}$ follows the distribution $\mu^{\bigwedge_{j\in N_k} \sigma_j}_{\overline{I}^{(k)}, \overline{U}^{(k)}}$ for every $0\leq k\leq n$.

We prove this by induction on $k$. 
Recall that we assume the input satisfies the properties asserted by \Cref{lem:clustering-correctness}. According to this, it holds that $(Y^{(0)}, \{B_i\}_{i\in N_0})$ satisfies \Cref{cond:clustering}.
Thus, $(\vec{Y}^{(0)}, \{B_i\}_{i\in N_0})$ satisfies the clustered conditional Gibbs property on $I$ with parameter $(\epsilon_0,\gamma_0,\delta_0)$. 
This becomes the basis of the induction.

Then, we consider the general case. 
By induction hypothesis, assume that $(\vec{Y}^{(k-1)},  \{B_i\}_{i\in N_{k-1}})$ satisfy the clustered conditional Gibbs property on $I$ with parameter $(\epsilon_0,\gamma_0,\delta_0)$ for all $1\leq k\leq n$.
This implies
\begin{align} \label{eq:resample-induction-1}
    Y^{(k-1)}_{\overline{U}^{(k-1)}} \sim \mu^{\bigwedge_{j\in N_{k-1}} \sigma_j}_{\overline{I}^{(k-1)}, \overline{U}^{(k-1)}} \text{ for all }1\le k\le n.
\end{align}
%
%
If $B_k=\emptyset$,  then $(\vec{Y}^{(k)},  \{B_i\}_{i\in N_k})=(\vec{Y}^{(k-1)},  \{B_i\}_{i\in N_{k-1}})$ and the induction follows directly.
If $B_k\neq \emptyset$, then $v_k$ must be an active node in $\RV{R}$ on which \Cref{alg:resampling} is executed. 
We first verify that the input to \Cref{line:callrecursivesampling} of \Cref{alg:resampling} executed on $v_k$, denoted by $(\vec{Y}'; I', B_k,\epsilon_0,\gamma_0,\delta_0,\gamma_0)$, satisfies \Cref{cond:warm-up-recursive-without-cd}.
%

%
%


For $j\in N_0$, let $X_{\beta_j}$ and $A_{\kappa_j}$ denote the respective random variable and bad event constructed in \Cref{lem:substituting} under parameter $B_j$, $\sigma_j$, $(\epsilon_0,\gamma_0,\delta_0)$, $I(B_{\ell+1}(B_j))$.

For $0 \leq s \leq t\leq n$, define
\[\begin{aligned}
    &U^{(s,t)} \triangleq U\setminus \bigcup_{i\in N_{s}\setminus N_t} \vbl(B_{\ell} (B_i)),  \\
    &V^{(s,t)} \triangleq V\setminus \bigcup_{i\in N_s\setminus N_t}
    B_{\ell+1} (B_i), \\
    &I^{(s,t)} \triangleq \left(\{X_i\}_{i\in U^{(s,t)}}\cup \{X_{\beta_i}\}_{i\in N_s\setminus N_t}, \{A_v\}_{v\in V^{(s,t)}}\cup \{A_{\kappa_i}\}_{i\in N_s\setminus N_t}\right).
\end{aligned}\]
Observe that $I^{(s,s)}=I$ for all $0\leq s\leq n$.
%
%
Observe that, for $0\leq s< t\leq n$, if $B_t=\emptyset$, then $I^{(s,t)}=I^{(s,t-1)}$. And if $B_t\neq \emptyset$, then $I^{(s,t)}$ is obtained from $I^{(s,t-1)}$ by substituting the ball $B_{\ell+1}(B_t)$ with $A_{\lambda_t}$ and substituting $\vbl(B_{\ell}(B_t))$ with $X_{\beta_t}$.
Recall that we assume the input satisfies the properties asserted by \Cref{lem:clustering-correctness}. According to this, it holds that $(Y^{(0)}, \{B_i\}_{i\in N_0})$ satisfies \Cref{cond:clustering}. Thus, for any distinct $i,j\in N_0$, the distance between $B_i$ and $B_j$ is at least $2(\ell+2)$.
One can verify that $I^{(k,n)}=I'$.

For $k\leq j\leq n$, define
\[
\vec{Y}^{(k,j)}\triangleq Y^{(k-1)}_{U^{(k,j)}} \wedge (Y'_{\beta_j})_{j\in N_k\setminus N_j}.
\]
One can verify that $\vec{Y}^{(k,n)}=\vec{Y}'$.

%
%

%

%

%
%
Then, we verify that the input of \RecursiveSampling$(\vec{Y}'; I', B_k ,\epsilon_0,\gamma_0,\delta_0,\gamma_0)$ satisfies \Cref{cond:warm-up-recursive-without-cd}.
The first property can be verified easily. 
For the second property, according to \Cref{lem:substituting}, it holds that $\nu(\Omega_{I^{(k,j)}})\geq (1-\epsilon_0)(\nu(\Omega_{I^{(k,j-1)}}) -\delta_0)$ for all $k<j\leq n$.
Note that the LLL instance $I^{(k,k)}$ is the original LLL instance $I$, which is $\gamma$-satisfiable.
Thus, it holds that 
\[
\begin{aligned}
    \nu(\Omega_{I^{(k,n)}}) &\geq (1-\epsilon_0)^n \cdot \nu(\Omega_{I^{(k,k)}}) - n\cdot \delta_0 \\
    & \geq  \left(1-\frac{1}{2n^3 }\right)^n\nu(\Omega_{I^{(k,k)}})-n\cdot \frac{\zeta_0\cdot \gamma}{24 n^3}\geq \frac{\gamma}{8} \geq \gamma_0 ,
\end{aligned}
\]
which means that $I^{(k,n)}(V\setminus B_k)$ is also $\gamma_0$-satisfiable.
It remains to verify the third property, which states that the $(\vec{Y}^{(k,n)}, B_k)$ satisfies the augmented conditional Gibbs property on instance $I^{(k,n)}$ with parameter $(\epsilon_0,\gamma_0, \delta_0)$.
This is proved by the following claim.

\begin{claim}\label{cla:resample-induction-2}
    For $k\leq j\leq n$, it holds that $(\vec{Y}^{(k,j)}, \{B_k\}\cup \{B_i\}_{i\in N_j})$ satisfies the clustered conditional Gibbs property on $I^{(k,j)}$ with parameter $(\epsilon_0,\gamma_0,\delta_0)$.
\end{claim}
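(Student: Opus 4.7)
The plan is to prove the claim by induction on $j$ for fixed $k$, using \Cref{lem:substituting} as the main workhorse. For the base case $j=k$, note that $N_k\setminus N_k=\emptyset$, so $I^{(k,k)}=I$ and $\vec{Y}^{(k,k)}=\vec{Y}^{(k-1)}$. Since $B_k\neq\emptyset$, we have $N_{k-1}=\{k\}\cup N_k$, and hence $\{B_k\}\cup\{B_i\}_{i\in N_k}=\{B_i\}_{i\in N_{k-1}}$. The claim therefore reduces to $(\vec{Y}^{(k-1)},\{B_i\}_{i\in N_{k-1}})$ having the clustered conditional Gibbs property on $I$, which is precisely the outer induction hypothesis at step $k-1$.

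For the inductive step, assume the claim for $j-1$. If $B_j=\emptyset$, then $I^{(k,j)}=I^{(k,j-1)}$, $\vec{Y}^{(k,j)}=\vec{Y}^{(k,j-1)}$, and $N_j=N_{j-1}$, so there is nothing to show. Otherwise $B_j\neq\emptyset$, and I plan to invoke \Cref{lem:substituting} on the LLL instance $I^{(k,j-1)}$ with $\Lambda=B_j$, assignment $\sigma_j$ on $\vbl(B_j)$, and parameters $(\epsilon_0,\gamma_0,\delta_0)$. The separation property from \Cref{cond:clustering} guarantees that $B_j$ lies at distance at least $2(\ell+2)$ from every other ball in $\{B_k\}\cup\{B_i\}_{i\in N_j}$, so the $(\ell+1)$-neighborhood of $B_j$ in $I^{(k,j-1)}$ does not intersect any previously substituted ball and thus coincides with the corresponding neighborhood in the original $I$. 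This matters because it makes the augmenting event used inside $I^{(k,j-1)}$ on $B_j$ coincide with $A_{\lambda_j}$ and the substituted pair $(X_{\beta_j},A_{\kappa_j})$ coincide with the one built from $I$, so the application of \Cref{lem:substituting} at step $j$ is internally consistent with the ones at earlier steps.

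With the substituting lemma in hand, \Cref{lem:substituting-item-2} yields, for any partition $(W,\overline{W})$ of the outside variables and any boundary condition $\omega$ on $W$, an exact equality between the conditional marginal $\mu^{\omega\wedge\sigma_j}_{\widehat{I^{(k,j-1)}},\overline{W}}$ in the ``hat'' (augmented) version of $I^{(k,j-1)}$ and the marginal $\mu^{\omega}_{I^{(k,j)},\overline{W}}$ in the substituted instance $I^{(k,j)}$. Chaining this identity with the induction hypothesis for $j-1$---which provides that, conditioned on the ball assignments in $\{B_k\}\cup\{B_i\}_{i\in N_{j-1}}$, the remaining part of $\vec{Y}^{(k,j-1)}$ is already distributed as the augmented-instance marginal---and using that $Y'_{\beta_j}$ is drawn independently from $\mu^{Y_{\vbl(\kappa_j)\setminus\{\beta_j\}}}_{I^{(k,j)},\beta_j}$ by construction, yields that $(\vec{Y}^{(k,j)},\{B_k\}\cup\{B_i\}_{i\in N_j})$ has the clustered conditional Gibbs property on $I^{(k,j)}$. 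The $\gamma_0$-satisfiability of the shell around each ball of the clustering needed by \Cref{def:clustered-conditional-gibbs} follows from \Cref{lem:substituting-item-3} combined with the lower bound on $\nu(\Omega_{I^{(k,n)}})$ computed earlier in the proof.

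The main obstacle will be the careful bookkeeping of which variables and bad events live in which instance: at step $j$ one must verify that the augmenting event appearing on $B_j$ inside the ``hat'' of $I^{(k,j-1)}$ matches exactly the pair $(X_{\beta_j},A_{\kappa_j})$ placed into $I^{(k,j)}$, together with the distribution $\nu_{\beta_j}$; the substituting lemma is engineered precisely to deliver this correspondence, but threading the resulting conditional-marginal identities through the inductive applications---and confirming that the successive augmentations do not interfere with one another thanks to ball separation---is where the technicality lies.
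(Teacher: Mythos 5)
Your proposal follows the paper's proof essentially exactly: induction on $j$ with the base case $j=k$ reducing to the outer induction hypothesis at step $k-1$ (using $N_{k-1}=\{k\}\cup N_k$ when $B_k\neq\emptyset$), the inductive step invoking \Cref{lem:substituting} to peel off one more ball, and the ball-separation property from \Cref{cond:clustering} to ensure that the local construction of $(X_{\beta_j},A_{\kappa_j})$ and the draw of $Y'_{\beta_j}$ remain consistent across all of $I$, $I^{(k,j)}$, $I^{(k,n)}$, and the hat-augmented versions. The paper makes the bookkeeping you flagged as the "main obstacle" explicit by introducing auxiliary instances $\hat{I}^{(k,j)}$ and variable sets $\hat{U}^{(k,j)}$ and proving the distributional identity $Y^{(k,j)}_{\hat{U}^{(k,j)}}\sim \mu^{\sigma_k\wedge\bigwedge_{i\in N_j}\sigma_i}_{\hat{I}^{(k,j)},\hat{U}^{(k,j)}}$, but the underlying idea is identical to yours.
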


\begin{proof}

For $k\leq j\leq n$, define
\[\begin{aligned}
    & \hat{U}^{(k,j)} \triangleq U^{(k-1,n)} \cup \{\beta_i\}_{i\in N_k\setminus N_j}, \\ 
    & \hat{I}^{(k,j)} \triangleq  \left(\{X_i\}_{i\in U^{(k,j)}}\cup \{X_{\beta_i}\}_{i\in N_k\setminus N_j}, \{A_v\}_{v\in V^{(k,j)}}\cup \{A_{\kappa_i}\}_{i\in N_k\setminus N_j} \cup \{A_{\lambda_i}\}_{i\in\{k\}\cup N_j}\right). 
\end{aligned}\]

%
%
The goal is to prove that $Y^{(k,j)}_{\hat{U}^{(k,j)}}$ follows the distribution $\mu^{\sigma_k \wedge \bigwedge_{i\in N_j} \sigma_i}_{\hat{I}^{(k,j)}, \hat{U}^{(k,j)}}$  for $k\leq j\leq n$. 

This is proved by induction on $j$.
For the induction basis, assuming $j=k$, in this case we have $(\vec{Y}^{(k,k)}, \{B_k\}\cup \{B_i\}_{i\in N_k})=(\vec{Y}^{(k-1)}, \{B_i\}_{i\in N_{k-1}})$ and $I^{(k,k)}=I$.
By~\eqref{eq:resample-induction-1}, we have 
$$Y^{(k,k)}_{\overline{U}^{(k-1)}}\sim \mu^{\bigwedge_{i\in N_{k-1}} \sigma_i}_{\overline{I}^{(k-1)}, U^{(k-1)}}.$$
Observe that $U^{(k-1,n)}=\hat{U}^{(k,k)}$, $\left(\bigwedge_{i\in N_{k-1}} \sigma_i\right)=\left(\sigma_k \wedge \bigwedge_{i\in N_{k}} \sigma_i\right)$ and $\overline{I}^{(k-1)}=\hat{I}^{(k,k)}$.
It holds that  $$Y^{(k,k)}_{\hat{U}^{(k,k)}}\sim \mu^{\sigma_k \wedge \bigwedge_{j\in N_{k}} \sigma_j}_{\hat{I}^{(k,k)}, \hat{U}^{(k,k)}}.$$

Next, consider the general case. By induction hypothesis,  for $j-1$ for $k<j\leq n$,
the clustered conditional Gibbs property is satisfied by $(\vec{Y}^{(k,j-1)}, \{B_k\}\cup \{B_i\}_{i\in N_{j-1}})$  on $I^{(k,j-1)}$ with parameter $(\epsilon_0,\gamma_0,\delta_0)$,
which means
\begin{align}\label{eq:induction-hypothesis-2}
    Y^{(k,j-1)}_{\hat{U}^{(k,j-1)}}\sim \mu^{\sigma_k \wedge \bigwedge_{i\in N_{j-1}} \sigma_i}_{\hat{I}^{(k,j-1)}, \hat{U}^{(k,j-1)}}.
\end{align}
%
%

Depending on whether $B_j=\emptyset$, there are two cases. 
First, if $B_j$ is $\emptyset$, then  $(\vec{Y}^{(k,j)}, \{B_k\} \cup \{B_i\}_{i\in N_{j}})=(\vec{Y}^{(k,j-1)},\{B_k\} \cup \{B_i\}_{i\in N_{j-1}})$ and $I^{(k,j)}=I^{(k,j-1)}$.
By \eqref{eq:induction-hypothesis-2}, it holds that 
\[ Y^{(k,j)}_{\hat{U}^{(k,j)}}\sim \mu^{\sigma_k \wedge \bigwedge_{i\in N_{j}} \sigma_i}_{\hat{I}^{(k,j)}, \hat{U}^{(k,j)}}.\]
%
%
Next, assume  $B_j\neq \emptyset$.
%
%
%
%
%
%
By \Cref{lem:substituting}, for any $\tau\in \Sigma_{\hat{U}^{(k,j-1)}}$,
\[\mu^{\sigma_k\wedge \bigwedge_{i\in N_{j-1}} \sigma_i }_{\hat{I}^{(k,j-1)}, \hat{U}^{(k,j-1)}} (\tau) = \mu^{\sigma_k\wedge  \bigwedge_{i\in N_j} \sigma_i}_{\hat{I}^{(k,j)}, \hat{U}^{(k,j-1)}} (\tau).\]
Note that $Y^{(k,j-1)}_{\hat{U}^{(k,j-1)}}=Y^{(k,j)}_{\hat{U}^{(k,j-1)}}$.
By \eqref{eq:induction-hypothesis-2}, we have $$Y^{(k,j)}_{\hat{U}^{(k,j-1)}} \sim \mu^{\sigma_k\wedge  \bigwedge_{i\in N_j} \sigma_i}_{\hat{I}^{(k,j)}, \hat{U}^{(k,j-1)}}.$$
Recall that each $Y^{(k,j)}_{\beta_j}$ is drawn independent from the distribution $\mu^{Y^{(k-1)}_{\vbl(\kappa_j)\setminus\{\beta_j\}}}_{I^{(k,n)},\beta_j}$.
Recall that we assume the input satisfies the properties asserted by \Cref{lem:clustering-correctness}. According to this, it holds that $(Y^{(0)}, \{B_i\}_{i\in N_0})$ satisfies \Cref{cond:clustering}. Thus, for any distinct $i,j\in N_0$, the distance between $B_i$ and $B_j$ is at least $2(\ell+2)$.
Thus, we have $Y^{(k-1)}_{\vbl(\kappa_j)\setminus\{\beta_j\}}=Y^{(k,j)}_{\vbl(\kappa_j)\setminus\{\beta_j\}}$ and for any $c\in \Sigma_{\beta_j}$,
\[\mu^{Y^{(k-1)}_{\vbl(\kappa_j)\setminus\{\beta_j\}}}_{I^{(k,n)},\beta_j}(c) = \mu^{Y^{(k,j)}_{\vbl(\kappa_j)\setminus\{\beta_j\}}}_{\hat{I}^{(k,j)},\beta_j}(c).\]
It then holds that for any $\tau\in \Sigma_{\hat{U}^{(k,j)}}$,
\[\begin{aligned}
    \Pr\left(Y^{(k,j)}_{\hat{U}^{(k,j)}} = \tau\right) &=  \mu^{\sigma_k\wedge  \bigwedge_{i\in N_j} \sigma_i}_{\hat{I}^{(k,j)}, \hat{U}^{(k,j-1)}}(\tau_{ \hat{U}^{(k,j-1)}}) \cdot \mu^{\tau_{\vbl(\kappa_j)\setminus\{\beta_j\}}}_{\hat{I}^{(k,j)},\beta_j}(\tau_{\beta_j}) \\
    &=\mu^{\sigma_k\wedge  \bigwedge_{i\in N_j} \sigma_i}_{\hat{I}^{(k,j)}, \hat{U}^{(k,j)}}(\tau) .
\end{aligned}\]
This proves that $(\vec{Y}^{(k,j)},\{B_k\}\cup \{B_i\}_{i\in N_j})$ satisfies the clustered conditional Gibbs property on instance $I^{(k,j)}$ with parameter $(\epsilon_0,\delta_0,\gamma_0)$.
\end{proof}

By \Cref{cla:resample-induction-2}, we have  that $(\vec{Y}^{(k,n)}, \{B_k\})$ satisfies the clustered conditional Gibbs property on $I^{(k,n)}$ with parameter $(\epsilon_0,\gamma_0,\delta_0)$.
Thus, $(\vec{Y}^{(k,n)}, B_k)$ satisfies the augmented conditional Gibbs property on $I^{(k,n)}$ with parameter $(\epsilon_0,\gamma_0,\delta_0)$.
Thus, upon the execution of \Cref{alg:resampling} on  $v_k$, the input to \Cref{line:callrecursivesampling}  satisfies \Cref{cond:warm-up-recursive-without-cd}.
By \Cref{lem:recursive-sample-correctness-complexity}, after the execution of \Cref{line:callrecursivesampling} of \Cref{alg:resampling}, $\vec{Y}^{(k,n)}$ follows the distribution  $\mu_{I^{(k,n)}}$.
Thus, it holds that $Y^{(k,n)}_{U^{(k,n)}}$ is identical to $Y^{(k)}_{U^{(k,n)}}$ and follows the distribution $\mu_{I^{(k,n)}, U^{(k,n)}}$.

For $k\leq j\leq n$, define
\[\begin{aligned}
    & \overline{I}^{(k,j)} \triangleq  \left(\{X_i\}_{i\in U^{(k,j)}}\cup \{X_{\beta_i}\}_{i\in N_k\setminus N_j}, \{A_v\}_{v\in V^{(k,j)}}\cup \{A_{\lambda_i}\}_{i\in N_k\setminus N_j} \cup \{A_{\kappa_i}\}_{i\in N_j}\right). 
\end{aligned}\]
By \Cref{lem:substituting},  for any $\pi\in \Sigma_{U^{(k,n)}}$, it holds that 
\[\mu_{I^{(k,n)}, U^{(k,n)}}(\pi)=\mu_{\bar{I}^{(k,n-1)}, U^{(k,n)}}^{\bigwedge_{i\in N_{n-1}}\sigma_i}(\pi)=\mu_{\bar{I}^{(k,n-2)}, U^{(k,n)}}^{\bigwedge_{i\in N_{n-2}}\sigma_i}(\pi)= ... = \mu_{\bar{I}^{(k,k)}, U^{(k,n)}}^{\bigwedge_{i\in N_k}\sigma_i}(\pi).\]
%
Thus, it holds that
\[Y^{(k)}_{U^{(k,n)}} \sim \mu_{\bar{I}^{(k,k)}, U^{(k,n)}}^{\bigwedge_{i\in N_k}\sigma_i}(\pi).\]
Recall that $U^{(k,n)} = \overline{U}^{(k)}$ and $\overline{I}^{(k,k)}=\overline{I}^{(k)}$.
It holds that $Y^{(k)}_{\overline{U}^{(k)}}$ follows the distribution $\mu^{\bigwedge_{j\in N_k} \sigma_j}_{\overline{I}^{(k)}, \overline{U}^{(k)}}$. 
Thus, after executing \Cref{line:resample-updating} of \Cref{alg:resampling}, we have that $(\vec{Y}^{(k)}, \{B_i\}_{i\in N_k})$ satisfies the clustered conditional Gibbs property on instance $I$ with parameter $(\epsilon_0,\gamma_0,\delta_0)$.
This completes the induction.

Therefore, $(\Vec{Y}^{(n)}, \emptyset)$ satisfies the clustered conditional Gibbs property on instance $I$ with parameter $(\epsilon_0,\gamma_0,\delta_0)$.
As argued before, this guarantees that After \Cref{alg:resampling} is sequentially executed on all node $v\in \RV{R}$,  the final output assignment $\Vec{Y}=\vec{Y}^{(n)}$  follows the distribution $\mu_I$.
\end{proof}


\begin{proof}[Proof of \Cref{lem:item-resample-complexity} of \Cref{lem:resample-correctness-complexity}]
%
Recall the choice of parameter $(\epsilon_0,\gamma_0,\delta_0)$ in \eqref{eq:def-parameter-without-cd-expected-complexity}. 
Fix any $0<\eta<1$, the follows hold.

\begin{itemize}
    \item 
    According to \Cref{lem:item:recursive-sample-complexity} of \Cref{lem:recursive-sample-correctness-complexity}, 
    with probability $1-\frac{\eta}{2}$, 
    at every node $v\in \RV{R}$, the call to the \RecursiveSampling{} procedure in \Cref{alg:resampling} returns
    within radius  %
    \[r_1=r_v+ \ell_0(\epsilon_0,\gamma_0,\delta_0)+\tO\left(\log^4 n \cdot \left(\log\frac{1}{\gamma}\cdot \log^4 \frac{1}{\eta} + \log^2\frac{1}{\gamma} \cdot \log^2 \frac{1}{\eta}\right) \right)\]
    from the center $p_v$ in the LLL instance $I'$.
    And $I'(B_{r_1}(p_v))$ can be locally constructed within $I({B_{r_2}(v)})$, where \[r_2=r_1+O(\mathcal{D}+|\RV{R}|\cdot \ell_0(\epsilon_0,\gamma_0,\delta_0)).\]
    
    \item According to \Cref{lem:item:clustering-complexity} of \Cref{lem:clustering-correctness}, with probability $1-\frac{\eta}{2}$, we have $\mathcal{D}=\Tilde{O}(|\RV{R}| \cdot \log^2 n \log^2 \frac{1}{\gamma}\log\frac{1}{\eta})$,
    which implies $r_v=\Tilde{O}(|\RV{R}| \cdot \log^2 n \log^2 \frac{1}{\gamma}\log\frac{1}{\eta})$.
\end{itemize}
Overall, with probability at least $1-\eta$, \Cref{alg:resampling} returns at every node $v\in \RV{R}$ with radius bounded by
\[\tO\left(|\RV{R}| \cdot \log^2 n\cdot \log^2 \frac{1}{\gamma}\cdot\log\frac{1}{\eta}\right)+ \tO\left(\log^4 n \cdot \log^2\frac{1}{\gamma} \cdot\log^4 \frac{1}{\eta}\right). \qedhere
\]
\end{proof}


\bibliographystyle{alpha}
\bibliography{refs}

\newcommand{\etalchar}[1]{$^{#1}$}
\begin{thebibliography}{FGW{\etalchar{+}}23}

\bibitem[AJ22]{anand2022perfect}
Konrad Anand and Mark Jerrum.
\newblock Perfect sampling in infinite spin systems via strong spatial mixing.
\newblock {\em SIAM Journal on Computing}, 51(4):1280--1295, 2022.

\bibitem[Bab79]{babai1979monte}
L{\'a}szl{\'o} Babai.
\newblock Monte-carlo algorithms in graph isomorphism testing.
\newblock {\em Universit{\'e} tde Montr{\'e}al Technical Report, DMS}, (79-10),
  1979.

\bibitem[BGG{\etalchar{+}}19]{bezakova2019approximation}
Ivona Bez{\'a}kov{\'a}, Andreas Galanis, Leslie~Ann Goldberg, Heng Guo, and
  Daniel Stefankovic.
\newblock Approximation via correlation decay when strong spatial mixing fails.
\newblock {\em SIAM Journal on Computing}, 48(2):279--349, 2019.

\bibitem[Cla95]{clarkson1995vegas}
Kenneth~L Clarkson.
\newblock Las vegas algorithms for linear and integer programming when the
  dimension is small.
\newblock {\em Journal of the ACM (JACM)}, 42(2):488--499, 1995.

\bibitem[EL75]{erdos1975problems}
Paul Erdos and L{\'a}szl{\'o} Lov{\'a}sz.
\newblock Problems and results on 3-chromatic hypergraphs and some related
  questions.
\newblock {\em Infinite and finite sets}, 10(2):609--627, 1975.

\bibitem[FGW{\etalchar{+}}23]{feng2022towards}
Weiming Feng, Heng Guo, Chunyang Wang, Jiaheng Wang, and Yitong Yin.
\newblock Towards derandomising markov chain monte carlo.
\newblock In {\em IEEE 2023 Annual Symposium on Foundations of Computer Science
  (FOCS)}, 2023.

\bibitem[FGY22]{feng2020perfect}
Weiming Feng, Heng Guo, and Yitong Yin.
\newblock Perfect sampling from spatial mixing.
\newblock {\em Random Structures \& Algorithms}, 61(4):678--709, 2022.

\bibitem[FGYZ21]{feng2021fast}
Weiming Feng, Heng Guo, Yitong Yin, and Chihao Zhang.
\newblock Fast sampling and counting k-sat solutions in the local lemma regime.
\newblock {\em Journal of the ACM (JACM)}, 68(6):1--42, 2021.

\bibitem[FVY21]{feng2021dynamic}
Weiming Feng, Nisheeth~K Vishnoi, and Yitong Yin.
\newblock Dynamic sampling from graphical models.
\newblock {\em SIAM Journal on Computing}, 50(2):350--381, 2021.

\bibitem[FY18]{feng2018local}
Weiming Feng and Yitong Yin.
\newblock On local distributed sampling and counting.
\newblock In {\em Proceedings of the 2018 ACM Symposium on Principles of
  Distributed Computing (PODC)}, pages 189--198, 2018.

\bibitem[GGH{\etalchar{+}}23]{ghaffari2023improved}
Mohsen Ghaffari, Christoph Grunau, Bernhard Haeupler, Saeed Ilchi, and
  V{\'a}clav Rozho{\v{n}}.
\newblock Improved distributed network decomposition, hitting sets, and
  spanners, via derandomization.
\newblock In {\em Proceedings of the 2023 Annual ACM-SIAM Symposium on Discrete
  Algorithms (SODA)}, pages 2532--2566. SIAM, 2023.

\bibitem[GHK18]{ghaffari2018derandomizing}
Mohsen Ghaffari, David~G Harris, and Fabian Kuhn.
\newblock On derandomizing local distributed algorithms.
\newblock In {\em 2018 IEEE 59th Annual Symposium on Foundations of Computer
  Science (FOCS)}, pages 662--673. IEEE, 2018.

\bibitem[GJL19]{GJL19}
Heng Guo, Mark Jerrum, and Jingcheng Liu.
\newblock Uniform sampling through the {L}ov{\'{a}}sz local lemma.
\newblock {\em Journal of the ACM (JACM)}, 66(3):18:1--18:31, 2019.

\bibitem[GKM17]{ghaffari2016complexity}
Mohsen Ghaffari, Fabian Kuhn, and Yannic Maus.
\newblock On the complexity of local distributed graph problems.
\newblock In {\em Proceedings of the 49th Annual ACM SIGACT Symposium on Theory
  of Computing (STOC)}, pages 784--797. ACM, 2017.

\bibitem[Har20]{harris2020new}
David~G Harris.
\newblock New bounds for the moser-tardos distribution.
\newblock {\em Random Structures \& Algorithms}, 57(1):97--131, 2020.

\bibitem[HSS11]{haeupler2011new}
Bernhard Haeupler, Barna Saha, and Aravind Srinivasan.
\newblock New constructive aspects of the {L}ov{\'a}sz local lemma.
\newblock {\em Journal of the ACM (JACM)}, 58(6):28, 2011.

\bibitem[HSW21]{he2021perfect}
Kun He, Xiaoming Sun, and Kewen Wu.
\newblock Perfect sampling for (atomic) lov$\backslash$'asz local lemma.
\newblock {\em arXiv preprint arXiv:2107.03932}, 2021.

\bibitem[HWY22]{he2022sampling}
Kun He, Chunyang Wang, and Yitong Yin.
\newblock Sampling lov{\'a}sz local lemma for general constraint satisfaction
  solutions in near-linear time.
\newblock In {\em 2022 IEEE 63rd Annual Symposium on Foundations of Computer
  Science (FOCS)}, pages 147--158. IEEE, 2022.

\bibitem[JPV21]{jain2022towards}
Vishesh Jain, Huy~Tuan Pham, and Thuy~Duong Vuong.
\newblock Towards the sampling lov{\'a}sz local lemma.
\newblock In {\em 2021 IEEE 62nd Annual Symposium on Foundations of Computer
  Science (FOCS)}, pages 173--183. IEEE, 2021.

\bibitem[JVV86]{jerrum1986random}
Mark~R. Jerrum, Leslie~G. Valiant, and Vijay~V. Vazirani.
\newblock Random generation of combinatorial structures from a uniform
  distribution.
\newblock {\em Theoret. Comput. Sci.}, 43:169--188, 1986.

\bibitem[Kal92]{kalai1992subexponential}
Gil Kalai.
\newblock A subexponential randomized simplex algorithm.
\newblock In {\em Proceedings of the twenty-fourth annual ACM Symposium on
  Theory of Computing (STOC)}, pages 475--482, 1992.

\bibitem[Lin92]{linial1992locality}
Nathan Linial.
\newblock Locality in distributed graph algorithms.
\newblock {\em SIAM Journal on Computing (SICOMP)}, 21(1):193--201, 1992.

\bibitem[LS93]{linial1993low}
Nathan Linial and Michael Saks.
\newblock Low diameter graph decompositions.
\newblock {\em Combinatorica}, 13(4):441--454, 1993.

\bibitem[LSZ93]{luby1993optimal}
Michael Luby, Alistair Sinclair, and David Zuckerman.
\newblock Optimal speedup of las vegas algorithms.
\newblock {\em Information Processing Letters}, 47(4):173--180, 1993.

\bibitem[Moi19]{moitra2019approximate}
Ankur Moitra.
\newblock Approximate counting, the lov{\'a}sz local lemma, and inference in
  graphical models.
\newblock {\em Journal of the ACM (JACM)}, 66(2):1--25, 2019.

\bibitem[MR95]{motwani1995randomized}
Rajeev Motwani and Prabhakar Raghavan.
\newblock {\em Randomized algorithms}.
\newblock Cambridge university press, 1995.

\bibitem[MSL92]{mitchell1992new}
D~Mitchell, B~Selman, and H~Leveque.
\newblock A new method for solving hard satisfiability problems.
\newblock In {\em Proceedings of the tenth national conference on artificial
  intelligence (AAAI-92)}, pages 440--446, 1992.

\bibitem[MT10]{moser2010constructive}
Robin~A Moser and G{\'a}bor Tardos.
\newblock A constructive proof of the general {L}ov{\'a}sz local lemma.
\newblock {\em Journal of the ACM (JACM)}, 57(2):11, 2010.

\bibitem[MU05]{mitzenmacher2005probability}
Michael Mitzenmacher and Eli Upfal.
\newblock {\em Probability and Computing: Randomized Algorithms and
  Probabilistic Analysis}.
\newblock Cambridge University Press, 2005.

\bibitem[NS95]{naor1995can}
Moni Naor and Larry Stockmeyer.
\newblock What can be computed locally?
\newblock {\em SIAM Journal on Computing (SICOMP)}, 24(6):1259--1277, 1995.

\bibitem[Pel00]{peleg2000distributed}
David Peleg.
\newblock {\em Distributed computing: a locality-sensitive approach}.
\newblock SIAM, 2000.

\bibitem[PW96]{PW96}
James~G.\ Propp and David~B. Wilson.
\newblock Exact sampling with coupled {M}arkov chains and applications to
  statistical mechanics.
\newblock {\em Random Structures Algorithms}, 9(1-2):223--252, 1996.

\end{thebibliography}

\newpage
\begin{appendix}

\section{Simulation of \SLV{} in \LOCAL{} Model}\label{sec: details of SLV-local to local}

We prove~\Cref{thm: SLV-local to local}. The \LOCAL{} algorithm for simulating the \SLV{} algorithm is described in \Cref{alg:SLV-local-to-local}. Recall that whether  a  node $v\in V$ is an active node (which means that $v\in A$) is indicated in the local memory $M_v$. 
For each active node $v\in A$, \Cref{alg:SLV-local-to-local} is executed on $v$, with the initial $M_v$ presented to the \LOCAL{} algorithm as $v$'s input. 
Also recall that we define $G^t$ to be the $t$-th power graph with an edge between any pair of $u,v$ where $\dist_G(u,v)\le t$. 

\begin{algorithm}
	\caption{The \LOCAL{} algorithm for simulating \SLV{} at active node $v\in A$}\label{alg:SLV-local-to-local}
	initialize $A_v\leftarrow\emptyset$\;
	\For{$i=1$ to $\lceil\log n\rceil$}{
		in the induced subgraph $G^{(2^{i+1})}[A]$, find the maximal connected component containing $v$, denoted by $A'$\;
		
		simulate the \SLV{} algorithm with set $A'$ of active nodes, where the simulation \emph{fails} if the maximum radius accessed by the \SLV{} algorithm exceeds $2^i$\;
		
		\If{the simulation does not fail}
		{
			let $A_v\leftarrow A'$ and update the output states according to the simulation of $v$\;
   \textbf{break}\;
		}
	}
	
	\If{there exists $u\in A$ with $A_u\subseteq A_v$}{cancel all the updates to output made by $u$\;\label{line5} 
 \tcp{In this case, we say that $u$ is {canceled} by $v$.}}
	
	\If{there exists $u\in A$ with $A_v\subseteq A_u$}{cancel all the updates to output made by $v$\;\label{line6} 
 \tcp{In this case, we say that $v$ is {canceled} by $u$.}}
\end{algorithm}


It is easy to see that~\Cref{alg:SLV-local-to-local} terminates, and when it terminates, a set $A_v\subseteq A$ is constructed for each $v\in A$.
The following lemma follows from the property of $G^{(2^{i+1})}$.

\begin{lemma}\label{lem: occupiedproperty}
	For distinct $u,v\in A$, one of the followings must be true:
	\begin{enumerate}
		\item $A_u\cap A_v=\emptyset$.
		\item $A_u\subseteq A_v$.
		\item $A_v\subseteq A_u$.
	\end{enumerate} 
\end{lemma}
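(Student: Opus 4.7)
The plan is to show that the family $\{A_v : v \in A\}$ is laminar with respect to set inclusion, which is exactly the content of the trichotomy. The two structural facts I will rely on are: (a) $A_v$ is well-defined for every active $v \in A$, i.e.~the \textbf{for} loop in \Cref{alg:SLV-local-to-local} succeeds in some iteration $i_v \le \lceil \log n \rceil$, because the radii $\ell_{v,j}$ accessed by the \SLV{} algorithm are bounded by $n$ and $2^{i}$ grows past every such bound; and (b) $A_v$ is the connected component containing $v$ in the induced subgraph $G^{(2^{i_v+1})}[A]$. These are both immediate from the pseudocode.

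The second key observation is the monotonicity of the power graphs: for any $i \le j$, the edge set of $G^{(2^{i+1})}[A]$ is contained in that of $G^{(2^{j+1})}[A]$, since an edge $\{u,w\}$ exists in $G^t$ exactly when $\dist_G(u,w) \le t$, and $2^{i+1} \le 2^{j+1}$. As a corollary, any connected set in the smaller graph lies inside a single connected component of the larger graph.

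Given these preliminaries, I would prove the lemma directly as follows. Take distinct $u,v \in A$ and suppose $A_u \cap A_v \ne \emptyset$; the goal is to derive case 2 or case 3. Without loss of generality assume $i_u \le i_v$. Pick any $w \in A_u \cap A_v$. By fact (b), $A_v$ is precisely the connected component of $w$ in $G^{(2^{i_v+1})}[A]$. Meanwhile $A_u$ is connected in $G^{(2^{i_u+1})}[A]$ and contains $w$, and by the monotonicity observation $A_u$ remains connected in the larger graph $G^{(2^{i_v+1})}[A]$. Hence $A_u$ lies in the same $G^{(2^{i_v+1})}[A]$-component as $w$, which is $A_v$, yielding $A_u \subseteq A_v$ and completing the trichotomy.

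I do not expect a substantial obstacle here: the argument is a one-line laminar-family argument once one observes that $A_v$ is a power-graph connected component and that these components nest as the power grows. The only subtlety worth double-checking is the ``without loss of generality'' step: the choice $i_u \le i_v$ is symmetric, and the symmetric case $i_v \le i_u$ gives $A_v \subseteq A_u$, so together the three cases $A_u \cap A_v = \emptyset$, $A_u \subseteq A_v$, $A_v \subseteq A_u$ exhaust all possibilities, as claimed.
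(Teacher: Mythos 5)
Your proof is correct and follows essentially the same approach as the paper: both observe that each $A_v$ is the connected component of $v$ in a power graph with radius determined by the successful iteration, that power-graph edge sets nest as the power grows, and hence that an intersecting pair with $i_u \le i_v$ satisfies $A_u \subseteq A_v$. The only difference is that you spell out the ``pick $w \in A_u \cap A_v$ and note $A_u$ stays connected in the larger power graph'' step, which the paper leaves implicit.
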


\begin{proof}
	Assume $A_u$ is a maximum connected component of $G^{t_u}$ and $A_v$ is a maximum connected component of $G^{t_v}$, for some integers $t_u$ and $t_v$. Suppose $t_u\le t_v$. If $A_u\cap A_v\not=\emptyset$, then in $G^{t_v}$, $A_u$ and $A_v$ must be in the same connected components, since every edge in $G^{t_u}$ is also an edge in $G^{t_v}$. Thus, $A_u\subseteq A_v$. Similarly, if $t_u\ge t_v$, then $A_v\subseteq A_u$. 
\end{proof}

The following lemma is because of the condition of canceling a vertex.

\begin{lemma}\label{lem: contain all}
	For any $v\in A$, there exists $u\in A$ such that $v\in A_u$ and $u$ is not canceled. 
\end{lemma}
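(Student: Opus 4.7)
The plan is to exploit the laminar structure of the family $\{A_u : u \in A\}$ established by Lemma~\ref{lem: occupiedproperty}. Since every active node $v$ satisfies $v \in A_v$ (by construction, $A_v$ is the maximal connected component of the induced subgraph $G^{(2^{i+1})}[A]$ containing $v$ at the iteration $i$ at which the simulation succeeded), the collection $\mathcal{C}_v \triangleq \{A_u : u \in A \text{ and } v \in A_u\}$ is nonempty.

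First, I would argue that $\mathcal{C}_v$ has a unique maximal element with respect to inclusion. By Lemma~\ref{lem: occupiedproperty}, for any two sets $A_{u_1}, A_{u_2} \in \mathcal{C}_v$, since they share the vertex $v$ they cannot be disjoint, and so one is contained in the other. Hence $\mathcal{C}_v$ is totally ordered by inclusion and, being finite, contains a $\subseteq$-maximum set, which I will denote $A_w$ (choosing $w$ by smallest ID among all active nodes attaining this maximum set, to break ties in the degenerate case where several $u$'s produce the same $A_u$).

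Next, I would verify that this $w$ is not canceled. By \Cref{line6} of \Cref{alg:SLV-local-to-local}, $w$ is canceled only if there exists $u \in A$ with $u \neq w$ and $A_w \subseteq A_u$. Suppose, for contradiction, such a $u$ exists. Then $v \in A_w \subseteq A_u$, so $A_u \in \mathcal{C}_v$. Maximality of $A_w$ in $\mathcal{C}_v$ forces $A_u = A_w$; but then the tie-breaking rule (smallest-ID choice) would have selected $u$ over $w$ only if $\id(u) < \id(w)$, yet the symmetric argument shows $w$ would already be the chosen representative — a contradiction. Therefore $w$ is not canceled, and since $v \in A_w$ by construction, this $w$ witnesses the claim.

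The main subtlety I anticipate is handling the equality case $A_u = A_w$, which is not automatically ruled out because multiple active nodes in the same maximal connected component can compute the same set at the same level $i$ and simulate identically. Outside this edge case, the strict laminarity argument is entirely routine; the edge case requires the symmetry-breaking convention (e.g., smallest ID) so that exactly one representative per equivalence class survives the cancellation step, which is a standard and essentially notational matter.
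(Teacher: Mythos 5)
Your proof is correct and takes essentially the same route as the paper's: use the laminarity established by \Cref{lem: occupiedproperty} to extract a maximal $A_w$ containing $v$, then observe that cancellation of $w$ would require an $A_u$ strictly larger than $A_w$ that also contains $v$, contradicting maximality. The paper compresses this into a two-sentence argument (``consider the last time $v$ appears in some $A_u$'') but the underlying mechanism is the same chain.

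Where you go beyond the paper is the tie case $A_u = A_w$ for distinct $u,w$, and you are right to flag it: this is a genuine subtlety that the paper elides. Reading \Cref{line5} and \Cref{line6} of \Cref{alg:SLV-local-to-local} literally (with the implicit $u\neq v$), two active nodes computing the same set $A_u = A_w$ would each cancel the other, leaving no uncanceled representative for the vertices inside that set and falsifying the lemma as stated. Conversely, if one reads the containments as strict, neither is canceled and then \Cref{lem: no intersection of simulation} fails, since $T_u = T_w \neq \emptyset$. The only interpretation under which both lemmas survive is exactly the convention you introduce: non-strict containment plus a deterministic tie-break (e.g.\ smallest ID) so that precisely one representative per equivalence class of equal $A_u$'s is left uncanceled. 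This does not change the behavior of the simulation in any material way, since nodes with equal $A_u$ run identical deterministic simulations from the same memory snapshot and would emit the same updates, but it is needed to make the cancellation bookkeeping internally consistent. Your proof is therefore not just a restatement but a small repair of the argument; the one cosmetic point worth tightening is the sentence ``yet the symmetric argument shows $w$ would already be the chosen representative,'' which can be stated more directly as: $w$ was picked with $\id(w)\le\id(u)$ among the maximizers, while cancellation of $w$ by $u$ under the tie-break requires $\id(u)<\id(w)$, an immediate contradiction.
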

\begin{proof}
	Consider the last time $v$ appears in some $A_u$. If $v$ is not in any $A_u$ in the end, $u$ must be canceled by some other vertex $u'$, in which case we have $A_u\subseteq A_{u'}$, and hence $v\in A_{u'}$, a contradiction.
\end{proof}

For any $v\in A$, note that the simulation must succeed if $i=\lceil\log n\rceil$. 
Thus, there exists an $i$ for $v$ such that the simulation in the $i$-th iteration does not fail. 
Denote such $i$ as $I_v$. 
Suppose that this successful simulation has the radius $\ell'_{v,j}$ for $v\in A'$ in the $j$-th scan. 
Define $T_v=\cup_{v\in A',j\in [N]}B_{\ell'_{v,j}}(v)$, i.e., 
the set of all vertices whose outputs are supposed to be updated by $v$. 
The following lemmas guarantee the correctness of the simulation.

\begin{lemma}\label{lem: no intersection of simulation}
	If $u,v\in A$ and both $u,v$ are not canceled, then $T_u\cap T_v=\emptyset$.
\end{lemma}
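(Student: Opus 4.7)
The plan is to argue by contradiction: assume two distinct uncanceled nodes $u,v \in A$ have overlapping $T_u, T_v$, and force one of them to actually have been canceled (violating the assumption). The key leverage is that a successful simulation at iteration $I_v$ of the algorithm obeys a strict radius budget of $2^{I_v}$ on every ball it touches, while the maximal component $A_v$ is built in the graph $G^{(2^{I_v+1})}[A]$; these two scales are tied together so that radius overlaps force component overlaps.

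Concretely, I would first pick $w \in T_u \cap T_v$. By the definition of $T_u$ and $T_v$, there exist $u' \in A_u$ and $v' \in A_v$ together with scan indices $j, j' \in [N]$ such that $\dist_G(u',w) \leq \ell'_{u',j}$ and $\dist_G(v',w) \leq \ell'_{v',j'}$. Because the simulation at $u$ succeeded at iteration $I_u$, the algorithm enforces $\ell'_{u',j} \leq 2^{I_u}$, and similarly $\ell'_{v',j'} \leq 2^{I_v}$. The triangle inequality then gives $\dist_G(u',v') \leq 2^{I_u} + 2^{I_v}$. Assume without loss of generality $I_u \leq I_v$, so $\dist_G(u',v') \leq 2^{I_v+1}$.

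Next I would promote this short $G$-distance into a connectivity statement in the power graph. Since $u', v' \in A$ and $\dist_G(u',v') \leq 2^{I_v+1}$, the vertices $u'$ and $v'$ are adjacent (or equal) in $G^{(2^{I_v+1})}[A]$. Now $A_v$ is, by construction, the maximal connected component of $G^{(2^{I_v+1})}[A]$ containing $v$, and $v' \in A_v$, so $u' \in A_v$ as well. Moreover, $A_u$ is connected in $G^{(2^{I_u+1})}[A]$, which is a subgraph of $G^{(2^{I_v+1})}[A]$ because $I_u \leq I_v$; hence $A_u$ is connected in $G^{(2^{I_v+1})}[A]$. Together with $u' \in A_u \cap A_v$ and the maximality of $A_v$, this forces $A_u \subseteq A_v$. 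But then Line~\ref{line6} of Algorithm~\ref{alg:SLV-local-to-local} cancels $u$ (witnessed by $v$, noting $u \neq v$), contradicting the hypothesis that $u$ is uncanceled.

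I do not expect a genuine obstacle here; the only subtlety to be careful about is bookkeeping, namely that the radius bound $2^{I_v}$ from the success condition really applies to every $\ell'_{v',j'}$ with $v' \in A_v$ (not just to $v$ itself), and that the WLOG choice $I_u \leq I_v$ is what converts $2^{I_u} + 2^{I_v}$ into the single scale $2^{I_v+1}$ matching the power graph in which $A_v$ was built. Once these alignments are made explicit, the contradiction drops out directly from Lemma~\ref{lem: occupiedproperty} and the cancellation rule.
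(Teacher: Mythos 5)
Your proof is correct and follows essentially the same route as the paper's: the key geometric step (the triangle inequality giving $\dist_G(u',v')\le 2^{I_u}+2^{I_v}\le 2^{I_v+1}$ under the WLOG choice $I_u\le I_v$, then adjacency of $u',v'$ in $G^{(2^{I_v+1})}[A]$) is identical. The paper packages the final contradiction by first establishing $A_u\cap A_v=\emptyset$ via Lemma~\ref{lem: occupiedproperty} and the cancellation rule and then contradicting the maximality of $A_v$, whereas you derive $A_u\subseteq A_v$ directly (via the subgraph embedding of $G^{(2^{I_u+1})}[A]$ into $G^{(2^{I_v+1})}[A]$) and then invoke the cancellation rule; these are equivalent reorganizations of the same argument, and your explicit containment step is subsumed by Lemma~\ref{lem: occupiedproperty} once $u'\in A_u\cap A_v$ is known.
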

\begin{proof}
    According to~\Cref{lem: occupiedproperty}, we have $A_u\cap A_v=\emptyset$. If $T_u\cap T_v\not=\emptyset$, there must exist $u'\in A_u,v'\in A_v$ such that $B_x(u')\cap B_y(v')\not=\emptyset$, where $x,y$ are the radii accessed by $u'$ and $v'$ and satisfy $x\le 2^{I_u},y\le 2^{I_v}$. w.o.l.g, we assume $I_u\le I_v$. Then we have $\dist_G(u',v')\le x+y\le 2^{I_v+1}$. That is to say, $u'$ and $v'$ has an edge in $G^{(2^{I_v+1})}$, which causes a contradiction since $A_v$ is a maximum connected components.
\end{proof}
\begin{lemma}[correctness of simulation]\label{lem: correctness of simulation}
    For any $u\in A$, if $u$ is not canceled, then \Cref{alg:SLV-local-to-local} at $u$ produces the same outputs for the nodes in $A_u$ as in the \SLV{} algorithm.
\end{lemma}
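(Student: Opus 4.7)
The plan is to couple the execution of the real \SLV{} algorithm with the collection of simulations run inside \Cref{alg:SLV-local-to-local} at all non-canceled active nodes, and to show that the coupled processes agree on the memory states relevant to each simulation. The two structural lemmas already proved supply exactly what this coupling needs: \Cref{lem: contain all} guarantees that every $v\in A$ lies in $A_{u^*}$ for some non-canceled $u^*$, while \Cref{lem: no intersection of simulation} guarantees that the sets $T_u$ for non-canceled $u$ are pairwise disjoint. Together, these facts partition the territory touched by the simulations into disjoint regions, one per non-canceled anchor, so the simulations cannot mechanically interfere with each other.

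The main work is to establish the following invariant by induction on the number of active nodes processed (in UID order) by the real \SLV{} algorithm: for every non-canceled $u\in A$ and every $w\in T_u$, the memory state $M_w$ in the real \SLV{} execution agrees with $M_w$ in the successful simulation at $u$ on active set $A_u$ after the latter has processed the corresponding prefix of $A_u$ (in the same scan). Once this invariant is in place, the claim is immediate, since the outputs at nodes in $A_u$ are functions of memory states inside $T_u$, computed in the same way by both processes.

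For the inductive step, consider the $(k+1)$-st active node $v$ being processed by the real \SLV{} algorithm. By \Cref{lem: contain all}, $v\in A_u$ for some non-canceled $u$, and by the induction hypothesis the memory state of every $w\in T_u$ coincides in the real \SLV{} execution and in the simulation at $u$. In the successful $I_u$-th iteration (whose radii are all bounded by $2^{I_u}$), processing $v$ in the current scan grows a ball of some radius $\ell'_{v,j}$, and by the definition of $T_u$ we have $B_{\ell'_{v,j}}(v)\subseteq T_u$. Since the \SLV{} stopping condition at $v$ is a deterministic function of the memory states within the current ball, the real \SLV{} grows exactly the same ball of radius $\ell'_{v,j}$ and applies exactly the same memory updates, preserving the invariant on $T_u$. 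For any other non-canceled $u'\neq u$, the update occurs inside $T_u$, which is disjoint from $T_{u'}$ by \Cref{lem: no intersection of simulation}, so the invariant at $u'$ is preserved as well; the argument also works uniformly across scans, since the boundaries of scans align in the two processes once the per-scan behavior agrees.

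The main obstacle is the circular dependence between memory states and ball radii: the radius grown at $v$ depends on memory states that are themselves being updated, so one must rule out the possibility that processing a node outside $A_u$ perturbs some state inside $T_u$ before a later $v\in A_u$ is processed. This is precisely what the disjointness of the $T_u$'s excludes, and the success of the $I_u$-th iteration is crucial here because it certifies that the simulation's radii are bounded by $2^{I_u}$, which keeps each ball strictly inside $T_u$ and lets the induction close.
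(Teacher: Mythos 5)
Your proposal is correct and follows essentially the same inductive argument as the paper: induct on the order in which the \SLV{} algorithm processes active nodes, use \Cref{lem: contain all} to place each processed node $v$ inside some non-canceled anchor's $A_u$, and use \Cref{lem: no intersection of simulation} to guarantee that the ball grown at $v$ stays inside $T_u$ and therefore sees only memory states updated by $u$'s own simulation. You state the inductive invariant a bit more explicitly (agreement of memory states over $T_u$), but the decomposition, the role of the two auxiliary lemmas, and the reason the radii stay bounded (success of the $I_u$-th iteration) all match the paper's proof.
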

\begin{proof}
    We will prove by induction on $k$ that, the first $k$ vertices accessed by the \SLV{} algorithm is simulated exactly the same in our \LOCAL{} algorithm by some $u\in A$. For $k=1$, According to~\Cref{lem: contain all}, there exists $A_u$ containing the first vertex accessed in the \SLV{} algorithm, which is simulated correctly by $u$. Assume that the first $k$ vertices are simulated correctly. For the $k+1$-th vertex accessed, suppose it is vertex $v$ in the $j$-th phase. By~\Cref{lem: contain all}, there exists $A_u$ containing the $k+1$-th vertex accessed in the \SLV{} algorithm, and it will only access $B_{\ell_{v,j}}(v)\in T_u$, which only contains the information updated by the vertices in $A_u$ according to~\Cref{lem: no intersection of simulation}. Thus, the simulation is correct. 
\end{proof}
The following lemma bounds the round complexity of \Cref{alg:SLV-local-to-local}. Recall that $\ell_{v,j}$ is the radius of the ball accessed by the \SLV{} algorithm at node $v\in A$ in the $j$-th phase. 

\begin{lemma}\label{lem: simulation complexity}
	For any $v\in A$, we have $I_v\le\lceil\log \max_{v\in A,j\in[N]}\ell_{v,j}\rceil$.
\end{lemma}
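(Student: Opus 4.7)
The plan is to show that iteration $i^\star \triangleq \lceil \log L \rceil$ of the for loop in \Cref{alg:SLV-local-to-local} already succeeds at every active node $v$, where $L \triangleq \max_{w \in A, j \in [N]} \ell_{w,j}$. Since the failure threshold at iteration $i^\star$ is $2^{i^\star} \ge L$, and the true \SLV{} algorithm only ever accesses radii at most $L$, it should be possible to argue that the partial simulation on $A'$ never trips this threshold; establishing this will immediately yield $I_v \le i^\star$.

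First, I would extract the key separation property built into the construction of $A'$. Since $A'$ is a maximal connected component of $G^{(2^{i^\star+1})}[A]$, every $u \in A \setminus A'$ satisfies $\dist_G(u, A') > 2^{i^\star+1}$. Consequently, for every $w \in A'$ and every such $u$, the balls $B_{2^{i^\star}}(u)$ and $B_{2^{i^\star}}(w)$ are disjoint in $G$. Next, I would set up an induction over the processing order of the \SLV{} algorithm (ordered first by scan index, then by the total order on $V$), with the invariant: whenever a node $w \in A'$ is about to be processed in some scan $j$, the memory states of all nodes in $B_{2^{i^\star}}(w)$ agree in the partial simulation on $A'$ and in the true \SLV{} on $A$.

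The inductive step would combine two facts. By the inductive hypothesis, all earlier processing of nodes in $A'$ produces identical memory updates in both simulations. Any earlier processing of a node $u \in A \setminus A'$ in the true \SLV{} only writes to memory within $B_{\ell_{u,k}}(u) \subseteq B_{2^{i^\star}}(u)$, which is disjoint from $B_{2^{i^\star}}(w)$ by the separation property, so such writes leave $w$'s visible region untouched. With the memory states inside $B_{2^{i^\star}}(w)$ agreeing in both simulations, the stopping condition at $w$ must trigger at the same radius $\ell_{w,j} \le L \le 2^{i^\star}$ in the partial simulation, so no failure is reported at $w$. Iterating the invariant over all scans and all nodes in $A'$ shows that iteration $i^\star$ succeeds in its entirety, establishing $I_v \le i^\star = \lceil \log L \rceil$.

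The main technical obstacle will be formulating the induction so that the no-failure conclusion and the memory agreement are established jointly rather than sequentially: the radii used in the partial simulation are themselves outputs of that simulation, so one must argue that the partial simulation cannot spuriously enlarge a radius beyond $2^{i^\star}$ before the invariant is applied at $w$. I expect that phrasing the invariant as ``either the partial simulation has already failed strictly before $w$'s processing step, or the memory states agree on $B_{2^{i^\star}}(w)$'' will close this loop cleanly; the factor-$2$ doubling hidden in the use of $G^{(2^{i^\star+1})}$ rather than $G^{(2^{i^\star})}$ is precisely what provides the disjointness needed for the non-interference argument.
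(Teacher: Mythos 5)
Your proposal is correct and takes essentially the same approach as the paper's proof: fix $i^\star = \lceil \log L \rceil$, use the maximality of the connected component in $G^{(2^{i^\star+1})}[A]$ to establish that nodes outside $A'$ access balls disjoint from those accessed inside $A'$, and conclude that the partial simulation on $A'$ reproduces the true \SLV{} radii (all $\le L \le 2^{i^\star}$) so no failure occurs. The paper reaches this conclusion by invoking the induction already carried out in the proof of Lemma~\ref{lem: correctness of simulation}, whereas you spell out the induction directly and explicitly flag the circular dependency between the partial simulation's radii and the memory-agreement invariant---a point the paper handles implicitly---but the underlying argument is the same.
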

\begin{proof}
	Consider the $\lceil\log \max_{v\in A,j\in[N]}\ell_{v,j}\rceil$-th iteration for vertex $v$. We claim that for any $u\in A\backslash A_v$, and $k\in[N]$, we have $B_{\ell_{u,k}}\cap \left(\cup_{v\in A',j\in [N]}B_{\ell_{v,j}}(v)\right)=\emptyset$. Otherwise, $u$ has distance at most $2\cdot \max_{v\in A,j\in[N]}\ell_{v,j}$ to a vertex in $A_v$, contradicting the fact that $A_v$ is a maximal connected components of $G^{x}$ for $x\ge 2\cdot \max_{v\in A,j\in[N]}\ell_{v,j}$. By the same argument as in the proof of~\Cref{lem: correctness of simulation}, the simulation on $A_v$ will produce the same output as in the \SLV{} algorithm, in which case the simulation must not fail. 
\end{proof}

Now we are ready to prove~\Cref{thm: SLV-local to local}.

\begin{proof}[Proof of~\Cref{thm: SLV-local to local}]
	The correctness of~\Cref{alg:SLV-local-to-local} follows from~\Cref{lem: correctness of simulation,lem: contain all}.
	
	Then we bound the complexity of~\Cref{alg:SLV-local-to-local}. Let $M=\max_{v\in A,j\in[N]}\ell_{v,j}$. For a vertex $v\in A$, to run the $i$-th loop, $v$ need to find the maximal connected component in $G^{2^{i}+1}[A]$, and grow a ball of radius at most $2^i$ from each vertex in the connected component.  According to~\Cref{lem: simulation complexity}, we have $2^{i}+1=O(M)$ since $i<I_v$. The maximal connected components in $G^{O(M)}[A]$ has diameter at most $O(|A|\cdot M)$ in $G$. Thus, $u$ only needs to collect the information in $B_\ell(u)$ for some $\ell=O(|A|\cdot M)$. The only problem is that $u$ do not know $\ell$ initially. This can be solved by guessing $\ell$ by scanning $\ell'=2^i$ for $i=1,2,\ldots$, collecting information in $B_{\ell'}(u)$ and determining whether stops. 
\end{proof}
\end{appendix}

\end{document}